\newcommand{\X}{\raisebox{-2pt}{\includegraphics[scale=0.45]{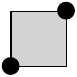}}}
\newcommand{\Tilde}{\raisebox{-2pt}{\includegraphics[scale=0.65]{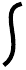}}}
\newcommand{\D}{\raisebox{-2pt}{\includegraphics[scale=0.45]{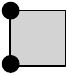}}\hspace{1pt}}
\renewcommand{\L}{\raisebox{-2pt}{\includegraphics[scale=0.45]{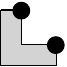}}}
\newcommand{\C}{\raisebox{-2pt}{\includegraphics[scale=0.45]{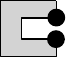}}}
\newcommand{\x}{\raisebox{-2pt}{\includegraphics[scale=0.25]{X.pdf}}}
\renewcommand{\d}{\raisebox{-2pt}{\includegraphics[scale=0.25]{D.pdf}}}
\renewcommand{\c}{\raisebox{-2pt}{\includegraphics[scale=0.25]{C.pdf}}}
\renewcommand{\l}{\raisebox{-2pt}{\includegraphics[scale=0.25]{L.pdf}}}
\newcommand{\oneB}{\raisebox{-2pt}{\includegraphics[scale=0.45]{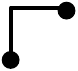}}}
\newcommand{\zeroB}{\raisebox{-2pt}{\includegraphics[scale=0.45]{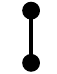}}}
\newcommand{\oneb}{\raisebox{-2pt}{\includegraphics[scale=0.25]{1-bend.pdf}}}
\newcommand{\zerob}{\raisebox{-2pt}{\includegraphics[scale=0.25]{0-bend.pdf}}}
\newcommand{\myparagraph}[1]{\smallskip\noindent\textbf{\boldmath #1}}
\newcommand{\skel}{\mathrm{skel}\xspace}
\newcommand{\rect}{\overline}
\DeclareMathOperator{\intr}{intr}
\DeclareMathOperator{\extr}{extr}
\DeclareMathOperator{\flex}{flex}
\newtheorem{definition}{Definition}
{\itshape}{\rmfamily}
\crefname{theorem}{Theorem}{Theorems}
\crefname{lemma}{Lemma}{Lemmas}
\crefname{hypothesis}{Hypothesis}{Hypotheses}
\crefname{property}{Property}{Properties}
\crefname{section}{Section}{Sections}
\crefname{subsection}{Section}{Sections}
\crefname{figure}{Fig.}{Figs.}
\crefname{equation}{Equation}{Equations}
\begin{document}
\title{\Large Optimal Orthogonal Drawings of Planar 3-Graphs in Linear Time \thanks{Work supported in part
		by MIUR Project ``AHeAD: efficient Algorithms for HArnessing networked Data'', PRIN 20174LF3T8.}}
\author{Walter Didimo\thanks{Universit\`a degli Studi di Perugia, ITALY} 
	\and Giuseppe Liotta\footnotemark[2] 
	\and Giacomo Ortali\footnotemark[2] 
	\and  Maurizio Patrignani\thanks{Roma Tre University, ITALY.}
}

\date{}

\maketitle



\maketitle

\begin{abstract}
 A \emph{planar orthogonal drawing} $\Gamma$ of a planar graph $G$ is a geometric representation of $G$ such that the vertices are drawn as distinct points of the plane, the edges are drawn as chains of horizontal and vertical segments, and no two edges intersect except at their common end-points. A \emph{bend} of $\Gamma$ is a point of an edge where a horizontal and a vertical segment meet. $\Gamma$ is \emph{bend-minimum} if it has the minimum number of bends over all possible planar orthogonal drawings of~$G$. This paper addresses a long standing, widely studied, open question: Given a planar 3-graph $G$ (i.e., a planar graph with vertex degree at most three), what is the best computational upper bound to compute a bend-minimum planar orthogonal drawing of~$G$ in the variable embedding setting? In this setting the algorithm can choose among the exponentially many planar embeddings of $G$ the one that leads to an orthogonal drawing with the minimum number of bends. We answer the question by describing an $O(n)$-time algorithm that computes a bend-minimum planar orthogonal drawing of~$G$ with at most one bend per edge, where $n$ is the number of vertices of $G$. The existence of an orthogonal drawing algorithm that simultaneously minimizes the total number of bends and the number of bends per edge was previously unknown. 
\end{abstract}







\section{Introduction}\label{se:intro}
Graph drawing is a well established research area that addresses the problem of constructing geometric representations of abstract graphs and networks~\cite{DBLP:books/ph/BattistaETT99,DBLP:reference/crc/2013gd}. It  combines flavors of topological graph theory, computational geometry, and graph algorithms. Various graphic standards have been proposed for the representation of graphs
in the plane. Usually, each vertex is represented by a point and each edge $(u,v)$ is
represented by a simple Jordan arc joining the points associated with vertices
$u$ and $v$. In an {\em orthogonal drawing} the edges are chains of horizontal ad vertical segments (see, e.g., \cref{fi:intro}).  Orthogonal drawings are among the earliest and most studied graph layout standards because of their direct application in several domains, including software engineering, database design, circuit design, and visual interfaces~\cite{DBLP:journals/jss/BatiniTT84,dl-gvdm-07,DBLP:journals/ivs/EiglspergerGKKJLKMS04,DBLP:books/sp/Juenger04,Lengauer-90}. Since the readability of an orthogonal drawing is affected by the crossings and the bends along its edges (see, e.g.~\cite{DBLP:books/ph/BattistaETT99}), a classical research subject in graph drawing studies planar (i.e. crossing-free) orthogonal drawings with the minimum number of bends. A planar $k$-graph is a planar graph with maximum vertex degree $k$. Valiant has proved that a graph has a planar orthogonal drawing if and only if it is a planar 4-graph~\cite{v-ucvc-81}.

\begin{figure}[t]
	\centering
	\subfloat[]{\label{fi:intro-a}\includegraphics[width=0.33\columnwidth]{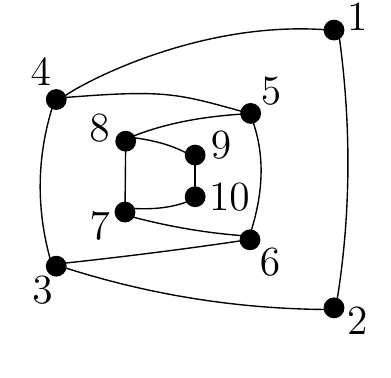}}
	\hfil
	\subfloat[]{\label{fi:intro-b}\includegraphics[width=0.33\columnwidth]{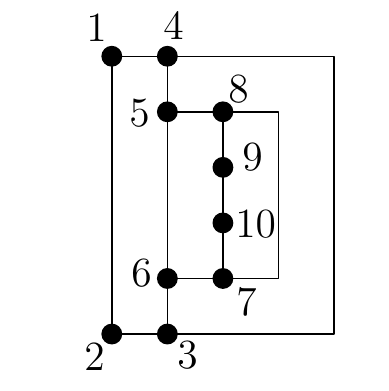}}
	\hfil
	\subfloat[]{\label{fi:intro-c}\includegraphics[width=0.33\columnwidth]{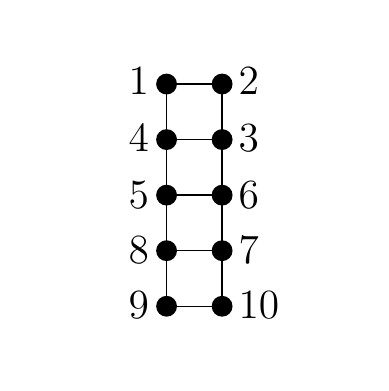}}
	\caption{(a) A planar embedded 3-graph $G$. (b)-(c) Two bend-minimum orthogonal drawings of $G$ in the fixed embedding setting and in the variable embedding setting, respectively}\label{fi:intro}
\end{figure}

A seminal paper by Storer~\cite{DBLP:conf/stoc/Storer80} conjectured that computing a planar orthogonal drawing with the minimum number of bends is computationally hard. The conjecture was proved incorrect by Tamassia~\cite{DBLP:journals/siamcomp/Tamassia87} in the so-called ``fixed embedding setting'', that is when the input graph $G$ is given with a planar embedding and the algorithm computes a bend-minimum orthogonal drawing of $G$ with the same planar embedding. Conversely, Garg and Tamassia~\cite{DBLP:journals/siamcomp/GargT01} proved the conjecture of Storer to be correct in the ``variable embedding setting'', that is when the drawing algorithm is asked to find which one of the (exponentially many) planar embeddings of $G$ gives rise to an orthogonal drawing that has the minimum number of bends.  However, if the input is a planar $3$-graph there exists a polynomial time solution~\cite{DBLP:journals/siamcomp/BattistaLV98}. Note that there are planar $3$-graphs for which a bend-minimum orthogonal drawing requires linearly many bends in the fixed embedding setting, but it has no bends in the variable embedding setting~\cite{DBLP:journals/siamcomp/BattistaLV98}. See \cref{fi:intro} for an example.

The polynomial-time algorithm presented in~\cite{DBLP:journals/siamcomp/BattistaLV98} has time complexity $O(n^5 \log n)$, where $n$ is the number of vertices of the planar $3$-graph.  Since its first publication of this algorithm more than twenty years ago, the question of establishing the best computational upper bound to the problem of computing a bend-minimum orthogonal drawing of a planar $3$-graph has been studied by several papers and mentioned as open in books, surveys, and book chapters (see, e.g.,~\cite{DBLP:conf/gd/BrandenburgEGKLM03,DBLP:conf/compgeom/ChangY17,DBLP:books/ph/BattistaETT99,DBLP:journals/siamcomp/BattistaLV98,dlt-dg-13,dlt-gd-17,DBLP:conf/gd/DidimoLP18,DBLP:journals/ieicet/RahmanEN05,DBLP:journals/siamdm/ZhouN08}). A significant improvement was presented by Chang and Yen~\cite{DBLP:conf/compgeom/ChangY17} who achieve ${\tilde{O}}(n^{\frac{17}{7}})$ time complexity by exploiting recent techniques about the efficient computation of a min-cost flow in unit-capacity networks~\cite{DBLP:conf/focs/CohenMTV17}. This complexity bound is lowered to $O(n^2)$ in~\cite{DBLP:conf/gd/DidimoLP18}, where the first algorithm that does not use a network flow approach to compute a bend-minimum orthogonal drawing of a planar $3$-graph in the variable embedding setting is presented.


In this paper we describe the first linear-time algorithm that optimizes the total number of bends when computing an orthogonal drawing of a planar $3$-graph in the variable embedding setting. Namely, in addition to minimizing the total number of bends, we optimize the number of bends per edge. This is one of the very few linear-time graph drawing algorithms in the variable embedding setting and it is the first algorithm that optimizes both the number of bends per edge and the total number of bends in an orthogonal drawing. More formally, we prove the following result.

\begin{restatable}{theorem}{thMain}\label{th:main}
	Let $G$ be an $n$-vertex planar 3-graph distinct from $K_4$. There exists an $O(n)$-time algorithm that computes an orthogonal drawing of $G$ with the minimum number of bends and at most one bend per edge in the variable embedding setting.
\end{restatable}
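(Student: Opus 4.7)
The plan is to combine the SPQR-tree decomposition of the biconnected components of $G$ with a bottom-up computation of optimal orthogonal shapes. Recall that for a biconnected planar 3-graph the SPQR-tree has $O(n)$ total size, is computable in $O(n)$ time, and encodes every planar embedding through independent choices of a cyclic permutation at each P-node and a reflection at each R-node. I would first dispose of cutvertices separately: since a cutvertex has degree at most $3$, the bend-minimum orthogonal drawings of the biconnected pieces can be glued around a common vertex without introducing any new bend, so it suffices to handle a single biconnected 3-graph~$B$.

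I would root the SPQR-tree of $B$ at an arbitrary edge and process its nodes in post-order. At each node $\mu$ the pertinent subgraph $G_\mu$ has two poles $u,v$ of degree at most three, one of whose incident slots is the virtual edge toward the parent; inside $G_\mu$ each pole therefore has degree at most two. I would associate with $\mu$ a constant-size table that stores, for each admissible \emph{shape} at the poles (the direction of the two internal edges at each pole plus a Boolean indicating whether the reference edge already carries a bend), the minimum number of bends achievable in $G_\mu$ under the $1$-bend-per-edge constraint, together with a pointer into a realizing orthogonal representation. Because a pole has at most two internal incident edges, the number of distinct shape labels is $O(1)$, so tables have constant size. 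Combining tables is then a local operation: at S-nodes it is concatenation along a path, at P-nodes it is a constant-case analysis of the at most three children around each pole together with a choice of which child is placed in the outer face.

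At R-nodes, where the skeleton has a unique planar embedding up to reflection and the choice of outer face, the task is to select for every virtual-edge child one entry from its constant-size table, and simultaneously to choose an outer face of the skeleton, so as to minimize total bends. I would reduce this step to the bend-minimization subroutine of Didimo, Liotta and Patrignani applied to the rigid skeleton, letting the flexibility offered by each child's table become a bounded capacity on the associated arc; since each skeleton admits only $O(1)$ shapes per virtual edge, the resulting problem has constant unit cost and runs in time proportional to the skeleton size. Summed over the SPQR tree this yields $O(n)$ total work.

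The $1$-bend-per-edge guarantee is enforced by the Boolean flag in each table entry: during composition at P- and R-nodes I only couple shapes whose flags are compatible with the budget; the fact that such compatible shapes always exist, with the sole exception of $K_4$, is the structural content that has to be verified carefully. The main obstacle I foresee is the R-node step: proving that restricting the per-child options to a constant-size shape family still attains the global bend-minimum, and in particular handling the outer-face choice (a piece of global information) within the linear time budget rather than by trying all $O(n)$ faces. I would attack this by showing that the relevant outer-face candidates are limited to faces incident to specific structural features of the skeleton, of which there are only $O(1)$ per R-node, so that the overall algorithm indeed terminates in $O(n)$ time.
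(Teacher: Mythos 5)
Your overall architecture (SPQR-tree, bottom-up dynamic programming with constant-size shape tables, special treatment of R-skeletons) matches the paper's, but there are two gaps that would sink the argument as written. The first and most serious is the treatment of the variable embedding. You root the SPQR-tree at an \emph{arbitrary} edge and do a single post-order pass; this only optimizes over the planar embeddings that have that particular edge on the external face, and the whole point of the problem is that the optimal external face is a global choice (Fig.~1 of the paper shows the gap between a good and a bad choice can be $\Theta(n)$ bends). The paper must compute, for \emph{every} edge $e$, the cost $b(e)$ of an $e$-constrained optimum, i.e.\ it effectively re-roots the tree at all $O(n)$ Q-nodes; keeping this linear is the technical heart of the result and requires (i) a reusability/amortization scheme that caches per-dart partial results across re-rootings (Lemma~\ref{le:amortized}) and (ii) the \texttt{Bend-Counter}, a data structure built once per R-skeleton that answers ``minimum cost with face $f^*$ outside'' in $O(1)$ for any of the $O(n)$ faces. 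Your proposed substitute --- that the relevant outer-face candidates are confined to $O(1)$ faces per R-node incident to ``specific structural features'' --- is not justified and does not hold: the cost of a face depends on which demanding $3$-extrovert cycles become $3$-introvert and vice versa, and on the flexibilities contributed by all the S-components hanging off the skeleton, so no constant-size candidate set suffices; moreover the optimal root may make the root child an S- or P-node, which your R-node-local fix does not address at all.

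The second gap is your shape parameterization. You argue the tables are constant-size because each pole has at most two internal edges, and you record ``the direction of the two internal edges at each pole''. For S-components the composable invariant is not a local direction but the \emph{spirality} (turn number between the poles), a path-integral quantity that local pole data determines only modulo $4$; spirality $0$ and spirality $4$ have different costs and compose differently. That the search can be restricted to spirality at most $4$ (and to the \D/\X/\L/\C shapes for P- and R-components) is precisely Theorem~\ref{th:shapes}, a nontrivial combinatorial result, not an observation about pole degrees. Relatedly, your cut-vertex step is too quick: for a child block $B_v$ attached at $v$ you need a $v$-\emph{constrained} bend-minimum representation with a large angle at $v$ on its external face (Theorem~\ref{th:gd2018-enhanced-v}), which may cost more than the unconstrained minimum, so independently computed optima cannot simply be glued; the paper instead labels every block and chooses the root of the block-cut-vertex tree to minimize the total. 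Finally, the existence of a bend-minimum representation with at most one bend per edge --- which you correctly flag as ``the structural content that has to be verified carefully'' --- is itself a substantial part of the paper (Lemmas~\ref{le:2-bends}--\ref{le:1-bend}) and cannot be delegated to a Boolean flag.
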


Concerning the bend minimization problem of orthogonal drawings in the variable embedding setting, we recall that a linear-time solution is known only for a rather restricted family of graphs, namely the 2-connected series-parallel $3$-graphs~\cite{DBLP:journals/siamdm/ZhouN08}. We also recall that the linear-time algorithm of~\cite{DBLP:journals/ieicet/RahmanEN05} for testing whether a subdivision of a 3-connected cubic graph admits an orthogonal drawing without bends does not consider the bend minimization problem.
As for optimizing the number of bends per edge, it is known that every planar $3$-graph (except $K_4$)  has an orthogonal drawing with at most one bend per edge, but the drawing algorithm that achieves this bound does not minimize the total number of bends~\cite{DBLP:journals/algorithmica/Kant96}. 

From a methodological point of view, the main ingredients for the proof of Theorem~\ref{th:main} are: $(i)$ A combinatorial argument proving the existence of a bend-minimum orthogonal drawing with at most one bend per edge for any planar 3-graph distinct from $K_4$. $(ii)$ A linear-time labeling algorithm that assigns a number to each  edge $e$ of $G$, representing the number of bends of a bend-minimum orthogonal drawing of $G$ with $e$ on the external face; the efficiency of this algorithm relies on the use of a novel data structure, called {\em bend-counter}, designed on a triconnected cubic planar graph: For each face $f$ of the graph, it returns in $O(1)$-time the minimum number of bends of an orthogonal drawing having $f$ as the external face. $(iii)$ A linear-time algorithm that, based on the previous edge labeling and on a suitable visit of SPQR-trees and block-cut-vertex trees, constructs an optimal drawing of $G$.

The remainder of the paper is organized as follows. \cref{se:preliminaries} gives basic definitions and terminology used throughout the paper. \cref{se:proof-structure} provides a high-level description of our results and illustrates the main ingredients used to prove them. \cref{se:triconnected,se:labeling,se:thgd2018-enhanced} contain details about the different ingredients illustrated in the previous section. \cref{se:conclusions} concludes the paper and suggests possible future research directions. 


\section{Preliminaries}\label{se:preliminaries} 
If $G$ is a graph, $V(G)$ and $E(G)$ denote the set of vertices and the set of edges of~$G$, respectively. We consider \emph{simple} graphs, i.e., graphs with neither self-loops nor multiple edges. The \emph{degree} of a vertex $v \in V(G)$, denoted as $\deg (v)$, is the number of its neighbors. $\Delta(G)$ denotes the maximum degree of a vertex of~$G$; if $\Delta(G) \leq h$ ($h \geq 1$), $G$ is an \emph{$h$-graph}. 

\myparagraph{Connectivity, Drawings, and Planarity.}
A graph $G$ is \emph{$1$-connected} if there is a path between any two vertices. Graph $G$ is \emph{$k$-connected}, for $k \geq 2$, if the removal of $k-1$ vertices leaves the graph $1$-connected. A $2$-connected ($3$-connected) graph is also called \emph{biconnected} (\emph{triconnected}). 

A \emph{planar drawing} of $G$ is a geometric representation in the plane such that: $(i)$ each vertex $v \in V(G)$ is drawn as a distinct point $p_v$; $(ii)$ each edge $e=(u,v) \in E(G)$ is drawn as a simple curve connecting $p_u$ and $p_v$; $(iii)$ no two edges intersect in $\Gamma$ except at their common end-vertices (if they are adjacent). A graph is \emph{planar} if it admits a planar drawing. A planar drawing $\Gamma$ of $G$ divides the plane into topologically connected regions, called \emph{faces}. The \emph{external face} of $\Gamma$ is the region of unbounded size; the other faces are \emph{internal}. The {\em cycle of a face} $f$, denoted as $C_f$, is the simple cycle consisting of the vertices and the edges along the boundary of the region of the plane identified by~$f$. A \emph{planar embedding} of $G$ is an equivalence class of planar drawings that define the same set of (internal and external) faces, and it can be described by the clockwise sequence of vertices and edges on the boundary of each face plus the choice of the external face. Graph $G$ together with a given planar embedding is an \emph{embedded planar graph}, or simply a \emph{plane graph}. If $f$ is a face of a plane graph, the {\em cycle of $f$}, denoted as $C_f$, consists of the vertices and edges along the boundary of the plane region identified by~$f$. If $\Gamma$ is a planar drawing of a plane graph $G$ whose set of faces is that described by the planar embedding of $G$, we say that $\Gamma$ \emph{preserves} this embedding, or equivalently that $\Gamma$ is an \emph{embedding-preserving drawing} of $G$. 

\myparagraph{Orthogonal Drawings and Representations.}
Let $G$ be a planar graph. In a \emph{planar orthogonal drawing} $\Gamma$ of $G$ the vertices are distinct points of the plane and each edge is a chain of horizontal and vertical segments. A graph $G$ admits a planar orthogonal drawing if and only if it is a planar $4$-graph, i.e., $\Delta(G) \leq 4$. A \emph{bend} of $\Gamma$ is a point of an edge where a horizontal and a vertical segment meet. $\Gamma$ is \emph{bend-minimum} if it has the minimum number of bends over all planar embeddings of $G$. A graph $G$ is {\em rectilinear planar} if it admits a planar orthogonal drawing without bends. Rectilinear planarity testing is NP-complete for planar $4$-graphs~\cite{DBLP:journals/siamcomp/GargT01}, but it is polynomial-time solvable for planar $3$-graphs~\cite{DBLP:conf/compgeom/ChangY17,DBLP:journals/siamcomp/BattistaLV98} and linear-time solvable for subdivisions of planar triconnected cubic graphs~\cite{DBLP:journals/ieicet/RahmanEN05}. 
Very recently a linear-time algorithm for rectilinear planarity testing of biconnected planar $3$-graphs has been presented~\cite{DBLP:conf/cocoon/Hasan019}.
By extending a result of Thomassen~\cite{Th84} about $3$-graphs that have a rectilinear drawing with all rectangular faces, Rahman et al.~\cite{DBLP:journals/jgaa/RahmanNN03} characterize rectilinear plane $3$-graphs (see Theorem~\ref{th:RN03}). For a plane graph $G$, let $C_o(G)$ be its external cycle ($C_o(G)$ is simple if $G$ is biconnected). Also, if $C$ is a simple cycle of $G$, $G(C)$ is the plane subgraph of $G$ that consists of $C$ and of the vertices and edges inside $C$.
A \emph{chord} of $C$ is an edge $e \notin C$ that connects two vertices of $C$: If $e$ is embedded outside $C$ it is an \emph{external chord}, otherwise it is an \emph{internal chord}.
An edge $e=(u,v)$ is a \emph{leg} of $C$ if exactly one of the vertices $u$ and $v$ belongs to $C$; such a vertex is a \emph{leg vertex} of $C$: If $v$ is embedded inside $C$, $(u,v)$ is an \emph{internal leg} of $C$; else it is an \emph{external leg}. $C$ is a \emph{$k$-extrovert cycle} of $G$ if it has exactly $k$ external legs and no external chord. $C$ is a \emph{$k$-introvert cycle} if it has exactly $k$ internal legs and no internal chord. In the following, for the sake of brevity, if $C$ is a $k$-extrovert ($k$-introvert) cycle, we simply refer to the $k$ external (internal) legs of $C$ as the \emph{legs}~of~$C$.

If the $k$ legs of a $k$-extrovert (resp. $k$-introvert) cycle $C$ are all attached to the same vertex $u \notin C$, then we say that $C$ is \emph{trivial}.
Clearly, a cycle $C$ may be $k$-extrovert and $k'$-introvert at the same time, for two (possibly coincident) constants $k$ and $k'$.
Since we are going to extensively exploit the interplay between these two types of cycles, in this paper we adopt a different terminology that, in our opinion, better relates them.
\cref{fi:introvert-extrovert-example} depicts different cycles of the same plane graph. In \cref{fi:introvert-extrovert-example-a} cycle $C_1$ has four external legs and four internal legs, thus it is 4-extrovert and 4-introvert. In \cref{fi:introvert-extrovert-example-b} cycle $C_2$ has two external legs and no internal leg, hence it is just 2-extrovert. In \cref{fi:introvert-extrovert-example-c} cycle $C_3$ is 6-extrovert and 2-introvert. We remark that $k$-extrovert cycles are called \emph{$k$-legged cycles} in~\cite{DBLP:journals/jgaa/RahmanNN03} and that $k$-introvert cycles are called \emph{$k$-handed cycles} in~\cite{DBLP:journals/ieicet/RahmanEN05}. By using this terminology we rephrase a characterization given in of~\cite{DBLP:journals/jgaa/RahmanNN03} as follows.

\begin{figure}[tb]
	\centering
	\subfloat[]{\label{fi:introvert-extrovert-example-a}\includegraphics[width=0.32\columnwidth]{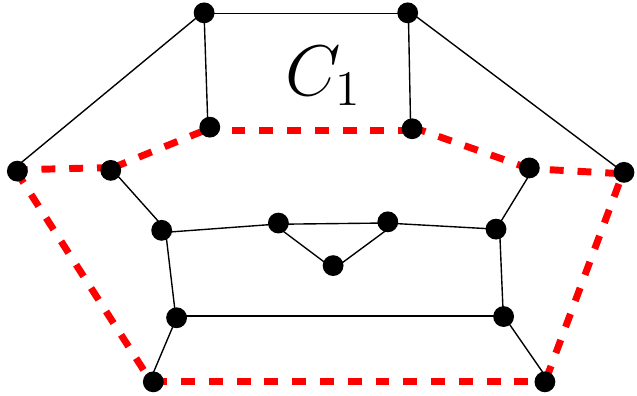}}
	\hfil
	\subfloat[]{\label{fi:introvert-extrovert-example-b}\includegraphics[width=0.32\columnwidth]{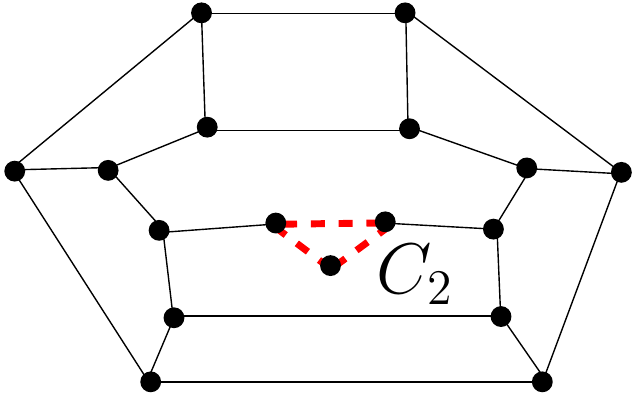}}
	\hfil
	\subfloat[]{\label{fi:introvert-extrovert-example-c}\includegraphics[width=0.32\columnwidth]{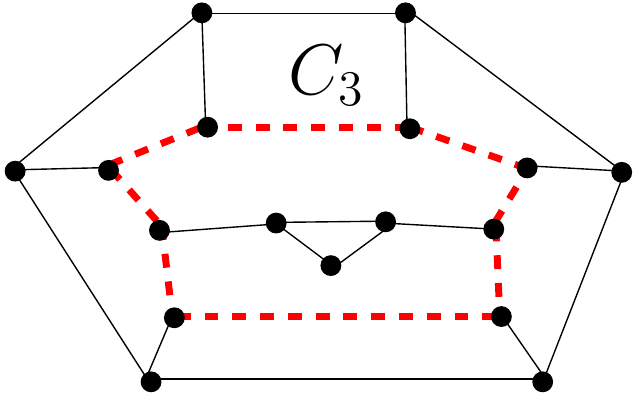}}
	\hfil
	\caption{Different cycles (dashed) of the same plane graph: (a) Cycle $C_1$ is 4-extrovert and 4-introvert. (b) Cycle $C_2$ is 2-extrovert. (c) Cycle $C_3$ is 6-extrovert and 2-introvert.}\label{fi:introvert-extrovert-example}
\end{figure}

\begin{theorem}[\cite{DBLP:journals/jgaa/RahmanNN03}]\label{th:RN03}
	Let $G$	be a biconnected plane $3$-graph. $G$ admits an orthogonal drawing without bends if and only if: $(i)$ $C_o(G)$ has at least four degree-2 vertices; $(ii)$ each $2$-extrovert cycle has at least two degree-2 vertices; $(iii)$ each $3$-extrovert cycle has at least one degree-2 vertex.
\end{theorem}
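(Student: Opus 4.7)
The plan is to handle the two directions of the iff statement separately. Necessity follows from a purely combinatorial angle count on the faces of the hypothetical no-bend drawing; sufficiency is an inductive construction that reduces along a smallest extrovert cycle and reuses the guaranteed degree-2 corners to glue pieces without bends.

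For necessity, I would use the fact that in an orthogonal drawing without bends every angle at a vertex is a positive multiple of $90°$, so a degree-2 vertex realizes the angle pair $(90°,270°)$ or $(180°,180°)$ while a degree-3 vertex realizes the triple $(90°,90°,180°)$. For any simple cycle $C$ of length $k$ bounding a region, the standard rectilinear-polygon identity gives $n_{90}+2n_{180}+3n_{270}=2k+4$ when the region is the unbounded one and $2k-4$ when it is bounded. Applied to $C_o(G)$, using $n_{90}+n_{180}+n_{270}=k$ and the key observation that $270°$ external angles can only occur at degree-2 vertices, a direct simplification gives $n_{270}^{(d2)} = 4+n_{90}^{(d2)}+n_{90}^{(d3)} \geq 4$, proving $(i)$. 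For $(ii)$ and $(iii)$ I apply the same identity to the outer face of the subgraph $G(C)$ for a $k$-extrovert cycle $C$ with $k \in \{2,3\}$: inside $G(C)$ each leg vertex has degree 2, and the outer angle at a leg vertex in $G(C)$ is the sum of the two outer angles at that vertex in $G$, which equals either $180°$ or $270°$. Repeating the count now yields $n_{270}^{(d2)} + n_{270}^{(leg)} \geq 4$, and since at most $k$ legs can contribute a $270°$, one obtains at least $4-k$ genuine degree-2 vertices on $C$, giving exactly the thresholds $2$ and $1$ required by $(ii)$ and $(iii)$.

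For sufficiency, I would argue by induction on $|V(G)|$ under the hypothesis that $(i)$--$(iii)$ hold. The base case is when $G$ is a single cycle of length at least four, which by $(i)$ admits a rectangular no-bend drawing. For the inductive step, I would search for an \emph{innermost} $k$-extrovert cycle $C$ with $k \in \{2,3\}$: by $(ii)$--$(iii)$, $C$ contains at least $4-k$ degree-2 vertices, which are the ``spare corners'' available when redrawing $G(C)$. The idea is to replace $G(C)$ by a short path $P$ joining the $k$ leg vertices, with $|V(P)|$ and the positions of its internal degree-2 vertices chosen so that: $(a)$ the reduced graph $G'$ is still a biconnected plane 3-graph; $(b)$ $G'$ still satisfies $(i)$--$(iii)$; and $(c)$ a no-bend drawing of $G'$ leaves a rectilinear window along the image of $P$ into which $G(C)$ can be placed without bends, using the degree-2 vertices of $C$ as convex or reflex corners compatible with the $k$ legs. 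The inductive hypothesis applied to $G'$ yields a no-bend drawing, and re-expanding $P$ into $G(C)$ completes the drawing of $G$. If no such extrovert cycle $C$ exists, then every internal face is already forced into a rectangular shape and one can invoke Thomassen's theorem on rectangular drawings of plane 3-graphs to conclude directly.

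The hard part will be the sufficiency step, and more specifically the verification that the contracted graph $G'$ inherits all three conditions. New $2$- or $3$-extrovert cycles can be created by the contraction of $G(C)$ to $P$, and one must show that the path $P$ can be chosen with enough degree-2 internal vertices and routed in $G'$ so that these new cycles automatically satisfy the required degree-2 counts. A secondary subtlety is the orientation of $P$: the $k$ legs impose a cyclic order of attachment around the replaced region, and one must align this order with the degree-2 corners of $C$ so that the orthogonal representations of $G'$ and of $G(C)$ agree on the legs and no bend is introduced at the gluing. Handling the $3$-extrovert case is the most delicate, since only one guaranteed degree-2 corner on $C$ must be matched with a specific face of $G'$ among three possible sides of the window created by $P$.
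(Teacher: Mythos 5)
This theorem is not proved in the paper at all: it is imported verbatim from Rahman, Nishizeki and Naznin~\cite{DBLP:journals/jgaa/RahmanNN03}, and the paper only follows the statement with an informal remark about corners. So there is no in-paper proof to match your attempt against; what can be judged is whether your argument would stand on its own. Your necessity direction does. The angle identity $n_{90}+2n_{180}+3n_{270}=2k+4$ for the unbounded side of a $k$-vertex rectilinear cycle, combined with $n_{90}+n_{180}+n_{270}=k$, gives $n_{270}=4+n_{90}\ge 4$, and the observation that a $270^\circ$ angle in a single face can occur only at a degree-2 vertex (or, for a $k$-extrovert cycle, at one of its at most $k$ leg vertices, where the outside angle is the sum of two vertex angles) correctly yields the thresholds $4$, $2$, and $1$ in conditions $(i)$--$(iii)$. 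This is a clean, complete formalization of exactly the intuition the paper states after the theorem.

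The sufficiency direction, however, has a genuine gap, and it is precisely where the substance of the cited result lies. You reduce along an innermost $2$- or $3$-extrovert cycle $C$ by replacing $G(C)$ with a path $P$, but you do not specify how many internal degree-2 vertices $P$ must carry, nor do you prove that $G'$ inherits $(i)$--$(iii)$; you explicitly flag that contraction can create new $2$- and $3$-extrovert cycles and that the cyclic alignment of the legs with the degree-2 corners of $C$ must be arranged, but you only name these obstacles rather than resolve them. These are not routine verifications: whether a newly created $2$-extrovert cycle of $G'$ has two degree-2 vertices depends on how the spare corners of $C$ are distributed onto $P$, and in the $3$-extrovert case a single guaranteed corner must be steered to the correct side of the window, which requires an explicit case analysis (this is what occupies most of the original Rahman--Nishizeki--Naznin proof and underlies the \textsf{NoBend-Alg} used elsewhere in this paper). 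The terminal case is also not sound as stated: the absence of $2$- and $3$-extrovert cycles does not force every internal face to be rectangular, so Thomassen's theorem on rectangular drawings cannot be invoked directly; his characterization concerns the stronger requirement that \emph{all} faces be rectangles and has different hypotheses. As written, the sufficiency half is an outline of a plausible strategy, not a proof.
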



Intuitively, in an orthogonal drawing each cycle of $G$ must have at least four reflex angles in its outside, also called \emph{corners}. Condition~$(i)$ guarantees that there are at least four corners on the external face of~$G$. Conditions~$(ii)$ and~$(iii)$ reflect the fact that two (resp. three) corners of a 2-extrovert (resp. a 3-extrovert) cycle coincide with its leg vertices.

A plane graph that satisfies the conditions of Theorem~\ref{th:RN03} will be called a \emph{good} (plane) graph. A \emph{bad} cycle is any $2$-extrovert and any $3$-extrovert cycle that does not satisfy Conditions~$(ii)$ and~$(iii)$ of Theorem~\ref{th:RN03}, respectively. Since in this paper we only consider planar drawings, we often use the term ``orthogonal drawing'' in place of ``planar orthogonal drawing''.

The partial description of a plane orthogonal drawing $\Gamma$ in terms of the left and right bends along each edge and of the geometric angles at each vertex is an \emph{orthogonal representation} and is denoted by~$H$. This description abstracts from the vertex and bend coordinates of $\Gamma$. In other words, $H$ represents the class of orthogonal drawings having the same sets of vertex angles and bends as $\Gamma$. 



Since for a given orthogonal representation $H$, an orthogonal drawing of $H$ can be computed in linear time~\cite{DBLP:journals/siamcomp/Tamassia87},  the bend-minimization problem for orthogonal drawings can be studied at the level of orthogonal representation. Hence, from now on we focus on orthogonal representations rather than on orthogonal drawings.
%
%
%
Given an orthogonal representation $H$, we denote by $b(H)$ the total number of bends of $H$. 
If $v$ is a vertex of $G$, a \emph{$v$-constrained} bend-minimum orthogonal representation $H$ of $G$ is an orthogonal representation that has $v$ on its external face and that has the minimum number of bends among all the orthogonal representations with $v$ on the external face. Analogously, for an edge $e$ of $G$, an \emph{$e$-constrained} bend-minimum orthogonal representation of $G$ has $e$ on its external face and has the minimum number of bends among all the orthogonal representations with $e$ on the external face.

\myparagraph{SPQR-Trees.} Let $G$ be a biconnected graph. An \emph{SPQR-tree} $T$ of $G$ represents the decomposition of $G$ into its triconnected components and can be computed in linear time~\cite{DBLP:books/ph/BattistaETT99,DBLP:conf/gd/GutwengerM00,DBLP:journals/siamcomp/HopcroftT73}. See Fig.\ref{fi:spqr-tree} for an illustration. Each triconnected component corresponds to a node $\mu$ of $T$; the triconnected component itself is called the \emph{skeleton} of $\mu$ and denoted as $\skel(\mu)$. A node $\mu$ of $T$ can be of one of the following types: $(i)$ \emph{R-node}, if $\skel(\mu)$ is a triconnected graph; $(ii)$ \emph{S-node}, if $\skel(\mu)$ is a simple cycle of length at least three; $(iii)$ \emph{P-node}, if $\skel(\mu)$ is a bundle of at least three parallel edges; $(iv)$ \emph{Q-nodes}, if it is a leaf of $T$; in this case $\mu$ represents a single edge of $G$ and $\skel(\mu)$ consists of two parallel edges.
Neither two $S$- nor two $P$-nodes are adjacent in~$T$. A \emph{virtual edge} in $\skel(\mu)$ corresponds to a tree node $\nu$ adjacent to $\mu$ in $T$. If $T$ is rooted at one of its Q-nodes $\rho$, every skeleton (except the one of $\rho$) contains exactly one virtual edge that has a counterpart in the skeleton of its parent: This virtual edge is the \emph{reference edge} of $\skel(\mu)$ and of $\mu$, and its endpoints are the \emph{poles} of $\skel(\mu)$ and of $\mu$. The edge corresponding to $\rho$ is the \emph{reference edge} of $G$, and $T$ is the SPQR-tree of $G$ \emph{with respect to $e$}. For every node $\mu \neq \rho$ of $T$, the subtree $T_\mu$ rooted at $\mu$ induces a subgraph $G_\mu$ of $G$ called the \emph{pertinent graph} of $\mu$: The edges of $G_\mu$ correspond to the Q-nodes (leaves) of $T_\mu$. Graph $G_\mu$ is also called a \emph{component} of $G$ with respect to the reference edge $e$, namely $G_\mu$ is a P-, an R-, or an S-component depending on whether $\mu$ is a P-, an R-, or an S-component, respectively.

\begin{figure}[t]
	\centering
	\subfloat[]{\label{fi:spqr-tree-a}}{\includegraphics[width=0.33\columnwidth]{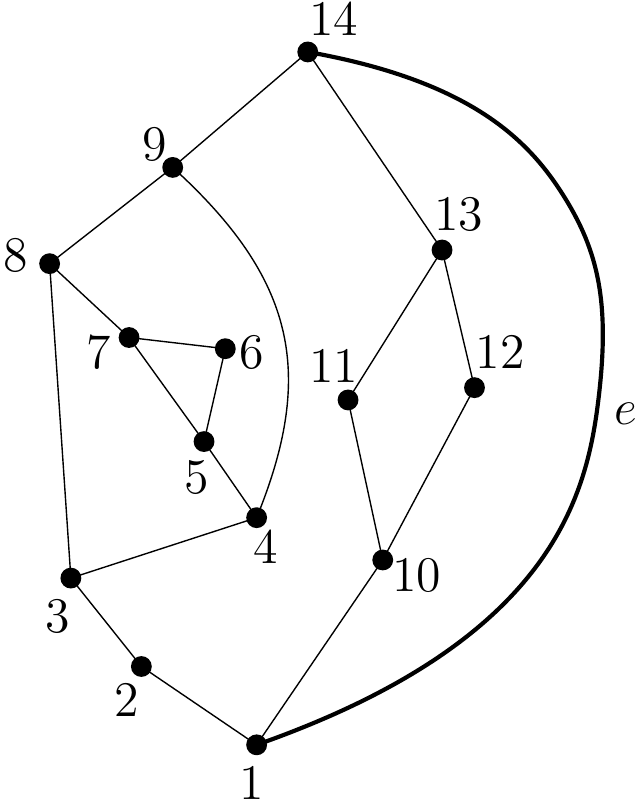}}
	\hfil
	\subfloat[]{\label{fi:spqr-tree-b}}{\includegraphics[width=0.33\columnwidth]{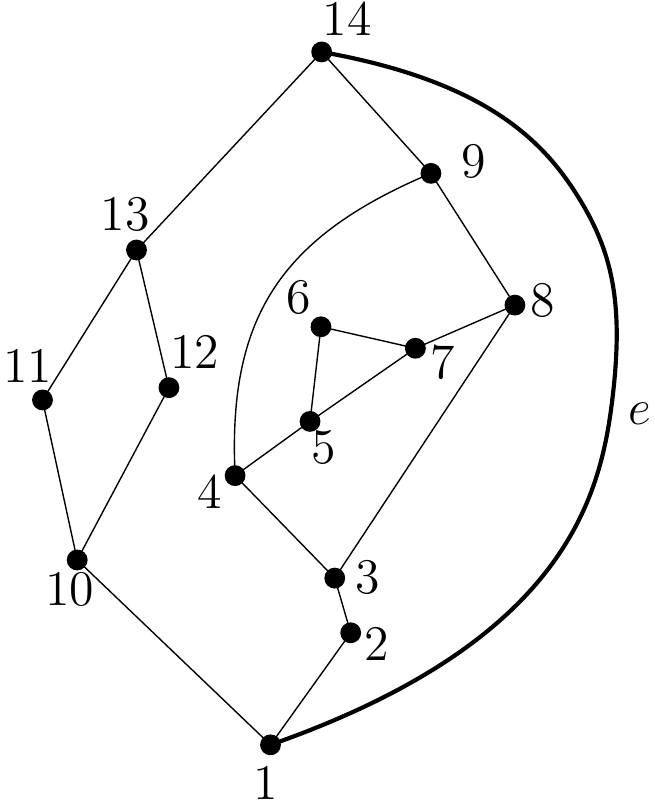}}
	\hfil
	\subfloat[]{\label{fi:spqr-tree-c}}{\includegraphics[width=0.8\columnwidth]{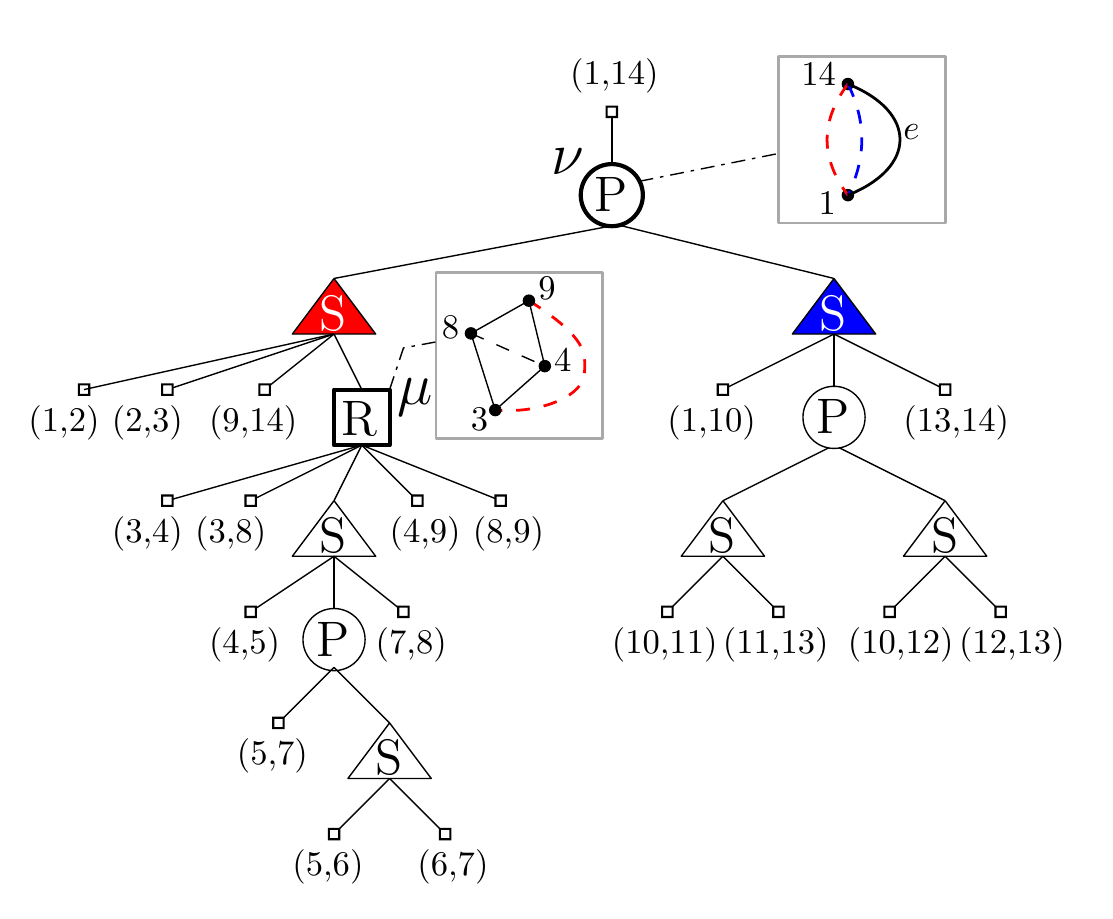}}
	\caption{(a)-(b) Two different embeddings of plane 3-graph $G$. (c) The SPQR-tree of $G$ with respect to $e$; the skeletons of nodes $\nu$ and $\mu$ are shown. The embedding in (b) has been obtained from the embedding in (a) by changing the embeddings of $\skel(\nu)$ and $\skel(\mu)$.}\label{fi:spqr-tree}
\end{figure}

The SPQR-tree $T$ rooted at a Q-node $\rho$ implicitly describes all planar embeddings of $G$ with the reference edge on the external face; they are obtained by combining the different planar embeddings of the skeletons of P- and R-nodes. For a P-node $\mu$, the embeddings of $\skel(\mu)$ are the different permutations of its non-reference edges; for an R-node $\mu$, $\skel(\mu)$ has two possible embeddings, obtained by flipping it (but its reference edge) at its poles.
The child of $\rho$ and its pertinent graph are the \emph{root child} of $T$ and the \emph{root child component} of $G$, respectively. An \emph{inner node} of $T$ is neither the root nor the root child of $T$. The pertinent graph of an inner node is an \emph{inner component} of $G$. The next lemma gives basic properties of $T$ when $\Delta(G) \leq 3$.

\begin{lemma}[\cite{DBLP:conf/gd/DidimoLP18}]\label{le:spqr-tree-3-graph}
	Let $G$ be a biconnected planar $3$-graph and $T$ be its SPQR-tree with respect to a reference edge $e$. The following properties hold: {\em \bf \textsf{T1}} Each P-node $\mu$ has exactly two children, one being an S-node and the other being an S- or a Q-node; if $\mu$ is the root child, both its children are S-nodes. {\em \bf \textsf{T2}} Each child of an R-node is either an S-node or a Q-node. {\em \bf \textsf{T3}} For each inner S-node $\mu$, the edges of $\skel(\mu)$ incident to the poles of $\mu$ are (real) edges of $G$. Also, there cannot be two incident virtual edges in $\skel(\mu)$.
\end{lemma}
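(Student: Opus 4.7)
The plan is to derive all three properties from a common bookkeeping of vertex degrees at the poles, exploiting the hypothesis $\Delta(G) \leq 3$. The key observation is that, for any node $\lambda$ of $T$ and any pole $w$ of $\lambda$, every non-reference edge at $w$ in $\skel(\lambda)$ contributes at least one distinct edge of $G$ at $w$ inside the pertinent graph $G_\lambda$: split components of the SPQR decomposition are edge-disjoint and share only the poles, so these contributions are additive. It follows that $\deg_{G_\lambda}(w) \geq \deg_{\skel(\lambda)}(w) - 1$ when $w$ is a pole of $\lambda$, and $\deg_{G_\lambda}(w) \geq \deg_{\skel(\lambda)}(w)$ otherwise. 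In particular, a pole of a P-node with $k$ parallel skeleton edges sees $\deg_{G_\lambda}(w) \geq k-1$, and a pole of an R-node sees $\deg_{G_\lambda}(w) \geq 2$.

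For \textbf{\textsf{T1}}, apply this to a P-node $\mu$: each pole $u$ satisfies $\deg_G(u) \geq (k-1) + 1 = k$, the extra $+1$ being an edge at $u$ on the parent side of $\mu$ (or the Q-reference itself, if $\mu$ is the root child). Together with $\Delta(G) \leq 3$ and $k \geq 3$ this forces $k=3$, so $\mu$ has exactly two children. At most one of the two children can be a Q-node, since a second parallel real edge would contradict simplicity; and when $\mu$ is the root child, even one Q-child is forbidden because the reference is already the real edge $(u,v)$. The no-two-adjacent-P rule excludes P-children. An R-child $\nu$ would yield $\deg_{G_\nu}(u) \geq 2$ at a pole $u$; the remaining child and the parent side then contribute two more distinct edges at $u$, pushing $\deg_G(u) \geq 4$, a contradiction. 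Hence one child is an S-node and the other is an S- or a Q-node (both S-nodes in the root-child case).

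For \textbf{\textsf{T2}}, suppose an R-node $\mu$ has an R- or P-child $\nu$, and pick a pole $w$ of $\nu$. The counting above gives $\deg_{G_\nu}(w) \geq 2$; because $\skel(\mu)$ is triconnected, $w$ has degree $\geq 3$ in $\skel(\mu)$, and the virtual edge of $\nu$ is only one of them, so at least two further non-reference edges at $w$ in $\skel(\mu)$ contribute two more distinct edges of $G$ at $w$. This pushes $\deg_G(w) \geq 4$, contradicting $\Delta(G) \leq 3$. For \textbf{\textsf{T3}}, let $\mu$ be an inner S-node with cycle skeleton and reference edge $(u,v)$. Since S-S adjacency is forbidden and Q-nodes are leaves, every virtual child of $\mu$ is an R- or a P-node, hence contributes $\geq 2$ to its own pole-degree. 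If two virtual edges of $\skel(\mu)$ share a vertex $w$, then either both are child virtuals at a non-pole $w$, giving $\deg_{G_\mu}(w) \geq 2+2 = 4$; or one is the reference and the other is a child virtual, so $w$ is a pole of $\mu$: the parent of an inner S-node is an R- or a P-node, which forces at least two further edges at $w$ on the parent side, and combined with the $\geq 2$ contribution from the child virtual this yields $\deg_G(w) \geq 4$. Both cases contradict $\Delta(G) \leq 3$, so no two virtual edges of $\skel(\mu)$ are incident; in particular the unique non-reference edge at each pole of $\mu$ is real.

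The main obstacle is making rigorous the edge-disjointness and degree-additivity of split components at a shared vertex, the one place where the construction of SPQR-trees must be unpacked explicitly; once this is in place, each of \textbf{\textsf{T1}}--\textbf{\textsf{T3}} reduces to the short case analyses sketched above, driven entirely by $\Delta(G) \leq 3$.
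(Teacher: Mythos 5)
Your proof is correct in substance, but note that the paper itself offers no proof to compare against: Lemma~\ref{le:spqr-tree-3-graph} is imported by citation from the earlier GD~2018 paper, so your argument is a self-contained reconstruction. The route you take --- additivity of degree contributions across the edge-disjoint pertinent graphs meeting at a pole, combined with $\Delta(G)\leq 3$ --- is the standard and essentially only way to prove these structural facts, and your case analyses for \textsf{T1} and \textsf{T3} are complete, including the correct handling of the reference edge as one of the two possibly-incident virtual edges in \textsf{T3}. One small repair is needed in \textsf{T2}: when the pole $w$ of the offending P- or R-child happens to also be a pole of the R-node $\mu$, one of the ``at least two further'' skeleton edges at $w$ is the reference edge of $\skel(\mu)$, not a non-reference edge, so your stated justification does not literally apply; the count still reaches $\deg_G(w)\geq 4$ because the reference edge accounts for at least one edge of $G$ at $w$ outside $G_\mu$ (by biconnectedness), which is exactly the accounting you already use for the parent side in \textsf{T1}. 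With that subcase spelled out, the argument is sound; the only piece genuinely requiring the SPQR-tree construction to be unpacked is, as you say, the edge-disjointness of the children's pertinent graphs, which is a standard property of the triconnected-component decomposition.
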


\section{Structure of the Proof of Theorem~\ref{th:main}}\label{se:proof-structure}
%
Let $G$ be a planar 3-graph. An {\em optimal orthogonal representation} is an orthogonal representation of $G$ that has the minimum number of bends and at most one bend per edge in the variable embedding setting. We assume that $G$ is 1-connected, since otherwise we can independently compute optimal orthogonal drawings of its connected components.
The proof of Theorem~\ref{th:main} is based on three main ingredients. In this section we describe these ingredients and show how they are used to prove Theorem~\ref{th:main}.

\myparagraph{First~ingredient:~Representative~shapes.}
We show the existence of an optimal orthogonal representation of a biconnected planar 3-graph whose components have one of a constant number of possible ``orthogonal shapes'' that we call {\em representative shapes}. As a consequence, we can restrict the search space for an optimal orthogonal representation of a planar 3-graph to such a set of representative shapes.

Let $T$ be the SPQR-tree of $G$ with a reference edge~$e$ as its root. Let $H$ be an orthogonal representation of $G$ with $e$ on the external face. For a node $\mu$ of $T$, denote by $H_\mu$ the restriction of $H$ to the pertinent graph $G_\mu$ of $\mu$. We also call $H_\mu$ a \emph{component of $H$ with respect to $e$}. We say that $H_\mu$ is an S-, P-, Q-, or R-component with respect to $e$ depending on whether $\mu$ is an S-, P-, Q-, or R-node, respectively. If $\mu$ is the root child of $T$, $H_\mu$ is the \emph{root child component} of $H$ with respect to $e$; if $\mu$ is not the root child, $H_\mu$ is an \emph{inner component} of $H$ with respect to~$e$.

Let $u$ and $v$ be the two poles of $\mu$ and let $p_l$ and $p_r$ be the two paths from $u$ to $v$ on the external boundary of $H_\mu$, one walking clockwise and the other counterclockwise. If $\mu$ is a P- or an R-node, $p_l$ and $p_r$ are edge disjoint. If $\mu$ is a Q-node, both $p_l$ and $p_r$ coincide with the single edge represented by the Q-node. If $\mu$ is an S-node, $p_l$ and $p_r$ share some edges and they may coincide when $H_\mu$ is just a sequence of edges.  Also, the poles $u$ and $v$ of an S-node $\mu$ can either have inner degree one or two; the \emph{inner degree} of a pole is its degree in~$H_{\mu}$. Precisely, the poles of an S-node $\mu$ have both inner degree one if $\mu$ is an inner node, while they may have inner degree two if $\mu$ is the root child. We define two {\em alias vertices} $u'$ and $v'$ of the poles $u$ and $v$ of an S-nodes. If the inner degree of $u$ is one, $u'$ coincides with $u$. If the inner degree of $u$ is two, the alias vertex $u'$  subdivides the edge of $H$ that is incident to $u$ and that does not belong to $H_{\mu}$;  in this case, the {\em alias edge} of $u$ is the edge connecting $u$ to $u'$. The definition of the alias vertex $v'$ of $v$ and of alias edge of $v$ is analogous.

Let $p$ be a path between any two vertices in $H$. The \emph{turn number} of $p$, denoted as $t(p)$, is the absolute value of the difference between the number of right and the number of left turns encountered along $p$. A turn along $p$ is caused either by a bend along an edge of $p$ or by an angle of $90^\circ$ or $270^\circ$ at a vertex of~$p$. 

\begin{lemma}[\cite{DBLP:journals/siamcomp/BattistaLV98}]\label{le:k-spiral}
	Let $H_\mu$ be an S-component and let $p_1$ and $p_2$ be any two paths in $H_\mu$ between its alias vertices. 
	Then $t(p_1)=t(p_2)$.
\end{lemma}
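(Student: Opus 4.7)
The plan is to prove the stronger statement that the \emph{signed} turn numbers of $p_1$ and $p_2$ coincide (from which $t(p_1)=t(p_2)$ follows by taking absolute values), via a direction-rigidity argument combined with induction on the structure of the S-component.

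First, I would observe that at each alias vertex only a single edge leaves the extended component $H_\mu$: if the inner degree of the pole $u$ in $H_\mu$ equals one, then $u'=u$ and the unique edge is the single edge of $H_\mu$ incident to $u$; if the inner degree equals two, then $u'$ is a degree-two subdivision vertex and the unique edge is the alias edge. The same holds at $v'$. Therefore any two paths $p_1,p_2$ between alias vertices share the same first edge and the same last edge, hence the same initial direction $d_0$ and final direction $d_n$. In units of $90^\circ$ this immediately gives $\tau(p_1)\equiv\tau(p_2)\pmod 4$, where $\tau(\cdot)$ denotes the signed turn.

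To upgrade this congruence to the exact equality $\tau(p_1)=\tau(p_2)$, I would induct on the number $k$ of children of the S-node $\mu$. Since $\mu$ is an S-node, $G_\mu$ is a series composition of child pertinent graphs $G_{\mu_1},\ldots,G_{\mu_k}$ at cut vertices $w_0=u,w_1,\ldots,w_k=v$, so every path in $H_\mu$ between the alias vertices splits canonically as $p=q^{(1)}\cdots q^{(k)}$, where $q^{(i)}$ is a path in $G_{\mu_i}$ from $w_{i-1}$ to $w_i$. Because $G$ is a 3-graph, the three angles around each degree-three $w_i$ in $H$ must form the pattern $(90^\circ,90^\circ,180^\circ)$ in some cyclic order; combined with property~\textsf{T3} (which forces the poles of every inner S-node to be incident to real edges in the skeleton), this strongly constrains how the turn at each $w_i$ can vary with the path choice. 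The inductive step uses this rigidity to show that within each child $G_{\mu_i}$ any two paths between $w_{i-1}$ and $w_i$, flanked by the already-fixed outside edges at these two vertices, contribute equal signed turn; summing over $i$ together with the junction turns then yields $\tau(p_1)=\tau(p_2)$.

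The main obstacle will be the sub-claim for non-trivial child types. For a Q-node child the claim is vacuous; for a P-node child, which by property~\textsf{T1} has only two children in a 3-graph, the two parallel subpaths can be compared directly by a local application of the direction-rigidity argument at each pole; for an R-node child one must argue that the edges incident to its poles and lying outside its pertinent graph play the role of alias edges, so that the same direction-rigidity argument can be invoked locally. Once these cases are handled, the induction closes and gives $\tau(p_1)=\tau(p_2)$, and hence $t(p_1)=t(p_2)$.
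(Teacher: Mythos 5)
The paper does not prove this lemma; it imports it from Di Battista, Liotta, and Vargiu, so your attempt has to be measured against the standard argument for the well-definedness of spirality rather than against an in-paper proof.

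Your setup is sound as far as it goes: the series decomposition of $G_\mu$ at the cut vertices $w_0=u,\dots,w_k=v$, the forced first and last edges, and the reduction to the sub-claim that for a P- or R-child $\nu$ with poles $x,y$ any two $x$--$y$ paths in $H_\nu$, each extended by the unique edges of $H_\mu\setminus H_\nu$ incident to $x$ and $y$, have equal \emph{signed} turn. But at exactly that sub-claim the proof stops being a proof. The only tool you offer there is ``the same direction-rigidity argument invoked locally,'' and that argument can only ever deliver $\tau(q_1)\equiv\tau(q_2)\pmod 4$; it cannot exclude a discrepancy of $\pm 4$. Worse, the unextended subpaths genuinely do have different turn numbers (in a D-shaped child the two boundary paths have turn numbers $0$ and $2$), so the equality you need holds only because the turns at $x$ and $y$ exactly compensate the internal difference --- and establishing that compensation \emph{is} the lemma. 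The missing idea is the total-turning property of planar orthogonal representations: along any simple cycle the signed turns sum to $\pm 4$ (i.e., $\pm 360^\circ$). One applies this to the cycle formed by two internally disjoint $x$--$y$ paths and then does the angle bookkeeping at $x$ and $y$, using the fact that the external edges at the poles lie in the outer angle of that cycle, to conclude that the two extended signed turns coincide. None of this appears in your proposal.

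Two further holes. First, two paths through an R-child need not be internally disjoint, so you must reduce to the internally disjoint case (e.g., by exchanging along maximal shared subpaths) before any cycle argument applies; you do not address this. Second, your induction is on the number of children of the S-node, but an R-child is not an S-component, so the inductive hypothesis says nothing about paths inside it; the cycle argument has to be proved separately as the base tool, after which the induction over the series composition becomes routine bookkeeping of the junction turns at the $w_i$.
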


Based on Lemma~\ref{le:k-spiral}, the orthogonal shape of an S-component can be described in terms of the turn number of any path $p$ between its two alias vertices. 

As for P-components and R-components, their orthogonal shapes are described in terms of the turn numbers of the two paths $p_l$ and $p_r$ connecting their poles on the external face. We consider the following shapes.

\begin{description}	
	\item[$\mu$ is a Q-node:] $H_\mu$ is \emph{0-shaped}, or \emph{\zeroB-shaped}, if it
	 is a straight-line segment; $H_\mu$ is \emph{1-shaped}, or \emph{\oneB-shaped}, if it
	 has exactly one bend.
	 
	 \item[$\mu$ is an S-node:] $H_\mu$ is a \emph{$k$-spiral}, for some integer $k$, if the turn number of any path $p$ between its two alias vertices is $t(p) = k$; if $H_\mu$ is $k$-spiral, we also say that $H_\mu$  has \emph{spirality} $k$.

	 \item[$\mu$ is either a P-node or an R-node:]  
	 $H_\mu$ is \emph{D-shaped}, or \emph{\D-shaped}, if $t(p_l)=0$ and $t(p_r)=2$, or vice versa; 
	 $H_\mu$ is \emph{X-shaped}, or \emph{\X-shaped}, if $t(p_l)=t(p_r)=1$;
	 $H_\mu$ is \emph{L-shaped}, or \emph{\L-shaped}, if $t(p_l)=3$ and $t(p_r)=1$, or vice versa; 
	 $H_\mu$ is \emph{C-shaped}, or \emph{\C-shaped}, if $t(p_l)=4$ and $t(p_r)=2$, or vice versa.
	 \end{description}

\smallskip

The next theorem identifies, for each component $H_{\mu}$ with respect to $e$, the set of its representative shapes. An implication of the theorem is the existence of a bend-minimum orthogonal representation of any planar 3-graph distinct from $K_4$ such that there is at most one bend per edge.

\begin{restatable}{theorem}{thShapes}\label{th:shapes}
	Let $G$ be a biconnected planar $3$-graph distinct from $K_4$. $G$ admits a bend-minimum orthogonal representation $H$ such that for every edge $e$ in the external face of $H$ the following properties hold for the orthogonal components of $H$ with respect to $e$.
	
	\smallskip\noindent{\em \bf \textsf{O1}} Every Q-component is either \zeroB- or \oneB-shaped, that is, every edge has at most one bend.
	
	\noindent{\em \bf \textsf{O2}} Every P-component or R-component is either \D- or \X-shaped if it is an inner component and it is either \L- or \C-shaped if it is the root child component.
	
	\noindent{\em \bf \textsf{O3}} Every S-component has spirality at most four.
	
	\noindent{\em \bf \textsf{O4}} Every component has the minimum number of bends within its shape.
\end{restatable}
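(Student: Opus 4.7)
My plan is to start from an arbitrary bend-minimum orthogonal representation $H_0$ of $G$ and to normalize it via a sequence of bend-preserving local modifications until properties O1--O4 hold. I fix an edge $e$ on the external face of $H_0$, root the SPQR-tree $T$ of $G$ at the Q-node corresponding to $e$, and verify the properties for this rooting; then I argue that the shape labels are intrinsic enough that they hold simultaneously for every edge on the external face of the final $H$.

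The first normalization enforces O2 by angular accounting. For an inner P- or R-component $H_\mu$, Lemma~\ref{le:spqr-tree-3-graph} forces each pole to have inner degree exactly $2$, so the boundary paths $p_l,p_r$ leave each pole along its two interior incident edges. Comparing the angular contribution of $H_\mu$ with that of the virtual edge it replaces in the parent's skeleton yields $t(p_l)+t(p_r)=2$. In a planar $3$-graph the only feasible decompositions are $(0,2)$, $(2,0)$ and $(1,1)$, matching exactly the \D- and \X-shapes. For the root-child component, the external face of $H$ is a rectilinear rectangle of total turn $4$, and a parallel count yields $t(p_l)+t(p_r)\in\{4,6\}$, matching \L\ and \C. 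Hence O2 is automatic in any orthogonal representation of a planar $3$-graph with $e$ on the external face.

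The second normalization enforces O3. By Lemma~\ref{le:k-spiral} the spirality of an S-component is well defined. If $H_\mu$ is an S-component with $|k|\geq 5$, I would exhibit an explicit local surgery that replaces $H_\mu$ by an S-component of spirality $k'\in\{-4,\ldots,4\}$, unwinding four quarter-turns and compensating in the adjacent P- or R-component to keep the representation orthogonal; a careful bend count shows that the net change is strictly negative, contradicting bend-minimality of $H_0$. The third normalization enforces O1: Kant's theorem guarantees that $G$ admits an orthogonal representation with at most one bend per edge, hence $b(H_0)$ is upper bounded by such a quantity. A local bend-exchange that shifts a bend across a degree-3 vertex from an edge carrying $\geq 2$ bends onto an adjacent bend-free edge strictly decreases a lexicographic potential measuring the maximum number of bends per edge, while preserving $b(H_0)$ and, thanks to O2 and O3 already enforced, the shape of every component.

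Finally, O4 is enforced by replacing, inside each component whose shape is now fixed, the interior representation with a bend-minimum representation of the same shape; this can be obtained by Tamassia's min-cost-flow on the skeleton of the component with the shape encoded as boundary constraints, and it alters only the interior of the component, so O1--O3 are preserved. The hardest step will be the O1 reduction, because a naive bend-exchange across a degree-3 vertex may alter the shape of a neighboring P- or R-component. The resolution is to co-execute the exchange with a shape-preserving rearrangement inside a neighboring S-component, whose bounded spirality (from O3) provides exactly the one unit of turn needed to absorb the change; correctness of the interleaved surgery is then proved by induction on the height of $T$.
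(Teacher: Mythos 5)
Your plan founders in two places, and they are the two places where the theorem actually has content. First, property \textsf{O2} is not ``automatic'': the turn number $t(p)$ is the \emph{absolute value} of the difference between right and left turns, so the only identity that angular accounting gives you is on the \emph{signed} turnings, namely $s_l+s_r=4-a_u-a_v$ for the signed turnings of the two boundary paths traversed cyclically and the pole angles $a_u,a_v\in\{-1,0,1\}$. This constrains a signed sum, not the individual turn numbers: a valid inner R-component can perfectly well have $t(p_l)=2$ and $t(p_r)=4$ (a C-shape, with the component spiralling), or worse. The theorem is an existence statement --- \emph{some} bend-minimum representation has components of the listed shapes --- and the paper proves it constructively (\cref{le:shapes}): it takes the rectilinear image of the component, deletes the reference edge, and re-runs the no-bend drawing algorithm of Rahman et al.\ with four carefully chosen degree-2 external corners so that the boundary paths acquire exactly the prescribed turn numbers. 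Nothing of the sort follows from a counting identity.

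Second, your \textsf{O1} step --- ``shift a bend across a degree-3 vertex onto an adjacent bend-free edge'' --- is exactly the step that cannot be done locally. An adjacent bend-free edge need not exist, and even when a free edge exists somewhere, relocating a bend changes the number of degree-2 vertices on the 2-extrovert and 3-extrovert cycles through the two edges involved, and the resulting assignment is realizable only if the modified plane graph still satisfies all three conditions of Theorem~\ref{th:RN03}. The paper's proof of \textsf{O1} (\cref{le:2-bends,le:external-face,le:internal-edge,le:external-edge,le:1-bend}) is an explicit case analysis over how a doubly-bent edge sits on 2-extrovert cycles, and in several cases the fix requires \emph{changing the planar embedding} --- flipping a 2-extrovert cycle at its leg vertices, or rerouting an external edge so that a vertex becomes internal --- before any bend can be moved. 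A purely local, embedding-preserving exchange, even one ``co-executed'' with a rearrangement in a neighbouring S-component, does not cover these cases. Moreover, the appeal to Kant's result only bounds $b(H_0)$ from above by the cost of \emph{some} one-bend-per-edge drawing; it says nothing about whether a \emph{bend-minimum} drawing can be normalized to one bend per edge at no extra cost, and that simultaneous optimality is precisely what was open. Your \textsf{O3} surgery is likewise asserted rather than exhibited; in the paper it rests on \cref{le:elbow} and the component-substitution machinery inherited from the earlier work.
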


We remark that Kant shows that every planar 3-graph (except $K_4$) has an orthogonal representation with at most one bend per edge~\cite{DBLP:journals/algorithmica/Kant96}, but the total number of bends is not guaranteed to be the minimum. On the other hand, in~\cite{DBLP:conf/gd/DidimoLP18} it is shown how to compute a bend-minimum orthogonal representation of a planar 3-graph in the variable embedding setting with constrained shapes for its orthogonal components, but there can be more than one bend per edge. \cref{fi:opt-orth-bend-min-2-bends,fi:opt-orth-1-bend,fi:opt-orth-bend-min-1-bend} show
different orthogonal representations of the same planar $3$-graph, with different levels of optimality in terms of edge bends. The representation in \cref{fi:opt-orth-bend-min-2-bends} is optimal in terms of total number of bends but has some edges with two bends. The representation in \cref{fi:opt-orth-1-bend} has at most one bend per edge, but it does not minimize the total number of bends. The representation in \cref{fi:opt-orth-bend-min-1-bend} is optimal both in terms of total number of bends and in terms of maximum number of bends per edge.

\begin{figure}[tb]
	\centering
	\subfloat[]{\label{fi:opt-orth-bend-min-2-bends}\includegraphics[width=0.25\columnwidth]{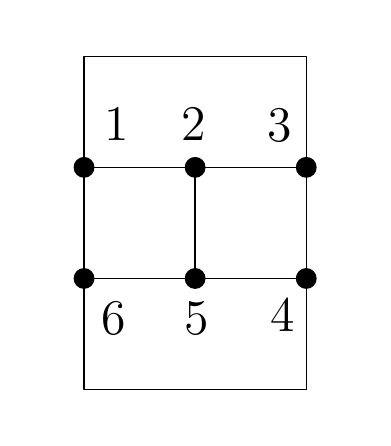}}
	\hfill
	\subfloat[]{\label{fi:opt-orth-1-bend}\includegraphics[width=0.25\columnwidth]{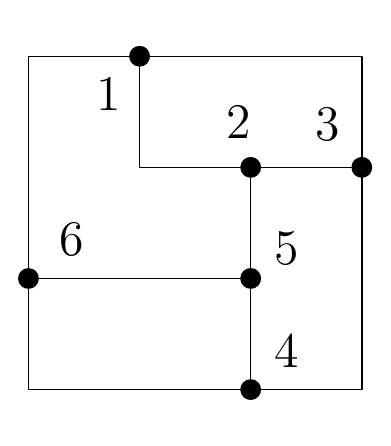}}
	\hfill
	\subfloat[]{\label{fi:opt-orth-bend-min-1-bend}\includegraphics[width=0.25\columnwidth]{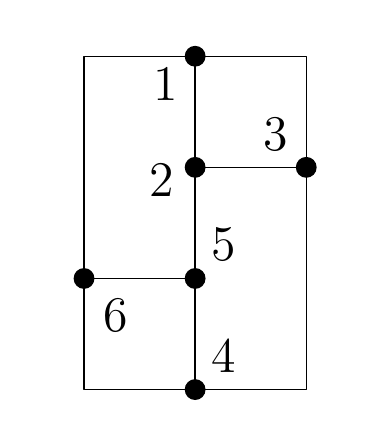}}
	\caption{(a) Bend-minimum orthogonal representation with at most 2 bends per edge. (b) Orthogonal representation with at most 1 bend per edge that is not bend-minimum. (c) Optimal orthogonal representation.}\label{fi:opt-orth}
\end{figure}

\myparagraph{Second ingredient: Labeling algorithm and Bend-Counter.} The second ingredient for the proof of Theorem~\ref{th:main} is a linear-time labeling algorithm that applies to 1-connected planar 3-graphs. Each edge $e$ of a block (biconnected component) $B$ of $G$ is labeled with the number $b(e)$ of bends of an $e$-constrained bend-minimum orthogonal representation of~$B$ with at most one bend per edge, if such a representation exists; if all $e$-constrained bend-minimum orthogonal representations require two bends on some edge we set $b(e)$ equal to~$\infty$. The labeling easily extends to the vertices of~$B$. Namely, for each vertex $v$ of~$B$, $b(v)$ is the minimum of the labels associated with the edges of~$B$ incident to $v$.
The labeling of the vertices is used in the drawing algorithm when we compose the orthogonal representations of the blocks of a 1-connected graph. We also label each block $B$ of $G$ with the number of bends $b(B)$ of an optimal $B$-constrained orthogonal representation of $G$, i.e., an orthogonal representation of $G$ such that at least one edge of $B$ is on the external face (such a representation always exists if $G$ is different from $K_4$). 

Let $T$ be the SPQR-tree of a block of $G$ rooted at an arbitrary edge~$e^*$. We perform a first $O(n)$-time bottom-up visit $T$, which can be regarded as a pre-processing step. Based on Theorem~\ref{th:shapes}, for each node $\mu$ of $T$ it suffices to consider its $O(1)$ representative shapes and, for each such shape, compute the minimum number of bends. Namely, for each node $\mu$ and for each representative shape $\sigma$ of $\mu$ (i.e., those in Theorem~\ref{th:shapes}), we compute the minimum number of bends $b_{e^*}^{\sigma}(\mu)$ of an orthogonal component $H_{\mu}$ with representative shape~$\sigma$. When $\mu$ is the root of $T$ (i.e., $\mu$ is the Q-node associated with $e^*$), the label of $e^*$ is $b(e^*)=\min\{b_{e^*}^{\zerob}(\mu),b_{e^*}^{\oneb}(\mu)\}$, where $b_{e^*}^{\zerob}(\mu)$ (resp. $b_{e^*}^{\oneb}(\mu)$) corresponds to the number of bends of an optimal $e^*$-constrained representation of $G$ with $e^*$ drawn with zero bends (resp. one bend).  
After this pre-processing step, for every other edge $e \neq e^*$, the values $b_e^{\sigma}(\mu)$ are efficiently computed by rooting $T$ at the Q-node associated with $e$ and by performing a bottom-up visit of $T$. For each node $\mu$, the values $b_e^{\sigma}(\mu)$ are computed in $O(1)$ time.   
%
%
%
%
The computation of $b_{e}^{\sigma}(\mu)$ is particularly tricky when $\mu$ is an R-node, whose skeleton is a triconnected cubic planar graph. In this case, each virtual edge of $\mu$ always corresponds to an S-component $\nu$; the spirality of $\nu$ can be increased up to a certain value without introducing extra bends along the real edges of $\skel(\nu)$.
For this property, a virtual edge of $\skel(\mu)$ is called a `flexible edge'. We aim at computing an orthogonal representation of $\skel(\mu)$ of minimum cost, where the cost is the number of bends on the real edges of $\skel(\mu)$ plus the extra bends possibly added to its flexible edges.
We use the following.

\begin{theorem}\label{th:bend-counter}
	Let $G$ be an $n$-vertex planar triconnected cubic graph, which may have some flexible edges.
	There exists a data structure that can be computed in $O(n)$ time and such that, for each possible face $f^*$ of $G$, it returns in $O(1)$ time the cost of a cost-minimum orthogonal representation of $G$ with $f^*$ as external face. 
\end{theorem}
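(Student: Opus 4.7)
The plan exploits that, since $G$ is triconnected and planar, its combinatorial embedding is unique up to reflection, so the only degree of freedom is the choice of the external face $f^*$. Since $G$ has $O(n)$ faces, it suffices to compute the cost for every face in $O(n)$ total time and to store the results in an array indexed by face, after which each query becomes an $O(1)$ lookup.

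First, I would fix an arbitrary reference face $f_0$ and compute in $O(n)$ time the cost $\gamma_0$ of a cost-minimum orthogonal representation of $G$ with $f_0$ as external face, invoking a linear-time bend-minimization procedure for biconnected planar $3$-graphs with fixed embedding (such as~\cite{DBLP:conf/cocoon/Hasan019}), adapted to treat flexible edges as ``elastic'' positions where bends may be absorbed at no extra cost up to the flexibility budget of the edge. Since triconnectedness of $G$ rules out $2$-extrovert cycles, \cref{th:RN03} reduces the bend accounting to the four-corner requirement on the outer face together with the condition that each $3$-extrovert cycle has enough ``degree-$2$ substitutes'' (flexible edges).

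Second, I would derive an additive decomposition of the form $\gamma(f) = \gamma_0 + \Delta(f_0,f)$, where $\gamma(f)$ denotes the cost with $f$ as external face and $\Delta(f_0,f)$ sums the contributions of the $3$-extrovert cycles (equivalently, the $3$-edge-cuts of $G$) whose ``problematic side'' changes when the external face is moved from $f_0$ to $f$, together with a term that accounts for the outer-face corner deficit. I would then run a DFS over the dual graph $G^*$ starting at $f_0$; each step across a dual edge flips only the small set of $3$-edge-cuts whose separator is crossed by that dual edge, so the running value of $\Delta$ can be maintained in amortized $O(1)$ per step after $O(n)$ preprocessing of the $3$-edge-cut structure of $G$ (which is known to admit a cactus-like tree representation). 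Writing the running value into the output array at each visited face completes the preprocessing in $O(n)$ total time, and each query is then a direct array lookup.

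The main obstacle will be establishing this additive decomposition cleanly for triconnected cubic planar graphs with flexibility: one must show that the minimum cost splits as a sum of local contributions, one per $3$-extrovert cycle plus one for the outer-face corner deficit, each depending only on which side of the cycle contains $f^*$. This step is delicate because flexible edges can be ``shared'' among several competing bad cycles, so a greedy assignment of flex edges to cycles must be proved compatible with the decomposition and invariant under the choice of external face. Once this combinatorial lemma is in place, the dual-graph traversal and the $O(n)$-time construction of the tabulated array are routine.
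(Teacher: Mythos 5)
Your high-level architecture — a fixed-embedding cost formula evaluated at a reference face, plus an $O(1)$-maintainable correction term as the external face moves through the dual, tabulated per face — is the same skeleton the paper uses (reference embedding, the formula of \cref{th:fixed-embedding-min-bend}, and the 3-extrovert tree with the prefix counters $\extr(\cdot)$ and $\intr(\cdot)$). However, the proposal defers exactly the two steps that constitute the substance of the proof, and one of them is not merely ``delicate'' but requires an idea you have not identified.

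First, the base cost $\gamma_0$. There is no off-the-shelf linear-time fixed-embedding bend-minimizer for cubic graphs with flexible edges to ``invoke'': the closed formula $c(G) = |D(G)| + 4 - \min\{4, |D_{f}(G)| + \flex(f)\}$ is itself a contribution of the paper (\cref{th:fixed-embedding-min-bend}), and the quantity $\flex(f)$ is not the naive sum of flexibilities on $f$ — it is defined by an intricate case analysis (\cref{de:flex-f}) that looks across $f$ into adjacent faces (the terms $x_e$), precisely because bending a flexible outer edge creates new $2$- and $3$-extrovert cycles in the rectilinear image whose conditions must be satisfiable for free. Your ``elastic positions'' phrasing glosses over this; the reason a face can sometimes absorb only $\flex(e_0)-1$ of the $\flex(e_0)$ nominal bends of its unique flexible edge is exactly this side effect, and it is what makes the corner-deficit term non-local.

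Second, and more fundamentally, the additive update is not a matter of cycles ``flipping sides.'' When $f^*$ moves inside a $3$-extrovert cycle $C$, the cycle that becomes $3$-extrovert is not $C$ itself but the distinct cycle $\phi(C)$ with the same legs (\cref{le:3-extro-3-intro}), and whether $\phi(C)$ is \emph{demanding} in the new embedding is not a function of whether $C$ was demanding: it depends recursively on the colorings of the siblings of $C$ and of $\phi(C')$ for the parent $C'$ (\cref{le:demanding-3-introvert-charact}, proved via the \textsc{3-Introvert Coloring Rule} and \cref{le:extrovert-introvert-coloring}). Your decomposition silently assumes each $3$-edge-cut carries a precomputable pair of contributions, one per side; that is true in the end, but establishing the ``inside'' contribution is the heart of the argument and needs its own $O(n)$ coloring pass. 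Relatedly, your appeal to a cactus representation of the $3$-cuts ignores that the inclusion order of $3$-extrovert cycles is a tree only for a \emph{reference embedding} (compare \cref{fi:reference-embedding-a}); an arbitrary $f_0$ does not work, which is why the paper first proves \cref{le:ref-embedding}. Finally, the corner-deficit term requires knowing how many of the demanding cycles incident to $f^*$ are pairwise edge-disjoint (\cref{le:fagiolo-nero-extrovert,le:fagiolo-bianco}); this is another non-additive interaction your $\Delta(f_0,f)$ does not account for.
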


The data structure of Theorem~\ref{th:bend-counter}, called \texttt{Bend-Counter}, is described in \cref{se:triconnected}.
We recall that the problem of computing orthogonal drawings of graphs with flexible edges is also studied by Bl\"asius et al.~\cite{DBLP:journals/algorithmica/BlasiusKRW14,DBLP:journals/comgeo/BlasiusLR16}, who however consider computational questions different from the ones in this paper.
%
%
Theorem~\ref{th:bend-counter} is used in the proof of the following.


\begin{theorem}\label{th:key-result-2}
	Let $G$ be a biconnected planar $3$-graph. 
    There is an $O(n)$-time algorithm that labels each edge $e$ of $G$ with the number $b(e)$ of bends of an optimal $e$-constrained orthogonal representation of $G$, if such a representation exists.
\end{theorem}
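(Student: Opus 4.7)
The plan is to reduce the problem to $O(n)$ bottom-up and top-down SPQR-tree visits combined with $O(1)$ queries to the Bend-Counter of Theorem~\ref{th:bend-counter} at each R-node. First I would build the SPQR-tree $T$ of $G$ in $O(n)$ time~\cite{DBLP:conf/gd/GutwengerM00} and root it at an arbitrary Q-node $\rho^*$ associated with an edge $e^*$. By Theorem~\ref{th:shapes}, the search space for an optimal $e$-constrained orthogonal representation can be restricted to the $O(1)$ representative shapes per node (\zeroB, \oneB\ for Q-nodes; at most five spiralities for S-nodes; \D, \X\ when inner and \L, \C\ when root child for P- and R-nodes). It therefore suffices, for every node $\mu$ and every admissible shape $\sigma$, to store the value $b_{e^*}^\sigma(\mu)$, namely the minimum number of bends of a component $H_\mu$ having shape $\sigma$ and at most one bend per edge, setting it to $\infty$ if no such component exists.

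Second I would perform a bottom-up pre-processing visit of $T$ to compute all values $b_{e^*}^\sigma(\mu)$. At S- and P-nodes the aggregation is purely combinatorial and takes $O(1)$ per child, because the shape of the parent is determined by an $O(1)$-sized case distinction on the shapes of its $O(1)$ children (by Property~\textsf{T1} of \cref{le:spqr-tree-3-graph}, a P-node has at most two children). At an R-node $\mu$ each virtual edge corresponds, by Property~\textsf{T2}, to an S-component whose spirality can be shifted within a bounded interval without introducing extra bends on real edges, so such a virtual edge is flexible in the sense of Theorem~\ref{th:bend-counter}. Applying Theorem~\ref{th:bend-counter} to $\skel(\mu)$ then gives, in $O(|\skel(\mu)|)$ time, a data structure that returns in $O(1)$ the cost of $\skel(\mu)$ for any chosen external face. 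Since $\sum_\mu |\skel(\mu)| = O(n)$, the entire preprocessing costs $O(n)$, and the label of $e^*$ is $b(e^*)=\min\{b_{e^*}^{\zerob}(\rho^*),b_{e^*}^{\oneb}(\rho^*)\}$.

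Third, the core step is extending this to label every edge $e$ with $b(e)$ in $O(n)$ total time rather than re-running the bottom-up visit once per edge. I would perform a second, top-down pass that computes, for every node $\mu$, the analogous $O(1)$ values for the complement component of $G_\mu$, i.e., the part of $G$ lying outside $\mu$ when $T$ is rerooted at a Q-node inside $G_\mu$. Propagating these "outside" tables from parent to child takes $O(1)$ per tree edge at S- and P-nodes through the same combinatorial rules used in the bottom-up pass. At an R-node the update requires choosing a new external face of $\skel(\mu)$, one incident to the virtual edge leading to the child in whose pertinent graph the new root lies; the Bend-Counter already computed for $\skel(\mu)$ answers this query in $O(1)$. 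When the downward visit reaches a Q-node $\rho$ associated with edge $e$, combining the outside value with the two candidate shapes \zerob, \oneb\ of $\rho$ yields $b(e)$ in $O(1)$. Because each tree edge is traversed twice and each update is $O(1)$, the whole labeling runs in $O(n)$.

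The main obstacle I expect is the uniform treatment of the rerooting step at R-nodes. When the root moves across an R-node $\mu$, both the pole pair of $\mu$ and the face of $\skel(\mu)$ that must become external change; moreover, the affected children switch between being inner components, whose admissible shapes by \textsf{O2} are \D\ and \X, and being the root child, whose admissible shapes are \L\ and \C. Theorem~\ref{th:bend-counter} is exactly what makes this transition affordable, because it provides in $O(1)$ per query the cost of an optimal orthogonal representation of $\skel(\mu)$ under any choice of external face, with its flexible virtual edges already accounting for the extra turns that the child S-components can absorb. Combined with a careful case analysis ensuring that the candidate set of shapes remains $O(1)$ at every node during rerooting, the update rules close up and the overall complexity is $O(n)$.
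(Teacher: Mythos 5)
Your proposal is correct and follows essentially the same route as the paper: restrict to the $O(1)$ representative shapes of Theorem~\ref{th:shapes}, compute shape-cost sets bottom-up with $O(1)$-time aggregation at S- and P-nodes and Bend-Counter queries at R-nodes, and amortize over all re-rootings so that each direction of each SPQR-tree edge is processed once --- your explicit top-down ``outside'' tables are exactly the dart values of the paper's reusability lemma (\cref{le:amortized}). The only detail you gloss over is that moving the reference edge across an R-node changes the flexibilities of the old and new reference edges in $\skel(\mu)$, which the paper handles with an $O(1)$-time Bend-Counter update (\cref{le:bend-counter-update}); this is a fixable omission, not a gap.
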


Finally, we extend the ideas of Theorem~\ref{th:key-result-2} to label the blocks in a block-cut-vertex tree of a 1-connected planar 3-graph $G$. 

\begin{theorem}\label{th:1-connected-labeling}
	Let $G$ be a 1-connected planar $3$-graph different from $K_4$. 
	There exists an $O(n)$-time algorithm that labels each block $B$ of $G$ with the number $b(B)$ of bends of an optimal $B$-constrained orthogonal representation~of~$G$.
\end{theorem}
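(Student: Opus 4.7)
The plan is to combine Theorem~\ref{th:key-result-2} with a re-rooting dynamic program on the block-cut-vertex tree $T_{bc}$ of $G$.

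First, I would construct $T_{bc}$ in linear time via DFS and, for every block $B$ of $G$, invoke Theorem~\ref{th:key-result-2} on $B$ to label every $e \in E(B)$ with $b_B(e)$, then set $b_B(v) := \min\{b_B(e) : e \in E(B), e \ni v\}$ for every $v \in V(B)$. Because $G$ is a planar 3-graph, each vertex lies in at most three blocks (its at most three incident edges each lie in a single block), hence $\sum_B |V(B)| = \sum_B |E(B)| = O(n)$ and the pre-processing takes $O(n)$ total time. Moreover, no block of $G$ is isomorphic to $K_4$: since $K_4$ is $3$-regular, any cut vertex of $G$ belonging to a $K_4$-block would have degree at least $4$ in $G$; combined with $G \neq K_4$, this ensures Theorem~\ref{th:key-result-2} is applicable to each block.

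Second, I would derive a separability formula for $b(B)$. Viewing $T_{bc}$ as rooted at $B$, every other block $B'$ has a unique parent cut vertex $v_{B'}$, defined as the cut vertex adjacent to $B'$ on the path in $T_{bc}$ from $B'$ to $B$. Since the rest of $G$ is attached to $B'$ only at $v_{B'}$, in any $B$-constrained optimal orthogonal representation of $G$ the block $B'$ must be drawn with $v_{B'}$ on its external face, contributing at least $b_{B'}(v_{B'})$ bends. Conversely, independently optimal choices of such per-block representations can be glued at cut vertices to produce a $B$-constrained representation of $G$ realizing the sum exactly, because blocks interact only at cut vertices and inserting an attached block into any face incident to its cut vertex in the parent block does not change the bend count of either block. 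Writing $b^*_B := \min_{e \in E(B)} b_B(e)$, we obtain
\[
b(B) \;=\; b^*_B \;+\; \sum_{B' \neq B} b_{B'}(v_{B'}).
\]

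Third, I would compute all $b(B)$ in $O(n)$ total via a standard tree re-rooting DP on $T_{bc}$. After rooting $T_{bc}$ at an arbitrary block $B_0$, a bottom-up pass computes, for every non-root block $B'$ with parent cut vertex $v_{B'}$, the aggregate $C(B') = b_{B'}(v_{B'}) + \sum_{B''} C(B'')$ where $B''$ ranges over the direct child blocks of $B'$; this immediately yields $b(B_0)$. A subsequent top-down pass computes $b(B)$ for every other block by moving the root one tree edge at a time: when shifting the root from $B_1$ to an adjacent block $B_2$ through cut vertex $v$, the contribution of $B_1$'s side to $b(B_2)$ can be reconstructed in $O(1)$ from $b(B_1)$, $C(B_2)$, and the local values $b_{B_1}(v), b_{B_2}(v), b^*_{B_1}, b^*_{B_2}$. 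Each node of $T_{bc}$ is processed $O(1)$ times, so the total running time is $O(|T_{bc}|) = O(n)$.

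The main obstacle is justifying the separability formula, especially the \emph{gluing} direction: given independently chosen per-block optimal representations with the correct cut vertex on each external face, one must certify that they compose into an orthogonal representation of $G$ with an edge of $B$ on the external face and total bends equal to the sum. This uses that a cut vertex of $G$ is incident to at most three blocks in a 3-graph, so the cyclic arrangement of blocks around each cut vertex always admits a face assignment compatible with the chosen external face of each block without altering any bend count.
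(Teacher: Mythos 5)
Your proposal is correct and follows essentially the same route as the paper: label every block via Theorem~\ref{th:key-result-2} (with the same observation that $\sum_B|E(B)|=O(n)$ and that no block can be $K_4$), express $b(B)$ as the minimum edge label of $B$ plus the sum over all other blocks of the label of their parent cut vertex, and compute all roots in linear time by reusing subtree aggregates. The only cosmetic difference is that you phrase the last step as a standard bottom-up/top-down re-rooting DP, whereas the paper realizes the same reuse via its dart-based ``reusability principle'' of \cref{le:amortized}; the gluing/angle argument you flag as the main obstacle is exactly what the paper discharges through \cref{th:gd2018-enhanced-v} in the proof of \cref{th:main}.
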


\myparagraph{Third ingredient: Computing the drawing.} The third ingredient is the drawing algorithm. When $G$ is biconnected, we use Theorem~\ref{th:key-result-2} and choose an edge~$e$ such that $b(e)$ is minimum (the label of all the edges is~$\infty$ only when $G = K_4$). We then construct an optimal orthogonal representation of $G$ with $e$ on the external face by visiting the SPQR-tree of $G$ rooted at~$e$. We prove the following. 

\begin{restatable}{theorem}{thGdEnhanced}\label{th:gd2018-enhanced}
	Let $G$ be a biconnected planar $3$-graph different from $K_4$ and let $e$ be an edge of $G$ whose label $b(e)$ is minimum. There exists an $O(n)$-time algorithm that computes an optimal 
	orthogonal representation of~$G$ with $b(e)$ bends.
\end{restatable}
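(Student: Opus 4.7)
I would use the labeling produced by Theorem~\ref{th:key-result-2} as a preprocessing phase and then build the representation by a single top-down visit of the SPQR-tree $T$ of $G$ rooted at the Q-node $\rho$ associated with~$e$. During the bottom-up computation of the values $b_e^\sigma(\mu)$, for each node $\mu$ of $T$ and each representative shape $\sigma$ admitted for $\mu$ by Theorem~\ref{th:shapes}, I would store a constant-size \emph{witness token} that certifies how $b_e^\sigma(\mu)$ is attained by a combination of shapes of the children of $\mu$. For a P-node the token records the chosen permutation of its two children (which are S- or Q-nodes by Lemma~\ref{le:spqr-tree-3-graph}) together with their spirality values; for an S-node it records how the total spirality (at most four by \textsf{O3}) is split among its P-, R-, and Q-node children; for an R-node it records the embedding of $\skel(\mu)$ together with the spirality values to be assigned to its flexible virtual edges. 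Since each $b_e^\sigma(\mu)$ is computed in $O(1)$ time by Theorem~\ref{th:key-result-2}, augmenting the labeling with witness tokens does not break the $O(n)$ budget.

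\textbf{Top-down construction.} Starting from $\rho$, I select for the root-child component $\mu^*$ the shape that realizes $\min\{b_e^\zerob(\mu^*), b_e^\oneb(\mu^*)\}$; by \textsf{O2} this shape is \L- or \C-shaped. I then descend $T$: for each visited node $\mu$ whose shape has been fixed by its parent, the stored token prescribes a shape for every child, and the local orthogonal representation of $H_\mu$ is assembled from the representations of the children exactly as in the shape-combination tables used in~\cite{DBLP:conf/gd/DidimoLP18}. Concretely, a Q-node is drawn either as a straight segment or as a one-bend edge; an S-node realizes its prescribed spirality by distributing the right/left unit turns along its path, corner by corner; a P-node commits to its stored embedding and glues together the two child representations along its two poles. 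For an R-node $\mu$, the witness token gives a face $f_\mu$ of $\skel(\mu)$ that must become the external face of $H_\mu$, and I build a bend-minimum orthogonal representation of $\skel(\mu)$ with $f_\mu$ external and with the prescribed number of bends on each flexible virtual edge, finally replacing each virtual edge by the recursively drawn child component whose spirality matches.

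\textbf{Running time and main obstacle.} Each visited node contributes work proportional to the size of its skeleton, and the total size of all skeletons in an SPQR-tree is $O(n)$, so the traversal costs $O(n)$ overall; composing children representations at a pole amounts to merging two constant-degree orthogonal representations along a shared vertex, which is $O(1)$ per child. The main obstacle lies in the R-node step: Theorem~\ref{th:bend-counter} returns the cost of a cost-minimum orthogonal representation of $\skel(\mu)$ with a given external face in $O(1)$ time, but here I also need to actually produce such a representation in $O(|\skel(\mu)|)$ time. I would handle this by enriching the \texttt{Bend-Counter} of Theorem~\ref{th:bend-counter} with pointers into the constructive rectilinear drawing procedure for triconnected cubic plane graphs of Rahman et al.~\cite{DBLP:journals/ieicet/RahmanEN05}, so that, given the selected external face $f_\mu$ and the target spiralities on the flexible edges, a concrete orthogonal representation of $\skel(\mu)$ can be emitted in time linear in $|\skel(\mu)|$. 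Summing over all R-nodes then yields the claimed $O(n)$ bound, and the optimality of the computed representation follows directly from the optimality of the stored tokens guaranteed by Theorems~\ref{th:shapes} and~\ref{th:key-result-2}.
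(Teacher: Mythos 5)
Your proposal follows essentially the same route as the paper: augment the labeling pass with enough per-node information to recover the optimal shape of every component, then assemble the representation by combining child representations node by node (distributing the extra bends of S-components over real edges, and at R-nodes substituting the drawn S-components for the virtual edges of a cost-minimum representation of $\skel(\mu)$). The one point you flag as the main obstacle --- actually emitting a cost-minimum representation of $\skel(\mu)$ with the prescribed external face and flexibilities in $O(|\skel(\mu)|)$ time --- is already settled by the constructive clause of Theorem~\ref{th:fixed-embedding-min-bend} (the case analysis placing the bends followed by \textsf{NoBend-Alg} of~\cite{DBLP:journals/jgaa/RahmanNN03}), so no further enrichment of the \texttt{Bend-Counter} is required.
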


For 1-connected graphs, we use the next theorem to suitably merge the orthogonal representations of the different blocks of the block-cut-vertex tree.

\begin{restatable}{theorem}{thgdenhanced-v}\label{th:gd2018-enhanced-v}
	Let $G$ be a biconnected planar 3-graph and $v$ be a designated vertex of $G$ with $\deg(v) \leq 2$. An optimal $v$-constrained orthogonal representation $H$ of $G$ can be computed in $O(n)$ time. Also, $H$ has an angle larger than $90^\circ$ at $v$ on the external face.
\end{restatable}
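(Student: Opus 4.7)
The plan is to reduce the $v$-constrained problem to the edge-constrained case handled by \cref{th:gd2018-enhanced}. Since any $v$-constrained representation must have at least one edge incident to $v$ on the external face, and conversely any edge $e \ni v$ on the external face forces $v$ on it, the optimum number of bends of a $v$-constrained representation equals $b(v):=\min_{e \ni v} b(e)$. I would first invoke \cref{th:key-result-2} to label all edges in $O(n)$ time, pick an edge $e^\star$ incident to $v$ with $b(e^\star)=b(v)$ among the at most two candidates, and run \cref{th:gd2018-enhanced} with $e^\star$ as reference edge, thus producing an optimal $v$-constrained representation $H$ in $O(n)$ time.

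The angle condition holds trivially when $\deg(v)=1$, since the only angle at $v$ equals $360^\circ$. Now suppose $\deg(v)=2$ and let $e_1,e_2$ be the two edges incident to $v$. Both lie on the external boundary of $H$ and a unique internal face $f_v$ is incident to $v$; since the two angles at $v$ sum to $360^\circ$, requiring the external angle at $v$ to exceed $90^\circ$ is equivalent to requiring the internal angle of $f_v$ at $v$ to belong to $\{90^\circ,180^\circ\}$, i.e., to ruling out the value $270^\circ$.

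The main obstacle is guaranteeing this last requirement without harming optimality. My approach is a local argument at the root of the SPQR-tree $T$ of $G$ with respect to $e^\star$: by \cref{le:spqr-tree-3-graph} and \cref{th:shapes}, $v$ is a pole of the root child with inner degree one, and the root child falls into a constant number of candidate representative shapes (\L{} or \C{} for P- and R-components, spirality at most four for S-components, and \zeroB{} or \oneB{} for Q-components), each admitting only $O(1)$ ways to place the corners at $v$. I would argue that among the bend-minimum options for the root child at least one realizes an internal angle of $f_v$ at $v$ in $\{90^\circ,180^\circ\}$: intuitively, an internal angle of $270^\circ$ at the degree-two pole $v$ marks a reflex corner of $f_v$ that can always be shifted to an adjacent boundary location without adding bends, because a pole of inner degree one has enough slack in the representative shapes of \cref{th:shapes} to reassign one corner locally. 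If needed, my algorithm would also try $e_2$ in place of $e_1$ as reference edge and keep the better outcome. In either case only $O(1)$ additional work is performed on top of \cref{th:gd2018-enhanced}, so the overall running time stays $O(n)$.
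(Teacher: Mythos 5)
Your overall reduction is the same as the paper's: root the SPQR-tree at an edge incident to $v$ (both edges incident to a degree-2 vertex $v$ lie on the external face exactly when $v$ does, so the $v$-constrained optimum is $\min_{e \ni v} b(e)$) and run the algorithm of \cref{th:gd2018-enhanced}. The degree-1 case is also fine, although you could note that biconnectivity forces $G$ to be a single edge there. The problem is the angle condition, which is the only non-trivial content of the statement, and your argument for it is not a proof. The claim that ``an internal angle of $270^\circ$ at the degree-two pole $v$ \ldots can always be shifted to an adjacent boundary location without adding bends'' because the representative shapes ``have enough slack'' is precisely the assertion that needs to be established; as written it is an intuition, and your fallback of trying the other incident edge does not close the gap, since nothing you prove rules out both rootings yielding a $90^\circ$ external angle at $v$. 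A secondary inaccuracy: since $\deg_G(v)=2$, the root child of $T$ rooted at an edge incident to $v$ can only be an S-node (a P- or R-root-child would force $\deg_G(v)\geq 3$, and a Q-root-child would make $G$ a non-simple two-edge bundle), so your case analysis over L-, C-, 0-bend- and 1-bend-shaped root children is vacuous and hides the structural fact that actually makes the argument work.

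The paper closes this gap in two concrete ways that your proposal omits. First, \cref{le:1-bend} already establishes \emph{existence}: its proof uses the freedom, guaranteed by Rahman et al.\ for good plane graphs, to designate which four external degree-2 vertices become the $270^\circ$ corners of the external face, and it simply chooses $v$ as one of them; this is the rigorous counterpart of your ``shift the reflex corner'' step. Second, for the \emph{algorithm}, the paper observes that the root child is necessarily an S-node and that the merging step of the \cref{th:gd2018-enhanced} construction, when it attaches the reference edge to the degree-1 pole $v$ of the root-child S-component, produces an angle of $180^\circ$ or $270^\circ$ at $v$ on the external face. To repair your proof you should either invoke \cref{le:1-bend} directly for the angle guarantee, or make the S-node merging argument explicit; the corner-shifting claim cannot be left as stated.
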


\myparagraph{Proof of Theorem \ref{th:main}.}
	We use Theorem~\ref{th:1-connected-labeling} and choose a block $B$ such that $b(B)$ is minimum. We root the block-cut-vertex tree (BC-tree) $\cal T$ of $G$ at $B$ and compute an optimal orthogonal representation $H$ of $B$ by using Theorem~\ref{th:gd2018-enhanced}. 
	Let $v$ be a cut-vertex of $G$ that belongs to $H$ and let $B_v$ be a child block of $v$ in $\cal T$. Denote by $H_v$ an optimal $v$-constrained orthogonal representation of $B_v$. 
	Since $deg(v) \leq 2$ in $B_v$, by Theorem~\ref{th:gd2018-enhanced-v} we can assume that the angle at $v$ on the external face of $H_v$ is larger than $90^\circ$.
	
	Since $deg(v) \leq 2$ in $H$ there is a face of $H$ where $v$ forms an angle larger than $90^\circ$. Also, if $deg(v) = 2$ in $H$ then $B_v$ is a trivial block and if $deg(v) = 1$ in $H$ then $B$ is a trivial block. It follows that $H_v$ can always be inserted into the face of $H$ where $v$ forms an angle larger than $90^\circ$, yielding an optimal orthogonal representation of the graph formed by the two blocks $B$ and $B_v$. 
	Any other block of $G$ can be added by recursively applying this procedure, so to get an optimal orthogonal representation of $G$ with $b(B)$ bends. 
	
	Since (i) computing the labels of all blocks of $G$ requires $O(n)$ time (Theorem~\ref{th:1-connected-labeling}); (ii) computing an optimal orthogonal representation for the root block $B$ requires time proportional to the size of $B$ (Theorem~\ref{th:gd2018-enhanced}); and (iii) computing an optimal $v$-constrained orthogonal representation of each block $B_v$ requires time proportional to the size of $B_v$ (Theorem~\ref{th:gd2018-enhanced-v}), the theorem follows. 
	
	\medskip
	The rest of the paper is devoted to the proofs of Theorems~\ref{th:shapes}--\ref{th:gd2018-enhanced-v}.

\section{Representative~Shapes:~Proof~of~Theorem~\ref{th:shapes}}\label{se:thshapes}
To prove Theorem~\ref{th:shapes} we first concentrate on Property~\textsf{O1} (\cref{sse:O1}) and then we discuss Properties~\textsf{O2}-\textsf{O4} (\cref{sse:O2-O4}), whose proof is mostly inherited by~\cite{DBLP:conf/gd/DidimoLP18}. 
%
We need a few further definitions. Given an orthogonal representation $H$, we denote by $\rect{H}$ the orthogonal representation obtained from $H$ by replacing each bend with a dummy vertex. $\rect{H}$ is called the \emph{rectilinear image} of $H$ and a dummy vertex in $\rect{H}$ is a \emph{bend vertex}. By definition $b(\rect{H})=0$. The representation $H$ is also called the \emph{inverse} of $\rect{H}$.
If $w$ is a degree-$2$ vertex with neighbors $u$ and $v$, \emph{smoothing} $w$ is the reverse operation of an edge subdivision, i.e., it replaces the two edges $(u,w)$ and $(w,v)$ with the single edge $(u,v)$. In particular, if $H$ is an orthogonal representation of a graph $G$ and $\rect{G}$ is the underlying graph of $\rect{H}$, the graph $G$ is obtained from $\rect{G}$ by smoothing all its bend vertices.

\subsection{Proof of Property~\textsf{O1}.}\label{sse:O1}
We prove that, if $G$ is a biconnected graph distinct from the complete graph $K_4$, for any desired designated vertex $v$ of $G$, there always exists a $v$-constrained bend-minimum representation $H$ of $G$ with at most one bend per edge (\cref{le:1-bend}). Clearly, the $v$-constrained orthogonal representation that has the minimum number of bends over all possible choices for the vertex $v$, satisfies Property~\textsf{O1}.   

We start by rephrasing an intermediate result already proved in~\cite{DBLP:conf/gd/DidimoLP18}, which shows the existence of a bend-minimum orthogonal representation with at most two bends per edge for any biconnected planar $3$-graph and for any designated edge on the external face. For completeness we report the entire proof of this result, revised according to our new terminology.

\begin{lemma}\label{le:2-bends}
	Let $G$ be a biconnected planar $3$-graph and let $e$ be a designated edge of~$G$. There exists an $e$-constrained bend-minimum orthogonal representation of $G$ with at most two bends per edge.
\end{lemma}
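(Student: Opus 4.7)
The plan is to proceed by induction on the SPQR-tree $T$ of $G$ rooted at the Q-node $\rho$ corresponding to $e$, starting from any $e$-constrained bend-minimum orthogonal representation $H^*$ and rewriting its components in a bottom-up pass so that every edge carries at most two bends while the total bend count stays unchanged. Since the entire paper rests on a structural analysis of orthogonal components, using the SPQR-tree of the reference edge is the natural framework.

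For each node $\mu \neq \rho$ of $T$ the inductive statement I would maintain says: $G_\mu$ admits a bend-minimum orthogonal representation $H_\mu$ with the same \emph{outer shape} as $H^*_\mu$ (that is, the same angles at the poles and, by Lemma~\ref{le:k-spiral}, the same spirality when $\mu$ is an S-node; the same pair of turn numbers $t(p_l), t(p_r)$ when $\mu$ is a P- or R-node) and with at most two bends per real edge. The base case is a Q-node: if $H^*_\mu$ already has at most two bends there is nothing to do; otherwise the extra bends must be absorbed by the parent component, which is exactly what the inductive step does.

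The inductive step splits into three sub-cases. For an S-node, Lemma~\ref{le:k-spiral} ensures that only the cumulative turn along the component is fixed by the outer shape, so bends can be freely redistributed among the edges; by property \textsf{T3} of Lemma~\ref{le:spqr-tree-3-graph} an inner S-component has enough consecutive real edges to accommodate two bends each. For a P-node, the skeleton consists of two virtual children sharing the same poles, and each of the allowed outer shapes prescribes bounded turn numbers along the two external paths, which can be split between the sides so as to apply the inductive hypothesis to both children. The main obstacle, and the reason the argument is genuinely nontrivial, is the R-node case: the skeleton is triconnected cubic and admits essentially only two embeddings (up to a single flip at the poles), so local rearrangements are severely restricted.

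To overcome the R-node obstacle, I would argue that whenever a real edge of an R-skeleton carries three or more bends in a bend-minimum representation, a local reshaping across a short face of the skeleton that contains at least one virtual edge can migrate the excess bends into the spirality of the adjacent S-component, at no cost in the total number of bends. Property \textsf{T2} of Lemma~\ref{le:spqr-tree-3-graph} guarantees that every neighbor of an R-node (other than the root reference) is an S-node or a Q-node, and Theorem~\ref{th:RN03} characterizes which face configurations would force an edge to bend, so one can always identify a face of the skeleton suitable for the migration. Iterating this reshaping over all overloaded edges and composing the rewritten components bottom-up yields the desired $e$-constrained bend-minimum orthogonal representation with at most two bends per edge.
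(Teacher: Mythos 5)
Your proposal takes a completely different route from the paper (SPQR-tree induction versus a direct argument on the rectilinear image), but it has a genuine gap at its core: you never actually prove that the excess bends can be removed or relocated without increasing the total number of bends. The paper's proof rests on a single counting observation via Theorem~\ref{th:RN03}: if an edge $g$ carries three bends, then every cycle through $g$ contains three degree-$2$ bend vertices, so after smoothing one of them no $2$-extrovert or $3$-extrovert cycle can become bad (a bad cycle cannot contain two degree-$2$ vertices); hence for an internal edge a third bend contradicts bend-minimality outright, and for an external edge one bend can be shifted to a bend-free external edge. Nothing in your argument substitutes for this. In particular, your R-node step asserts that excess bends can be ``migrated into the spirality of the adjacent S-component at no cost,'' but \cref{le:elbow} shows that raising the spirality of an S-component can strictly increase its bend count, a face of an R-skeleton incident to the overloaded edge need not contain any virtual edge at all, and moving bends into an S-component does not by itself guarantee that they land at most two per edge there.

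There are two further structural problems. First, your inductive invariant (same outer shape, at most two bends per real edge) already fails at the base case: a Q-node whose edge has three bends has its turn numbers determined by those bends, so you cannot keep the outer shape and reduce the bends; deferring this to ``absorption by the parent'' abandons the invariant without replacing it. Second, the S-node redistribution needs a bound on how many turns the component must realize relative to how many real edges it has; property \textsf{T3} of \cref{le:spqr-tree-3-graph} gives you real edges at the poles but no a priori bound on the spirality of a component in an arbitrary bend-minimum representation. Such bounds (Properties \textsf{O2}--\textsf{O3} of Theorem~\ref{th:shapes}) are established later in the paper and their proofs build on \cref{le:2-bends} via \cref{le:1-bend}, so invoking ``allowed outer shapes with bounded turn numbers'' here would be circular. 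I would recommend abandoning the decomposition-based induction for this lemma and working directly with the good plane graph $\rect{G}$ and the smoothing/subdividing operations, as the paper does.
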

\begin{proof}
	Let $H$ be an $e$-constrained bend-minimum orthogonal representation of $G$ and suppose that there is an edge $g$ of $H$ (possibly coincident with $e$) with at least three bends. Let $\rect{H}$ be the rectilinear image of $H$ and $\rect{G}$ its underlying plane graph. Since $b(\rect{H})=0$, $\rect{G}$ is a good plane graph. Denote by $v_1$, $v_2$, and $v_3$ three bend vertices in $\rect{H}$ that correspond to three bends of $g$ in $H$. We distinguish between two cases.
	
	\smallskip\paragraph{Case 1: $g$ is an internal edge of $H$.} Let $\rect{G'}$ be the plane graph obtained from $\rect{G}$ by smoothing $v_1$. $\rect{G'}$ is still a good plane graph. Indeed, if this were not the case there would be a bad cycle in $\rect{G'}$ that contains both $v_2$ and $v_3$, which is impossible because no bad cycle can contain two vertices of degree two. Hence, there exists an (embedding-preserving) orthogonal representation $\rect{H'}$ of $\rect{G'}$ without bends; the inverse $H'$ of $\overline{H'}$ is a representation of $G$ such that $b(H') < b(H)$, a contradiction.
	
	\smallskip\paragraph{Case 2: $g$ is an external edge of $H$.} If $C_o(\rect{G})$ contains more than four vertices of degree two, then we can smooth vertex $v_1$ and apply the same argument as above to contradict the optimality of $H$ (note that, such a smoothing does not violate Condition $(i)$ of Theorem~\ref{th:RN03}). Suppose vice versa that $C_o(\rect{G})$ contains exactly four vertices of degree two (three of them being $v_1$, $v_2$, and $v_3$). In this case, just smoothing $v_1$ violates Condition~$(i)$ of Theorem~\ref{th:RN03}. However, we can smooth $v_1$ and subdivide an edge of $C_o(\rect{G}) \cap C_o(G)$; such an edge corresponds to an edge with no bend in $H$, and it exists because $C_o(G)$ has at least three edges and, by hypothesis, at most four bends, three of which on the same edge. The resulting plane graph $\rect{G''}$ still satisfies the three conditions of Theorem~\ref{th:RN03} and admits a representation $\rect{H''}$ without bends; the inverse of $\rect{H''}$ is a bend-minimum orthogonal representation of $G$ with at most two bends per edge.
\end{proof}

Note that, if $v$ is any vertex of $G$, \cref{le:2-bends} holds in particular for any edge $e$ incident to $v$. Thus, the following corollary immediately holds by iterating \cref{le:2-bends} over all edges incident to $v$ and by retaining the bend-minimum representation.

\begin{corollary}\label{co:2-bends}
	Let $G$ be a biconnected planar $3$-graph and let $v$ be a designated vertex of $G$. There exists a $v$-constrained bend-minimum orthogonal representation of $G$ with at most two bends per edge.
\end{corollary}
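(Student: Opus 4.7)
The plan is to reduce the vertex-constrained problem to a small number of edge-constrained instances already handled by Lemma~\ref{le:2-bends}. Since $G$ is a biconnected planar $3$-graph, the designated vertex $v$ has degree either $2$ or $3$; let $e_1, \ldots, e_k$ denote its incident edges, so $k \in \{2, 3\}$. The first step will be to establish the following equivalence, which is implicit in the statement of the corollary: in a biconnected plane graph, $v$ lies on the external face if and only if at least one edge incident to $v$ lies on the external face. The forward direction uses the fact that the external face of a biconnected plane graph is bounded by a simple cycle, which, when it passes through $v$, forces two of the incident edges of $v$ to be external; the reverse direction is immediate since the endpoint of an external edge is itself on the external face.

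Next, I would apply Lemma~\ref{le:2-bends} independently to each $e_i$, producing an $e_i$-constrained bend-minimum orthogonal representation $H_i$ of $G$ in which every edge has at most two bends. Let $H^*$ be a representation of smallest total bend count among $H_1, \ldots, H_k$. By construction $H^*$ has at least one edge incident to $v$ on its external face (hence $v$ is on its external face, by the equivalence above) and at most two bends per edge, so the only claim left to prove is that $H^*$ is in fact a $v$-constrained bend-minimum orthogonal representation of $G$.

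For this last claim, I would take an arbitrary $v$-constrained bend-minimum representation $H_v$ of $G$ and use the same equivalence in the other direction: since $v$ is on the external face of $H_v$, some edge $e_j$ incident to $v$ must lie on the external face of $H_v$, so $H_v$ is also an $e_j$-constrained representation. Since $H_j$ was chosen to be $e_j$-constrained bend-minimum, we get
\[
b(H^*) \leq b(H_j) \leq b(H_v),
\]
which together with the fact that $H^*$ is itself $v$-constrained implies that $H^*$ attains the $v$-constrained bend minimum, completing the proof.

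I do not expect any real obstacle: the corollary is a direct consequence of Lemma~\ref{le:2-bends} together with the simple observation that, for a biconnected plane graph, having $v$ on the external face is equivalent to having some edge incident to $v$ on the external face. The only point that deserves a line of justification is this equivalence, which relies on biconnectivity to ensure that the external face boundary is a simple cycle.
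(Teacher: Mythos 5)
Your proposal is correct and follows essentially the same route as the paper, which obtains the corollary by applying Lemma~\ref{le:2-bends} to each edge incident to $v$ and retaining the representation with fewest bends. The only difference is that you spell out the (correct) equivalence, implicit in the paper, between $v$ being on the external face and some edge incident to $v$ being external, which relies on biconnectivity.
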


The next lemma will be used to prove the main result of this section, but it is also of independent interest. It claims that for a biconnected graph with at least five vertices, a bend-minimum orthogonal representation with at most two bends per edge can be searched among those planar embeddings that have an external face with at least four vertices (i.e, the external face is not a 3-cycle), even under the constraint that a given vertex is designated to be on the external face.

\begin{lemma}\label{le:external-face}
	Let $G$ be a biconnected planar $3$-graph with $n \geq 5$ vertices and let $v$ be a designated vertex of $G$. There exists a $v$-constrained bend-minimum orthogonal representation of $G$ with at most two bends per edge and at least four vertices on the external face.
\end{lemma}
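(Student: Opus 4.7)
The plan is to start from a $v$-constrained bend-minimum orthogonal representation $H$ with at most two bends per edge, whose existence is guaranteed by \cref{co:2-bends}. If the external face of $H$ already contains at least four vertices, the lemma holds for $H$; otherwise, the external face is a triangle $T=(v,u,w)$, and I would construct from $H$ a different $v$-constrained bend-minimum representation $H'$ with at most two bends per edge and an external face of size at least four.

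The first step is to show that in $H$ there exists a face $F \neq T$ incident to $v$ with $|F|\geq 4$. Since $G$ is biconnected, $\deg(v)\in\{2,3\}$. If $\deg(v)=3$, then $v$ belongs to three faces; if all three were triangles, then $v$ together with its three neighbors would induce a $K_4$ in which every vertex is saturated, forcing $G=K_4$ and contradicting $n\geq 5$. If $\deg(v)=2$, then the two faces at $v$ share the whole path $u$--$v$--$w$ on their boundary; if both were triangles they would share also the edge $(u,w)$, creating a multi-edge and contradicting simplicity. In either case, a face $F$ incident to $v$ with $|F|\geq 4$ must exist.

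The core step is to flip the embedding so that $F$ becomes external while preserving bend-minimality. I would work with the rectilinear image $\rect{H}$, which by construction is a good plane $3$-graph in the sense of \cref{th:RN03}. By condition~$(i)$ applied to $\rect{H}$, the boundary of $T$ in $\rect{H}$ contains at least four degree-$2$ vertices. Since $G$ is biconnected with $n\geq 5$, at most one vertex of $\{v,u,w\}$ can have degree $2$ in $G$, so at least three of these four required degree-$2$ vertices are in fact bend vertices of $\rect{H}$ sitting on the edges of $T$. I would then consider the new plane graph obtained from $\rect{H}$ by choosing as external face the one corresponding to $F$ (together with the bend vertices lying on its edges): this swaps the roles of $T$ and $F$, so $T$ becomes both an internal face and an internal cycle whose external legs are exactly the at most three edges of $G$ that leave $\{v,u,w\}$, while $F$ becomes external.

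The main obstacle is to prove that, possibly after shifting a constant number of bend vertices from the boundary of $T$ to the boundary of $F$, the resulting plane graph $\rect{H}'$ is again good: I need $(i)$ that $F$ has at least four degree-$2$ vertices along its boundary; $(ii)$ that the now-internal $k$-extrovert cycle $T$ (with $k\in\{2,3\}$) has the required number of degree-$2$ vertices, which is essentially automatic from the bend budget already present on $T$ in $\rect{H}$; and $(iii)$ that no other cycle becomes a bad $2$- or $3$-extrovert cycle, which holds because the only cycle whose leg orientation changes under the re-embedding is $T$ itself. The delicate sub-case analysis is concentrated in $(i)$: $T$ and $F$ may share one or two edges incident to $v$, and one must verify in each sub-case that enough bend vertices can be relocated from $T$ to the edges of $F$ without upsetting $(ii)$ or $(iii)$. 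Once $\rect{H}'$ is shown to be good, applying \cref{th:RN03} yields an embedding-preserving orthogonal representation without bends; its inverse $H'$ has the same total number of bends as $H$ and at most two bends per edge, hence it is $v$-constrained bend-minimum with external face $F$ of size at least four, as required.
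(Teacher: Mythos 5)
Your high-level plan coincides with the paper's: starting from the representation given by \cref{co:2-bends}, when the external face is a triangle $T$ you promote an internal face incident to $v$ and sharing an edge with $T$ to be the new external face (the paper describes the same re-embedding as ``rerouting'' an edge of $T$ so that the third triangle vertex becomes internal), and then relocate a bounded number of bend vertices so that the rectilinear image remains a good plane graph. Your existence argument for a face $F$ at $v$ with at least four vertices (via the $K_4$ and parallel-edge contradictions) is correct, and it is a somewhat cleaner way of guaranteeing four real vertices on the new external face than the paper's device of relabelling $u$ and $w$ so that $e_u$ and $e_v$ do not share a vertex.

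There are, however, two real gaps. First, what you set aside as a ``delicate sub-case analysis'' is not a routine verification but the actual substance of the proof: one must decide, according to which of $u,v,w$ has degree two and which edges of $T$ and which internal edges $e_u,e_v$ currently carry bends, exactly which bends move where, and then argue that each move neither puts three bends on an edge nor starves some 2- or 3-extrovert cycle of a degree-2 vertex it needs (the paper's Cases 1--3 do precisely this, e.g.\ by showing that every 2- or 3-extrovert cycle through $u$ that avoids $e_{uv}$ must contain both $\rect{e_{wu}}$ and $\rect{e_u}$, so a bend may safely migrate between those two edges). Second, your disposal of item $(iii)$ rests on the claim that ``the only cycle whose leg orientation changes under the re-embedding is $T$ itself'', and this is false: a cycle swaps its inside and outside exactly when the new external face $F$ lies in its interior in the old embedding, which happens for $T$, for $C_F$, and for every cycle that contains the edge(s) of $\partial F\cap T$ and encloses $F$; any such cycle that was 2- or 3-introvert without internal chords becomes a 2- or 3-extrovert cycle of the new embedding and must be re-checked. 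The paper copes with this by first normalizing the bend distribution so that the edge of $T$ shared with the new external face carries two bends; since that edge lies on every flipped cycle, its two bend vertices certify Conditions $(ii)$--$(iii)$ of \cref{th:RN03} for all of them simultaneously. Your sketch contains no such normalization, so step $(iii)$ does not go through as written.
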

\begin{proof}	
	By \cref{co:2-bends} there exists a $v$-constrained bend-minimum orthogonal representation $H$ of $G$ with at most two bends per edge. Embed $G$ in such a way that its planar embedding coincides with the planar embedding of $H$.  If the external face of $G$ contains at least four vertices, the statement holds. Otherwise, the external boundary of $G$ is a 3-cycle with vertices $u$, $v$, $w$ and edges $e_{uv}$, $e_{vw}$, $e_{wu}$ ($v$ is the designated vertex). Let $\rect{G}$ be the underlying plane graph of the rectilinear image $\rect{H}$ of $H$. Recall that since $\rect{H}$ has no bends, $\rect{G}$ is a good plane graph.
	For an edge $e$ of $G$, denote by $\rect{e}$ the subdivision of $e$ with bend vertices in $\rect{G}$  (if $e$ has no bend in $H$, then $e$ and $\rect{e}$ coincide). Since $G$ is biconnected, at least two of its three external vertices have degree three. The following (non-symmetric) cases are possible:
	
	\smallskip\paragraph{Case 1: $\deg(u)=\deg(v)=\deg(w)=3$.} Refer to \cref{fi:external-face-1}. In this case, $H$ has at least four bends on the external face, and hence two of them are on the same edge. Denote as $e_u$, $e_v$, and $e_w$ the internal edges incident to $u$, $v$, and $w$, respectively. Since $G$ is not $K_4$, at most two of $e_u$, $e_v$, and $e_w$ can share a vertex. Assume that $e_v$ does not share a vertex with $e_u$ (otherwise, we relabel the vertices, exchanging the identity of $u$ and $w$). Also, without loss of generality, we can assume that one of the two edges incident to $v$, say $e_{uv}$, has two bends. Indeed, if this not the case, $e_{wu}$ has two bends and we can simply move one of these two bends from $e_{wu}$ to $e_{uv}$. Since $G$ cannot contain 2-extrovert cycles that contain an external edge, this transformation still guarantees that the resulting plane graph is good. Let $e_u$ and $e_v$ be the internal edges of $G$ incident to $u$ and to $v$, respectively. Consider the plane graph $\rect{G'}$ obtained from $\rect{G}$ by rerouting $\rect{e_{uv}}$ in such a way that $w$ becomes an internal vertex. 
	
	If at least one among $e_u$ and $e_v$ has a bend in $H$, then $\rect{G'}$ is a good plane graph. Namely: The external face of $\rect{G'}$ still contains at least four vertices of degree two; the new 2-extrovert cycle passing through $u$, $v$, and $w$ contains at least two bend vertices (e.g., those of $\rect{e_{uv}}$); any other 2- or 3-extrovert cycle of $\rect{G'}$ is also present in $\rect{G}$ and contains in $\rect{G'}$ the same number of degree-2 vertices as in  $\rect{G}$. Therefore, in this case, $\rect{G'}$ has an embedding-preserving orthogonal representation $\rect{H'}$ without bends, and the inverse $H'$ of $\rect{H'}$ is a $v$-constrained bend-minimum orthogonal representation of $G$ with at most two bends per edge. This because $v$ is still on the external face of $H'$ and each edge of $G$ has the same number of bends in $H$ and in $H'$. Also, $H'$ has at least four vertices on the external face.
	
	Suppose now that neither $e_u$ nor $e_v$ has a bend in $H$. We know that at least one among $e_{wu}$ and $e_{vw}$ has a bend in $H$, say $e_{wu}$. Let $\rect{G''}$ be the plane graph obtained from $\rect{G'}$ by smoothing a bend vertex of $\rect{e_{wu}}$ and by subdividing $e_u$ with a bend vertex. This guarantees that $\rect{G''}$ has at least four vertices of degree two on the external face. Also, since each 2- or 3-extrovert cycle of $\rect{G''}$ that passes through $u$ but not through $\rect{e_{uv}}$ must contain $\rect{e_{wu}}$ and $\rect{e_u}$, then $\rect{G''}$ is still a good plane graph and admits a rectilinear orthogonal representation $\rect{H''}$. With the same arguments as in the previous case, the inverse $H''$ of $\rect{H''}$ is a $v$-constrained bend-minimum orthogonal representation of $G$ with at most two bends per edge and at least four vertices on the external face.                 
	
	\begin{figure}[tb]
		\centering
		\subfloat[Case 1]{\label{fi:external-face-1}\includegraphics[width=\columnwidth]{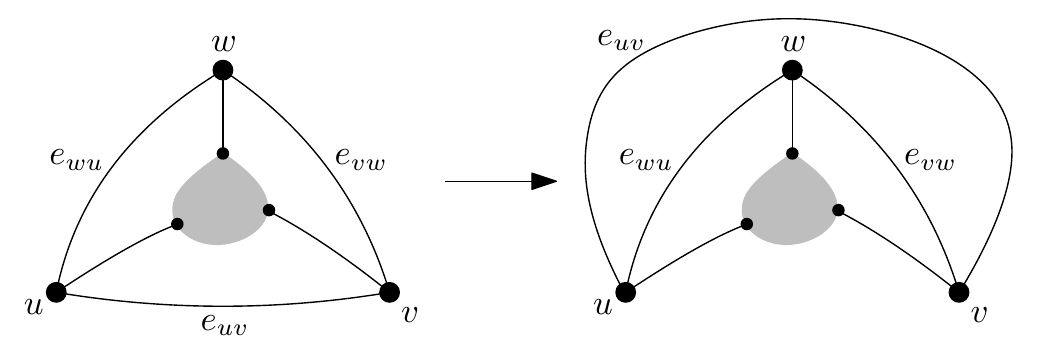}}
		\hfil
		\subfloat[Case 2]{\label{fi:external-face-2}\includegraphics[width=\columnwidth]{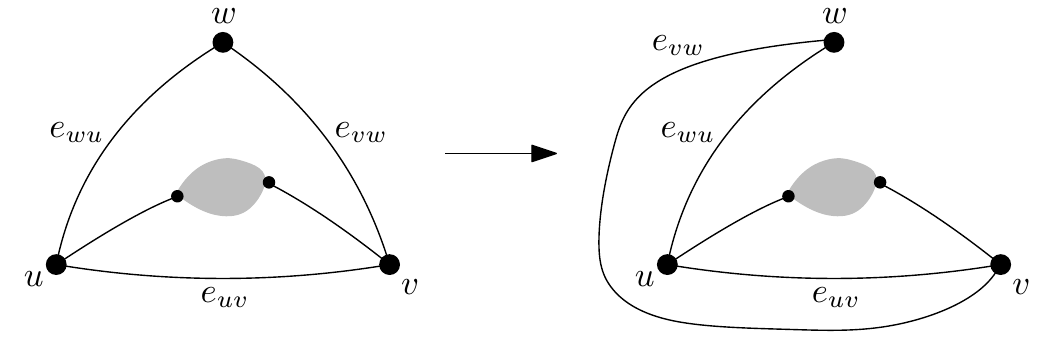}}
		\hfil
		\subfloat[Case 3]{\label{fi:external-face-3}\includegraphics[width=\columnwidth]{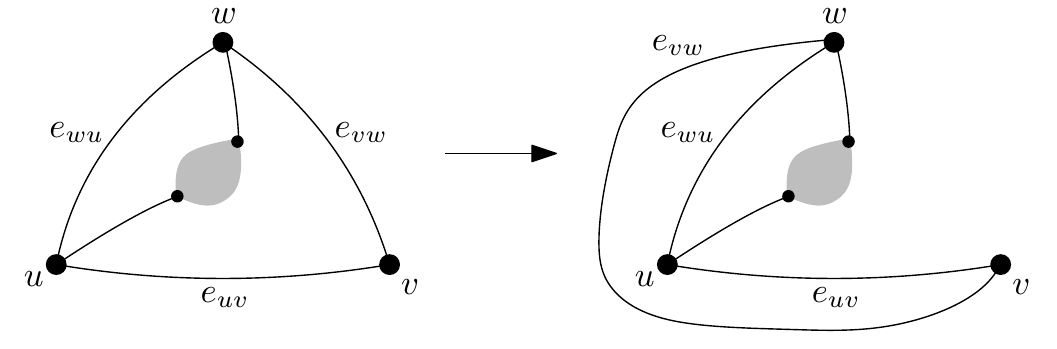}}
		\caption{Illustration for the proof of \cref{le:external-face}.}\label{fi:external-face}
	\end{figure}

	\smallskip\paragraph{Case 2: $\deg(u)=\deg(v)=3$ and $\deg(w)=2$.} Refer to \cref{fi:external-face-2}. In this case $H$ must have at least three bends on the external face. Let $e_u$ and $e_v$ be the internal edges of $G$ incident to $u$ and to $v$, respectively. Consider the plane graph $\rect{G'}$ obtained from $\rect{G}$ by rerouting $\rect{e_{vw}}$ in such a way that $u$ becomes an internal vertex. We must consider two subcases:
	\begin{itemize}
		\item Each of the external edges $e_{uv}$, $e_{vw}$, $e_{wu}$ of $G$ has a bend in $H$. If at least one among $e_u$ and $e_v$ has a bend in $H$, then $\rect{G'}$ remains a good plane graph and has a rectilinear representation $\rect{H'}$. The inverse $H'$ of $\rect{H'}$ is a $v$-constrained bend-minimum orthogonal representation of $G$ with at most two bends per edge and at least four external vertices. If neither $e_u$ nor $e_v$ has a bend in $H$, then, with the same argument as above, we can move a bend vertex from $\rect{e_{uv}}$ to $e_u$, i.e., we smooth a bend vertex from $\rect{e_{uv}}$ and subdivide $e_u$ with a bend vertex. The resulting plane graph is still good, and from it we can still get a $v$-constrained bend-minimum orthogonal representation of $G$ with at most two bends per edge and at least four external vertices. 
		
		\item One of the external edges $e_{uv}$, $e_{vw}$, $e_{wu}$ of $G$ has no bend in $H$. In this case, at least one of these three edges has two bends and another one may have no bend. Assume for example that $e_{uv}$ has two bends and $e_{wu}$ has no bend (the other cases are handled similarly or they are easier). If $e_u$ (resp. $e_v$) has no bend in $H$, we move one of the two bend vertices of $\rect{e_{uv}}$ on $e_u$ (resp. $e_v$). As in the previous cases, this transformation guarantees that the resulting plane graph $\rect{G''}$ is a good plane graph, and from it we get a $v$-constrained bend-minimum orthogonal representation of $G$ with at most two bends per edge and at least four external vertices.
	\end{itemize}

     \smallskip\paragraph{Case 3: $\deg(u)=\deg(w)=3$ and $\deg(v)=2$.} Refer to \cref{fi:external-face-3}. Also in this case $H$ must have at least three bends on the external face. Let $e_u$ and $e_w$ be the internal edges of $G$ incident to $u$ and to $w$, respectively. Consider again the plane graph $\rect{G'}$ obtained from $\rect{G}$ by rerouting $\rect{e_{vw}}$ in such a way that $u$ becomes internal. The analysis follows the line of Case 2, where the roles of $v$ and $w$ are exchanged.   
\end{proof}

The next step towards \cref{le:1-bend} are two technical results, namely \cref{le:internal-edge} and \cref{le:external-edge}.
They can be used to prove that, given a $v$-constrained bend-minimum orthogonal representation of a biconnected $3$-graph with more than four vertices and at most \emph{two} bends per edge (which exists by \cref{co:2-bends}), we can iteratively transform it into a new $v$-constrained bend-minimum orthogonal representation with at most \emph{one} bend per edge. The transformation of \cref{le:internal-edge} is used to remove internal edges with two bends, while \cref{le:external-edge} is used to remove external edges with two bends.

\begin{lemma}\label{le:internal-edge}
	Let $G$ be a biconnected planar $3$-graph with $n \geq 5$ vertices, $v$ be a designated vertex of $G$, and $H$ be a $v$-constrained bend-minimum orthogonal representation of $G$ with at most two bends per edge and at least four vertices on the external face. If $e$ is an \emph{internal} edge of $H$ with two bends, there exists a new $v$-constrained bend-minimum orthogonal representation $H^*$ of $G$ such that: (a) $e$ has at most one bend in $H^*$; (b) every edge $e' \neq e$ has at most two bends in $H^*$, and $e'$ has two bends in $H^*$ only if it has two bends in $H$; (c) $H^*$ has at least four vertices on the external face.
\end{lemma}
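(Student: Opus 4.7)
The plan is to modify the rectilinear image $\rect H$ of $H$ in a neighborhood of the subdivided edge $\rect e$. Let $v_1,v_2$ denote the two bend vertices along $\rect e$ and $u,v$ the endpoints of $e$ in $G$. I would first smooth $v_1$ in $\rect G$ to obtain a new plane graph $\rect G'$. Because $H$ is $v$-constrained bend-minimum and $e$ is internal, $\rect G'$ cannot be a good plane graph: \cref{th:RN03} would otherwise yield a bendless rectilinear representation of $\rect G'$ whose inverse would be a $v$-constrained orthogonal representation of $G$ with $b(H)-1$ bends, contradicting the bend-minimality of $H$. Hence $\rect G'$ contains a bad cycle $C$.

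Any cycle of $\rect G$ that passes through the smoothed vertex $v_1$ must also pass through $v_2$, since $v_1$ is an internal degree-$2$ vertex of the path $\rect e$. Using this observation I would show that $C$ comes from a 2-extrovert cycle $C'$ of $\rect G$ whose degree-$2$ vertex set is precisely $\{v_1,v_2\}$: the 3-extrovert case is excluded because $v_2$ alone already provides one degree-$2$ vertex. Consequently, in $H$, every edge of $C'$ other than $e$ is unbent, and every vertex of $C'$ other than $u,v$ has degree three in $G$. I would then pick an edge $g$ on $C'\setminus\rect e$ (such an edge exists since $|C'|\geq 3$) and subdivide it with a new bend vertex $w$, obtaining a plane graph $\rect G^*$. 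The cycle $C$ now has $v_2$ and $w$ as two degree-$2$ vertices and is therefore good, while every cycle of $\rect G$ that avoided $v_1$ is left intact and is still good.

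If no additional 2-extrovert cycle through $\rect e$ shares the property of having $\{v_1,v_2\}$ as its only degree-$2$ vertices, then $\rect G^*$ is good; \cref{th:RN03} provides a bendless rectilinear representation of $\rect G^*$ whose inverse $H^*$ satisfies $b(H^*)=b(H)$ (one bend is removed from $e$ and one bend is added to $g$), has $e$ with a single bend, has $g$ turning from $0$ to $1$ bend (so no edge exceeds two bends and no edge newly acquires a pair of bends, giving condition~(b)), and leaves the external face of $H$ and its $\geq 4$ vertices untouched so that $v$ is still external, giving condition~(c).

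The main obstacle is ruling out the scenario of two simultaneous bad cycles $C'_1,C'_2$. Such a configuration would force the existence of three internally disjoint paths $e,P_1,P_2$ between $u$ and $v$ in the biconnected planar 3-graph $G$, making $(u,v)$ a split pair corresponding to a P-node $\mu$ of the SPQR tree. By Property \textsf{T1} of \cref{le:spqr-tree-3-graph}, $\mu$ has exactly two children whose pertinent graphs give $P_1,P_2$, and these paths are entirely bendless in $H$ with every internal vertex of degree three. In this highly constrained configuration I would derive a contradiction with the bend-minimality of $H$ via a local exchange at the P-node: swapping the cyclic position of $e$ with one of $P_1,P_2$ around $u,v$ and redistributing the two bends of $e$ among the three parallel paths yields a $v$-constrained orthogonal representation of $G$ with strictly fewer bends. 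The exchange preserves the goodness of the surrounding graph because the spiralities of all S-components outside the P-node are invariant (\cref{le:k-spiral}). Once uniqueness is established, the choice of $g$ on the unique bad cycle completes the argument.
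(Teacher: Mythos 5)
Your opening analysis is sound and essentially reproduces the paper's first step: after smoothing $v_1$, any bad cycle must contain all of $\rect{e}$ (hence $v_2$), so it is a $2$-extrovert cycle of $\rect{G}$ whose only degree-2 vertices are $v_1$ and $v_2$; and when there is a unique such cycle, subdividing one of its free edges $g$ while smoothing $v_1$ gives $H^*$ with $b(H^*)=b(H)$ and properties (a)--(c). This is the paper's Case~1. The problems are in how you dispose of the scenario with two such cycles. A first, minor issue: two $2$-extrovert cycles through $e$ do \emph{not} force three internally disjoint paths between $u$ and $v$ — they may share edges besides $e$ (the paper's Case~2). In your constrained setting those shared edges are free and subdividing one of them repairs both cycles at once, but your argument as written skips this case, and the edge $g$ you pick need not lie on the second cycle.

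The serious gap is your treatment of the genuinely disjoint configuration (the paper's Case~3): it does \emph{not} contradict bend-minimality, so the contradiction you plan to derive cannot be completed. If $C=e\cup P_1$ and $C'=e\cup P_2$ are nested $2$-extrovert cycles sharing only $e$, each needs two corners, and placing both bends on the shared edge $e$ supplies all four with a total of two bends; any assignment with at most one bend on $e$ costs at least three, and permuting the three parallel paths around the split pair $\{u,v\}$ does not beat two. (Your extra hypothesis that all of $P_1\cup P_2$ is free does kill the \emph{interlaced} sub-case, since the external face $(C\cup C')\setminus\{e\}$ would then have no degree-2 vertices, violating Condition~$(i)$ of \cref{th:RN03}; but the nested sub-case survives.) This is precisely why the paper does not rule the configuration out but instead changes the embedding: it flips $C$ at its leg vertices so that the new $2$-extrovert cycle $(C\cup C')\setminus\{e\}$ shares the whole path $P_1$ with $C$, and then moves the two bends of $e$ onto the two (provably free) edges of $C$ adjacent to $e$, preserving the total bend count. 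That re-embedding step is the essential ingredient missing from your proposal.
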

\begin{proof}
	Let $\rect{H}$ be the rectilinear image of $H$ and denote by $v_1$ and $v_2$ the bend vertices of $H$ associated with the bends of $e$.  By Theorem~\ref{th:RN03} and since $H$ has the minimum number of bends, $e$ necessarily belongs to a 2-extrovert cycle $C$ of $H$. Indeed, if $e$ does not belong to a 2-extrovert cycle, then we can smooth from the underlying plane graph $\rect{G}$ of $\rect{H}$ one of $v_1$ and $v_2$. The resulting plane graph $\rect{G'}$ is a good plane graph and then it admits an orthogonal representation $\rect{H'}$ without bends; the inverse $H'$ of $\rect{H'}$ is an orthogonal representation of $G$ with less bends than $H$, a contradiction. In the following, we call \emph{free edge} an edge of $G$ without bends in $H$. We distinguish between three cases:
	
	\smallskip\paragraph{Case 1: $C$ does not share $e$ with other 2-extrovert cycles of $H$.} In this case, $H$ must have a free edge $g$ that belongs to $C$, otherwise smoothing one of $v_1$ and $v_2$ from $\rect{G}$ we obtain a good plane graph; this would imply the existence of an orthogonal representation of $G$ with less bends than $H$, a contradiction. Now, consider the plane graph $\rect{G^*}$ obtained from $\rect{G}$ by smoothing $v_1$ and by subdividing $g$ with a new (bend) vertex. $\rect{G^*}$ is a good plane graph and thus it admits an orthogonal representation $\rect{H^*}$ without bends. The inverse $H^*$ of $\rect{H^*}$ is an orthogonal representation of $G$ that satisfies Properties~(a) and~(b). Also $b(H^*)=b(H)$, thus $H^*$ is bend-minimum. Finally, since $H^*$ has the same planar embedding as $H$, $H^*$ is $v$-constrained and Property~(c) is also guaranteed.
	
	\smallskip\paragraph{Case 2: $C$ shares $e$ and at least another edge with a 2-extrovert cycle $C'$ of $H$.} In this case, $C$ and $C'$ must share a free edge $g$, otherwise, as in the previous case, smoothing one of $v_1$ and $v_2$ from $\rect{G}$ we obtain a good plane graph and, in turns, an orthogonal representation of $G$ with less bends than $H$; a contradiction. As above, $H^*$ is obtained from $H$ by removing a bend from $e$ and by adding a bend along $g$; also $H^*$ still has the same planar embedding as $H$. 
	
	\begin{figure}[tb]
		\centering
		\subfloat[]{\label{fi:nested}\includegraphics[width=0.33\columnwidth]{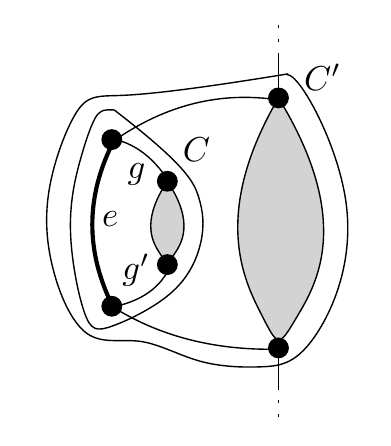}}
		\hfil
		\subfloat[]{\label{fi:nested-1}\includegraphics[width=0.33\columnwidth]{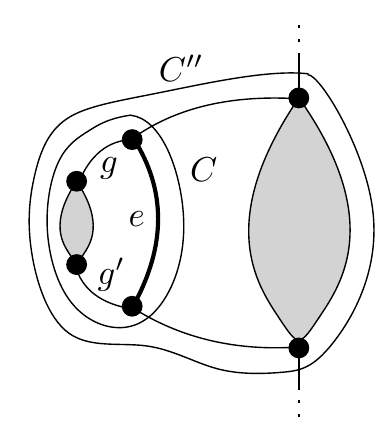}}
		\hfil
		\subfloat[]{\label{fi:interlaced}\includegraphics[width=0.33\columnwidth]{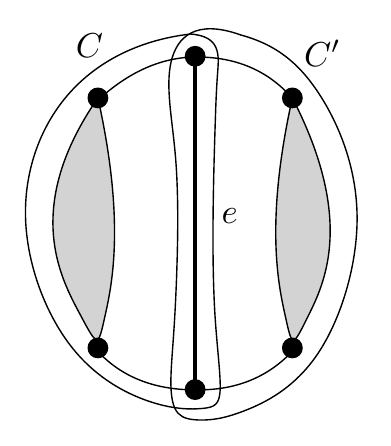}}
		\hfil
		\caption{Illustration for the proof of \cref{le:internal-edge}. (a) Two nested 2-extrovert cycles $C$ and $C'$ that share edge $e$ only. (b) Flipping $C$ at its leg vertices. (c) Two interlaced 2-extrovert cycles $C$ and $C'$ that share edge $e$ only; the external face of the graph consists of $(C \cup C') \setminus \{e\}$.}\label{fi:internal}
	\end{figure}

	\smallskip\paragraph{Case 3: $C$ shares only $e$ with a 2-extrovert cycle $C'$ of $H$.} In this case, $C$ and $C'$ can have two  possible configurations in the embedding of $H$:
		\begin{itemize}
			\item $C$ and $C'$ are \emph{nested}, i.e., one of the two cycles is inside the other; refer to \cref{fi:nested}). Without loss of generality, assume that $C$ is inside $C'$ (the argument is symmetric in the opposite case). Let $g$ and $g'$ be the two edges of $C$ adjacent to $e$. We first claim that each of these two edges is free in $H$. Indeed, suppose by contradiction that $g$ has a bend and let $\rect{G}$ be the underlying plane graph of $\rect{H}$. Since $G$ has vertex-degree at most three, $g$ cannot belong to other $2$-extrovert or $3$-extrovert cycles other than $C$. Thus, smoothing from $\rect{G}$ the bend vertex associated with the bend of $g$ we obtain a good plane graph, which contradicts the fact that $H$ is bend-minimum. The same argument applies to $g'$. Now, consider the plane graph $\rect{G''}$ obtained from $\rect{G}$ by flipping $C$ (namely, the cycle of $\rect{G}$ corresponding to $C$) at its leg vertices and let $C''$ be the new 2-extrovert cycle that has $C$ inside it; see \cref{fi:nested-1}). $C''$ consists of the edges of $(C' \cup C) \setminus \{e\}$.
			The other 2-extrovert and 3-extrovert cycles of $\rect{G''}$ stay the same as in $\rect{G}$. Consider the plane graph $\rect{G^*}$ obtained from $\rect{G''}$ by smoothing the two bend vertices $v_1$ and $v_2$, and by subdividing both $g$ and $g'$ (that was free in $H$) with a new (bend) vertex. Since $\rect{G^*}$ has two bend vertices along the path shared by $C$ and $C''$, and the rest of the 2-extrovert and 3-extrovert cycles are not changed with respect to $\rect{G''}$, $\rect{G^*}$ is a good plane graph. Thus $\rect{G^*}$ admits an orthogonal representation $\rect{H^*}$ without bends. The inverse $H^*$ of $\rect{H^*}$ is an orthogonal representation of $G$ that satisfies Properties~(a) and~(b), and since $b(H^*)=b(H)$, $H^*$ is still bend-minimum. Finally, all the vertices of $H$ that were on the external face remain on the external face of $H^*$. Therefore, $H^*$ is also $v$-constrained and Property~(c) is guaranteed.
			
			\item $C$ and $C'$ are \emph{interlaced}, i.e., none of the two cycles is inside the other; refer to \cref{fi:interlaced}). In this case, the external face of $H$ is formed by $(C \cup C') \setminus \{e\}$. Let $\rect{G}$ be the underlying plane graph of $\rect{H}$. By Condition~(i) of Theorem~\ref{th:RN03}, $\rect{G}$ has at least four degree-2 vertices on its external face (which can be real or bend vertices). We claim that all the degree-2 vertices are either along the external part of $C$ or along the external part of $C'$. Indeed, if the external part of $C$ contains exactly two degree-2 vertices, then the external part of $C'$ contains at least two degree-2 vertices. In this case the plane graph obtained from $\rect{G}$ by smoothing both the bend vertices associated with the bends of $e$ is still good, which contradicts the fact that $H$ is bend-minimum. Analogously, if $C$ contains exactly one degree-2 vertex, then the external part of $C'$ contains at least three degree-2 vertices; again, smoothing from $\rect{G}$ one of the two bend vertices associated with the bends of $e$ we obtain a good plane graph, a contradiction. Therefore, the only possibility is that either the external part of $C$ or the external part of $C'$ has no degree-2 vertices in $\rect{G}$. Without loss of generality assume that the external part of $C$ has no degree-2 vertices in $\rect{G}$. This implies that the external edges of $C$ are all free edges in $H$. If we smooth from $\rect{G}$ a bend vertex associated with a bend of $e$ and subdivide a free edge of $C$ with a new (bend) vertex, we obtain a good plane graph $\rect{G^*}$. Thus $\rect{G^*}$ admits an orthogonal representation $\rect{H^*}$ without bends. The inverse $H^*$ of $\rect{H^*}$ is an orthogonal representation of $G$ that satisfies Properties (a) and (b), and since $b(H^*)=b(H)$, $H^*$ is still bend-minimum. Also, since $H^*$ and $H$ have the same planar embedding, $H^*$ is still $v$-constrained and Property~(c) holds.
		\end{itemize} 
\end{proof}

\begin{lemma}\label{le:external-edge}
	Let $G$ be a biconnected planar $3$-graph with $n > 4$ vertices, $v$ be a designated vertex of $G$, and $H$ be a $v$-constrained bend-minimum orthogonal representation of $G$ with at most two bends per edge and at least four vertices on the external face. If $e$ is an \emph{external} edge of $H$ with two bends, there exists a new $v$-constrained bend-minimum orthogonal representation $H^*$ of $G$ such that: (a) $e$ has at most one bend in $H^*$; (b) every edge $e' \neq e$ has at most two bends in $H^*$ and $e'$ has two bends in $H^*$ only if it has two bends in $H$; (c) $H^*$ has at least four vertices on the external face.
\end{lemma}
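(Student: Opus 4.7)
The plan is to follow the same template as the proof of Lemma~\ref{le:internal-edge}, working on the rectilinear image $\rect{H}$ and its underlying good plane graph $\rect{G}$, with $v_1,v_2$ the two bend vertices of $\rect{H}$ corresponding to the two bends of $e$. The goal is to smooth $v_1$ (or both $v_1$ and $v_2$) and, whenever this destroys goodness, to restore it by subdividing a suitable \emph{free edge} (an edge of $G$ that carries no bend in $H$). The resulting plane graph $\rect{G^*}$ will still satisfy the three conditions of Theorem~\ref{th:RN03}, hence it admits a rectilinear representation $\rect{H^*}$ whose inverse $H^*$ meets Properties~(a) and~(b); Property~(c) will follow because all our transformations preserve the planar embedding and keep $v$ on the external face.

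The first step is to observe that smoothing $v_1$ in $\rect{G}$ can violate goodness only for one of the following three reasons, since $e$ lies on $C_o(\rect{G})$: Condition~(i) of Theorem~\ref{th:RN03} fails because $C_o(\rect{G})$ had exactly four degree-$2$ vertices (two of which are $v_1,v_2$); or a $2$-extrovert cycle $C$ containing $e$ loses one of only two degree-$2$ vertices (necessarily one of $v_1,v_2$, forcing the other to be the remaining witness); or, analogously, a $3$-extrovert cycle containing $e$ loses its unique degree-$2$ vertex. If none of these occurs, smoothing $v_1$ alone already gives a good plane graph, contradicting the minimality of $b(H)$, exactly as at the beginning of the proof of Lemma~\ref{le:internal-edge}.

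The second step handles the Condition~(i) obstruction. In this case $C_o(\rect{G})$ has exactly four degree-$2$ vertices, two of which are $v_1,v_2$ on $e$; since $n\ge 5$ and $C_o(G)$ has at least three edges, and since the remaining two external corners of $\rect{G}$ can lie on at most two further external edges of $G$, there must exist a free external edge $g \neq e$. Smoothing $v_1$ and subdividing $g$ restores the count to four external degree-$2$ vertices; since moving a bend between external edges does not alter the set of $2$- or $3$-extrovert cycles nor the number of degree-$2$ vertices they contain, the new plane graph is good. The third step handles the obstruction coming from $2$-extrovert cycles through $e$, mirroring the three cases of Lemma~\ref{le:internal-edge}: when $C$ does not share $e$ with any other $2$-extrovert cycle, one shows that $C$ must contain a free edge $g$, and bends are swapped between $e$ and $g$; when two $2$-extrovert cycles through $e$ share another edge, they must share a free edge to which a bend is moved; and when two $2$-extrovert cycles share only $e$, we distinguish the nested and interlaced subcases and either flip the inner cycle at its leg vertices (reusing the free edges adjacent to $e$ on $C$) or exploit the fact that the external part of one of the two cycles contains enough free edges, by the counting of degree-$2$ vertices forced on $C_o(\rect{G})$. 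The $3$-extrovert obstruction is handled in the same spirit and is easier since only one degree-$2$ witness is required.

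The main obstacle, and the only genuinely new issue relative to Lemma~\ref{le:internal-edge}, is that the subdivision operation must simultaneously repair Condition~(i) on $C_o$ and keep all $2$- and $3$-extrovert cycles satisfied; concretely, when a bend is moved from $e$ to a free edge $g$ that is \emph{internal}, Condition~(i) may still end up violated, while a free edge $g$ on $C_o$ may lie on a cycle that becomes bad. The key lemma to verify in each subcase is that one can choose $g$ on a bounding cycle that is simultaneously witnessed on the external face, using the hypothesis that $H$ has at least four external vertices together with the fact that $e$ already contributes two external corners; this forces enough ``budget'' of free external edges to absorb the smoothing of $v_1$ without breaking any extrovert cycle.
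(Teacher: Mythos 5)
Your overall strategy---work on the rectilinear image $\rect{H}$, smooth a bend vertex of $e$, re-subdivide a free edge, and invoke Theorem~\ref{th:RN03}---is exactly the paper's, and your counting argument for the existence of a free external edge in the Condition~$(i)$ case is sound. But there is a genuine gap in your third step, precisely at the point you yourself flag as ``the main obstacle''. The paper does \emph{not} mirror the three cases of \cref{le:internal-edge} here; for an external edge $e$ it splits Case~2 according to whether the 2-extrovert cycle $C$ through $e$ has another edge on the external face or not. The configuration your plan fails to cover is: $C$ is the only 2-extrovert cycle through $e$, $e$ is the \emph{only} edge of $C$ on the external face, $C$ has exactly two degree-2 vertices in $\rect{G}$ (namely $v_1,v_2$), and $C_o(\rect{G})$ has exactly four degree-2 vertices. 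Your first sub-subcase prescribes swapping a bend from $e$ to a free edge $g$ of $C$; but here every free edge of $C$ is internal, so the swap destroys Condition~$(i)$, while swapping to a free external edge not on $C$ destroys Condition~$(ii)$ for $C$. The ``key lemma'' you propose to verify---that $g$ can always be chosen both on the relevant cycle and on the external face---is simply false in this configuration, so it cannot be patched by a more careful choice of $g$.

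The paper's resolution is the flip you reserve for a different trigger: flip $C$ at its two leg vertices so that the two free edges $g,g'$ of $C$ adjacent to $e$ become external, then move the two bends of $e$ one onto $g$ and one onto $g'$. This repairs Condition~$(i)$ (the relocated bends are now on the external face) and Condition~$(ii)$ for $C$ (which retains two degree-2 vertices) simultaneously, and it is applied whenever $e$ is the only external edge of $C$, regardless of how many 2-extrovert cycles pass through $e$. So the missing idea is not the flip itself---you have it---but the correct condition for deploying it. A secondary, smaller imprecision: your claim in step~2 that moving a bend between two external edges ``does not alter \dots the number of degree-2 vertices'' of 2- and 3-extrovert cycles is literally false (cycles containing one of the two edges but not the other do change their counts); it is harmless there only because in that case no extrovert cycle through $e$ is tight, and this should be said explicitly.
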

\begin{proof}
	As in the proof of \cref{le:internal-edge}, a \emph{free edge} of $H$ is an edge without bends. Let $\overline{H}$ be the rectilinear image of $H$ and let $v_1$ and $v_2$ be the bend vertices of $H$ associated with the bends of $e$. Since $\overline{H}$ has no bends, its underlying graph $\overline{G}$ is a good plane graph. For simplicity, if $C$ is a cycle of $G$ we also call $C$ the cycle of $\overline{G}$ that corresponds to the subdivision of $C$ in $\overline{G}$. We distinguish between two cases:
	
	\smallskip\paragraph{Case 1: $e$ does not belong to a 2-extrovert cycle of $H$.} In this case, there must be at least a free edge on the external face of $H$. Indeed, suppose by contradiction that this is not true. By hypothesis $H$ as at least four external edges; if all these edges are not free, then there are at least five bends on the external boundary of $H$. Smoothing $v_1$ from $\rect{G}$ we get a resulting plane graph $\rect{G'}$ that is still a good plane graph, because by hypothesis $e$ does not belong to a 2-extrovert cycle of $H$ and because we still have four vertices of degree two on the external face of $\overline{G'}$. This implies that $\rect{G'}$ has an orthogonal representation $\rect{H'}$ without bends, and the inverse $H'$ of $\rect{H'}$ has less bends than $H$, a contradiction. Let $g$ be a free edge on the external face of $H$. Moving a bend from $e$ to $g$ we get the desired $v$-constrained orthogonal representation $H^*$ (note that $H^*$ has the same planar embedding as $H$).       
	
	\begin{figure}[tb]
		\centering
		\subfloat[]{\label{fi:external-edge-1}\includegraphics[width=0.38\columnwidth]{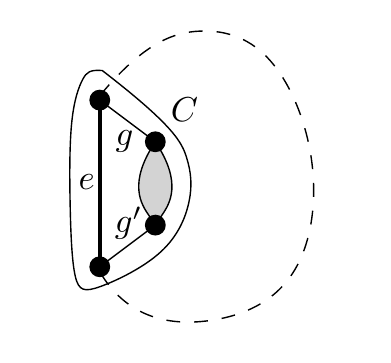}}
		\hfil
		\subfloat[]{\label{fi:external-edge-2}\includegraphics[width=0.38\columnwidth]{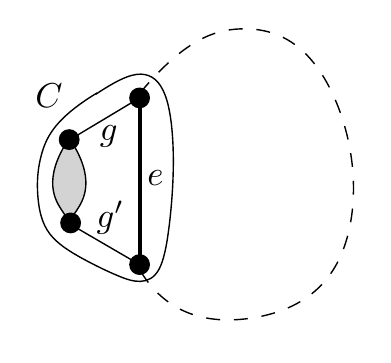}}
		\caption{Illustration for the proof of \cref{le:external-edge}. (a) A 2-extrovert cycle $C$ that shares exactly one edge $e$ on the external face; $g$ and $g'$ are free edges; the dashed curve represents the rest of the boundary of the external face. (b) Flipping $C$ around its leg vertices $g$ and $g'$ become external edges, and we can move the two bends of $e$ one on $g$ and the other on $g'$.}\label{fi:external}
	\end{figure}

	\smallskip\paragraph{Case 2: $e$ belongs to a 2-extrovert cycle $C$ of $H$.} We consider two subcases:
	\begin{itemize}
		\item $e$ is the only edge of $C$ on the external face (refer to \cref{fi:external-edge-1}). With similar arguments used in the proof of \cref{le:internal-edge}, it is easy to see that both the (internal) edges $g$ and $g'$ of $C$ incident to $e$ are free edges in $H$. Consider the plane graph $\rect{G'}$ obtained from $\rect{G}$ by flipping $C$ (namely, the cycle of $\rect{G'}$ corresponding to $C$) around its two leg vertices (see \cref{fi:external-edge-2}). The graph $\rect{G^*}$ obtained from $\rect{G'}$ by subdividing both $g$ and $g'$ with a vertex and by smoothing $v_1$ and $v_2$ is still a good plane graph. Hence, $\rect{G^*}$ admits a rectilinear orthogonal representation $\rect{H^*}$ without bends. The inverse $H^*$ of $\rect{H^*}$ has the same number of bends as $\rect{H}$. Also, edge $e$ has no bend in $H^*$, $g$ and $g'$ have one bend in $H^*$, and every other edge of $H^*$ has the same number of bends as in $H$. Finally, the external face of $H^*$ contains all the vertices of the external face of $H$. Therefore, $H^*$ is the desired $v$-constrained orthogonal representation.      
		
		\item $e$ has at least another edge $g \neq e$ on the external face. If $g$ is a free edge of $H$, then we can simply move a bend from $e$ to $g$, thus obtaining the desired $v$-constrained orthogonal representation $H^*$. Suppose now that $g$ is not a free edge. In this case there must be another free edge on the external face. Indeed, if all the edges of the external face of $G$ were not free, we could smooth $v_1$ from $\rect{G}$, and the resulting graph $\rect{G'}$ would be a good plane graph (recall that there are at least four edges on the external face and that $C$ has at least three bends in $H$ if $g$ is not free); if $\rect{H'}$ is an orthogonal representation of $\rect{G'}$ without bends, then its inverse $H'$ is an orthogonal representation of $G$ with less bends than $H$, a contradiction. Therefore, let $g'$ be a free edge on the external face. Again, we can simply move a bend from $e$ to $g'$, thus obtaining the desired $v$-constrained representation $H^*$.       
	\end{itemize} 
\end{proof}


\begin{figure}[tb]
	\centering
	\includegraphics[width=0.9\columnwidth]{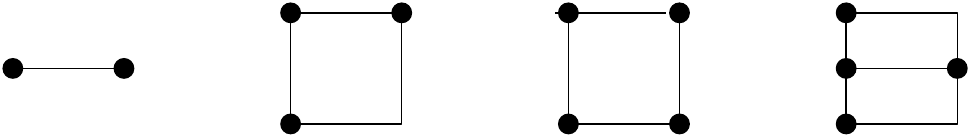}
	\caption{Bend-minimum orthogonal drawings with at most one bend per edge for a biconnected planar 3-graph distinct from $K_4$ and having at most four vertices.}\label{fi:upto4}
\end{figure}

\begin{lemma}\label{le:1-bend}
	Let $G$ be a biconnected planar $3$-graph distinct from $K_4$ and let $v$ be a designated vertex of $G$. There exists a $v$-constrained bend-minimum orthogonal representation of $G$ with at most one bend per edge. Also, if $v$ is a degree-2 vertex, such a representation can be constructed in such a way that $v$ has an angle larger than $90^\circ$ on the external face.  
\end{lemma}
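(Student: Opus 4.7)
The argument splits into three parts: a base case for small graphs, an iterative reduction for larger ones, and a refinement for degree-two designated vertices.

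For biconnected planar $3$-graphs $G \neq K_4$ with $n \leq 4$, I would verify the lemma by direct inspection, exhibiting for each such graph a representation satisfying both conditions (see \cref{fi:upto4}).

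For $n \geq 5$, \cref{le:external-face} provides a convenient starting point: a $v$-constrained bend-minimum orthogonal representation $H$ with at most two bends per edge and at least four vertices on the external face. I would then reduce the number of edges with two bends iteratively. While $H$ contains such an edge $e$, invoke \cref{le:internal-edge} (if $e$ is internal) or \cref{le:external-edge} (if $e$ is external) to obtain a new representation $H^*$, and replace $H$ by $H^*$. By Properties~(a)--(c) of those lemmas, $H^*$ is still $v$-constrained and bend-minimum, has at most two bends per edge, and retains at least four vertices on the external face, so the preconditions for a further iteration are preserved. Moreover, the count of edges with exactly two bends strictly decreases at each step, hence the loop terminates after at most $|E(G)|$ iterations with a $v$-constrained bend-minimum orthogonal representation in which every edge has at most one bend. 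This settles the first statement.

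For the refinement when $v$ has degree two, take the representation $H$ just constructed. If the angle at $v$ on the external face is $180^\circ$ or $270^\circ$, the claim already holds. Otherwise, that angle is $90^\circ$, meaning that $v$ is a convex corner of the outer boundary formed by its two incident edges. I would modify $H$ locally so that this external angle becomes at least $180^\circ$, without increasing the total number of bends and without creating any edge with two bends. The key is that, by construction, the external face contains at least four vertices of degree two, so there is enough angular slack to trade the $90^\circ$ external corner at $v$ for a $270^\circ$ external corner elsewhere on the outer boundary; this can be realized either by reassigning angles at another degree-two external vertex currently contributing a straight $180^\circ$ corner, or by flipping a $2$-extrovert cycle that shares the outer boundary with $v$, using arguments analogous to Case~3 in the proof of \cref{le:internal-edge}.

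The main obstacle is precisely this degree-two refinement: showing that the required local exchange of angles is available in every configuration while simultaneously preserving bend-minimality and the one-bend-per-edge bound. The casework follows the spirit of \cref{le:internal-edge,le:external-edge}, but it must additionally enforce the rigid orthogonal constraint at the degree-two pole $v$ and verify that the invariant ``at least four degree-two vertices on the external face'' is not destroyed when the modification is performed.
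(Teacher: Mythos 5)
The first two parts of your proposal (the $n\le 4$ base case and the iterative elimination of two-bend edges via \cref{le:external-face}, \cref{le:internal-edge}, and \cref{le:external-edge}) coincide with the paper's proof and are correct; Properties~(a)--(c) of those lemmas give exactly the termination and invariant-preservation argument you describe.

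The degree-two refinement, however, has a genuine gap, and you have correctly identified where it is: you never establish that the proposed local modification is always available. Turning the $90^\circ$ angle at $v$ into a larger one is not a local operation on an orthogonal representation. The angles around every face must satisfy global consistency constraints, so ``reassigning angles at another degree-two external vertex'' amounts to asking whether the whole angle-assignment problem admits a solution in which $v$ is a reflex corner of the external face --- and that existence claim is precisely what needs to be proved. Likewise, a flip of a $2$-extrovert cycle through $v$ need not exist (there may be no such cycle), and when it does exist you would still have to re-verify the conditions of \cref{th:RN03} after the flip. The paper sidesteps all of this casework with one clean step: pass to the rectilinear image $\rect{H^*}$, whose underlying graph $\rect{G^*}$ is a good plane graph, and invoke the result of Rahman et al.~\cite{DBLP:journals/jgaa/RahmanNN03} that a no-bend embedding-preserving representation of a good plane graph can be computed with \emph{any} four designated degree-two external vertices serving as the $270^\circ$ corners of the external face. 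Choosing $v$ as one of the four designated corners immediately yields a representation $\rect{H^+}$ whose inverse has the same bends on every edge as $H^*$ (hence is still bend-minimum with at most one bend per edge) and an angle larger than $90^\circ$ at $v$. To close your argument you would either need to import that freedom-of-corner-choice result explicitly, or carry out the substantial case analysis you only sketch; as written, the second half of the lemma is not proved.
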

\begin{proof}
	If $n \leq 4$ the statement trivially holds by choosing a planar embedding of $G$ with all the vertices on the external face; all the bend-minimum orthogonal representations with one bend per edge of non-isomorphic graphs are depicted in \cref{fi:upto4}.
	Suppose vice versa that $n > 4$. By \cref{le:external-face}, there exists a $v$-constrained bend-minimum orthogonal representation $H$ of $G$ with at most two bends per edge and at least four vertices on the external face. If all the vertices of $G$ have at most one bend in $H$, then we are done. Otherwise, starting from $H$ we can iteratively apply \cref{le:internal-edge} and \cref{le:external-edge} to yield a $v$-constrained bend-minimum orthogonal representation $H^*$ of $G$ with at most one bend per edge. In fact, thanks to Properties~(a) and~(b) of \cref{le:internal-edge} and \cref{le:external-edge}, at each iteration the number of edges with two bends decreases by at least one unit. Also, Property~(c) guarantees that we can continue the iterative process until $H^*$ is achieved. 
	Finally, suppose that $v$ is a degree-2 vertex and that $v$ has an angle of $90^\circ$ on the external face of $H^*$. Consider the underlying plane graph $\rect{G^*}$ of $\rect{H^*}$. Since $\rect{H^*}$ has no bend, $\rect{G^*}$ is a good plane graph. Rahman et al.~\cite{DBLP:journals/jgaa/RahmanNN03} show in particular that an embedding-preserving orthogonal drawing of $\rect{G^*}$ without bends can always be computed by arbitrarily choosing four degree-2 external vertices that will form reflex corners on the external face in the drawing, i.e., which will form angles of $270^\circ$ on the external face. We then compute such a representation $\rect{H^+}$ choosing $v$ as one of these four corners. The inverse $H^+$ of $\rect{H^+}$ is such that $b(H^+)=b(H^*)$ and each edge of $G$ has the same number of bends in $H^+$ and in $H^*$. Therefore, $H^+$ is a $v$-constrained bend-minimum orthogonal representation of $G$ with at most one bend per edge and with an angle larger than $90^\circ$ at $v$ on the external face.  
\end{proof}

\subsection{Proof of Properties~\textsf{O2-O4}.}\label{sse:O2-O4}

As mentioned at the beginning of this section, the proof of Properties~\textsf{O2-O3} is mostly inherited by~\cite{DBLP:conf/gd/DidimoLP18}, but it also exploits Properties~(c) and~(d) of the next technical lemma, i.e., \cref{le:shapes}.
On the other hand, Properties~(a) and~(b) of \cref{le:shapes} will be used later on in this paper to produce bend-minimum orthogonal representations of R-components with desired representative shapes. 

Let $G_\mu$ be a P- or an R-component of $G$ with a given planar embedding with the poles $\{u,v\}$ on the external face, and let $G^+_\mu$ be the plane graph obtained from $G_\mu$ by adding edge $e=(u,v)$ and choosing as external face one of the two faces incident to $e$.

\begin{restatable}{lemma}{leShapes}\label{le:shapes}
	Let $H^+_\mu$ be an orthogonal representation of $G^+_\mu$. The following properties hold:
	\begin{itemize}
		\item[(a)] If $e$ has at most 2 bends in $H^+_\mu$, then there exists an orthogonal representation $H^*_\mu$ of $G^+_\mu$ such that $e$ has 2 bends in $H^*_\mu$, $b(H^*_\mu \setminus e) \leq b(H^+_\mu \setminus e)$, and $H^*_\mu \setminus e$ is $\D$-shaped. 
		
		\item[(b)] If $e$ has exactly 3 bends in $H^+_\mu$, then there exists an orthogonal representation $H^*_\mu$ of $G^+_\mu$ such that $e$ has 3 bends in $H^*_\mu$, $b(H^*_\mu \setminus e) \leq b(H^+_\mu \setminus e)$, and $H^*_\mu \setminus e$ is $\X$-shaped.
		
		\item[(c)] If $e$ has exactly 1 bend in $H^+_\mu$, then there exists an orthogonal representation $H^*_\mu$ of $G^+_\mu$ such that $e$ has 1 bend in $H^*_\mu$, $b(H^*_\mu \setminus e) \leq b(H^+_\mu \setminus e)$, and $H^*_\mu \setminus e$ is $\L$-shaped.
		
		\item[(d)] If $e$ has 0 bends in $H^+_\mu$, then there exists an orthogonal representation $H^*_\mu$ of $G^+_\mu$ such that $e$ has 0 bends in $H^*_\mu$, $b(H^*_\mu \setminus e) \leq b(H^+_\mu \setminus e)$, and $H^*_\mu \setminus e$ is $\C$-shaped.
	\end{itemize}
	
	In all cases, every edge distinct from $e$ has in $H^*_\mu$ no more bends than it has in $H^+_\mu$ and $H^*_\mu$ can be computed from $H^+_\mu$ in linear time in the number of vertices of $G^+_\mu$.    
\end{restatable}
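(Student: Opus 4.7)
The approach rests on the classical $\pm 4$-rule of orthogonal representations: around every face $f$ of a plane orthogonal representation the sum over $v\in f$ of $(2-a(v,f)/90)$ plus the signed count of bends along the edges of $f$ (a right turn counted $+1$, a left turn $-1$) equals $+4$ if $f$ is internal and $-4$ if $f$ is external. I would apply this rule to the two faces of $G^+_\mu$ incident to $e$: the external face bounded by $p_l\cup e$ and the internal face bounded by $p_r\cup e$. Summing and subtracting the two equations yields two linear relations that express $\sigma_l+\sigma_r$ and $\sigma_l-\sigma_r$ in terms of the signed bend count $\sigma_e$ of $e$ and the angles of $u,v$ in the two faces, where $\sigma_l,\sigma_r$ denote the signed turn numbers of $p_l,p_r$ traversed from $u$ to $v$ (so $t(p_l)=|\sigma_l|$ and $t(p_r)=|\sigma_r|$).

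Next I would invoke the 3-graph assumption. Because $u,v$ are the poles of a P- or R-component of $G$, each has degree $2$ in $G_\mu$ and therefore degree exactly $3$ in $G^+_\mu$, so the three angles around each pole in $G^+_\mu$ form a permutation of $\{90^\circ,90^\circ,180^\circ\}$. Plugging the admissible placements of the $180^\circ$ angle at $u$ and at $v$ into the two relations and enumerating the admissible values of $\sigma_e$, one finds that for each value of $b(e)\in\{0,1,2,3\}$ the unordered pair $\{t(p_l),t(p_r)\}$ is forced to be exactly the one describing the representative shape announced in the corresponding case: \C\ when $b(e)=0$, \L\ when $b(e)=1$, \D\ when $b(e)=2$, and \X\ when $b(e)=3$.

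For cases (b), (c), (d) this already yields the claimed shape of $H^+_\mu\setminus e$, possibly after flipping $G_\mu$ around $\{u,v\}$ to exchange $p_l$ with $p_r$, which preserves the number of bends on every edge. In case (a) the hypothesis $b(e)\le 2$ may give $b(e)<2$ in $H^+_\mu$, so I would transfer bends onto $e$ by local swaps of the kind used in the proofs of \cref{le:internal-edge} and \cref{le:external-edge}: at a degree-$3$ pole, rotating the $180^\circ$ angle across $e$ moves one bend from the adjacent edge of $G_\mu$ onto $e$. Each such swap increases $b(e)$ by one, decreases by one the bend count on an edge of $G_\mu$, and preserves every other per-edge bend count; after at most two swaps we obtain $b(e)=2$ with $b(H^*_\mu\setminus e)\le b(H^+_\mu\setminus e)$, and the forced-shape argument gives \D. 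A similar local swap handles the subcase where $b(e)=2$ but the two bends on $e$ are anti-parallel: it re-orients one bend on $e$ while exchanging a $90^\circ$ with a $180^\circ$ angle at the pole, turning the resulting shape \C\ into \D\ without touching any other edge.

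All modifications are local, touching only a constant number of edges near the poles, so $H^*_\mu$ is built from $H^+_\mu$ in time linear in the number of vertices of $G^+_\mu$ (essentially by retraversing the external boundary of $G_\mu$ once and updating angles and bends). The main obstacle I expect is the angle-and-sign enumeration: each pole has three possible placements of its $180^\circ$ angle and the signs of $\sigma_l,\sigma_r,\sigma_e$ must be tracked consistently in both face equations; the trickiest step is verifying, in case (a), that in every initial angle configuration at least one of the two poles admits a legal swap achieving the desired transfer or reorientation of bends on $e$.
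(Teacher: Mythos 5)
There is a genuine gap, and it sits at the heart of your argument: the claim that the two face equations \emph{force} $\{t(p_l),t(p_r)\}$ to be the representative pair as a function of $b(e)$ is false. The two equations determine $\sigma_l$ and $\sigma_r$ only once you also fix the four corner angles of $u$ and $v$ in the two faces incident to $e$, and those corners are not parameters you get to choose --- they are whatever $H^+_\mu$ happens to have. Carrying out your enumeration for, say, $b(e)=0$: each of the four pole corners contributes $0$ or $1$ to its face equation (angle $180^\circ$ or $90^\circ$), so the external-face equation only gives $t(p_l)\in\{4,5,6\}$, the internal-face equation only gives $t(p_r)\in\{2,3,4\}$, and the sole extra restriction is $t(p_l)-t(p_r)\in\{2,3,4\}$, coming from the fact that each degree-3 pole has exactly one $180^\circ$ angle. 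Hence an input $H^+_\mu$ with $e$ straight can perfectly well be $(5,3)$- or $(6,4)$-shaped (``rolled up''), and your argument offers no way to turn it into a \C-shaped representation without increasing bends --- yet producing exactly that transformation is the content of the lemma. The same objection applies to cases (a)--(c).

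The paper's proof is constructive where yours is rigidity-based: it passes to the rectilinear image $\rect{H^+_\mu}$, deletes $\rect{e}$, observes that the remaining plane graph is good, and invokes \textsf{NoBend-Alg} of Rahman et al., which can realize \emph{any} four designated external degree-2 vertices as the four external corners with turn number zero between consecutive corners. The desired shape is then obtained purely by choosing those four corners suitably among $u$, $v$, and the bend vertices that the hypothesis on $b(e)$ guarantees to lie on the external path; the per-edge bend bound follows because bends of the new representation can only occur at degree-2 vertices of the unchanged underlying graph. To salvage your approach you would have to replace the ``forced shape'' step with such a global re-drawing tool: the local swaps you propose for case~(a) move single bends across a pole but cannot unroll a component whose external paths carry too many turns, and they additionally presuppose that the edge of $G_\mu$ adjacent to the pole carries a bend of the right orientation, which need not hold.
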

\begin{proof}
	We prove the different properties one by one.	
	
	\smallskip\paragraph{Property~(a).} Refer to \cref{fi:property-a} for a schematic illustration. Consider the rectilinear image $\rect{H^+_\mu}$ of $H^+_\mu$ and its underlying graph $\rect{G^+_\mu}$. By definition $\rect{H^+_\mu}$ has no bend, hence $\rect{G^+_\mu}$ is a good plane graph. Denote by $\rect{e}$ the subdivision of $e$ in $\rect{G^+_\mu}$ (note that $\rect{e}$ and $e$ coincide if $e$ has no bend).  
	Consider the plane graph $\rect{G_\mu} = \rect{G^+_\mu} \setminus \rect{e}$. Since $\mu$ is either a P- or an R-component, $\rect{G_\mu}$ is biconnected; also the end-vertices $u$ and $v$ of $e$ have degree two in $\rect{G_\mu}$, and $\rect{G_\mu}$ is also a good plane graph. 
	Rahman et al.~\cite[Corollary~6]{DBLP:journals/jgaa/RahmanNN03} describe a linear-time algorithm that constructs a no-bend orthogonal representation of $\rect{G_\mu}$ such that: $(i)$ four arbitrarily designated vertices of degree two in the external face are used as corners and $(ii)$ for each path $p$ of the external face between any two such consecutive corners in clockwise order, the turn number $t(p)$ is zero. We refer to this algorithm as \textsf{NoBend-Alg} in the following.
	Denote by $p'$ the path of the external face of $\rect{G^+_\mu}$ between $u$ and $v$ not containing $\rect{e}$. Clearly, $p'$ is also a path in $\rect{G_\mu}$. Since by hypothesis $e$ has at most 2 bends in $\rect{H^+_\mu}$, there exist at least two bend vertices $x$ and $y$ in $p'$, which correspond to degree-2 vertices of $\rect{G_\mu}$ distinct from $u$ and $v$.   
	We apply \textsf{NoBend-Alg} to construct an orthogonal representation $\rect{H'_\mu}$ of $\rect{G_\mu}$ having $u$, $v$, $x$, and $y$ as the four external designated corners. Since the external path from $u$ to $v$ in $\rect{H'_\mu}$ is zero and since $u$ and $v$ are corners, we have that the inverse $H'_\mu$ of $\rect{H'_\mu}$ is $\D$-shaped. 
	Therefore, we can construct the desired orthogonal representation $H^*_\mu$, by adding to $H'_\mu$ edge $e$ with two bends (turning in the same direction). By construction, $H^*_\mu \setminus e = H'_\mu$, hence $H^*_\mu \setminus e$ is $\D$-shaped. Also, the number of bends of $H^*_\mu \setminus e$ is not greater than the number of bends of $\rect{H^+_\mu} \setminus e$, because the bends of $H^*_\mu \setminus e$ can only correspond to degree-2 vertices of $\rect{H'_\mu}$ and the set of degree-2 vertices is the same in $\rect{H'_\mu}$ and in $\rect{H^+_\mu} \setminus \rect{e}$. 
	
	\begin{figure}[tb]
		\includegraphics[width=0.9\columnwidth]{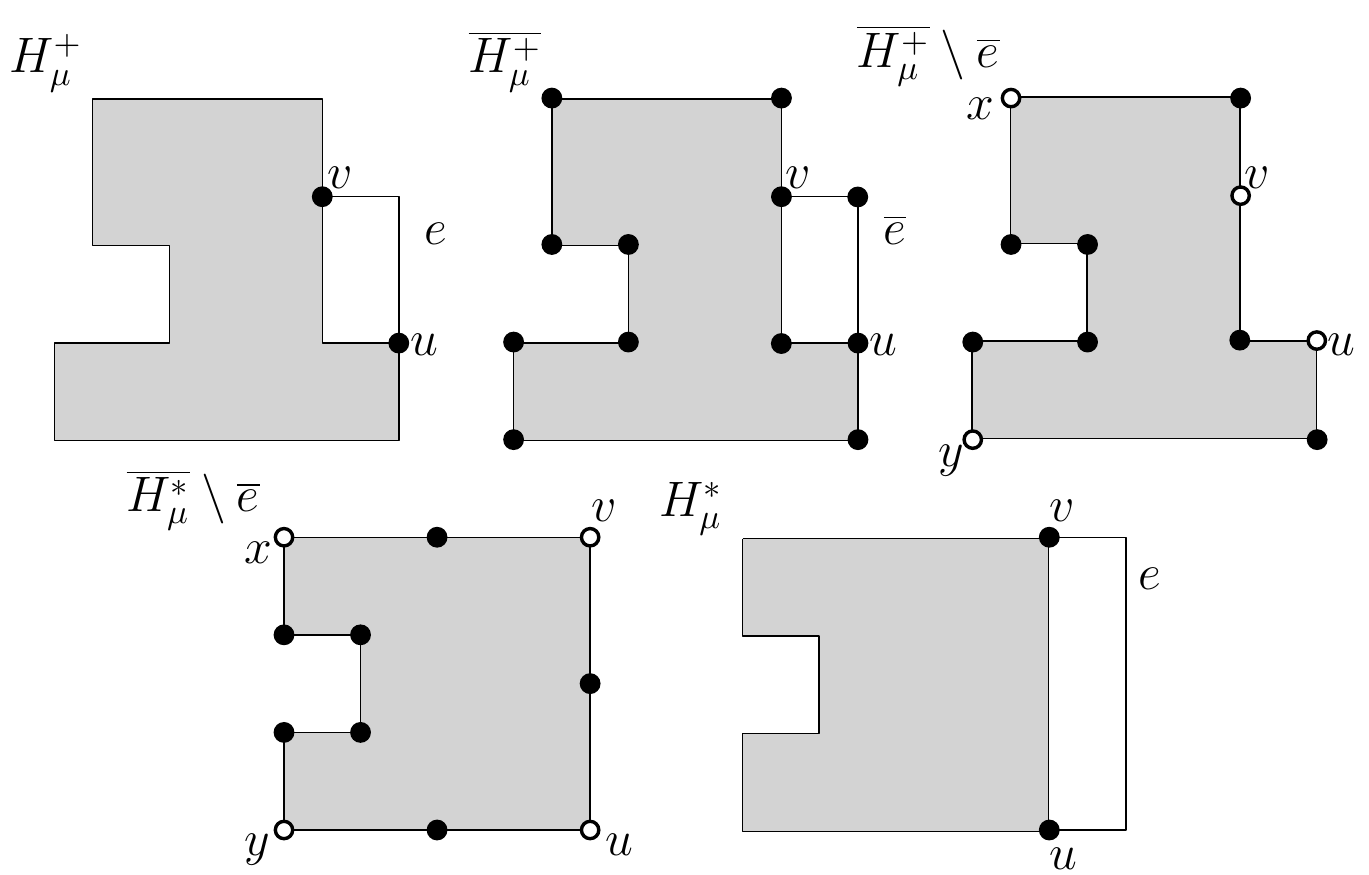}
		\caption{Illustration for the proof of Property (a).}\label{fi:property-a}
	\end{figure}
	
	\smallskip\paragraph{Property~(b).} Refer to \cref{fi:property-b} for a schematic illustration. We use the same notation as in the previous case. Denote by $p'$ and $p''$ the two paths between $u$ and $v$ on the external face of $\rect{G_\mu}$. Since by hypothesis $e$ has exactly 3 bends in $\rect{H^+_\mu}$, there exists at least one bend vertex $x$ along $p'$ and at least one bend vertex $y$ along $p''$. Both $x$ and $y$ are degree-2 vertices of $\rect{G_\mu}$ distinct from $u$ and $v$. We apply \textsf{NoBend-Alg} to construct an orthogonal representation $\rect{H'_\mu}$ of $\rect{G_\mu}$ having $u$, $x$, $v$, and $y$ as the four external designated corners. Since the external path between any two such consecutive corners is zero, we have that the turn number from $u$ to $v$ on the external face is equal to one in each of the two possible circular directions (clockwise and counterclockwise). This implies that the inverse $H'_\mu$ of $\rect{H'_\mu}$ is $\X$-shaped. Therefore, we can construct the orthogonal representation $H^*_\mu$, by adding to $H'_\mu$ edge $e$ with three bends. By construction, $H^*_\mu \setminus e = H'_\mu$, hence $H^*_\mu \setminus e$ is $\X$-shaped. Also, the number of bends of $H^*_\mu \setminus e$ is not greater than the number of bends of $\rect{H^+_\mu} \setminus \rect{e}$, because the bends of $H^*_\mu \setminus e$ can only correspond to degree-2 vertices of $\rect{H'_\mu}$.  
	
	\begin{figure}[tb]
		\includegraphics[width=0.9\columnwidth]{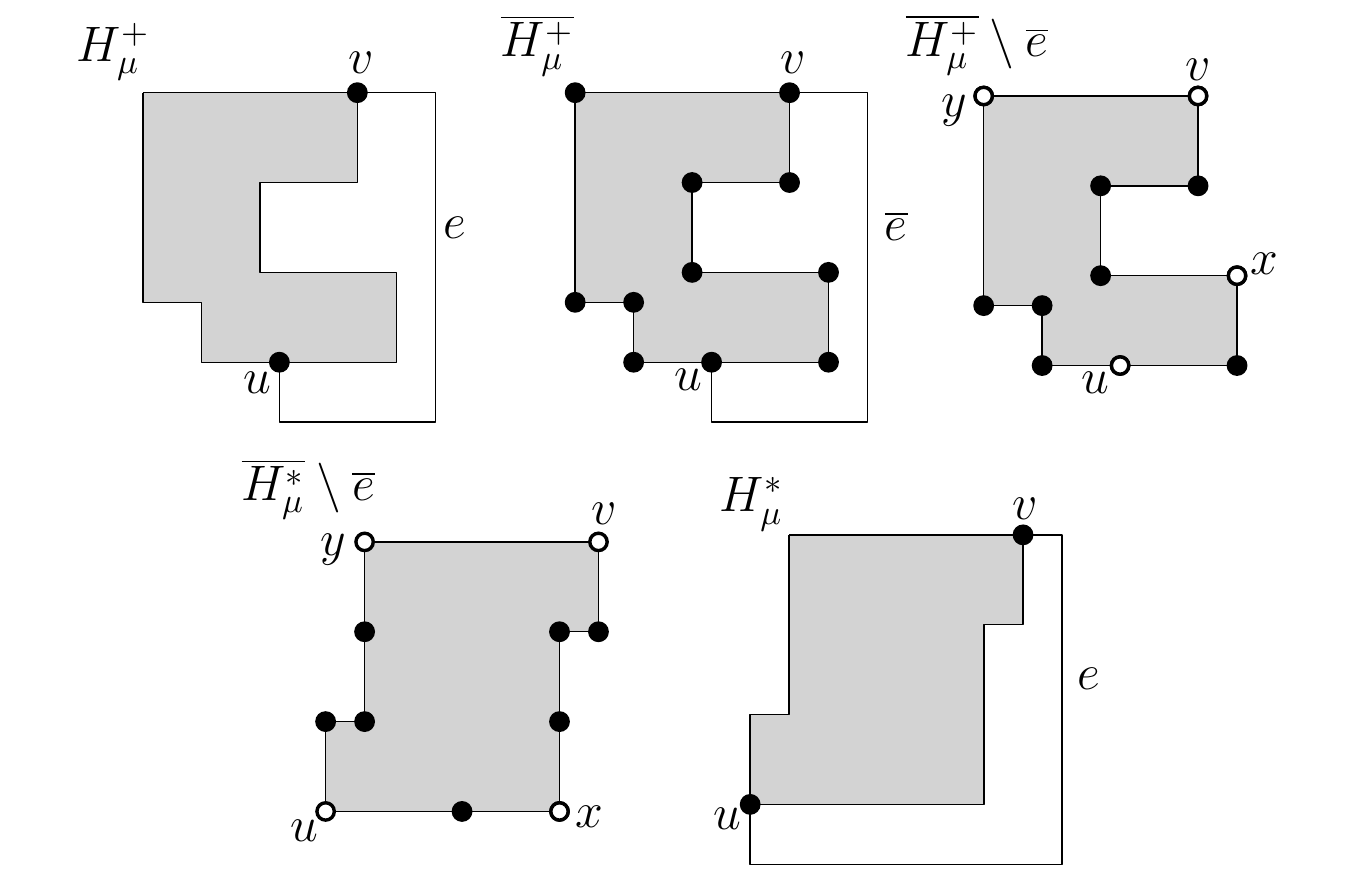}
		\caption{Illustration for the proof of Property (b).}\label{fi:property-b}
	\end{figure}
	
	\smallskip\paragraph{Property~(c).} Refer to \cref{fi:property-c} for a schematic illustration. We still use the same notation as in the previous cases. Denote by $p'$ the path of the external face of $\rect{G^+_\mu}$ between $u$ and $v$ not containing $\rect{e}$. Since by hypothesis $e$ has exactly 1 bend in $H^+_\mu$, there exist at least three bend vertices $x, z, y$ along $p'$ in $\rect{H^+_\mu}$, corresponding to degree-2 vertices of $\rect{G^+_\mu}$. In particular, we choose $x$ as the first degree-2 vertex encountered along $p'$ while moving counterclockwise from $v$ and we choose $y$ as the first degree-2 vertex along $p'$ in the clockwise direction from $u$. Vertex $z$ is an arbitrary degree-2 vertex along $p'$ between $x$ and $y$. Finally, let $w$ be the degree-2 vertex associated with the bend on $e$.  
	We apply \textsf{NoBend-Alg} to construct an orthogonal representation $\rect{H^*_\mu}$ of $\rect{G_\mu}$ having $w$, $x$, $z$, and $y$ as the four external designated corners. Since the external path between any two such consecutive corners is zero, and since no degree-2 vertex exists on the external face going from $w$ to $x$ counterclockwise, we have that $\rect{H^*_\mu}$ has an angle of $180^\circ$ at $v$ on the external face. Similarly, since no degree-2 vertex exists on the external face going from $w$ to $y$ clockwise, we have that $\rect{H^*_\mu}$ has an angle of $180^\circ$ at $u$ on the external face. This implies that the turn number of $p'$ in $\rect{H^*_\mu} \setminus \rect{e}$ is equal to three and the turn number of the other external path of $\rect{H^*_\mu} \setminus \rect{e}$ between $u$ and $v$ is equal to one. Hence, the inverse 
	$H^*_\mu \setminus e$ of $\rect{H^*_\mu} \setminus \rect{e}$ is $\L$-shaped and again the number of bends of $H^*_\mu \setminus e$ is not greater than the number of bends of $H^+_\mu \setminus e$.
	
	\begin{figure}[tb]
		\includegraphics[width=0.9\columnwidth]{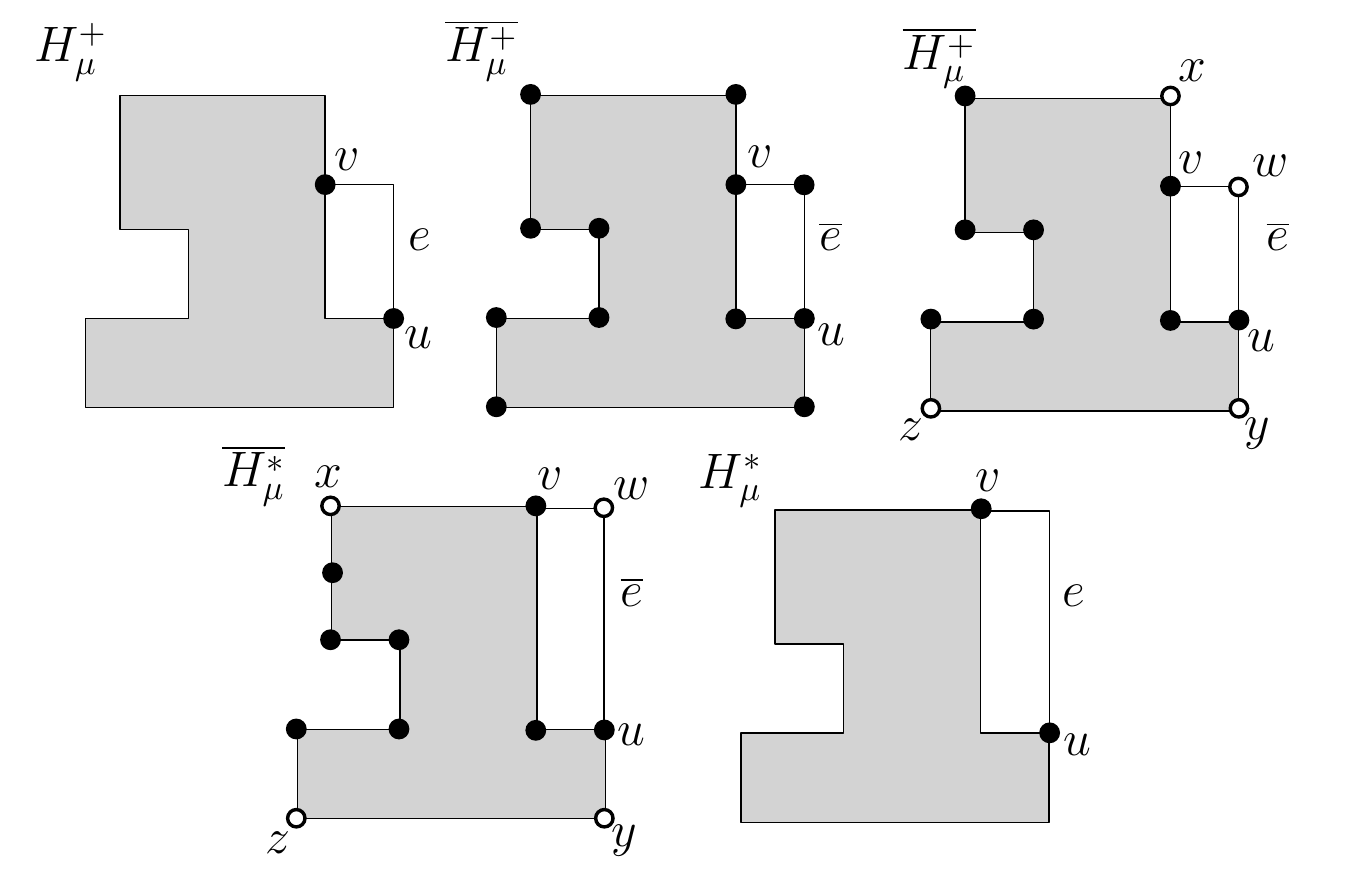}
		\caption{Illustration for the proof of Property (c).}\label{fi:property-c}
	\end{figure}
	
	\smallskip\paragraph{Property~(d).} Refer to \cref{fi:property-d} for a schematic illustration. As in the previous case, let $p'$ be the path of the external face of $\rect{G^+_\mu}$ between $u$ and $v$ not containing $\rect{e}$. Since by hypothesis $e$ has 0 bends in $H^+_\mu$, there exist at least four bend vertices $x, w, z, y$ along $p'$ in $\rect{H^+_\mu}$, corresponding to degree-2 vertices of $\rect{G^+_\mu}$. As in the proof of Property~(c), we choose $x$ as the first degree-2 vertex encountered along $p'$ while moving counterclockwise from $v$ and $y$ as the first degree-2 vertex along $p'$ in the clockwise direction from $u$. Vertices $w$ and $z$ are arbitrary degree-2 vertices along $p'$ between $x$ and $y$. We apply \textsf{NoBend-Alg} to construct an orthogonal representation $\rect{H^*_\mu}$ of $\rect{G_\mu}$ having $x$, $w$, $z$, and $y$ as the four external designated corners. For the same arguments as in the previous case, we have that $\rect{H^*_\mu}$ has an angle of $180^\circ$ at $u$ and at $v$ on the external face. This implies that the turn number of $p'$ in $\rect{H^*_\mu} \setminus \rect{e}$ is equal to four and the turn number of the other external path of $\rect{H^*_\mu} \setminus \rect{e}$ between $u$ and $v$ is equal to two. Hence, the inverse $H^*_\mu \setminus e$ of $\rect{H^*_\mu} \setminus \rect{e}$ is $\C$-shaped and again the number of bends of $H^*_\mu \setminus e$ is not greater than the number of bends of $H^+_\mu \setminus e$.
	
	\begin{figure}[tb]
		\includegraphics[width=0.9\columnwidth]{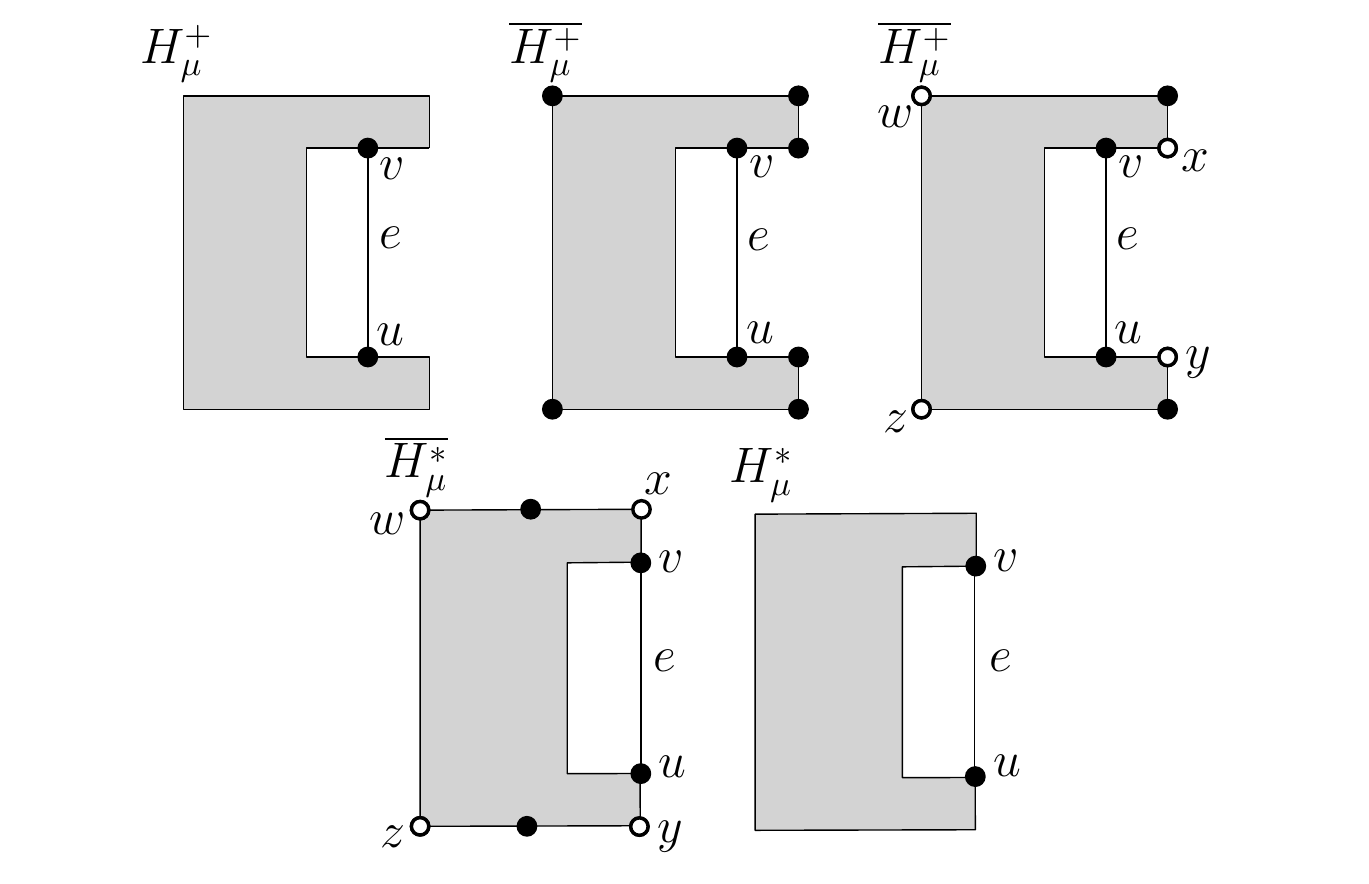}
		\caption{Illustration for the proof of Property (d).}\label{fi:property-d}
	\end{figure}

	\smallskip Concerning the time complexity, since \textsf{NoBend-Alg} runs in linear time we have that $H^*_\mu$ is computed from $H^+_\mu$ in linear time.    
\end{proof}

Let us now return to the proof of Properties~\textsf{O2-O3}. In~\cite{DBLP:conf/gd/DidimoLP18} it is proved that for any designated edge $e$ of $G$ on the external face, there exists an $e$-constrained bend-minimum orthogonal representation of $G$ with at most two bends per edge and verifying Properties~\textsf{O2-O3}. The proof of these two properties is done by starting from any $e$-constrained bend-minimum orthogonal representation $H'$ with two bends per edge, and then constructing a new orthogonal representation $H$ such that: $(1)$ $H$ has the same planar embedding as $H'$; $(2)$ each edge of $H$ has the same number of bends as in $H'$; $(3)$ $H$ verifies Properties~\textsf{O2-O3}. In particular, the construction of $H$ does not use the hypothesis that $H'$ has at most two bends per edge, and therefore it can be applied without changes by starting from an $e$-constrained bend-minimum orthogonal representation $H'$ with one bend per edge, i.e., one verifying Property~\textsf{O1}. Notice that such a property is still verified by $H$, due to Conditions $(1)$ and $(2)$. However, since in~\cite{DBLP:conf/gd/DidimoLP18} an edge could have up two bends, the root child component $\mu$ could have one among the $\D$-, $\C$-, $\X$-, and $\L$-shape if $\mu$ is a P- or an R-node, and spirality ranging from $2$ to $4$ if $\mu$ is an S-node. In our case an edge has 1 bend or 0 bends. Hence, due to Properties~(c) and~(d) of \cref{le:shapes}, if the root child component is a P- or an R-component then it always admits a bend-minimum representation that is either $\L$- or the $\C$-shaped. If the root child node component is an S-component then it has spirality 4 or 3; in fact, we can disregard higher values of spirality for a root child S-component because of the following lemma.

\begin{lemma}[\cite{DBLP:journals/siamcomp/BattistaLV98}]\label{le:elbow}
	Let $G_\mu$ be an S-component and let $H_\mu$ be a $0$-spiral bend-minimum orthogonal representation of $G_\mu$. Let $H'_\mu$ be another $k$-spiral orthogonal representation of $G_\mu$ such that $k > 0$ and $H'_\mu$ is bend-minimum among the $k$-spiral orthogonal representations of $G_\mu$. If $k=1$, $b(H'_\mu)=b(H_\mu)$, else $b(H'_\mu) \geq b(H_\mu)$.
\end{lemma}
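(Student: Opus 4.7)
The argument exploits the chain structure of an S-component, whose skeleton is a cycle of virtual edges corresponding to children $\mu_1, \ldots, \mu_\ell$ of the S-node in the SPQR-tree. By Property~\textsf{T3} of \cref{le:spqr-tree-3-graph}, between two consecutive inner children we have real edges incident to a degree-$2$ vertex of $G_\mu$, so the internal ``spine'' of $G_\mu$ contains real edges whose angles and bends are available for local adjustment. I will fix a path $p$ between the two alias vertices and recall that, by \cref{le:k-spiral}, the spirality of an orthogonal representation of $G_\mu$ equals $|t(p)|$ where $t(p)$ is the signed sum of turns (bend turns along edges plus $\pm 1$ contributions from $90^\circ/270^\circ$ vertex angles) encountered along $p$.

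\emph{First direction: $b(H'_\mu) \ge b(H_\mu)$ for every $k \ge 1$.} The idea is to transform any $k$-spiral representation into a $0$-spiral one without increasing the bend count, which together with the minimality of $H_\mu$ yields the bound. Starting from $H'_\mu$, I iteratively apply ``unwinding'' operations on the chain: each operation locates a $90^\circ$ angle at a degree-$2$ vertex on $p$ (whose orientation matches the sign of $t(p)$) and flattens it to $180^\circ$, or equivalently moves a bend from an edge on one side of the spine to the opposite side of the same edge. Each such move reduces $|t(p)|$ by one while preserving the total bend count and the feasibility of the orthogonal representation. After $k$ iterations we obtain a $0$-spiral representation of $G_\mu$ whose bend count is $\le b(H'_\mu)$, and the first inequality follows.

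\emph{Second direction: $b(H'_\mu) \le b(H_\mu)$ when $k = 1$.} Starting from $H_\mu$, the plan is to perform a single ``winding'' operation that raises the spirality from $0$ to $1$ at no extra cost. The operation picks a degree-$2$ vertex $w$ on $p$ whose angle inside $G_\mu$ is $180^\circ$ and converts it to $90^\circ$ (or $270^\circ$) by a local rotation of an end of the spine; the resulting representation has spirality $1$ and the same bend count. Existence of such a vertex is guaranteed because a $0$-spiral representation of a chain $G_\mu$ of length at least two with degree-$3$ poles contains at least one spine vertex with a flat angle, and even in the degenerate case where every spine degree-$2$ vertex already forms a $90^\circ/270^\circ$ angle the two contributions must cancel (since $t(p)=0$), so a single bend can be moved along $p$ to produce a $1$-spiral of the same cost.

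\emph{Expected main obstacle.} The delicate step is the second direction: showing that a winding can always be performed without introducing a new bend. Case analysis is needed depending on whether the spine of $G_\mu$ consists solely of the real edges of the S-skeleton or contains non-trivial P/R-children whose internal turns must be left unaltered. In the latter case one has to argue that, by bend-minimality of $H_\mu$, neighbouring children cannot all simultaneously be ``rigid'' with respect to the rotation, which in turn rests on the fact that any such rigidity would already force an avoidable bend on the real edges at their poles, contradicting the optimality of $H_\mu$.
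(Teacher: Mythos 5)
First, a framing point: the paper does not actually prove \cref{le:elbow} --- it is imported verbatim from~\cite{DBLP:journals/siamcomp/BattistaLV98} --- so your argument has to stand entirely on its own. Your high-level strategy (turn an optimal $k$-spiral representation into a $0$-spiral one without adding bends, and wind an optimal $0$-spiral one up to spirality $1$ for free) is the right one, but both directions rest on an existence claim that fails in general: that the turning along $p$ can always be adjusted at a degree-$2$ vertex of the spine or by relocating a bend on a real edge. Consider the smallest S-component allowed by Property~\textsf{T3} of \cref{le:spqr-tree-3-graph} that has a child: $\skel(\mu)$ is a $4$-cycle consisting of the reference edge, real edges $e_1=(u,a)$ and $e_2=(b,v)$, and one virtual edge for a P- or R-child $\nu$. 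Here the spine has no internal degree-$2$ vertices at all ($a$ and $b$ have degree three in $G_\mu$), and a $2$-spiral representation can carry its entire turning inside $G_\nu$: take a D-shaped child traversed along its turn-number-$2$ side, with $180^\circ$ angles at $a$ and $b$ on that side and no bends on $e_1,e_2$. Your unwinding step then finds nothing to flatten and no bend to move. The correct operation is to flip the child, or to exchange the two outer-face angles at the degree-$3$ pole $a$ (trading a $90^\circ$ for a $180^\circ$ between the two sides of the spine), and one must additionally argue that such flips and exchanges never force extra bends inside the child --- which hinges on how the minimum bend counts of the child's attainable shapes compare. None of this appears in your sketch.

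The same example breaks your $k=1$ direction, which you flag as the delicate step but do not resolve: with no internal degree-$2$ spine vertices and no bends there is no vertex $w$ with a flat angle to convert, so the premise of your ``winding'' step fails rather than merely being degenerate; and the fallback you offer is both inapplicable (there may be no bends at all) and ineffective (sliding a bend along $p$ does not change $t(p)$; only reversing its turning direction or changing an angle does, and reversing a bend changes $t(p)$ by two, not one, so it cannot produce spirality exactly $1$ from $0$). The closing assertion that ``rigidity would already force an avoidable bend, contradicting optimality'' is precisely the statement that needs proof and is asserted rather than derived. To close the gap you would need a case analysis over where the unit of turning is injected --- at a bend of a real spine edge, at a degree-$2$ spine vertex, at an outer angle of a degree-$3$ pole of a child, or by changing a child's shape or flipping it --- together with a comparison of the bend-minimum costs of the representative shapes of P- and R-children; as it stands the argument only covers the first two, easiest, cases.
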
            

Concerning Property~\textsf{O4}, suppose that $H$ is a bend-minimum representation of $G$ with Properties~\textsf{O1-O3}. Let $H_\mu$ be any component of $H$, with a given representative shape, and suppose by contradiction that there exists a different representation $H'_\mu$ with the same shape as $H_\mu$ but such that $b(H'_\mu) < b(H_\mu)$. By Theorem~3 of~\cite{DBLP:conf/gd/DidimoLP18} it is possible to replace $H_\mu$ with $H'_\mu$ in $H$, still having a valid orthogonal representation $H'$ of $G$ and having the same set of bends along the edges of $H \setminus H_\mu$ (although it is intuitive, refer to~\cite{DBLP:conf/gd/DidimoLP18} for a formal definition of how to replace $H_\mu$ with $H'_\mu$). This implies that $b(H') < b(H)$, a contradiction.

\section{The Bend Counter Data Structure: Proof of Theorem~\ref{th:bend-counter}}\label{se:triconnected}

The \texttt{Bend-Counter} data structure is used by the labeling
algorithm to compute and update in $O(1)$ time the value
$b_{e}^\sigma(\mu)$ for each R-node $\mu$ of $T$ and for each
representative shape $\sigma$ for $\mu$. We recall that, by
Theorem~\ref{th:shapes} if $\mu$ is an inner R-component, it is either \X- or
\D-shaped. If $\mu$ is a child of the root $e$ of $T$, an optimal
orthogonal representation of $G$ is constructed by computing an optimal representation
of $G_\mu$ with $e$ on its external face. In both
cases, it is interesting to observe that $\skel(\mu)$ is a triconnected
cubic graph. Since Rahman et al.~\cite{DBLP:journals/jgaa/RahmanNN99}
show a linear-time algorithm that computes an optimal orthogonal representation
of a triconnected cubic planar graph in the fixed embedding setting, one
could be tempted to use that algorithm (with some minor adaptations) to
compute the values $b_{e}^{\d}(\mu)$ and $b_{e}^{\x}(\mu)$ for $\mu$.

There are however two main drawbacks in following this idea. Firstly,
the algorithm by Rahman et al.~\cite{DBLP:journals/jgaa/RahmanNN99}
minimizes the number of bends in the fixed embedding setting, that is
for a chosen external face. The planar embeddings of $\skel(\mu)$ depend
on the choice of the root of $T$, and since the labeling algorithm roots
the SPQR tree at each possible Q-node, if we had to re-execute $O(n)$ times the
algorithm by Rahman et al.~\cite{DBLP:journals/jgaa/RahmanNN99}
we would not have a linear-time labeling procedure. Secondly, when
performing a bottom-up visit of $T$ rooted at $e$, the value
$b_{e}^\sigma(\mu)$ for an R-node $\mu$ is computed by summing up the
bends along the real edges of $\skel(\mu)$ to the number of bends that
the pertinent graphs associated with its virtual edges carry. Hence,
when computing $b_{e}^\sigma(\mu)$ one has to optimize the bends along
the real edges and treat the virtual edges as edges that can be freely
bent. Even in the fixed embedding setting, the algorithm in~\cite{DBLP:journals/jgaa/RahmanNN99} is not designed for a graph
where one can bend some edges without impacting on the overall cost of
the optimization procedure.

The \texttt{Bend-Counter} makes it possible to overcome the above difficulties. It is
constructed in $O(n)$ time on a suitably chosen planar embedding of a
triconnected planar 3-graph $G$. Each edge $e$ of $G$ is given a
non-negative integer $\flex(e)$ specifying the \emph{flexibility} of
$e$.  An edge $e$ is called \emph{flexible} if $\flex(e) > 0$ and
\emph{inflexible} if $\flex(e) = 0$. Let $H$ be an orthogonal representation of $G$ and let $c(e)$ be the number of bends of $e$ in $H$ exceeding $\flex(e)$. The \emph{cost} $c(H)$ of $H$ is the sum of $c(e)$ for all edges $e$ of $G$. Note that, if all the edges of $G$ are inflexible, the cost of $H$ coincides with its total number of bends, i.e., $c(H)=b(H)$. For any face $f$ of $G=(V,E)$, the
\texttt{Bend-Counter} returns in $O(1)$ time the cost of a bend-minimum
orthogonal representation of $G$ with at most one bend per edge and
$f$ as external face. As a consequence, the use of the \texttt{Bend-Counter} makes it possible, for any possible choice of the root of $T$, to compute in $O(1)$ time the values  $b_{e}^\sigma(\mu)$, for any representative shape $\sigma$ of $\mu$ (see Section~\ref{se:labeling} for details). Since by \cref{le:spqr-tree-3-graph} every virtual edge of $\skel(\mu)$ corresponds to an S-component and since by Property~\textsf{O3} of Theorem~\ref{th:shapes} the spirality of each S-component is at most
four, from now on we assume $\flex(e) \leq 4$ for every edge
$e$ of~$G$.


As a preliminary step towards constructing the \texttt{Bend-Counter}, in \cref{sse:fixed-embedding} we study the fixed embedding scenario and extend to graphs with flexible edges some of the ideas of~\cite{DBLP:journals/jgaa/RahmanNN99}. \cref{sse:variable-embedding} focuses on the variable embedding setting and discusses how to construct the \texttt{Bend-Counter}.

\subsection{Triconnected Plane Graphs with Flexible Edges.}\label{sse:fixed-embedding}

Let $G=(V,E)$ be a triconnected cubic plane graph, and let $\flex(e) \leq 4$ denote the flexibility of any edge $e \in E$.
We aim at computing the minimum cost $c(G)$ of an orthogonal representation of $G$ that preserves the given embedding.
In other words, $c(G) = \min\{c(H): H \textrm{~is an embedding-preserving}$ $\textrm{~orthogonal representation of~} G\}$.
We start with some useful definitions and properties. Two cycles $C$ and $C'$ of $G$ are \emph{independent} if $G(C)$ and $G(C')$ have no common vertices, and they are \emph{dependent}
otherwise. If  $C'$ and $C$ are dependent and $C \subseteq G(C')$, $C$ is a \emph{descendant} of $C'$ and $C'$ is an \emph{ancestor} of $C$;  also, $C$ is a \emph{child-cycle} of $C'$ if it is not a descendant of another descendant of $C'$. \cref{fi:independent-cycles-example} depicts different cycles of the same plane graph $G$. In \cref{fi:independent-cycles-example-a}, $C_1$ and $C_2$ are two dependent cycles and $C_1 \subseteq G(C_2)$; hence $C_1$ is a descendant of $C_2$. In \cref{fi:independent-cycles-example-b} the cycles $C_1$ and $C_3$ are dependent but they have no inclusion relationship. Finally, the cycles $C_1$ and $C_4$ of \cref{fi:independent-cycles-example-c} are independent. The following lemma rephrases Lemma~1 of~\cite{DBLP:journals/jgaa/RahmanNN99}.

\begin{figure}[tb]
	\centering
	\subfloat[]{\label{fi:independent-cycles-example-a}\includegraphics[width=0.33\columnwidth]{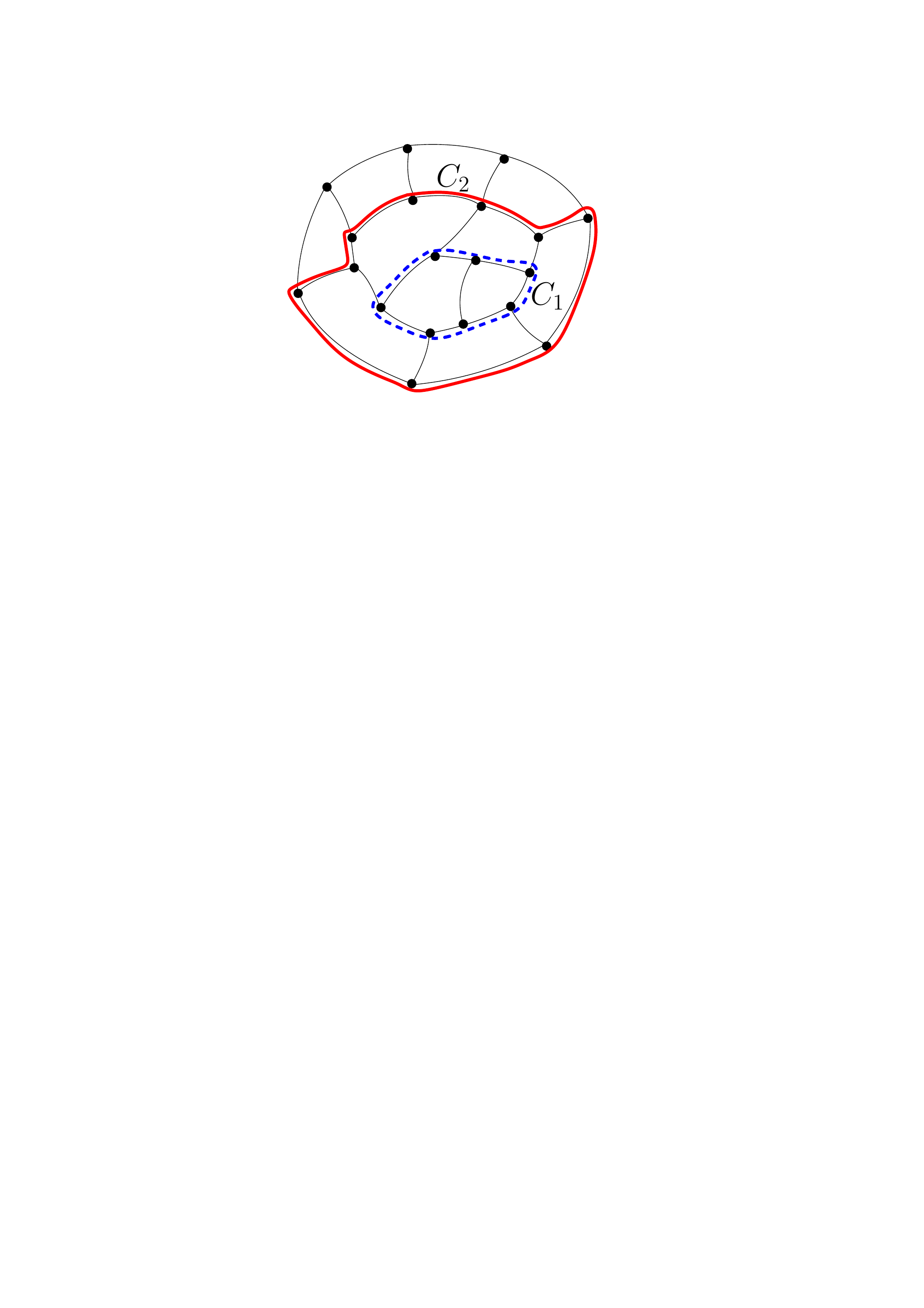}}
	\hfil
	\subfloat[]{\label{fi:independent-cycles-example-b}\includegraphics[width=0.33\columnwidth]{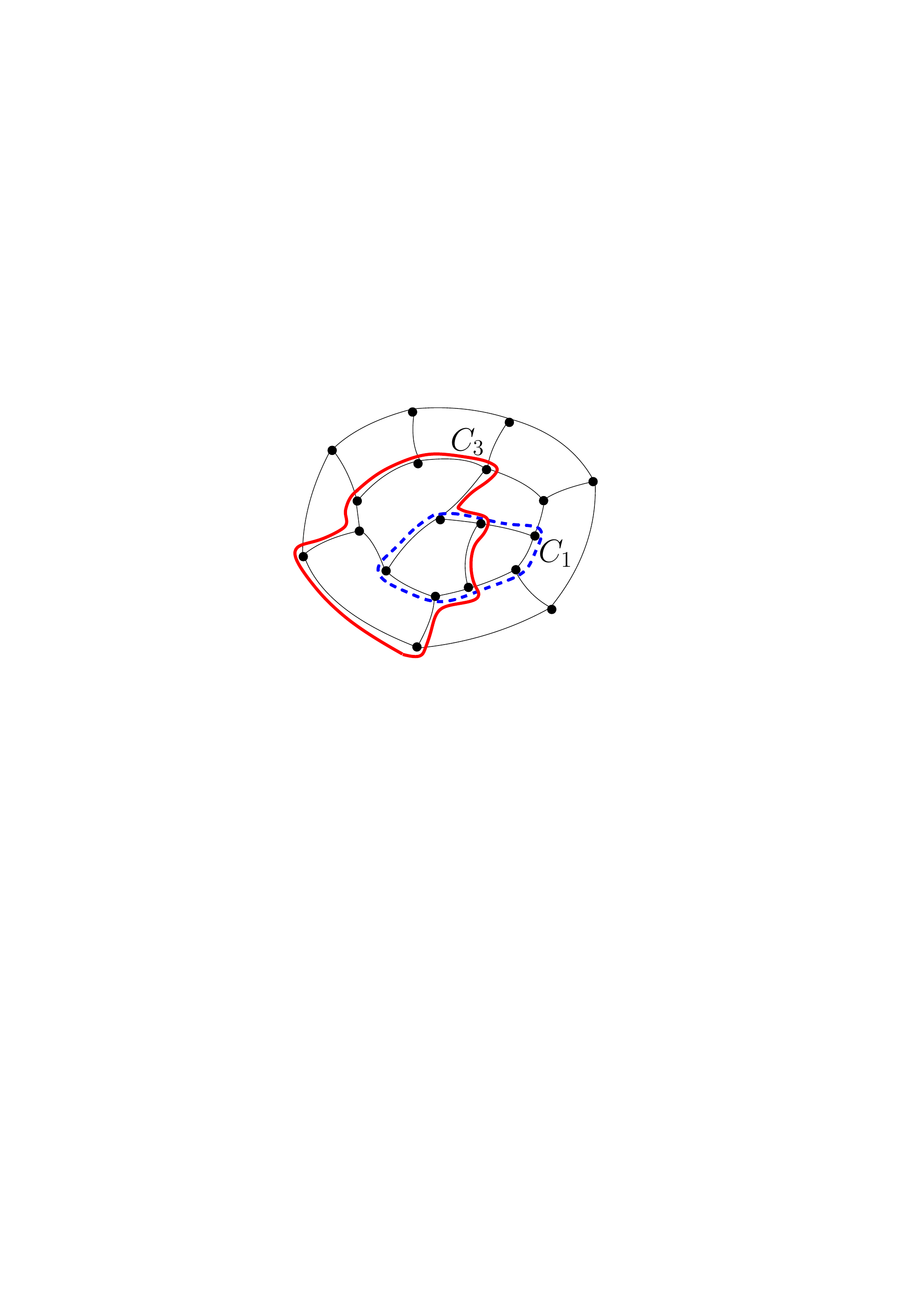}}
	\hfil
	\subfloat[]{\label{fi:independent-cycles-example-c}\includegraphics[width=0.33\columnwidth]{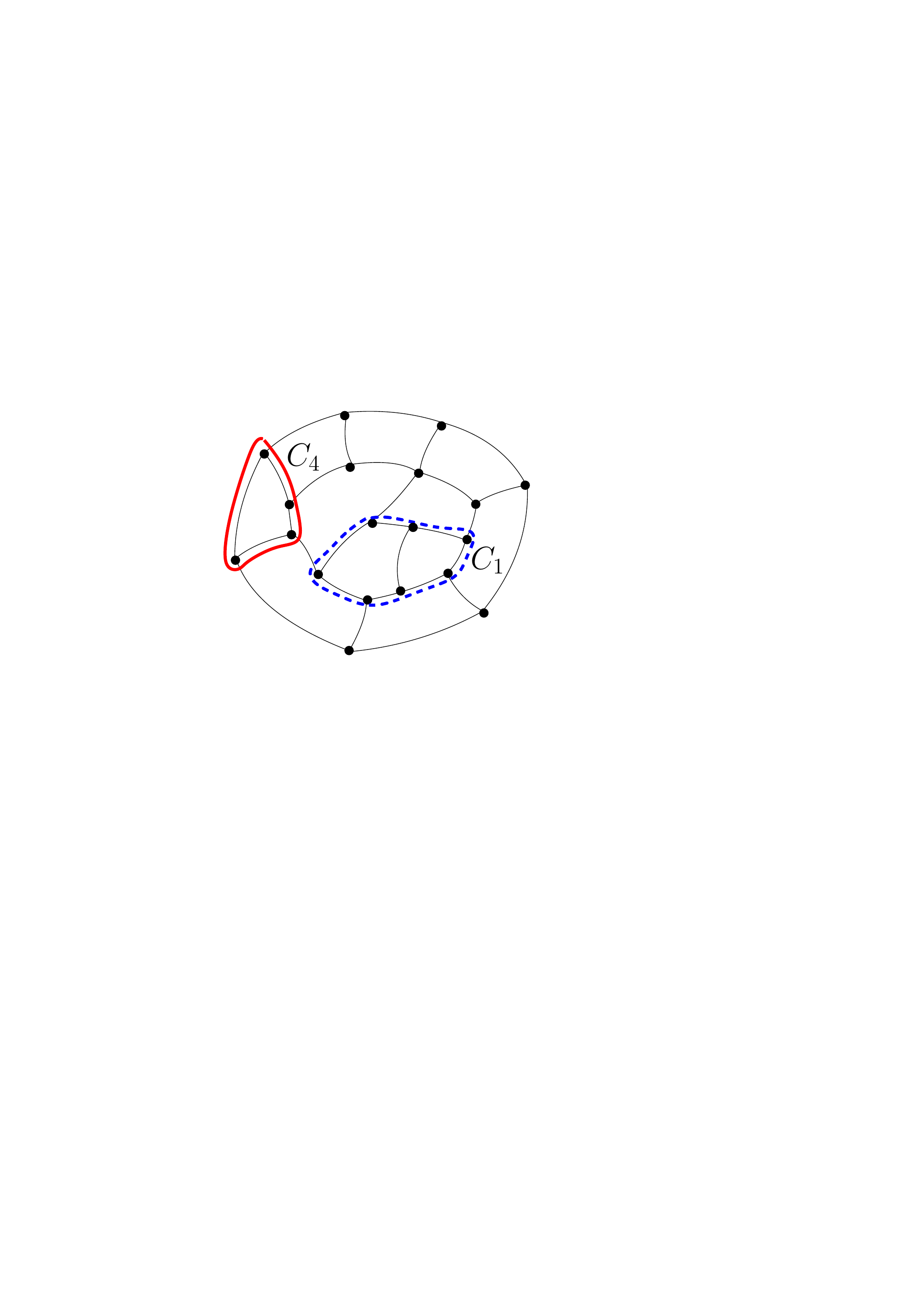}}
	\hfil
	\caption{Relationships of cycles in a plane graph $G_f$. (a) Cycle $C_1$ is a descendant of cycle $C_2$. (b) $C_1$ and $C_2$ are dependent cycles with no inclusion relationship. (c) $C_1$ and $C_2$ are independent cycles.}\label{fi:independent-cycles-example}
\end{figure}

\begin{lemma}[\cite{DBLP:journals/jgaa/RahmanNN99}]\label{le:independent-child-cycles}
	Any two 3-extrovert child-cycles of a 3-extrovert cycle $C$ are independent.
\end{lemma}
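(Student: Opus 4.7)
The plan is to argue by contradiction. Suppose $C_1$ and $C_2$ are dependent 3-extrovert child-cycles of $C$. Because $G$ is cubic, any vertex shared by the two cycles must also lie on a shared edge (two incident cycle-edges per cycle would otherwise exceed the degree bound of $3$), so $C_1 \cap C_2$ contains at least one edge and in fact decomposes into one or more maximal common paths. I focus on the generic case in which $C_1 \cap C_2$ is a single maximal path $P$ with endpoints $u$ and $v$; the multi-path case follows by applying the same auxiliary-cycle construction to a suitably chosen pair of sub-arcs.

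Next I use the child-cycle hypothesis together with planarity to pin down the global picture. Let $Q_1 = C_1 \setminus P$ and $Q_2 = C_2 \setminus P$; these are $u$-$v$ paths internally disjoint from $C_2$ and $C_1$, respectively, so in the planar embedding each $Q_i$ lies entirely on one side of $C_{3-i}$. If $Q_2$ lay inside $C_1$, then $C_2 \subseteq G(C_1)$, making $C_2$ a descendant of $C_1$; since $C_1$ is itself a descendant of $C$, this would contradict the hypothesis that $C_2$ is a child-cycle of $C$. Hence $Q_2$ lies outside $C_1$, and symmetrically $Q_1$ lies outside $C_2$. In particular $Q_1$ and $Q_2$ flank $P$ on opposite sides and are internally vertex-disjoint.

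Now I introduce the auxiliary cycle $\hat C = Q_1 \cup Q_2$. By the previous step $\hat C$ is a simple cycle of $G$ whose interior contains $P$, and therefore $C_1 = P \cup Q_1 \subseteq G(\hat C)$, so $C_1$ is a descendant of $\hat C$. Moreover, $\hat C \subseteq G(C)$ and trivially shares vertices with $G(C)$, so $\hat C$ is a descendant of $C$; and since $Q_2 \neq P$ we have $\hat C \neq C_1$. Therefore $C_1$ is a descendant of another descendant of $C$, contradicting its status as a child-cycle of $C$. This yields the required contradiction and finishes the proof.

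The main obstacle I expect is the case where $C_1 \cap C_2$ is not a single path but a union of several maximal shared paths. In that situation one must pick two such paths that are cyclically consecutive around both $C_1$ and $C_2$, together with the two private arcs between them, and apply the intermediate-descendant argument to the resulting four-arc configuration; cubicity plus planarity guarantee the construction goes through, but the combinatorial bookkeeping of which arc lies on which side is more delicate than in the single-path case.
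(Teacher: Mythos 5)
There is a genuine gap in the final step, and it is exactly where the one hypothesis you never invoke --- triconnectivity of $G$ --- is needed. Your contradiction rests on the claim that the auxiliary cycle $\widehat{C}=Q_1\cup Q_2$ is ``another descendant of $C$'' whose existence violates the child-cycle status of $C_1$. But the child-cycle relation here (and in Lemma~1 of Rahman et al., which this lemma rephrases) lives inside the genealogical tree of \emph{3-extrovert} cycles: $C_1$ fails to be a child-cycle of $C$ only if there is an intermediate \emph{3-extrovert} descendant. Your $\widehat{C}$ is never shown to be 3-extrovert, and in fact it cannot be: at each endpoint $u,v$ of $P$ all three incident edges lie on $Q_1\cup Q_2\cup P$ with $P$ in the interior of $\widehat{C}$, so $\widehat{C}$ has no external leg at $u$ or $v$; the only candidate external legs of $\widehat{C}$ are the single remaining external leg of $C_1$ (if it sits on $Q_1$) and the single remaining external leg of $C_2$ (if it sits on $Q_2$), since two of the three external legs of each $C_i$ are already the edges of $Q_{3-i}$ incident to $u$ and $v$. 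Hence $\widehat{C}$ has at most two external legs, is not 3-extrovert, and witnesses nothing about the genealogical tree. (If instead you read ``another descendant'' as ranging over arbitrary cycles, essentially no cycle would ever be a child-cycle and the genealogical tree of the paper could not exist, so that reading is not available.)

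The correct conclusion to draw from your own construction is different: a cycle with at most two external legs and a nonempty exterior (the exterior is nonempty because $\widehat{C}\subseteq G(C)$ and $C$ itself has three external legs) yields a separating pair, contradicting the triconnectivity of $G$. This is precisely how the paper argues the analogous statement, \cref{le:independent-child-cycles-ref-embedding}: two dependent child-cycles must share an edge (your cubicity observation), two of the three legs of each become edges of the other, and the external boundary of $G(C_1)\cup G(C_2)$ is then a 2-extrovert cycle, impossible in a triconnected graph. So you have built the right object but extracted the wrong contradiction from it; replacing your last paragraph's appeal to the child-cycle definition with the 2-cut/triconnectivity argument (which also disposes of the multi-path case uniformly, since the outer boundary of the union still has at most two external legs) repairs the proof.
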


\cref{le:independent-child-cycles} implies that if $C$ is a 3-extrovert cycle of $G$, the inclusion relationships among all the 3-extrovert cycles in $G(C)$ (including $C$) can be described by a \emph{genealogical tree} denoted as $T_C$~\cite{DBLP:journals/jgaa/RahmanNN99}; the root of $T_C$ corresponds to $C$ and any other node of $T_C$ corresponds to a descendant cycle of $C$. The three leg vertices of $C$ split it into three edge-disjoint paths,
called the \emph{contour paths} of $C$. These contour paths are given a color in the set
$\{red, green, orange\}$ by the following recursive rule.


\medskip

\noindent\textsc{3-Extrovert Coloring Rule}: Let $T_C$ be the genealogical tree rooted at $C$. The three contour paths of $C$ are colored according to the following two cases.
	\begin{enumerate}
		\item $C$ has no contour path that contains either a flexible edge or a green contour path of a child-cycle of $C$ in $T_f$; in this case the three contour paths of $C$ are colored green.
		\item Otherwise, let $P$ be a contour path of $C$.
		\begin{itemize}
			\item[(a)] If $P$ contains a flexible edge then $P$ is colored orange.
			\item[(b)] If $P$ does not contain a flexible edge and it contains a green contour path of a child-cycle of $C$ in $T_f$, then $P$ is colored green.
			\item[(c)] Otherwise, $P$ is colored red.
		\end{itemize}
	\end{enumerate}
\smallskip

\cref{fi:rahman_colouration} shows a plane triconnected cubic graph and the genealogical trees $T_{C_{12}}$ and $T_{C_{11}}$, rooted at the 3-extrovert cycles $C_{12}$ and $C_{11}$. In this and in the following figures, an edge with $k$ \Tilde~symbols represents a flexible edge with $\flex(e)=k$.
The red-green-orange coloring of the contour paths of all 3-extrovert cycles of the graph are shown. The leaves of $T_{C_{12}}$ are $C_1$, $C_2$, and $C_3$; their contour paths are all green.  The internal node $C_4$ of $T_{C_{12}}$ has a contour path sharing an edge with a green contour path of its child-cycle $C_1$; therefore, this contour path of $C_4$ is green and the other two are red. Similarly, each of $C_5$ and $C_6$ has a green contour path and two red contour paths. The internal node $C_7$ has no contour path sharing edges with a green contour path of one of its child-cycles. The root $C_{12}$ has one contour path sharing an edge with a green contour path of its child-cycle $C_5$; hence, this contour path is green and the other two are red. The leaves of $T_{C_{11}}$ are $C_8$, $C_9$, and $C_{10}$. The contour paths of $C_8$ and $C_9$ are all green, while $C_{10}$ has a contour path with a flexible edge, hence, it has one orange contour path and two red contour paths. Finally, $C_{11}$ has a contour path with a flexible edge and two contour paths containing two green contour paths of its child-cycles $C_9$ and $C_8$, hence, it has one orange and two green contour paths.

\begin{figure}[tb]
	\centering
	\subfloat[]{\label{fi:rahman_colouration-a}\includegraphics[width=0.5\columnwidth]{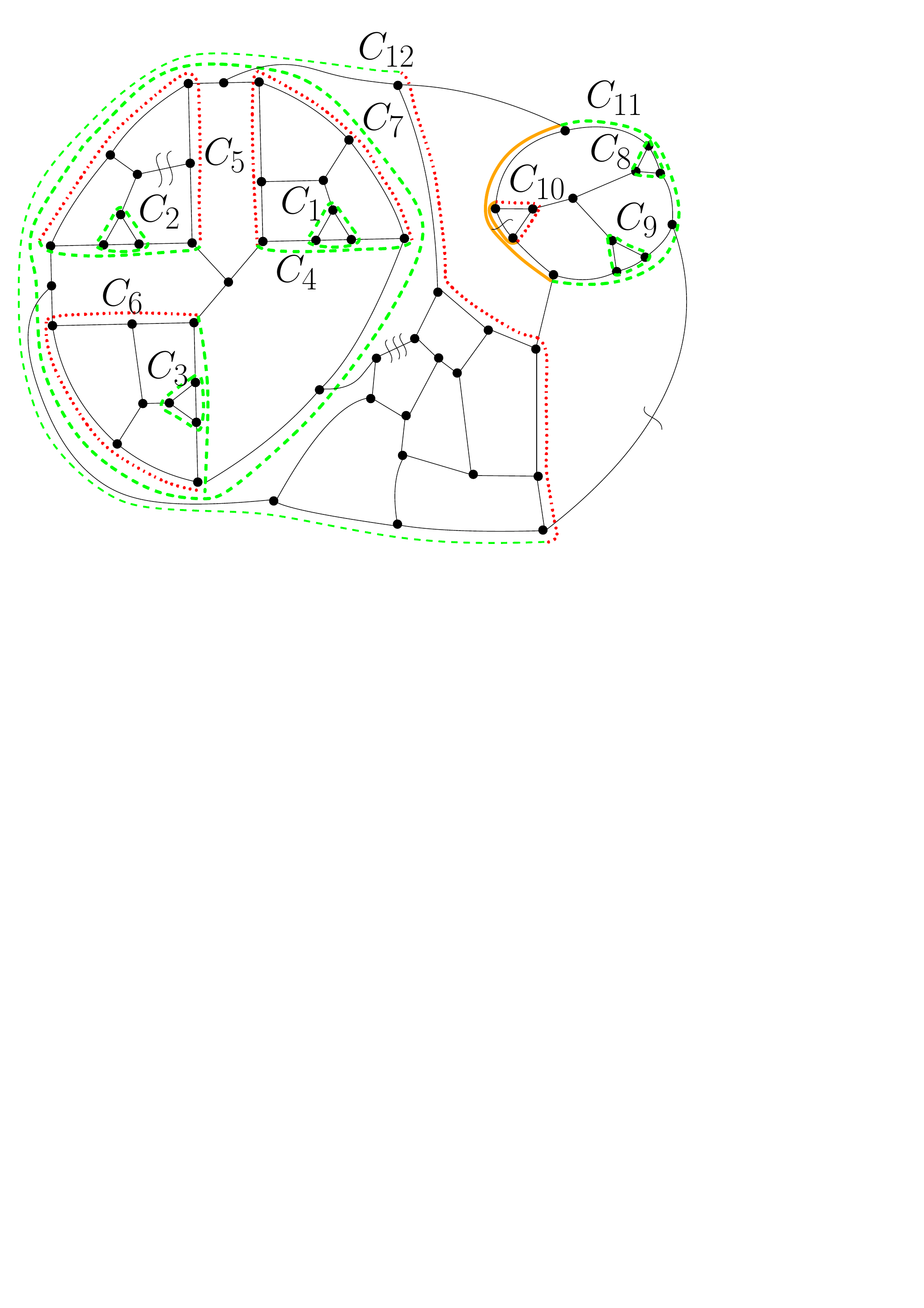}}
	\hfil
	\subfloat[]{\label{fi:rahman_colouration-b}\includegraphics[width=0.5\columnwidth]{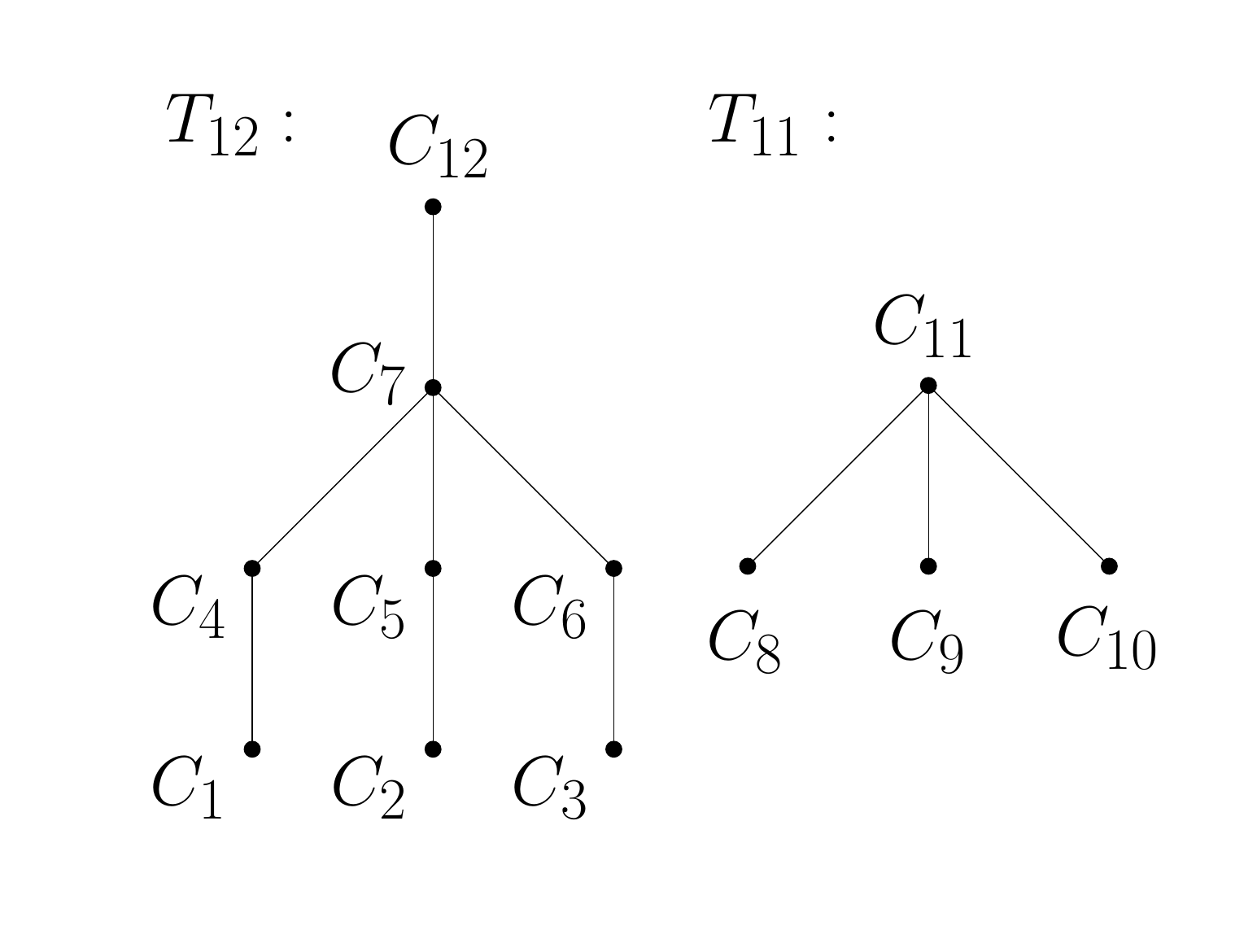}}
	\hfil
	\caption{(a) A plane triconnected cubic graph and the red-green-orange coloring of the contour paths of its 3-extrovert cycles. The red color is represented by a dotted line, the green color by a dashed line, and the orange color by a continuous line. (b) The genealogical trees $T_{C_{12}}$ and $T_{C_{11}}$.}\label{fi:rahman_colouration}
\end{figure}

A \emph{demanding 3-extrovert cycle} is a 3-extrovert cycle of $G$ that does not have an orange contour path and that does not share edges with a green contour path of its child-cycles. For example, in \cref{fi:rahman_colouration-a}, the leaves $C_1$, $C_2$, $C_3$, $C_8$ and $C_9$ of $T_{C_{12}}$ and $T_{C_{11}}$ correspond to demanding 3-extrovert cycles,
since they do not have child-cycles and they do not contain flexible edges. Also $C_7$ is a demanding cycle. All other 3-extrovert cycles are not demanding.

In \cref{sse:reference-embedding} we show how to color in linear time the contour paths of all 3-extrovert cycles of $G$ (\cref{le:3-extrovert-coloring-linear}) and how to compute in linear time the demanding 3-extrovert cycles of $G$ (\cref{le:demanding-3-extrovert-lineartime}).

\smallskip
We now give a theorem that extends a result of~\cite{DBLP:journals/jgaa/RahmanNN99}; it shows how to construct a bend-minimum orthogonal representation of $G$ with at most one bend per demanding cycle. To this aim, we define the \emph{flexibility} of $f$, denoted by $\flex(f)$. Intuitively, every orthogonal representation of $G$ needs four bends on the external face (because $G$ has no degree-2 vertices); $\flex(f)$ is a measure of how much the external face $f$ can take advantage of flexible edges to accommodate these bends without increasing the cost of the orthogonal representation.

Let $m_f$ be the number of flexible edges on $f$. Also, for an edge $e$ of $f$ with $\flex(e) \geq 1$, let $f'$ be the face incident to $e$ other than $f$; we denote by $x_e$ the sum of the flexibilities of the edges of $f'$ distinct from $e$ plus the number of demanding 3-extrovert cycles incident to $f'$ and not incident to $f$ (i.e., such that $f'$ is a leg face and $f$ is not a leg face).

\begin{definition}\label{de:flex-f}
The flexibility of $f$ is defined based on the following cases:

\begin{description}
	\item[1.] $m_f = 0$. In this case $\flex(f)=0$
	
	\item[2.] $m_f = 1$. Let $e_0$ be the unique flexible edge of $f$
	\begin{description}
		\item[2.a.] If $\flex(e_0) \leq 2$ then  $\flex(f)=\flex(e_0)$
		\item[2.b.] Else
		\begin{description}
			\item[2.b.i.] If $\flex(e_0) - x_{e_0} \leq 2$ then $\flex(f) = \flex(e_0)$
			\item[2.b.ii.] Else $\flex(f) = x_{e_0}+2$
		\end{description}
	\end{description}
	\item[3.] $m_f = 2$. Let $e_0$ and $e_1$ be the two flexible edges of $f$. We distinguish the following subcases:
		\begin{description}
			\item[3.a.] Edges $e_0$ and $e_1$ are not adjacent
			\begin{description}
				\item[3.a.i.] If $\flex(e_0) \geq 3$ and $\flex(e_1) = 1$
				\begin{description}
					\item[3.a.i.$\alpha$.] If $x_{e_0} \geq 1$ then $\flex(f)=4$
					\item[3.a.i.$\beta$.] Else $\flex(f)=3$
				\end{description}
				\item[3.a.ii.] Else, $\flex(f) = \min\{4,\flex(e_0)+\flex(e_1)\}$
			\end{description}
			\item[3.b.] Edges $e_0$ and $e_1$ share a vertex $v$. Let $e_2$ be the edge incident to $v$ different from $e_0$ and $e_1$
			\begin{description}
				\item[3.b.i.] If $\flex(e_0) + \flex(e_1) \leq 3$, then $\flex(f) = \flex(e_0)+\flex(e_1)$
				\item[3.b.ii.] If $\flex(e_0) \geq 2$ and $\flex(e_1) \geq 2$
				\begin{description}
					\item[3.b.ii.$\alpha$.] If $x_{e_0}+x_{e_1}-2\cdot\flex(e_2) \geq 1$, then $\flex(f) = 4$
					\item[3.b.ii.$\beta$.] Else, $\flex(f)=3$
				\end{description}
				\item[3.b.iii.] Else, assume $\flex(e_0) \geq 3$ and $\flex(e_1) = 1$
				\begin{description}
				\item[3.b.iii.$\alpha$.] If $x_{e_0}-\flex(e_2) \geq 1$, then $\flex(f)=4$
				\item[3.b.iii.$\beta$.] If we have that $x_{e_0}-\flex(e_2) = 0$, $x_{e_1} - \flex(e_2) \geq 1$, and $\flex(e_2) \geq 1$, then $\flex(f)=4$
				\item[3.b.iii.$\gamma$.] Else, $\flex(f)=3$
				\end{description}
			\end{description}
		\end{description}
	
	\item[4.] $m_f \geq 3$. $\flex(f) = \min\{4,\sum_{e \in C_f}\flex(e)\}$.
\end{description}

\end{definition}

Note that, by definition, $0 \leq \flex(f) \leq 4$. Denote by $D(G)$ the set of demanding 3-extrovert cycles of $G$ that do not share edges with other demanding 3-extrovert cycles of $G$ and denote by $D_{f}(G) \subseteq D(G)$
those cycles of $D(G)$ that contain an edge of the external face $f$. We prove the following.

\begin{restatable}{theorem}{thFixedEmbeddingMinBend}\label{th:fixed-embedding-min-bend}
	Let $G=(V,E)$ be an $n$-vertex plane triconnected cubic graph with given edge flexibilities and let $f$ be the external face of $G$. We have that:
	\begin{equation}\label{eq:fixed-embedding-cost}
       \;\;\;\;\; c(G) = |D(G)| + 4 - \min\{4, |D_{f}(G)| + \flex(f) \}
	\end{equation}
	Also, if the external face is not a 3-cycle with all edges of flexibility at most one, there exists a cost-minimum embedding-preserving orthogonal representation $H$ of $G$ such that each inflexible edge has at most one bend and each flexible edge $e$ has at most $flex(e)$ bends.
    Furthermore, if one is given $D(G)$ and $D_{f}(G)$ such representation~$H$ can be computed in $O(n)$ time.
	
	
\end{restatable}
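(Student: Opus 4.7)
The plan is to extend to flexible edges the combinatorial characterization of Rahman, Egi, and Nishizeki~\cite{DBLP:journals/jgaa/RahmanNN99} that governs the inflexible case. I would split the argument into a lower bound for $c(G)$ and a matching constructive upper bound, showing that the construction runs in linear time once $D(G)$ and $D_f(G)$ are available.

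For the lower bound, I would fix any embedding-preserving orthogonal representation $H$ of $G$ and argue that $c(H)$ is at least the right-hand side of~\eqref{eq:fixed-embedding-cost}. The starting point is that every simple cycle $C$ of $G$ must have four more convex than reflex corners on the side opposite to its interior, and that the three leg vertices of a 3-extrovert cycle already supply three convex corners on its outside. Hence at least one extra corner is needed, and it can come only from (i)~a bend on an edge of $C$ that exceeds the edge's flexibility, (ii)~a bend absorbed for free by a flexible edge of $C$, or (iii)~a bend shared with a green contour path of a child-cycle. By definition a \emph{demanding} 3-extrovert cycle enjoys neither~(ii) nor~(iii), so it must contain at least one bend; since the cycles in $D(G)$ are pairwise edge-disjoint by construction, their bends are distinct. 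The $|D(G)|-|D_f(G)|$ cycles of $D(G)$ not touching $f$ therefore contribute at least this many to $c(H)$. The outside of $C_f$ in turn requires $4$ convex corners, at most $|D_f(G)|$ of which can be supplied by leg vertices of cycles in $D_f(G)$; the remaining corners must be furnished either by bends on flexible edges of $f$ or by paid bends. The case analysis in Definition~\ref{de:flex-f} precisely captures the maximum number of corners that the flexibility of $f$ can contribute without forcing extra cost on the adjacent faces: the quantity $x_e$ encodes the demand the other face incident to a flexible edge $e$ imposes on $e$, which is why in Case~3 the two adjacent flexible edges can share bends through their common vertex only when the third incident edge is flexible enough to absorb the interaction.

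For the upper bound, I would reduce the problem to the no-bend algorithm \textsf{NoBend-Alg} obtained from Theorem~\ref{th:RN03}. For each demanding cycle $C\in D(G)$ I subdivide a well-chosen contour edge of $C$ by one new degree-2 vertex, preferring a contour edge incident to $f$ whenever $C\in D_f(G)$ so that the induced corner lies on the outside of $C_f$. Then, following the case of Definition~\ref{de:flex-f} that applies to $f$, I plant further degree-2 vertices on the flexible edges of $f$ up to the value $\flex(f)$, distributed so as not to violate the flexibilities of the faces opposite to $f$ along those edges. If $|D_f(G)|+\flex(f)<4$, the remaining shortfall is paid by subdividing inflexible edges of $f$ with bend-vertices; the $4-\min\{4,|D_f(G)|+\flex(f)\}$ additional bends in~\eqref{eq:fixed-embedding-cost} correspond exactly to this shortfall. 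The resulting plane graph $\widehat{G}$ can be shown to satisfy the three conditions of Theorem~\ref{th:RN03} by checking that each $2$-extrovert and $3$-extrovert cycle of $\widehat{G}$ now carries enough degree-2 vertices; \textsf{NoBend-Alg} then produces a no-bend representation $\widehat{H}$, and smoothing the inserted vertices back into bends yields an orthogonal representation $H$ of $G$ of the required cost, with at most one bend per inflexible edge and at most $\flex(e)$ bends per flexible edge $e$, unless the external face is a $3$-cycle of flexibility at most one, in which case more than one bend per edge may be unavoidable.

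The main obstacle is the verification of Definition~\ref{de:flex-f} in Case~3.b, where two flexible edges $e_0,e_1$ of $f$ share a vertex $v$ and their ability to jointly provide corners to $f$ interacts with the flexibility of the third edge $e_2$ at $v$. In the subcases~3.b.ii and~3.b.iii one must decide whether the extra corner can be ``borrowed'' across $v$ without violating the flexibility of the opposite faces, which requires a subtle comparison between $x_{e_0},x_{e_1}$ and $\flex(e_2)$; showing that both the lower bound and the construction remain tight in each subcase is where the bulk of the technical work lies. The linear running time then follows from the facts that the red-green-orange coloring and $D(G)$ are produced in $O(n)$ time by the coloring lemmas developed earlier in this section, that $\flex(f)$ is evaluated in $O(|C_f|)$ time by direct inspection, and that \textsf{NoBend-Alg} itself runs in $O(n)$ time; hence the entire construction is $O(n)$.
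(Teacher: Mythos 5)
Your overall architecture matches the paper's proof: a lower bound from the necessity of degree-2 vertices to satisfy Conditions~$(i)$ and~$(iii)$ of Theorem~\ref{th:RN03}, plus a matching construction that subdivides edges and invokes \textsf{NoBend-Alg}, with $D(G)$, $D_f(G)$ and $\flex(f)$ playing exactly the roles they play in the paper. However, you have explicitly deferred the part that constitutes essentially all of the technical content of the theorem: the verification, case by case against Definition~\ref{de:flex-f}, that $\flex(f)$ is simultaneously achievable and unimprovable. This is not a routine check. The reason $\flex(f)$ is \emph{not} simply $\min\{4,\sum_{e\in C_f}\flex(e)\}$ in Cases~2 and~3 is that placing $k\geq 1$ bends on an external edge $e=(u,v)$ creates, in the subdivided graph, one new 2-extrovert cycle $C_1=C_o(\rect{G}\setminus W)$ and two new 3-extrovert cycles $C_2, C_3$, and these must themselves satisfy Conditions~$(ii)$--$(iii)$; whether they can be satisfied for free is exactly what the quantities $x_{e_0}$, $x_{e_1}$ and $\flex(e_2)$ measure. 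Your proposal names this obstacle but does not resolve it in either direction, so as written neither the inequality $c(G)\geq |D(G)|+4-\min\{4,|D_f(G)|+\flex(f)\}$ nor the reverse is established. A complete proof must exhibit, for every subcase of Definition~\ref{de:flex-f}, a bend placement whose induced cycles $C_1,C_2,C_3$ (and the single unsatisfied 3-extrovert cycle $C^*$ arising when the two flexible edges are adjacent) are all good, and must also rule out any cheaper placement.

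Two smaller points. First, your lower-bound accounting attributes the $|D_f(G)|$ savings to ``leg vertices of cycles in $D_f(G)$'' supplying convex corners to the outside of $C_f$; in a cubic graph the external face has no degree-2 vertices, so Condition~$(i)$ can only be met by bends, and the saving actually comes from placing the bend that a cycle $C\in D_f(G)$ needs for Condition~$(iii)$ on the edge that $C$ shares with $f$, so that one bend serves both conditions. Second, your argument silently restricts attention to the edge-disjoint demanding cycles in $D(G)$; you should also say why the demanding 3-extrovert cycles \emph{not} in $D(G)$ (those sharing edges with other demanding cycles, which by \cref{le:independent-child-cycles} must all meet the external face) are absorbed by the four external-face bends rather than adding to the count.
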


\begin{proof}
	
	By Theorem~\ref{th:RN03} (where bends act like vertices of degree two), any orthogonal representation of $G$ needs four bends on the external face (Condition~$(i)$ of Theorem~\ref{th:RN03}) and a bend along the boundary of every 3-extrovert cycle (Condition~$(iii)$ of Theorem~\ref{th:RN03}). By the same theorem, this set of bends is also sufficient for an orthogonal representation of $G$.
	
	Let $H$ be a cost-minimum orthogonal representation of $G$. Let $\rect{H}$ be the rectilinear image of $H$ and  $\rect{G}$ its underlying graph.
	We can assume $\rect{G}$ to be such that smoothing any degree-2 vertex (which corresponds to a bend in $H$ on either a flexible or an inflexible edge) yields a graph $\rect{G'}$ that does not admit a rectilinear representation. In fact, if this is not the case, $\rect{G'}$ still satisfies the conditions of Theorem~\ref{th:RN03} and, therefore, there exists an orthogonal representation $H'$ with one bend less than $H$ and such that $c(H') \leq c(H)$. By iterating this simplification, we obtain a graph $\rect{G}$ where any vertex of degree two is necessary to satisfy the conditions of Theorem~\ref{th:RN03}.
	Therefore, it is sufficient to describe how to construct an orthogonal representation $H$ of $G$ such that any bend of $H$ is needed to satisfy Conditions~$(i)$ and~$(iii)$ of Theorem~\ref{th:RN03}. Actually, thanks to Theorem~\ref{th:RN03}, it is sufficient to specify the edges of $G$ that host some bends in $H$.
	
	First, let's focus on the set of the 3-extrovert cycles of $G$. This set can be partitioned into the subsets of demanding and non-demanding 3-extrovert cycles. By the \textsc{3-Extrovert Coloring Rule} a non-demanding 3-extrovert cycle $C$ either contains a flexible edge or it contains an edge $e$ shared with a demanding 3-extrovert cycle. Hence, Condition~$(iii)$ of Theorem~\ref{th:RN03} can be satisfied for $C$ without additional cost by inserting a single bend on one of its flexible edges, in the former case, or on edge $e$, in the latter case.
	

	By \cref{le:independent-child-cycles} if two demanding 3-extrovert cycles share some edges, they must include edges of the external face.
	Also, by using arguments in~\cite{DBLP:journals/jgaa/RahmanNN99} (variants of these arguments can be found also in the proof of \cref{le:fagiolo-nero-extrovert}) it is possible to prove that any edge of the external face belongs to $k-1$ demanding 3-extrovert cycles that are not independent, where $k$ is the number of non-independent demanding 3-extrovert cycles.
	It follows that the four bends of the external face used to satisfy Condition~$(i)$ can be used to also satisfy Condition~$(iii)$ for all such intersecting demanding 3-extrovert cycles. Namely, insert a bend on any edge on the external face; this bend satisfies Condition~$(iii)$ for $k-1$ of the demanding 3-extrovert cycles that share edges; then choose an edge of the remaining unsatisfied demanding 3-extrovert cycle and insert a bend along this edge; the remaining two bends can be inserted along any two arbitrary edges of the external face.
	It may be worth remarking that if there are two demanding 3-extrovert cycles that are not independent, then every demanding 3-extrovert cycle having edges on the external face is not independent; also, in this case the external face cannot have any flexible edges.

	
	From the discussion above it follows that in order to satisfy Condition~$(iii)$ of Theorem~\ref{th:RN03}: We can disregard the 3-extrovert cycles that are not independent; $H$ needs one bend for each independent demanding 3-extrovert cycle; and no edge needs more than one bend. Thus, Condition~$(iii)$ can be satisfied in such a way that the number of bends on each flexible edge does not exceed its flexibility.
	The cost of this set of bends corresponds to the term $|D(G)|$ in \cref{eq:fixed-embedding-cost}.
	
	The second part of \cref{eq:fixed-embedding-cost} derives from the need of satisfying Condition~$(i)$ of Theorem~\ref{th:RN03}. If~$f$ contains neither flexible edges nor edges of demanding 3-extrovert cycles, then $H$ must have exactly four costly bends on (inflexible) edges of~$f$. In particular, if~$f$ has at least four edges we insert at most one bend per edge; if $f$ is a 3-cycle we insert two bends on exactly one edge and the remaining two bends are on distinct edges.
	If~$f$ contains edges of independent demanding 3-extrovert cycles or flexible edges, we can save up to four units of cost for satisfying Condition~$(i)$ of Theorem~\ref{th:RN03} by exploiting $D_f(G)$ and $\flex(f)$ as described below.
	
	Since the set $D_f(G)$ contains independent demanding 3-extrovert cycles that share an edge with~$f$, a bend inserted on such edges to satisfy Condition~$(iii)$ of Theorem~\ref{th:RN03} also contributes to satisfy Condition~$(i)$, thus saving $\min\{4,|D_f(G)|\}$ units of cost in total. Hence, we can assume that a bend placed on a cycle $C \in D_f(G)$ is always located along~$f$. For the remaining $4 - \min\{4,|D_f(G)|\}$ bends (if any) we might take advantage of the flexible edges.
	To this aim, we have to consider different cases that reflect those used in the definition of $\flex(f)$. Before analyzing these cases, we make the following observation. Adding one or more bends along an edge $e=(u,v)$ of the external face (i.e., subdividing $e$) always creates one 2-extrovert cycle $C_1$ and two 3-extrovert cycles $C_2$ and $C_3$ in $\rect{G}$, in addition to those of $G$. Namely, denoted by $W$ the set of vertices of $\rect{G}$ that correspond to the bends along $e$, we have $C_1= C_o(\rect{G} \setminus W)$, $C_2 = C_o(\rect{G} \setminus W \setminus \{u\})$, and $C_3 = C_o(\rect{G} \setminus W \setminus \{v\})$. See \cref{flexible_C123}, where we indicate the placement of a bend into an edge with a cross (we use this notation in all the figures of this proof). We may need to add extra degree-2 vertices to $\rect{G}$ (i.e., bends on the edges of $G$) along these cycles to preserve Conditions~$(ii)-(iii)$ of Theorem~\ref{th:RN03}. 	Note that $C_2$ (resp. $C_3$) contains all edges of the external face $f$ of $G$ except the two (adjacent) edges of $f$ incident to $u$ (resp. $v$) and that  $C_1$ contains all edges of $C_o(G)$ except $e=(u,v)$. Consider the four bends inserted along $C_o(G)$ to satisfy Conditions~$(i)$ of Theorem~\ref{th:RN03}. By the discussion above, if these four bends are distributed on at least two non-adjacent edges and no edge receives more than two bends, then Conditions~$(ii)-(iii)$ of Theorem~\ref{th:RN03} are always satisfied for $C_1$, $C_2$, and $C_3$. Therefore, in the case analysis that follows we will consider the cycles $C_1$, $C_2$, and $C_3$ only when the four bends along $C_o(G)$ are distributed along at most two edges and these edges are adjacent or when an edge is given three (or more) bends. 
	We adopt the same notation $m_f$ and $x_e$ introduced for defining $\flex(f)$.

	\begin{figure}[tbh]
		\centering
		{\label{fi:flexible_C1}\includegraphics[width=0.32\columnwidth]{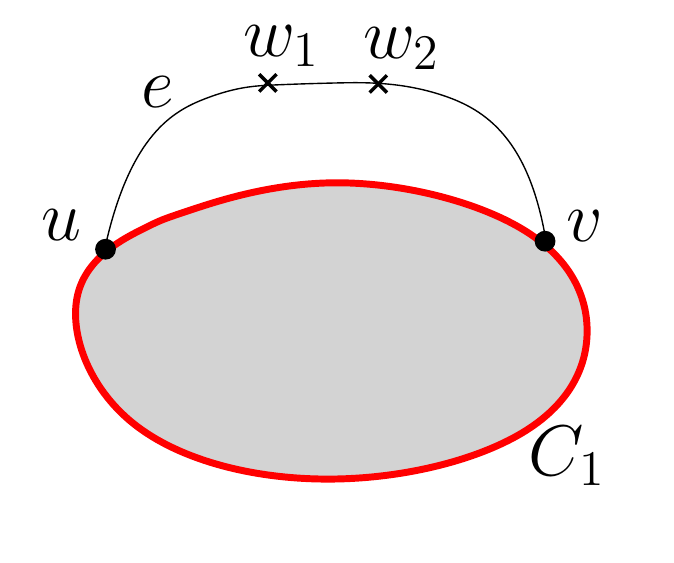}}
		{\label{fi:flexible_C2}\includegraphics[width=0.32\columnwidth]{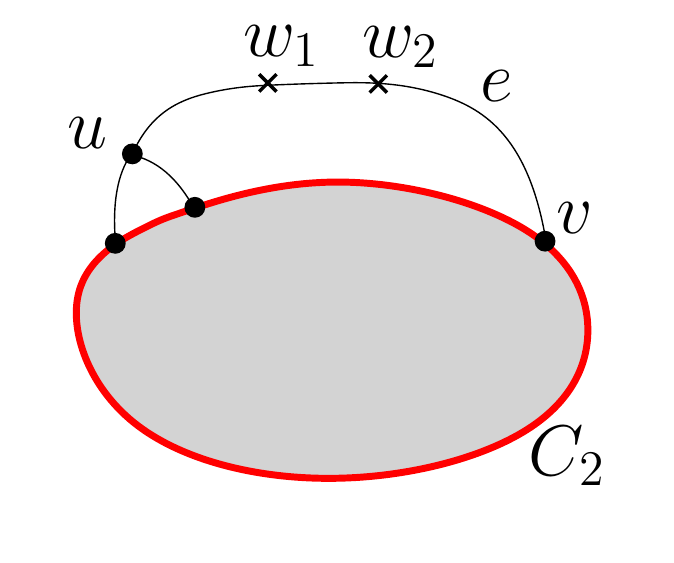}}
		{\label{fi:flexible_C3}\includegraphics[width=0.32\columnwidth]{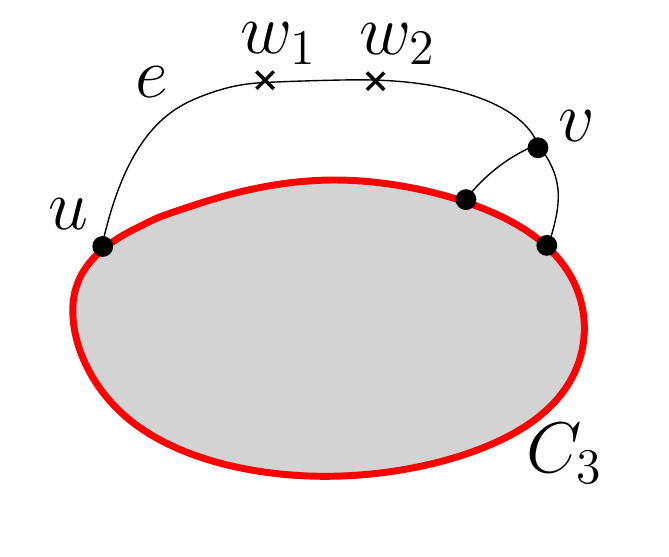}}
		\caption{
		Placing one or more bends along an edge always creates a new 2-extrovert cycle $C_1$ and two new 3-extrovert cycles $C_2$ and $C_3$.	
		}\label{flexible_C123}
	\end{figure}

%

	\smallskip\paragraph{Case $m_f = 0$.} This case is ruled out since the four bends along $C_o(G)$ are distributed along more than two edges and no edge is given three bends.
	
	\smallskip\paragraph{Case $m_f = 1$.} Refer to Case~2 of \cref{de:flex-f}. Let $e_0$ be the unique flexible edge of $f$.
	\begin{itemize}
			
			\item First, suppose that $\flex(e_0) \leq 2$ (see also \cref{flexible_mf=1_abc}). If the external face is not a 3-cycle, we instert $\flex(e_0)$ bends along $e_0$ in $H$ while the remaining bends on $C_o(G)$ are distributed on the (inflexible) edges of $C_o(G) \setminus e_0$ such that every edge receives at most one bend. If the external face is a 3-cycle, two bend are inserted along the flexible edge (possibly adding an extra cost along $e_0$) so to guarantee that the two inflexible edges of $C_o(G)$ are bent once. Since $\flex(f)=\flex(e_0)$ (Case~2.a. of \cref{de:flex-f}), \cref{eq:fixed-embedding-cost} follows.

			\begin{figure}[htb]
				\centering
				{\label{fi:flexible_mf=1_a}\includegraphics[width=0.32\columnwidth]{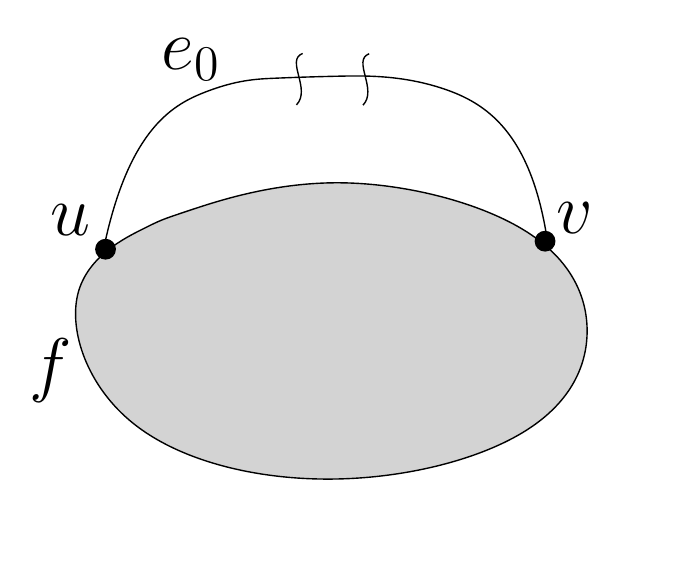}}
				\hfil
				{\label{fi:flexible_mf=1_b}\includegraphics[width=0.32\columnwidth]{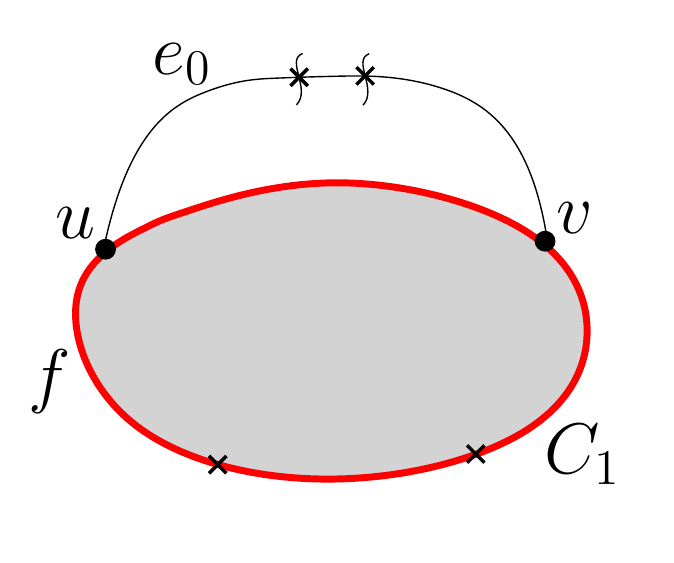}}
				\hfil
				{\label{fi:flexible_mf=1_c}\includegraphics[width=0.32\columnwidth]{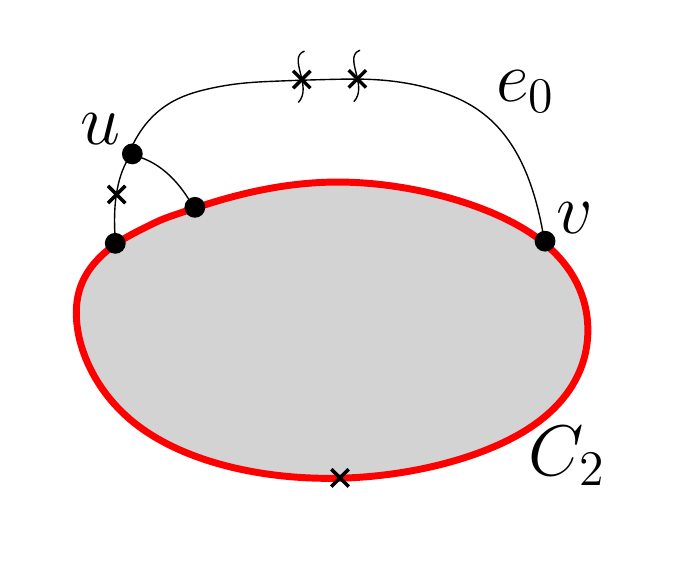}}
				\hfil
				\caption{Illustration for the proof of Theorem~\ref{th:fixed-embedding-min-bend}, when $m_f=1$ and $flex(e_0)\le 2$.
				}\label{flexible_mf=1_abc}
			\end{figure}
			
			\item Otherwise, $\flex(e_0) \geq 3$. Assume first that $\flex(e_0) = 3$. Refer to \cref{fi:flexible_mf=1_def}. In this case we could place three bends along $e_0=(u,v)$ without extra cost only if Conditions~$(ii)$--$(iii)$ of Theorem~\ref{th:RN03} are satisfied for the three cycles $C_1$, $C_2$, and $C_3$ created by the insertion of the bends along $e_0$.
Regarding Condition~$(ii)$ for $C_1$, consider the internal face $f'$ incident to $e_0$. Cycle $C_1$ consists of the path $C_o(G) \setminus e$ and the path $\Pi$ from $u$ to $v$ on $f'$ not passing through $e_0$. There are two subcases:
			
			\begin{figure}[htb]
				\centering
				\subfloat[]{\label{fi:flexible_mf=1_d}\includegraphics[width=0.32\columnwidth]{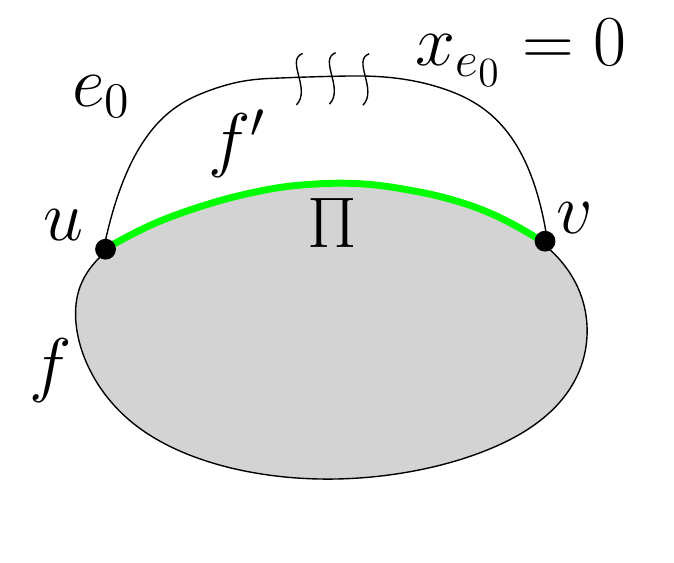}}
				\hfil
				\subfloat[]{\label{fi:flexible_mf=1_e}\includegraphics[width=0.32\columnwidth]{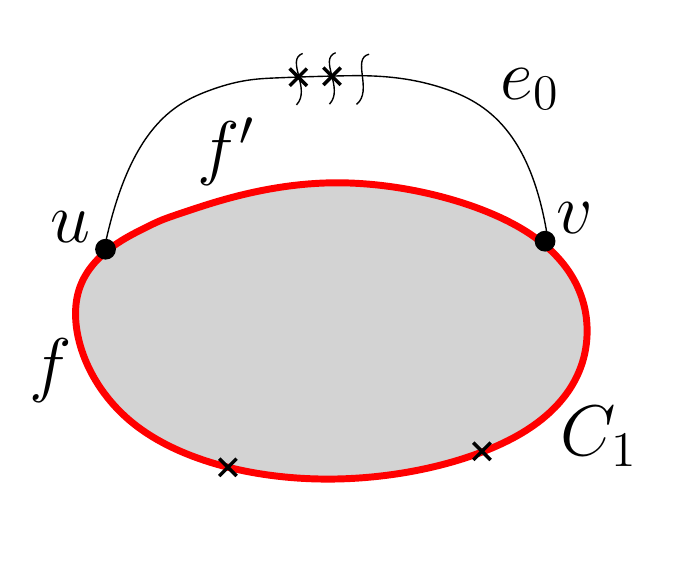}}
				\hfil
				\subfloat[]{\label{fi:flexible_mf=1_f}\includegraphics[width=0.32\columnwidth]{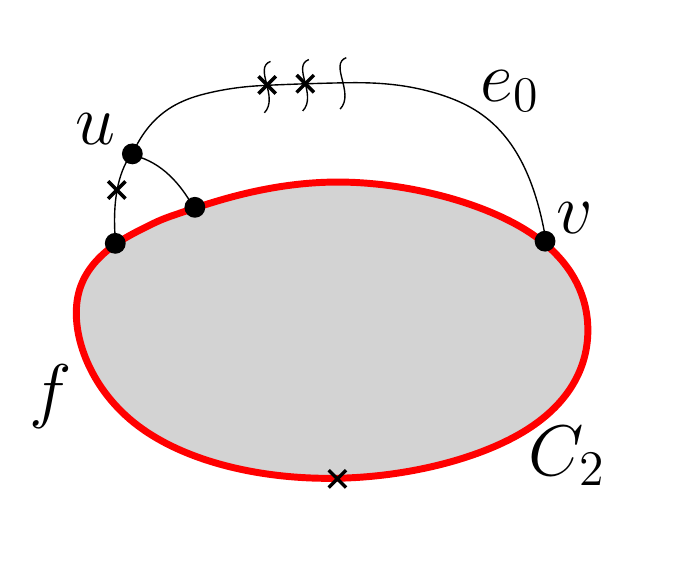}}
				\hfil
				\caption{Illustration for the proof of Theorem~\ref{th:fixed-embedding-min-bend}, when $m_f=1$, $flex(e_0)\ge 3$, and $x_{e_0}=0$.
				}\label{fi:flexible_mf=1_def}
			\end{figure}
		
			If $x_{e_0} \geq 1$ then, by definition, $\Pi$ contains an edge of a demanding cycle or a flexible edge. In the former case, $\Pi$ contains the contour path $P$ of a demanding 3-extrovert cycle $C$ (see \cref{fi:flexible_mf=1_g}). Since~$C$ must have a bend in $H$, we place this bend along~$P$. Indeed, every 3-extrovert cycle $C'$ that is an ancestor of $C$ and that shares edges with $C$ either contains $P$ or it contains $e_0$ (see \cref{flexible_mf=1__PeitherC}). In both cases Condition~$(ii)$ of Theorem~\ref{th:RN03} is satisfied for $C'$ if we place one bend along $P$ and three bends along $e_0$.
			Also, the bend on $P$ and a bend inserted in an edge of $C_o(G) \setminus e_0$ guarantee Condition~$(ii)$ for $C_1$ (see \cref{fi:flexible_mf=1_h}) and the bend on $P$ satisfies Condition~$(iii)$ for $C_2$ and $C_3$ (see \cref{fi:flexible_mf=1_i}).
			If $\Pi$ does not contain an edge of a demanding cycle but it contains a flexible edge, we can place a (free) bend along $\Pi$ and one bend along an edge of $C_o(G) \setminus e_0$ to satisfy Conditions~$(ii)$--$(iii)$ for $C_1$, $C_2$, and $C_3$. Therefore, since $x_{e_0} \geq 1$ implies $\flex(e_0) - x_{e_0} \leq 2$ and since $\flex(f) = 3$ (Case~2.b.i of \cref{de:flex-f}), \cref{eq:fixed-embedding-cost} holds in this case.
			
			\begin{figure}[htb]
				\centering
				\subfloat[]{\label{fi:flexible_mf=1_g}\includegraphics[width=0.32\columnwidth]{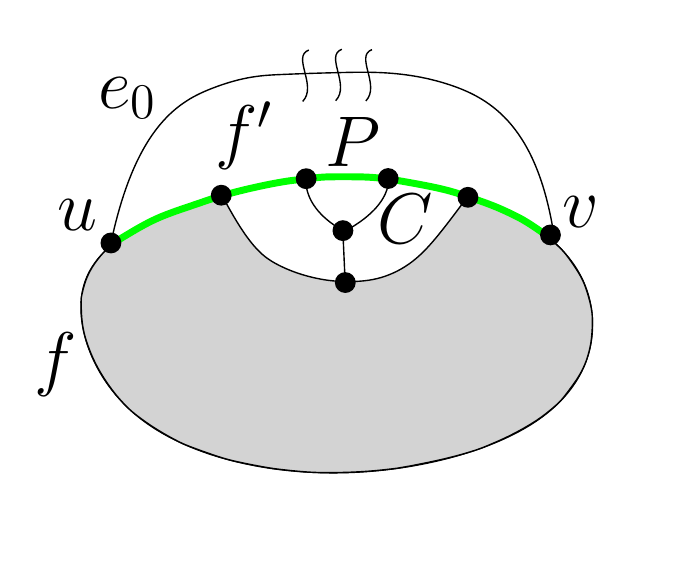}}
				\hfil
				\subfloat[]{\label{fi:flexible_mf=1_h}\includegraphics[width=0.32\columnwidth]{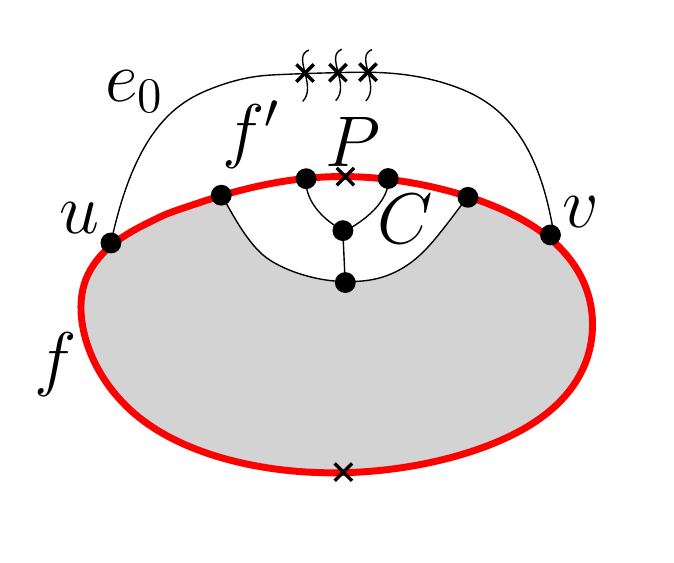}}
				\hfil
				\subfloat[]{\label{fi:flexible_mf=1_i}\includegraphics[width=0.32\columnwidth]{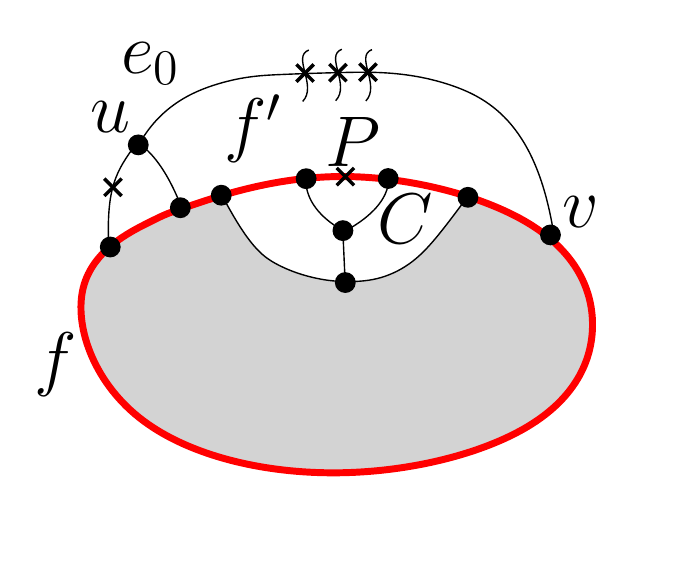}}
				\hfil
				\caption{Illustration for the proof of Theorem~\ref{th:fixed-embedding-min-bend} when $m_f=1$, $flex(e_0)\ge 3$, and $x_{e_0}=1$.
			}\label{fi:flexible_mf=1_ghi}
			\end{figure}
			
			\begin{figure}[htb]
				\centering
				\subfloat[]{\label{fi:flexible_mf=1_PeitherC_a}\includegraphics[width=0.33\columnwidth]{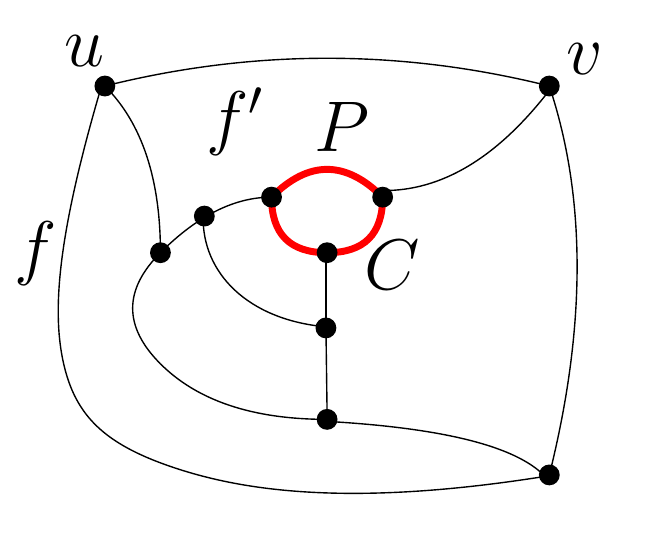}}
				\hfil
				\subfloat[]{\label{fi:flexible_mf=1_PeitherC_b}\includegraphics[width=0.33\columnwidth]{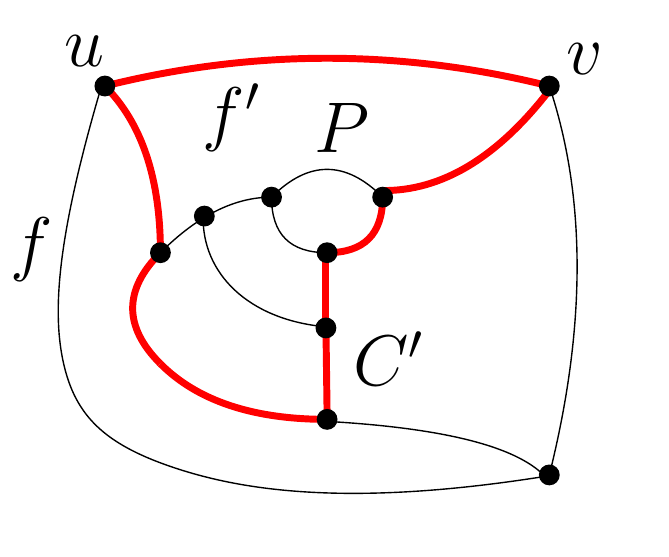}}
				\hfil
				\subfloat[]{\label{fi:flexible_mf=1_PeitherC_c}\includegraphics[width=0.33\columnwidth]{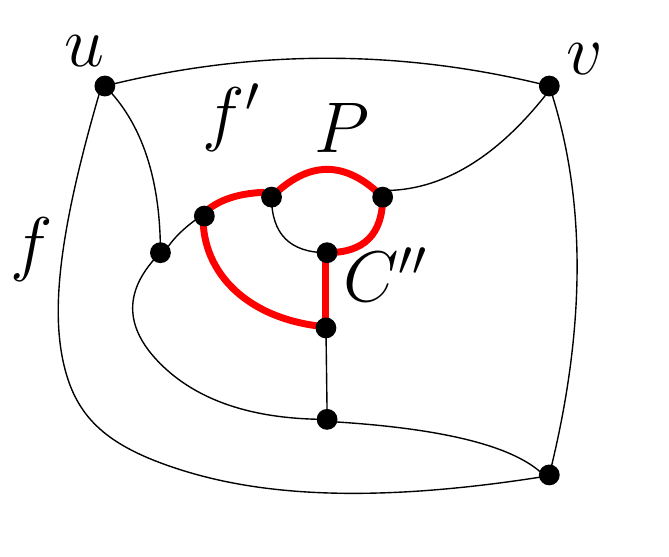}}
				\hfil
				\caption{(a)~A 3-extrovert demanding cycle $C$ sharing a contour path $P$ with face $f'$. (b)~An ancestor $C'$ of $C$ that does not contain $P$ and that contains $e=(u,v)$. (c)~An ancestor $C''$ of $C$ that does not contain $e$ and that contains $e=(u,v)$.
			}\label{flexible_mf=1__PeitherC}
			\end{figure}

			If $x_{e_0}=0$ (see \cref{fi:flexible_mf=1_d}), we place two bends on~$e_0$ and the remaining two bends appear on the (inflexible) edges of $C_o(G) \setminus e_0$, which satisfies Condition~$(ii)$ for $C_1$
			(see \cref{fi:flexible_mf=1_e}).
			Since the two bends on the edges of $C_o(G) \setminus e_0$ are placed on two distinct edges, they satisfy also Condition~$(iii)$ for $C_2$ and $C_3$ (see \cref{fi:flexible_mf=1_f}).
			By definition, when $x_{e_0}=0$, $\flex(f)=2$ (Case~2.b.ii of \cref{de:flex-f}) and \cref{eq:fixed-embedding-cost} holds.

			The case of $\flex(e) = 4$ is handled with the same argument, distinguishing between the subcases $x_e = 0$, $x_e = 1$, or $x_e \geq 2$. Namely, we place 2, 3, or 4 bends along $e$ depending on whether $x_e = 0$, $x_e = 1$, or $x_e \geq 2$, respectively.
		\end{itemize}

	\smallskip\paragraph{Case $m_f = 2$.}  Refer to Case~3 of \cref{de:flex-f}. Let $e_0$ and $e_1$ be the two flexible edges of $f$. We distinguish the following subcases:
		\begin{itemize}
			\item{$e_0$ and $e_1$ not ajacent}. Refer to Case~3.a of \cref{de:flex-f}. Suppose, without loss of generality, that $\flex(e_0) \geq \flex(e_1)$. We have two cases.
			\begin{itemize}
				
				\item[$(i)$] $\flex(e_0) \geq 3$ and $\flex(e_1) = 1$.
				Let $f' \neq f$ be the other face incident to $e_0=(u,v)$ and let $\Pi$ be the path from $u$ to $v$ incident to $f'$ and not containing $e_0$.
				Condition~$(iii)$ for the four 3-extrovert cycles created when inserting bends in both $e_0$ and $e_1$ is always satisfied by the bends themselves (see \cref{fi:flexible_mf=2_notadjacent_a}).
				Condition~$(ii)$ for the 2-extrovert cycle created when inserting a bend in $e_1$ is satisfied if we insert at least two bends in $e_0$ (see \cref{fi:flexible_mf=2_notadjacent_b}).
				Regarding Condition~$(ii)$ for the 2-extrovert cycle created by inserting bends in $e_0$, we have two subcases depending on the value of~$x_{e_0}$.
				If $x_{e_0} \geq 1$ (see \cref{fi:flexible_mf=2_notadjacent_c}, where $\Pi$ contains a flexible edge), then such a condition is satisfied if we insert a bend in $e_1$ and a bend in $\Pi$ 
				(see \cref{fi:flexible_mf=2_notadjacent_d}). Otherwise, if $x_{e_0} = 0$, in order to satisfy Condition~$(ii)$ for the 2-extrovert cycle created by inserting bends in $e_0$, we have to insert a costly bend in $C_o(G) \setminus e_0$ (see \cref{fi:flexible_mf=2_notadjacent_e}).
				Hence, if $x_{e_0} \geq 1$ we insert three bends along $e_0$, one bend along $e_1$, and one bend along $\Pi$. Since in this case $\flex(f)=4$ (Case~3.a.i.$\alpha$ of \cref{de:flex-f}), \cref{eq:fixed-embedding-cost} holds.
				Otherwise, if $x_{e_0} = 0$, we insert two bends along $e_0$, one bend along $e_1$, and one bend along $C_o(G) \setminus e_0$. Again, since in this case $\flex(f)=3$  (Case~3.a.i.$\beta$ of \cref{de:flex-f}), we have that \cref{eq:fixed-embedding-cost} holds.
				
				\item[$(ii)$] $\flex(e_0) < 3$ or $\flex(e_1) > 1$. In this case we place $\flex(e_0)$ bends along $e_0$ and (up to) $\flex(e_1)$ bends along $e_1$.
				The remaining $4-\flex(e_0)-\flex(e_1)$ bends, if any, are placed along $C_o(G) \setminus \{e_0,e_1\}$. It can be checked that Conditions~$(ii)$ and $(iii)$ hold for the two 2-extrovert cycles and the four 3-extrovert cycles created by the bends along $e_0$ and $e_1$ (see \cref{fi:flexible_mf=2_notadjacent_f}). Since in this case, by definition, we have $\flex(f) = \min\{4,\flex(e_0)+\flex(e_1)\}$ (Case~3.a.ii of \cref{de:flex-f})), \cref{eq:fixed-embedding-cost} holds.
				
			\end{itemize}
			\begin{figure}[htb]
				\centering
				\subfloat[]{\label{fi:flexible_mf=2_notadjacent_a}\includegraphics[width=0.32\columnwidth]{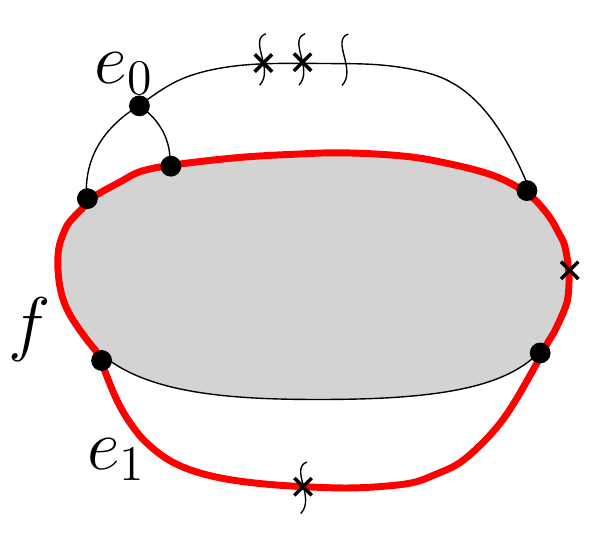}}
				\subfloat[]{\label{fi:flexible_mf=2_notadjacent_b}\includegraphics[width=0.32\columnwidth]{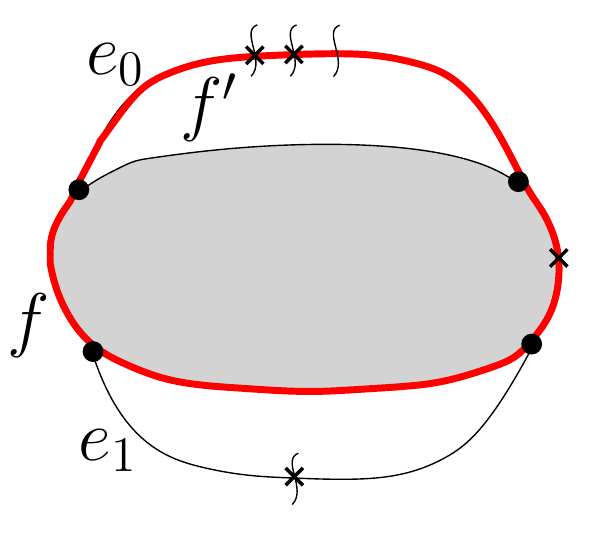}}
				\subfloat[]{\label{fi:flexible_mf=2_notadjacent_c}\includegraphics[width=0.32\columnwidth]{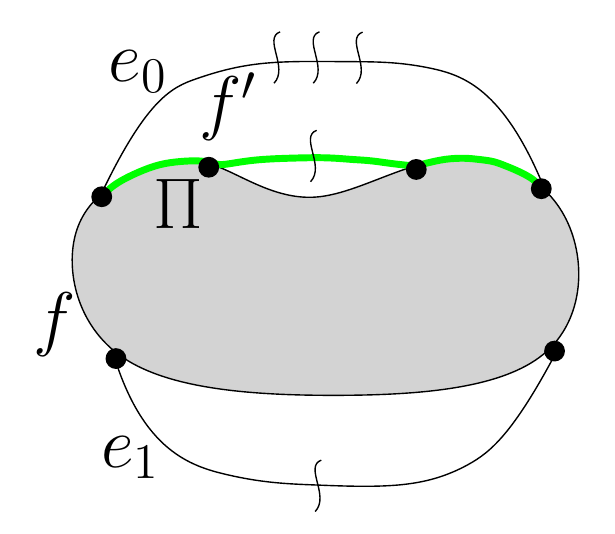}}
				\\
				\subfloat[]{\label{fi:flexible_mf=2_notadjacent_d}\includegraphics[width=0.32\columnwidth]{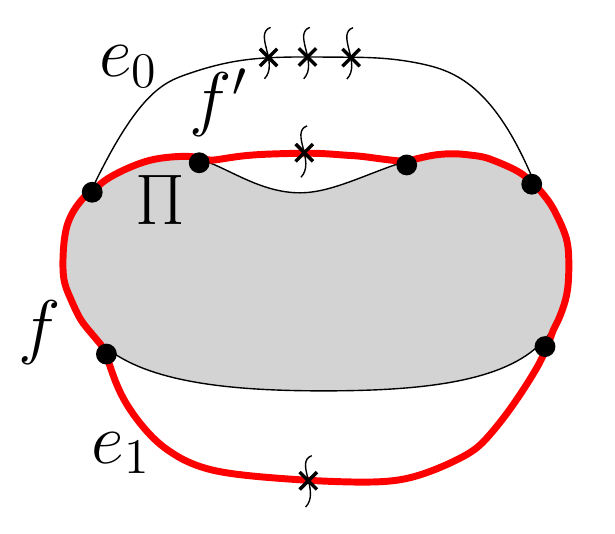}}
				\subfloat[]{\label{fi:flexible_mf=2_notadjacent_e}\includegraphics[width=0.32\columnwidth]{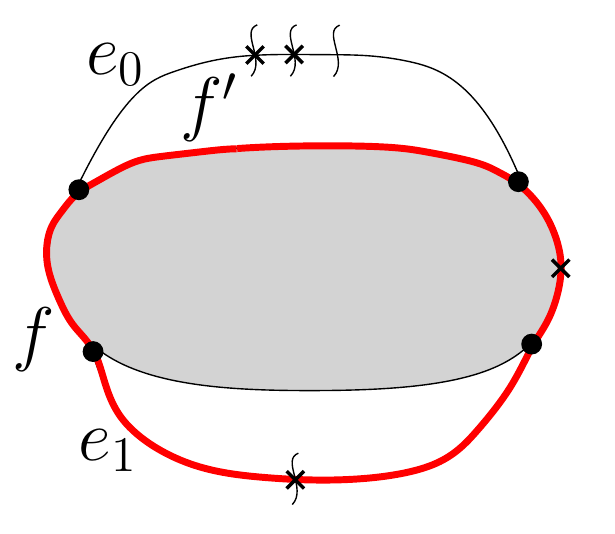}}
				\subfloat[]{\label{fi:flexible_mf=2_notadjacent_f}\includegraphics[width=0.32\columnwidth]{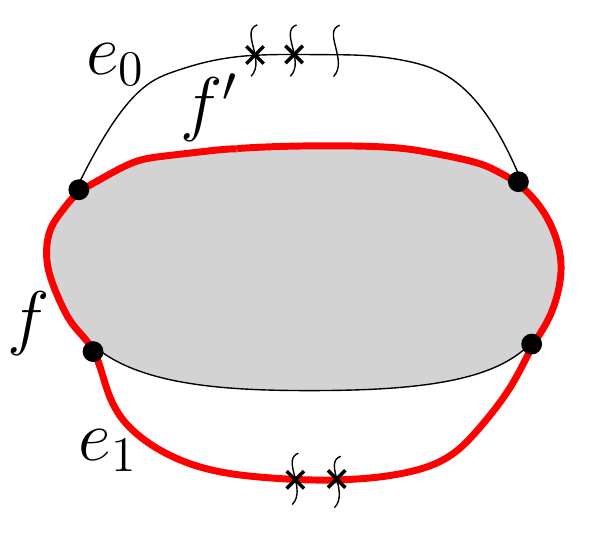}}
				\caption{Illustration for the proof of Theorem~\ref{th:fixed-embedding-min-bend} when $m_f=2$ and the two flexible edges $e_0$ and $e_1$ of $f$ are not adjacent.}\label{flexible_mf=2_notadjacent}
					
		
			\end{figure}
			\item{$e_0$ and $e_1$ share a vertex $v$}. Refer to Case~3.b. of \cref{de:flex-f}. Let $e_2$ be the edge incident to $v$ different from $e_0$ and $e_1$. Observe that inserting bends both along $e_0$ and along $e_1$ creates two 2-extrovert cycles and three 3-extrovert cycles (refer to \cref{flexible_mf=2_adjacent_5cycles}), where for two of the three 3-extrovert cycles Condition~$(iii)$ is satisfied by the bends themselves and only for one 3-extrovert cycle, that we call $C^*$, Condition~$(iii)$ needs to be satisfied.
			We have three subcases.
			\begin{itemize}
				
				\item[$(i)$] Suppose $\flex(e_0) + \flex(e_1) \leq 3$. Assume first that the external face is not a 3-cycle. In this case we place $\flex(e_0)$ bends along $e_0$ and $\flex(e_1)$ bends along $e_1$ and the remaining $4-\flex(e_0)-\flex(e_1)$ bends, if any, along distinct edges of $C_o(G) \setminus \{e_0,e_1\}$. By considerations analogous to those used in the previous cases, Condition~$(ii)$ is satisfied for the two 2-extrovert cycles introduced by the inserted bends and Condition~$(iii)$ is satisfied for $C^*$. Since in this case $\flex(f) = \flex(e_0)+\flex(e_1)$  (Case~3.b.i of \cref{de:flex-f}), \cref{eq:fixed-embedding-cost} holds.
				
				Now, consider the case when the external face is a 3-cycle. Suppose, without loss of generality, $1\leq \flex(e_0)\leq 2$ and $\flex(e_1)=1$.  In this case we place two bends in $e_0$ (one of which is costly if $\flex(e_0)=1$), one bend in $e_1$, and one bend in the other edge of the external face.  By considerations analogous to those used in the previous cases, Condition~$(ii)$ is satisfied for the two 2-extrovert cycles introduced by the inserted bends and Condition~$(iii)$ is satisfied for $C^*$. Since in this case $\flex(f) = \flex(e_0)+\flex(e_1)$  (Case~3.b.i of \cref{de:flex-f}), \cref{eq:fixed-embedding-cost} holds.

				\item[$(ii)$] Suppose $\flex(e_0) \geq 2$ and $\flex(e_1) \geq 2$. If $x_{e_0}+x_{e_1}-2\cdot\flex(e_2) \geq 1$, then we place two bends along $e_0$ and two bends along $e_1$.
				In fact, Condition~$(ii)$ is satisfied for the two 2-extrovert cycles. Since $x_{e_0}+x_{e_1}-2\cdot\flex(e_2) \geq 1$, there is either a flexible edge or a demanding 3-extrovert cycle on the portion of $C^*$ that is not incident to $f$. Analogously to the cases discussed above, we can use this flexible edge or demanding 3-extrovert cycle to satisfy Condition~$(iii)$ for $C^*$ (see \cref{fi:flexible_mf=2_adjacent_a}). Since in this case $\flex(f) = 4$ (Case~3.b.ii.$\alpha$ of \cref{de:flex-f}), we have that \cref{eq:fixed-embedding-cost} holds.
				
				Otherwise, if $x_{e_0}+x_{e_1}-2\cdot\flex(e_2) = 0$, then there is no flexible edge nor demanding 3-extrovert cycle on the portion of $C^*$ that is not incident to $f$.
				Hence, we distribute three among $e_0$ and $e_1$ and we place one costly bend along the portion of $C^*$ incident to $f$ to satisfy Condition~$(ii)$ for the 2-extrovert cycles and Condition~$(iii)$ for $C^*$ (see \cref{fi:flexible_mf=2_adjacent_b}). Since in this case $\flex(f)=3$ (Case~3.b.ii.$\beta$ of \cref{de:flex-f}), \cref{eq:fixed-embedding-cost} holds.
				
				\item[$(iii)$] Suppose $\flex(e_0) \geq 3$ and $\flex(e_1) = 1$. Let $f' \neq f$ be the other face incident to $e_0$ and let $f'' \neq f$ be the other face incident to $e_1$.
				
				If $x_{e_0}-\flex(e_2) \geq 1$, then we place three bends along $e_0$ and one bend along $e_1$, thus satisfying Condition~$(i)$ for the external face $f$ and Condition~$(ii)$ for the 2-extrovert cycle introduced by the bend along $e_1$. Regarding Condition~$(ii)$ for the 2-extrovert cycle introduced by the bends along $e_0$ and Condition~$(iii)$ for $C^*$, we observe that since $x_{e_0}-\flex(e_2) \geq 1$, then there is either a flexible edge or a demanding 3-extrovert cycle on $C_o(G) \setminus \{e_0,e_1\}$. Analogously to the previous cases, we can use this flexible edge or demanding 3-extrovert cycle to satisfy the above mentioned conditions (see \cref{fi:flexible_mf=2_adjacent_c}). Since in this case $\flex(f) = 4$ (Case~3.b.iii.$\alpha$ of \cref{de:flex-f}), we have that \cref{eq:fixed-embedding-cost} holds.
				
				If we have that: (a) $x_{e_0}-\flex(e_2) = 0$, (b) $x_{e_1} - \flex(e_2) \geq 1$, and (c) $\flex(e_2) \geq 1$, then we place three bends along $e_0$, one bend along $e_1$, and one bend along $e_2$.
				In fact, Condition~$(i)$ is satisfied for the external face $f$ and Condition~$(ii)$ is satisfied for the 2-extrovert cycle introduced by the bend along $e_1$. The bend along $e_2$ and the bend along $e_1$ satisfy Condition~$(ii)$ for the 2-extrovert cycle introduced by the bends along $e_0$ (see \cref{fi:flexible_mf=2_adjacent_d}).
				Regarding Condition~$(iii)$ for $C^*$, we observe that since $x_{e_1}-\flex(e_2) \geq 1$, then there is either a flexible edge or a demanding 3-extrovert cycle on the portion of $C^*$ that is incident to $f''$. Analogously to the previous cases, we can use this flexible edge or demanding 3-extrovert cycle to satisfy Condition~$(iii)$ (see \cref{fi:flexible_mf=2_adjacent_e}).
				Since in this case $\flex(f) = 4$ (Case~3.b.iii.$\beta$ of \cref{de:flex-f}), \cref{eq:fixed-embedding-cost} holds.
				
				Otherwise, we have $x_{e_0}-\flex(e_2) = 0$ and either $x_{e_1} - \flex(e_2) = 0$ or $\flex(e_2) = 0$. In this case we place two bends along $e_0$, one bend along $e_1$ and one costly bend along $C_o(G) \setminus \{e_0,e_1\}$. In fact, Condition~$(i)$ is satisfied for the external face $f$ and Condition~$(ii)$ is satisfied for the 2-extrovert cycle introduced by the bend along $e_1$. The bend along $e_1$ and the bend along $C_o(G) \setminus \{e_0,e_1\}$ satisfy Condition~$(ii)$ for the 2-extrovert cycle introduced by the bends along $e_0$.
				Also the bend along $C_o(G) \setminus \{e_0,e_1\}$ satisfies Condition~$(iii)$ for $C^*$ (see \cref{fi:flexible_mf=2_adjacent_f}).
				Since in this case $\flex(f)=3$ (Case~3.b.iii.$\gamma$ of \cref{de:flex-f}), \cref{eq:fixed-embedding-cost} holds.
				
			\end{itemize}
		\end{itemize}
		\begin{figure}[htb]
			\centering
			{\label{fi:flexible_mf=2_adjacent_5cycles}\includegraphics[width=\columnwidth]{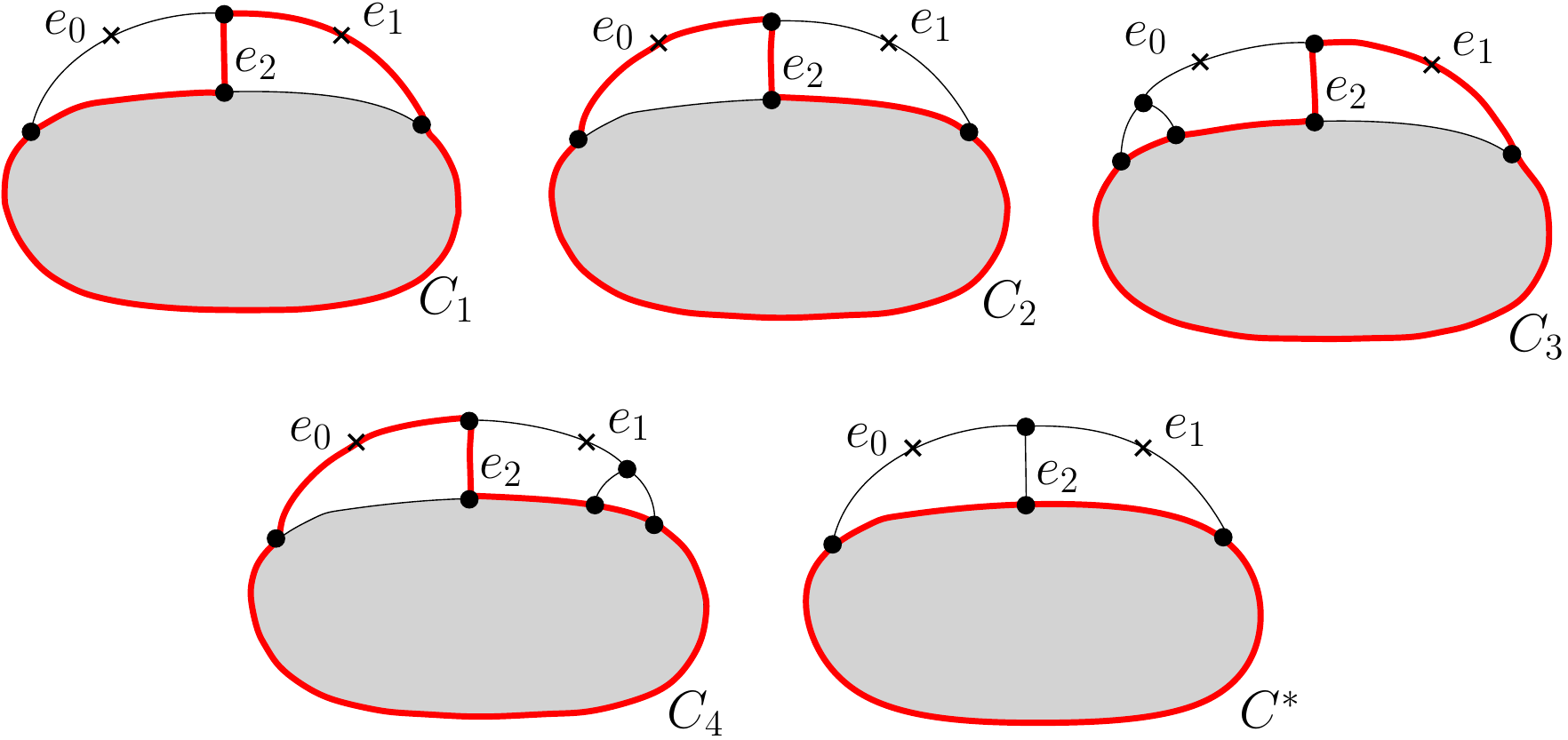}}
			\caption{Two adjacent edges $e_0$ and $e_1$ of the external face, two 2-extrovert cycles ($C_1$ and $C_2$), and three 3-extrovert cycles ($C_3$, $C_4$, and $C^*$) created by placing bends in $e_0$ and $e_1$. Notice that Condition~$(iii)$ for $C_3$ and $C_4$ is satisfied by the bends in $e_0$ and $e_1$.}\label{flexible_mf=2_adjacent_5cycles}
		\end{figure}
		\begin{figure}[htb]
			\centering
			\subfloat[]{\label{fi:flexible_mf=2_adjacent_a}\includegraphics[width=0.32\columnwidth]{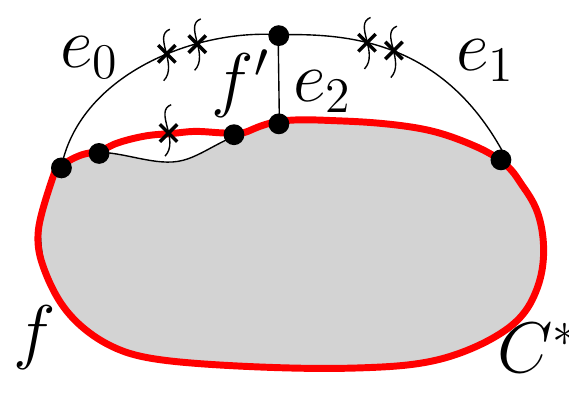}}
			\subfloat[]{\label{fi:flexible_mf=2_adjacent_b}\includegraphics[width=0.32\columnwidth]{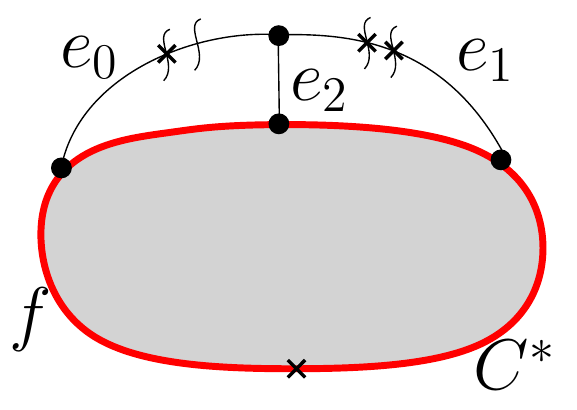}}
			\subfloat[]{\label{fi:flexible_mf=2_adjacent_c}\includegraphics[width=0.32\columnwidth]{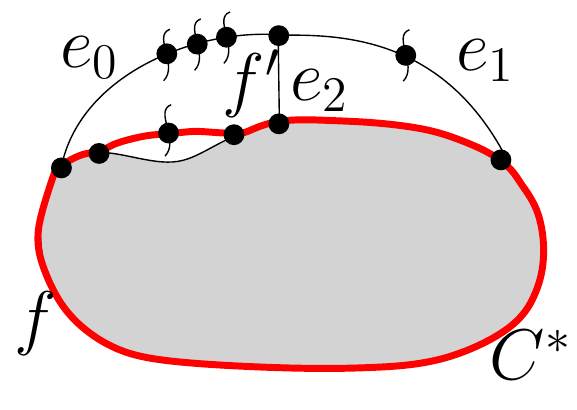}}\\
			\subfloat[]{\label{fi:flexible_mf=2_adjacent_d}\includegraphics[width=0.32\columnwidth]{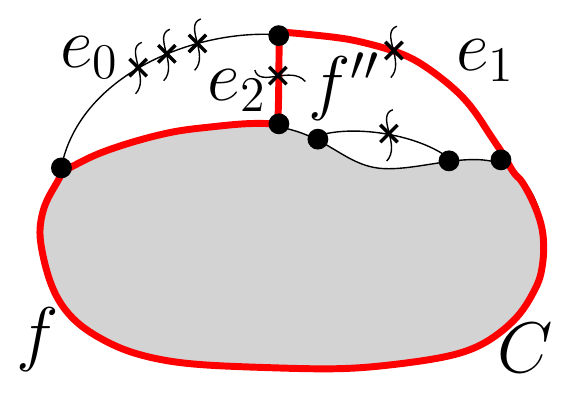}}
			\subfloat[]{\label{fi:flexible_mf=2_adjacent_e}\includegraphics[width=0.32\columnwidth]{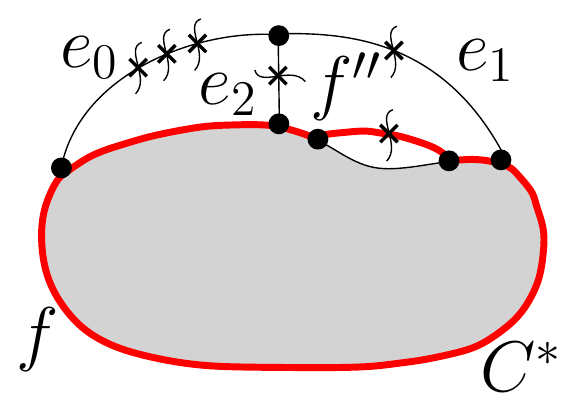}}
			\subfloat[]{\label{fi:flexible_mf=2_adjacent_f}\includegraphics[width=0.32\columnwidth]{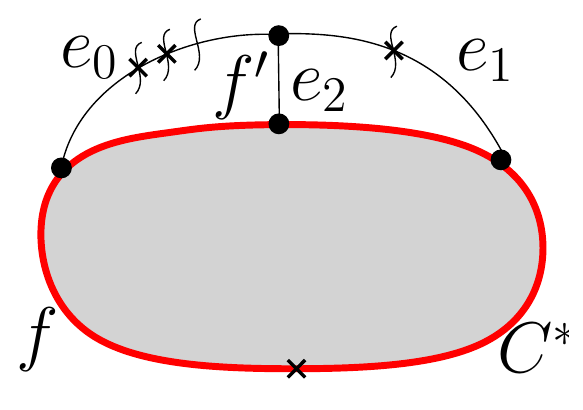}}
			\caption{Illustration of the proof of Theorem~\ref{th:fixed-embedding-min-bend} when $m_f=2$ and the two flexible edges $e_0$ and $e_1$ of $f$ are adjacent.} \label{flexible_mf=2_adjacent}
				
		\end{figure}
	
	\smallskip\paragraph{Case $m_f \geq 3$.} Refer to Case~4 of \cref{de:flex-f}. Suppose that the external face is not a 3-cycle. In this case we are able to distribute up to four bends to the flexible edges of $f$ in such a way that at most two bends are placed along the same flexible edge. Also, if the sum of the flexibilities of the edges incident to $f$ is three, we introduce one costly bend along an inflexible edge incident to $f$. This guarantees that Condition~$(i)$ is satisfied for $f$ and that Conditions~$(ii)$ and~$(iii)$ are satisfied for the 2-extrovert and 3-extrovert cycles, respectively, introduced by these bends. In this case, $\flex(f) = \min\{4,\sum_{e \in C_f}\flex(e)\}$ and, hence, \cref{eq:fixed-embedding-cost} holds.
	
	Now, suppose that the external face is a 3-cycle. If there exists at least one edge $e_0\in C_o(G)$ such that $\flex(e_0)\ge 2$, we place two bends along $e_0$ and one bend for each other (flexible) edge of $C_o(G)$. Else, we place two bends along an edge of $C_o(G)$ (one of which is costly) and one bend along every other edge of $C_o(G)$.  In this case, $\flex(f) = \min\{4,\sum_{e \in C_f}\flex(e)\}$ and, hence,  \cref{eq:fixed-embedding-cost} holds.
	
	Observe that, in all cases analyzed above, each inflexible edge receives at most one bend, except when the external face is a 3-cycle with all inflexible edges; in this case one of these inflexible edges is bent twice to satisfy Condition~$(i)$ of Theorem~\ref{th:RN03}. Also, each flexible edge has a number of bends that does not exceed its flexibility except when the external face is a 3-cycle with at least a flexible edge and all the external edges have flexibility at most one; in this case, one of the flexible edges of the external face is bent twice to satisfy Condition~$(i)$ of Theorem~\ref{th:RN03}.
	
	Concerning the computational cost of constructing $H$, suppose that $D(G)$ and $D_f(G)$ are given. Since the case analysis described above can be easily performed in $O(n)$ time, we can construct $\rect{G}$ by suitably subdividing the edges of $G$ with degree-2 vertices (which will represent the bends of $H$) in $O(n)$ time. A rectilinear representation $\rect{H}$ of $\rect{G}$ is computed in $O(n)$ time by applying the \textsf{NoBendAlg} of~\cite{DBLP:journals/jgaa/RahmanNN03}, and the desired representation $H$ is the inverse of $\rect{H}$.         	
\end{proof}

An example of application of Theorem~\ref{th:RN03} is given in \cref{fi:rahman_colouration-c}, which depicts a bend-minimum orthogonal representation $H$ of the plane graph $G$ in \cref{fi:rahman_colouration-a} having $f$ as external face. We have $D(G) = \{C_1, C_2, C_3, C_7, C_8,C_9\}$, $D_{f}(G) = \{C_7, C_8\}$, and $\flex(f)=1$. Thus, $c(G)=c(H)=|D(G)| + 4 - \min\{4, |D_{f}(G)| + \flex(f) \}=6+4-3=7$.

\begin{figure}[tb]
	\centering
	{\includegraphics[width=\columnwidth]{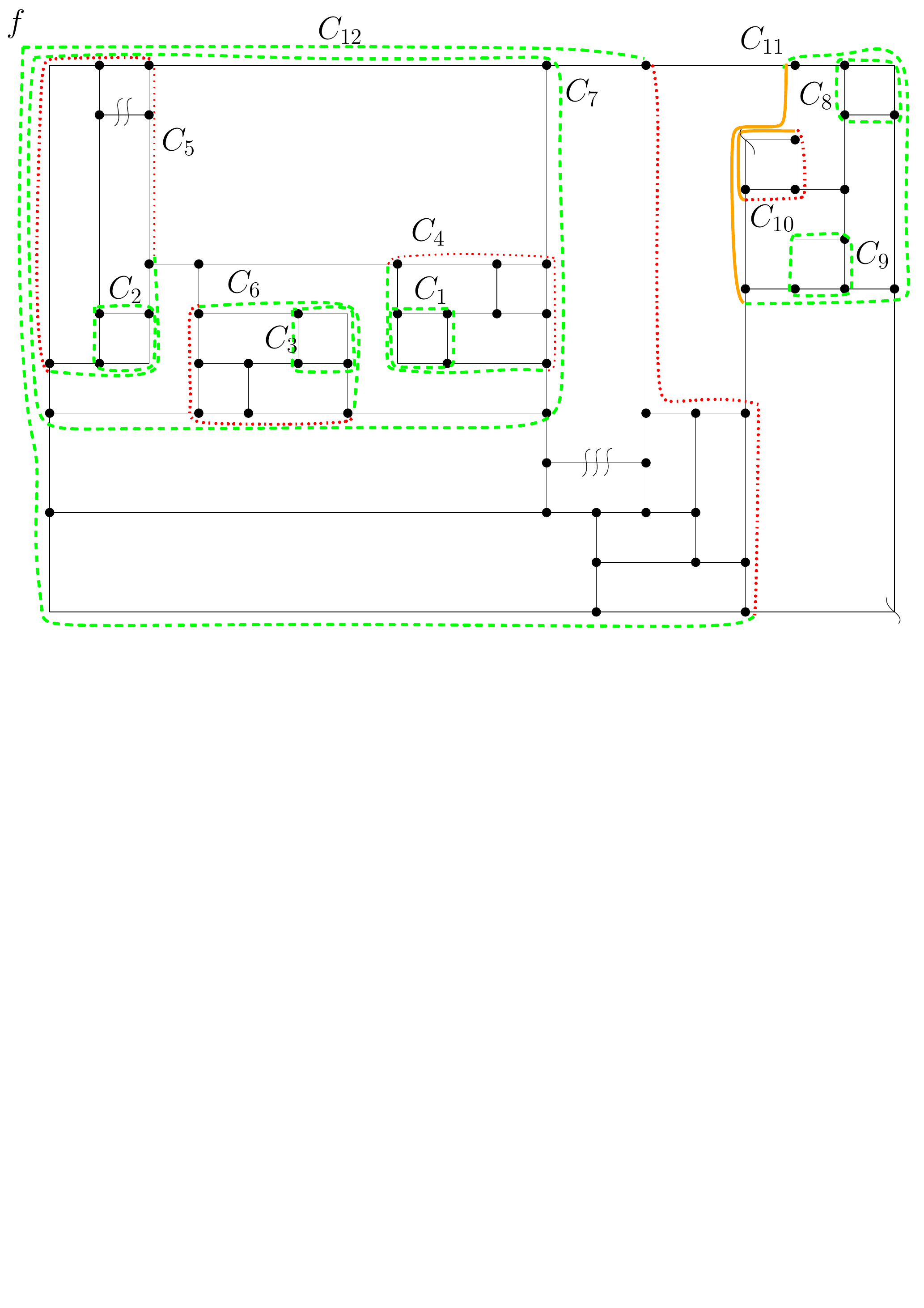}}
	\caption{A bend-minimum orthogonal representation of the plane graph in \cref{fi:rahman_colouration-a}.}\label{fi:rahman_colouration-c}
\end{figure}

\subsection{The Variable Embedding Setting with Flexible Edges.}\label{sse:variable-embedding}

Now we consider the case in which the external face of $G$ can be freely chosen and we want to efficiently compute the cost of a cost-minimum orthogonal representation of $G$. A high-level description of our algorithm consists of the following steps:
$(i)$ We suitably choose a planar embedding of $G$ called ``reference embedding''.
$(ii)$ We use the reference embedding to construct in $O(n)$ time the \texttt{Bend-Counter} data structure.
For any possible external face $f$ of $G$, the \texttt{Bend-Counter} returns in $O(1)$ time the cost of a cost-minimum orthogonal representation of $G$ with $f$ as external face.

A \emph{reference embedding} is an embedding of $G$ such that no 3-extrovert cycle has an edge incident to the external face. We denote by $G_f$ the plane graph corresponding to a reference embedding whose external face is~$f$.
Roughly speaking, the \texttt{Bend-Counter} stores information that makes it possible to compute how the values in the formula of Theorem~\ref{th:fixed-embedding-min-bend} change when we change the external face of $G$. In fact, when choosing a different external face for a triconnected cubic graph, some demanding 3-extrovert cycles are preserved, some may disappear, and some new ones may appear. A 3-extrovert cycle disappears when it becomes a 3-introvert cycle in the new embedding; a 3-extrovert cycle appears when a 3-introvert cycle is turned inside-out in the new embedding.

The main components of the \texttt{Bend-Counter} are: A tree, which we call ``3-extrovert tree'', and an array of pointers to the nodes of the 3-extrovert tree, which we call ``face array''.
Let $G_f$ be a triconnected plane 3-graph with a reference embedding.
The \emph{3-extrovert tree} $T_f$ of $G_f$ is rooted at the cycle $C_f$ of the external face $f$; every other node of $T_f$ is a 3-extrovert cycle of $G_f$; if $C'$ and $C$ are two 3-extrovert cycles such that $C$ is a child-cycle of $C'$, then $C'$ is the parent of $C$ in $T_f$. The children of the root of $T_f$ are the \emph{maximal} 3-extrovert cycles of $G_f$, i.e., those that are not child-cycles of any other 3-extrovert cycle of~$G_f$.
The \emph{face array} $A_f$ has an entry for every face $f'$ of $G_f$, which points to the lowest node $C'$ of $T_f$ such that $G_f(C')$ contains $f'$.

%
Since Theorem~\ref{th:fixed-embedding-min-bend} only considers non-trivial 3-extrovert cycles, in the remainder of this section we only consider non-trivial 3-extrovert and 3-introvert cycles, and we omit the term ``non-trivial''.

\subsubsection{Computing Reference Embeddings and 3-Extrovert Trees.}\label{sse:reference-embedding}
%

One could think that, for any given embedding, the 3-extrovert tree can be obtained by connecting the roots of the genealogical trees of the maximal 3-extrovert cycles to a common node. Unfortunately, this is not always the case and we need to identify a reference embedding to guarantee the existence of a 3-extrovert tree. Consider for example the planar embedding of \cref{fi:reference-embedding-a}. Since $C_2$ is a child-cycle of both $C_5$ and $C_6$, if we connected the roots of the two genealogical trees $T_{C_5}$ and $T_{C_6}$ to a common node, we would obtain a cyclic graph.

\begin{figure}[t]
	\centering
	\subfloat[]{\label{fi:reference-embedding-a}\includegraphics[width=0.5\columnwidth]{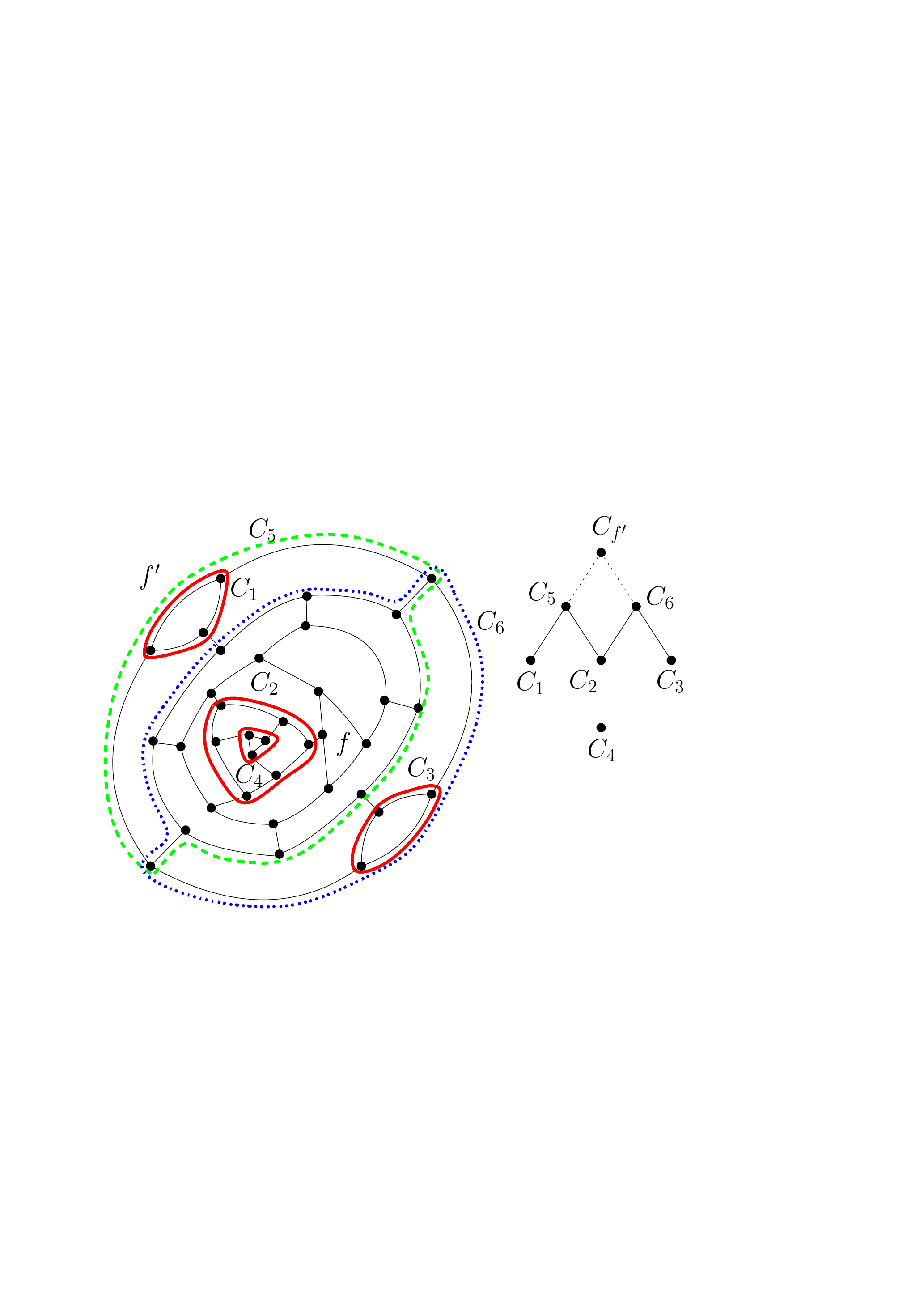}}
	\hfill
	\subfloat[]{\label{fi:reference-embedding-b}\includegraphics[width=0.5\columnwidth]{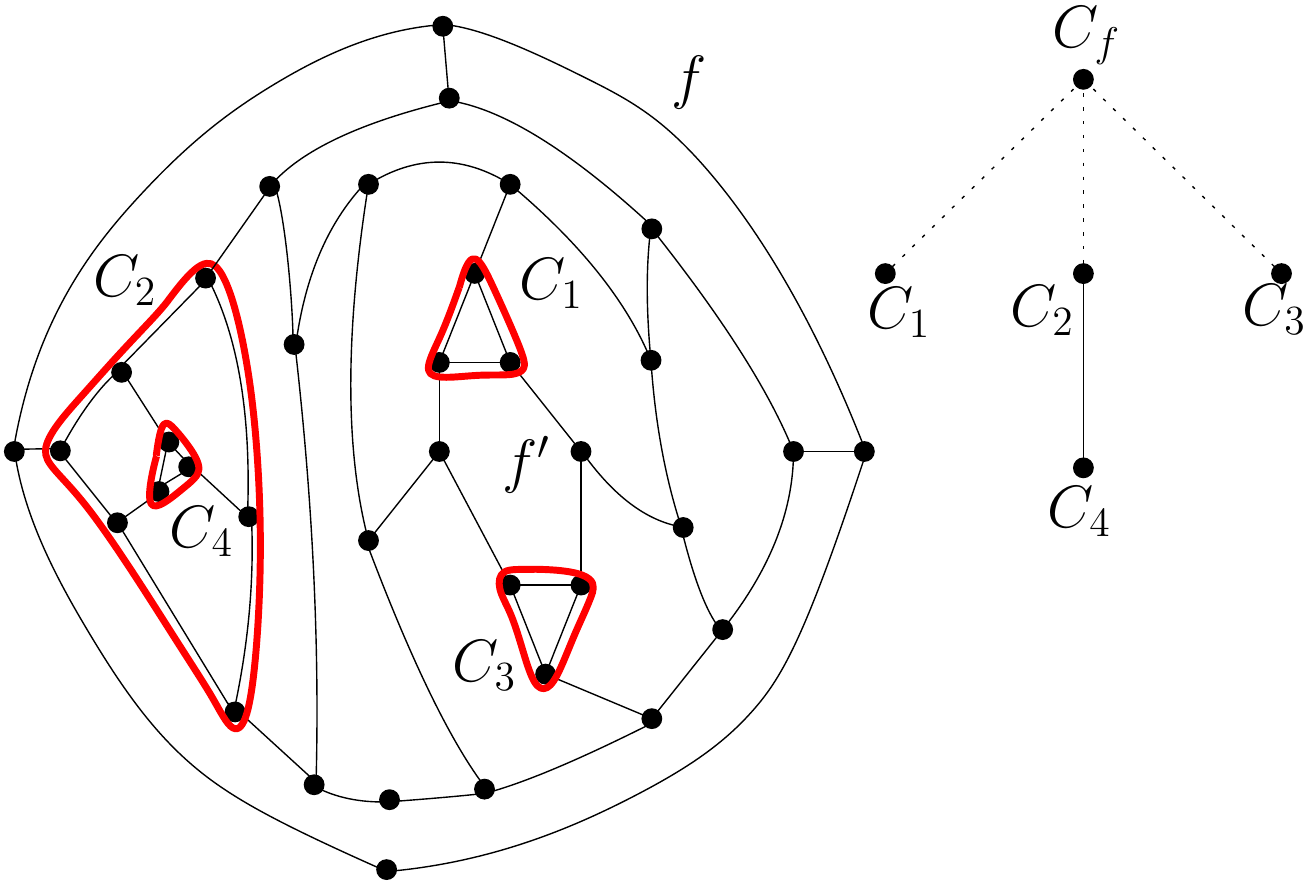}}
	\caption{(a) An embedding for which the inclusion relationships between the 3-extrovert cycles cannot be described by a rooted tree. (b) A reference embedding of the same graph and a rooted tree that describes the inclusion relationships between the 3-extrovert cycles.}\label{fi:refernce-embedding}
\end{figure}

Let $C$ be a 3-extrovert (3-introvert) cycle of $G_f$. The three faces of $G_f$ that are incident to the external (internal) legs of $C$ are called \emph{leg faces} of $C$. For example, face $f'$ in \cref{fi:reference-embedding-a} is a leg face of four 3-extrovert cycles, namely $C_1$, $C_3$, $C_5$, and $C_6$. In \cref{fi:reference-embedding-b} face $f'$ is a leg face of the two 3-extrovert cycles $C_1$ and $C_3$. Note that, since $G_f$ is triconnected and cubic, any two leg faces of a 3-extrovert (3-introvert) cycle only share a leg.

Since the reference embedding is such that the external face $f$ is not incident to any 3-extrovert cycle we have that $f$ is not a leg face of any 3-extrovert cycle. For example, the embedding of \cref{fi:reference-embedding-b} is a reference embedding, because the external face $f$ is not a leg face of any 3-extrovert cycle. Conversely, \cref{fi:reference-embedding-a} shows a different embedding of the same graph where the external face $f'$ is a leg face of some 3-extrovert cycles; thus, the plane embedding of \cref{fi:reference-embedding-a} is not a reference embedding.
\begin{restatable}{lemma}{leRefEmbedding}\label{le:ref-embedding}
	A cubic triconnected planar graph 
	always admits a reference embedding, which can be computed in $O(n)$ time.
\end{restatable}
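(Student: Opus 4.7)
The plan is to establish existence of a reference embedding via a local correction argument and then implement the construction in linear time. First I would compute an arbitrary planar embedding $\Pi$ of $G$ in $O(n)$ time using the Hopcroft--Tarjan planarity algorithm, and pick any initial external face $f_0$. If $f_0$ happens to not be a leg face of any 3-extrovert cycle of $G_{f_0}$, then $\Pi$ is already a reference embedding and we are done. Otherwise, I would enumerate all 3-extrovert cycles of $G_{f_0}$ in $O(n)$ total time, adapting the techniques developed in~\cite{DBLP:journals/jgaa/RahmanNN99} for computing the genealogical structure of 3-legged cycles.

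The central claim is that it suffices to choose as the new external face a face $f^{*}$ that lies in $G_{f_0}(C)$ for a cycle $C$ that is inclusion-minimal among the 3-extrovert cycles of $G_{f_0}$ having $f_0$ as a leg face. The reason is the following case analysis for cycles $C^{*}$ that could be 3-extrovert in $G_{f^{*}}$: (i) cycles that are 3-extrovert in $G_{f_0}$ and independent of $C$ are still 3-extrovert in $G_{f^{*}}$, but their legs lie outside $G_{f_0}(C)$, hence $f^{*}\in G_{f_0}(C)$ cannot be one of their leg faces; (ii) cycles that are 3-extrovert in $G_{f_0}$ and strictly contain $C$ become 3-introvert in $G_{f^{*}}$ (their sides swap), so they drop out; (iii) $C$ itself becomes 3-introvert in $G_{f^{*}}$. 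The only delicate case is (iv): cycles that were 3-introvert in $G_{f_0}$ and lie in $G_{f_0}(C)$ may flip and become 3-extrovert in $G_{f^{*}}$; by choosing $f^{*}$ inside an inclusion-minimal such nested introvert structure within $G_{f_0}(C)$ (which exists since the set of 3-introvert cycles of $G_{f_0}$ contained in $G_{f_0}(C)$ is finite and partially ordered by containment), we can ensure $f^{*}$ avoids being incident to any of their legs, so case (iv) contributes no violation either.

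The main obstacle will be rigorously handling case (iv) and making the ``inclusion-minimal'' selection work simultaneously for both 3-extrovert and 3-introvert nesting. My plan is to argue that the nesting partial order on the union of all 3-extrovert and 3-introvert cycles of $G_{f_0}$ has minimal elements whose interiors contain at least one face, and any such face works as $f^{*}$; the triconnectivity and cubicness of $G$ are used to guarantee that interior faces exist and that leg vertices behave nicely (each cycle-vertex contributes exactly one non-cycle edge).

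For the linear-time bound, the pipeline is: compute $\Pi$ in $O(n)$ time, compute the set of 3-extrovert cycles of $G_{f_0}$ and (by duality) the set of 3-introvert cycles in $O(n)$ time, build their nesting forest in $O(n)$ time, locate the face $f^{*}$ by descending to a minimal element in $O(n)$ time, and then output the same combinatorial embedding $\Pi$ with $f^{*}$ declared as the external face (an $O(1)$-cost re-rooting of the embedding data structure). Since only a constant number of passes over the embedding are needed, the total time is $O(n)$, as claimed.
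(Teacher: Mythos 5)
Your overall strategy---re-rooting the embedding at a face buried deep inside the nesting structure of the 3-extrovert cycles---is the same as the paper's, and the linear-time pipeline is unobjectionable. The gap is in the choice of $C$: you take $C$ inclusion-minimal \emph{among the 3-extrovert cycles having $f_0$ as a leg face}, but such a $C$ may still properly contain other 3-extrovert cycles of $G_{f_0}$ (they simply do not have $f_0$ as a leg face). These descendants of $C$ appear in none of your cases (i)--(iii): they are not independent of $C$, they do not strictly contain $C$, and they are not $C$ itself. If $D$ is such a cycle and $f^*$ is not a face of $G_{f_0}(D)$, then $D$ remains 3-extrovert in $G_{f^*}$, and $f^*$ can perfectly well be a leg face of $D$ or contain an edge of $D$, so $G_{f^*}$ fails to be a reference embedding. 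Your case (iv) concerns only 3-introvert cycles and does not repair this.

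The paper avoids the problem by descending further: it takes $C_2$ to be a \emph{leaf} of the genealogical tree of some 3-extrovert cycle, i.e., a 3-extrovert cycle with no 3-extrovert cycle strictly inside it, and lets $f'$ be any internal face of $G_f(C_2)$; then $G_f(C_2)$ contains no leg of any 3-extrovert cycle, so $f'$ is not a leg face of any of them. Moreover the 3-introvert difficulty you single out as ``the main obstacle'' evaporates for free with this choice: every 3-introvert cycle strictly contains the 3-extrovert cycle $\phi^{-1}(\cdot)$ with the same legs (the construction in \cref{le:3-extro-3-intro} does not need a reference embedding), so no 3-introvert cycle can be nested inside a leaf, and any 3-introvert cycle containing $f'$ with an edge on $f'$ would force $f'$ to be a leg face of the corresponding 3-extrovert cycle, which has already been excluded. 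Your closing suggestion of taking a minimal element of the combined extrovert/introvert nesting order is essentially this fix, but you present it as an open obstacle rather than carrying it out, so as written the argument is incomplete.
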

\begin{proof}
	Let $G_f$ be a cubic triconnected plane graph with $f$ as its external face. If $G_{f}$ has no 3-extrovert cycle, then the embedding of $G_{f}$ is already a reference embedding. Otherwise, let $C_1$ be any 3-extrovert cycle of $G_{f}$ and let $T_{C_1}$ be its genealogical tree. Let $C_2$ be a leaf of $T_{C_1}$ (possibly coincident with $C_1$). Clearly, $G_{f}(C_2)$ does not contain any leg of a 3-extrovert cycle of $G_{f}$, which implies that every internal face of $G_{f}(C_2)$ is not a leg face. Hence, for any internal face $f'$ of $G_{f}(C_2)$, the embedding of $G_{f'}$ is a reference embedding.
	Since the set of 3-extrovert cycles of a plane graph can be computed in $O(n)$ time~\cite{DBLP:journals/jgaa/RahmanNN99}, the reference embedding can also be computed in $O(n)$ time.
\end{proof}

The next lemma extends \cref{le:independent-child-cycles}.

\begin{restatable}{lemma}{leIndependentChildCyclesRefEmbedding}\label{le:independent-child-cycles-ref-embedding}
	If the planar embedding of $G_f$ is a reference embedding, the 3-extrovert child-cycles of cycle $C_f$ are independent.
\end{restatable}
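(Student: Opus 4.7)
The plan is to argue by contradiction: suppose that two 3-extrovert child-cycles $C_1$ and $C_2$ of $C_f$ are dependent, and derive a contradiction with the definition of child-cycle. The argument splits into two cases, according to the type of dependence.

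If one of the two cycles is nested inside the other, say $G_f(C_1) \subseteq G_f(C_2)$ with $C_1 \neq C_2$, then $C_1$ is a proper descendant of $C_2$. Since $C_2$ is itself a descendant of $C_f$ in $T_f$, cycle $C_1$ is then a descendant of a proper descendant of $C_f$, contradicting the hypothesis that $C_1$ is a child-cycle of $C_f$.

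If neither cycle contains the other, then $C_1$ and $C_2$ must cross in the planar embedding. Since $G_f$ is cubic, at every shared vertex $v \in V(C_1) \cap V(C_2)$ both cycles use two of the three edges incident to $v$, so $C_1$ and $C_2$ must share at least one edge at $v$; hence $C_1 \cap C_2$ is a disjoint union of maximal shared paths. I would then take $C^*$ to be the cycle bounding the outer face of the plane subgraph $G_f(C_1) \cup G_f(C_2)$, and show that $C^*$ is a 3-extrovert cycle strictly containing both $C_1$ and $C_2$. Because the reference embedding guarantees that no 3-extrovert cycle shares an edge with $C_f$, it would follow that $C^* \neq C_f$ and that $C^*$ is a 3-extrovert descendant of $C_f$ lying strictly between $C_1$ and $C_f$, again contradicting the fact that $C_1$ is a child-cycle of $C_f$.

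The main obstacle is the verification that $C^*$ is indeed 3-extrovert. The leg-counting argument I have in mind goes as follows: $C_1$ and $C_2$ contribute together 6 external legs; at each endpoint of a maximal shared path, cubicity forces the third edge to lie on $C_1 \cup C_2$, so such an endpoint cannot be a leg vertex of $C^*$; each leg of $C_1$ that lands inside $G_f(C_2)$, and symmetrically each leg of $C_2$ that lands inside $G_f(C_1)$, becomes internal to $C^*$. An analysis in the spirit of the proof of \cref{le:independent-child-cycles} in~\cite{DBLP:journals/jgaa/RahmanNN99}, combined with triconnectivity to exclude external chords of $C^*$, should yield that exactly 3 of the 6 legs survive as external legs of $C^*$ and that $C^*$ is a single simple cycle. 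This step requires a careful case analysis on the topological configuration of the maximal shared paths and their interaction with the legs of $C_1$ and $C_2$, and it is the part where the reference embedding hypothesis is crucially used to rule out interference between $C^*$ and the external face boundary.
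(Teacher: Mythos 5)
Your overall strategy---dispose of the nested case via the definition of child-cycle, then analyse the outer boundary $C^*$ of $G_f(C_1)\cup G_f(C_2)$ in the crossing case---matches the paper's, but the quantitative claim on which your contradiction rests is false, and it is precisely the step you flag as the ``main obstacle.'' You claim that exactly $3$ of the $6$ external legs survive, so that $C^*$ is $3$-extrovert; in fact at most $2$ survive. Since neither region contains the other, the closed curve $C_2$ must both enter and leave the region $G_f(C_1)$. Every such transition uses an edge with one endpoint on $C_1$ and one endpoint outside $G_f(C_1)$, i.e., an external leg of $C_1$ (there are no external chords of a $3$-extrovert cycle, and interior vertices of $G_f(C_1)$ have no neighbours outside it). The number of transitions is even and nonzero, and $C_1$ has only three external legs, so exactly \emph{two} of them are edges of $C_2$; symmetrically, two external legs of $C_2$ are edges of $C_1$, and these four edges are pairwise distinct (a leg of $C_1$ lying on $C_2$ has both endpoints on $C_2$, so it is not a leg of $C_2$). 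All four are internal to the union, leaving at most $6-4=2$ edges connecting $G_f(C_1)\cup G_f(C_2)$ to the rest of the graph. Hence $C^*$ is at most $2$-extrovert (cf.\ \cref{fi:pasticca}), and the contradiction you aim for---a $3$-extrovert cycle strictly between $C_1$ and $C_f$ violating maximality---never materializes.

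The correct count actually yields a shorter proof, which is the one the paper gives: a non-trivial $2$-extrovert cycle separates the graph after the removal of at most two vertices, which is impossible in a triconnected graph. The reference-embedding hypothesis is used only to guarantee that $C_1$ and $C_2$ contain no edge of $f$, so that $G_f(C_1)\cup G_f(C_2)$ is properly contained in $G_f$ and $C^*$ has a nonempty exterior; the embedding of \cref{fi:reference-embedding-a} shows that this is exactly where non-reference embeddings fail. None of the deferred case analysis on maximal shared paths is needed: replace ``$3$-extrovert'' by ``$2$-extrovert'' in your argument and invoke triconnectivity directly.
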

\begin{proof}
	Let $C_1$ and $C_2$ be two 3-extrovert child-cycles of $C_f$. By definition of child-cycles, neither $C_1$ is a descendant of $C_2$ nor $C_2$ is a descendant of $C_1$. Also, since $f$ is not a leg face, $C_1$ and $C_2$ do not contain any edge of $f$. Suppose by contradiction that $C_1$ and $C_2$ are dependent. Since $\Delta(G_f) \leq 3$, $C_1$ and $C_2$ have at least an edge in common. It follows that two of the three legs of $C_1$ are edges of $C_2$ and vice versa (see for example \cref{fi:pasticca}). Hence, the external cycle of $G_f(C_1) \cup G_f(C_2)$ is a 2-extrovert cycle, which is impossible since $G$ is triconnected.
\end{proof}

In the reference embedding depicted in \cref{fi:reference-embedding-b} the 3-extrovert child-cycles of $C_f$ are $C_1$, $C_2$, and $C_3$ and they are independent. If the embedding is not a reference embedding this property is not guaranteed: The embedding depicted in \cref{fi:reference-embedding-a} is not a reference embedding, the 3-extrovert child-cycles of $C_f$ are $C_5$ and $C_6$ and they are not independent.

By \cref{le:independent-child-cycles,le:independent-child-cycles-ref-embedding}, if the embedding of $G_f$ is a reference embedding, the 3-extrovert tree $T_f$ can be defined. Also, by \cref{le:ref-embedding}, a reference embedding and a 3-extrovert tree always exist.

\begin{restatable}{lemma}{leDemandingTreeLineartime}\label{le:demanding-tree-lineartime}
	Let $G_f$ be a cubic 3-connected plane graph with a reference embedding. The 3-extrovert tree $T_f$ of $G_f$ can be computed in $O(n)$ time.
\end{restatable}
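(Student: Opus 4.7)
The plan is to reduce the construction of $T_f$ to the enumeration of all 3-extrovert cycles of $G_f$ and the construction of the genealogical trees of its maximal 3-extrovert cycles, which can all be done in $O(n)$ time by exploiting the cited algorithm of~\cite{DBLP:journals/jgaa/RahmanNN99} and the laminar structure guaranteed by the reference embedding.

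More concretely, I would proceed in three steps. Step~1: apply the algorithm of~\cite{DBLP:journals/jgaa/RahmanNN99} to enumerate all 3-extrovert cycles of $G_f$ in $O(n)$ time. Step~2: identify the \emph{maximal} 3-extrovert cycles, namely those that are not proper descendants of any other 3-extrovert cycle of $G_f$; by definition these are exactly the children of $C_f$ in $T_f$. Step~3: for each maximal 3-extrovert cycle $C_i$, invoke the construction of its genealogical tree $T_{C_i}$ in time proportional to the size of $G_f(C_i)$, and obtain $T_f$ by creating a root node representing $C_f$ and attaching each $T_{C_i}$ as a subtree rooted at a child of $C_f$.

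Correctness rests on the two structural lemmas proved earlier. By Lemma~\ref{le:independent-child-cycles-ref-embedding}, the maximal 3-extrovert cycles of $G_f$ are pairwise independent, so the subgraphs $G_f(C_i)$ they span are vertex-disjoint and each non-maximal 3-extrovert cycle lies inside exactly one of them; this makes the attachment in Step~3 unambiguous and ensures that the resulting tree matches the definition of $T_f$. By Lemma~\ref{le:independent-child-cycles}, within each $G_f(C_i)$ the child-cycles of every 3-extrovert cycle are independent, so the genealogical tree $T_{C_i}$ correctly encodes the parent--child relations inside $C_i$. The vertex-disjointness of the $G_f(C_i)$ also yields a linear total running time, since $\sum_i |G_f(C_i)| = O(n)$.

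The main obstacle is Step~2: recognizing the maximal 3-extrovert cycles in linear time from the raw output of the enumeration algorithm. The key observation I would exploit is that, in a reference embedding, the external face $f$ is not a leg face of any 3-extrovert cycle (by definition), so the faces of $G_f$ that do not lie in the interior of any 3-extrovert cycle form a connected region of the dual graph containing $f$. Thus a 3-extrovert cycle $C$ is maximal if and only if each of its three leg faces belongs to this region. The region itself can be identified by a single BFS/DFS on the dual graph of $G_f$ that refuses to cross any edge marked as lying on some 3-extrovert cycle, and the check for maximality then costs $O(1)$ per cycle, keeping the overall running time in $O(n)$.
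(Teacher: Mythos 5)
Your proposal is correct and follows essentially the same route as the paper: the paper's proof simply cites Rahman et al.\ for computing in $O(n)$ time the genealogical trees of the child-cycles of $C_f$ (whose well-definedness rests on the same two independence lemmas you invoke) and then attaches them to a common root. The only difference is that you additionally spell out, via the dual-graph search, how the maximal 3-extrovert cycles are recognized --- a detail the paper leaves implicit in the citation --- and your argument for it is sound.
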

\begin{proof}
	Rahman et al.~\cite{DBLP:journals/jgaa/RahmanNN99} prove that the set of genealogical trees $T_C$ for any child-cycle $C$ of $C_f$ can be computed in $O(n)$ time.
	$T_f$ is obtained by connecting all such $T_C$ to a unique root, which still takes $O(n)$ time.
\end{proof}

\begin{figure}[h]
	\centering
	\includegraphics[width=0.5\columnwidth]{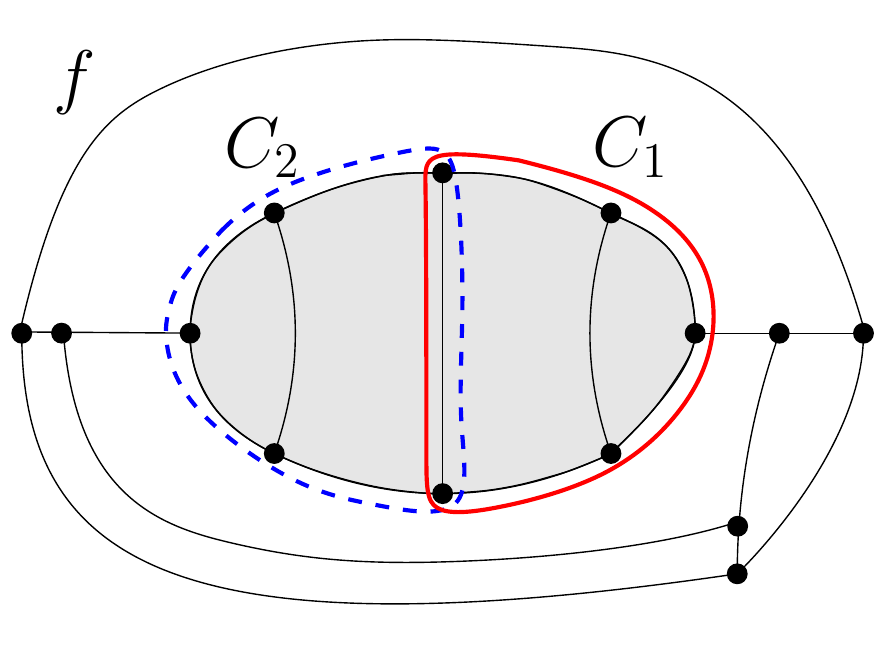}
	\caption{Illustration of the proof of \cref{le:independent-child-cycles-ref-embedding}. The external cycle of the graph $G_f(C_1) \cup G_f(C_2)$ (shaded) is a 2-extrovert cycle. }\label{fi:pasticca}
\end{figure}	

In the following we give some properties of the 3-extrovert and 3-introvert cycles of a graph $G_f$ whose embedding is a reference embedding. Also, we give some definitions and properties that will help to understand how the set of demanding 3-extrovert cycles changes when we change the external face of $G_f$. Since $|D_{f}(G_f)|=0$, by Theorem~\ref{th:fixed-embedding-min-bend} we immediately have the following.

\begin{corollary}\label{co:min-bend-ref-embedding}
Let $G_f$ be a plane triconnected cubic graph whose embedding is a reference embedding. Then $c(G_f) = |D(G_f)| + 4 -\min\{4,\flex(f)\}$.
\end{corollary}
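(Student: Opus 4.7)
The plan is to observe that this corollary is a direct specialization of Theorem~\ref{th:fixed-embedding-min-bend} once we notice that the term $|D_{f}(G_f)|$ in the general formula vanishes under the reference-embedding hypothesis. So no real work beyond invoking the theorem is needed; the only thing to verify is that $|D_f(G_f)|=0$.

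To verify this, I would recall the definition of a reference embedding: no 3-extrovert cycle of $G_f$ contains an edge incident to the external face $f$. By definition, $D_f(G_f)$ is the subset of $D(G_f)$ consisting of those (independent) demanding 3-extrovert cycles that contain at least one edge of $f$. Since $D(G_f) \subseteq \{\text{3-extrovert cycles of } G_f\}$ and none of the 3-extrovert cycles of $G_f$ shares any edge with $f$, we immediately conclude $D_f(G_f) = \emptyset$, hence $|D_f(G_f)| = 0$.

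Plugging $|D_f(G_f)|=0$ into the formula of Theorem~\ref{th:fixed-embedding-min-bend} gives
\[
c(G_f) \;=\; |D(G_f)| + 4 - \min\{4,\,0 + \flex(f)\} \;=\; |D(G_f)| + 4 - \min\{4,\,\flex(f)\},
\]
which is exactly the claimed identity. The main (and only) obstacle is purely a definitional unpacking, so there is no nontrivial step to worry about; the corollary is really just the statement that reference embeddings are designed precisely to eliminate the $|D_f(G_f)|$ contribution from the cost expression, thereby isolating $\flex(f)$ as the sole face-dependent quantity — a property that will be essential in Section~\ref{sse:variable-embedding} when building the \texttt{Bend-Counter}.
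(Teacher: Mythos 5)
Your proof is correct and matches the paper's argument exactly: the paper also derives the corollary by noting that $|D_f(G_f)|=0$ in a reference embedding (since no 3-extrovert cycle has an edge on the external face) and then substituting into the formula of Theorem~\ref{th:fixed-embedding-min-bend}. Nothing further is needed.
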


For a graph $G_f$ whose embedding is a reference embedding we can efficiently compute the coloring of the contour paths of its 3-extrovert cycles and its demanding 3-extrovert cycles.

\begin{restatable}{lemma}{leThreeExtrovertColoringLinear}\label{le:3-extrovert-coloring-linear}
	The red-green-orange coloring of the contour paths of the 3-extrovert cycles of $G_f$ can be computed in $O(n)$ time.
\end{restatable}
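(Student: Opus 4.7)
The plan is to post-order traverse the 3-extrovert tree $T_f$ (which is available in $O(n)$ time by \cref{le:demanding-tree-lineartime}) and apply the \textsc{3-Extrovert Coloring Rule} at each node: the color of each contour path of a 3-extrovert cycle $C$ depends only on the flexibilities of edges on $C$ and on the green contour paths of the child-cycles of $C$ in $T_f$, so the needed information is available at $C$ once its descendants have been processed. The challenge is bounding the total work by $O(n)$ rather than by the sum of cycle lengths, which can be superlinear.

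First I would build, for every 3-extrovert cycle $C$ of $G_f$, a \emph{condensed boundary representation}: a cyclic sequence that alternates between edges of $C$ not lying on the boundary of any child-cycle of $C$ in $T_f$ (the \emph{unique} edges of $C$) and maximal segments shared with the boundary of a single child-cycle. By \cref{le:independent-child-cycles} the child-cycles of $C$ in $T_f$ have pairwise disjoint vertex sets, so the decomposition is well defined. Assigning each edge of $G_f$ to the deepest 3-extrovert cycle whose boundary contains it, one verifies that each edge is unique to at most one cycle, hence $\sum_C |\mathrm{unique}(C)| = O(|E(G_f)|) = O(n)$; since $T_f$ has $O(n)$ nodes, also $\sum_C k_C = O(n)$, where $k_C$ is the number of child-cycles of $C$. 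Thus the condensed representations have total size $O(n)$ and can be assembled inside the same traversal that produces $T_f$: when the algorithm finishes a child-cycle $C_i$ it hands to its parent pointers identifying the endpoints of the shared segments.

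In the post-order phase I would maintain at each 3-extrovert cycle $C$, for each of its three contour paths $P_1, P_2, P_3$, a boolean flag $\mathrm{hasflex}(C, P_j)$ indicating whether $P_j$ contains a flexible edge, and the color of $P_j$ prescribed by the rule. Having located the three leg vertices of $C$ on the condensed boundary, each $P_j$ is walked in time proportional to the number of items along it: every unique edge is tested for flexibility in $O(1)$, and every shared segment is a sub-path of the boundary of some child-cycle $C_i$ whose precomputed $\mathrm{hasflex}$ flag and contour-path color (for the sub-path in question) are read in $O(1)$. Applying the \textsc{3-Extrovert Coloring Rule} then produces the colors of $P_1, P_2, P_3$ and the new flexible-edge flags of $C$ in $O(|\mathrm{unique}(C)| + k_C)$ time.

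The main technical obstacle is the bookkeeping needed to propagate from a child-cycle $C_i$ to its parent $C$ exactly the information attached to the portion of $C_i$ lying on $C$, without ever re-traversing the interior of $C_i$. This is handled by recording at $C_i$, at the moment it is processed, pointers both to its leg vertices and to the endpoints of the sub-paths of $C_i$ that become shared segments in $C$; these pointers let the parent query the relevant color and flexible-edge information for any such sub-path in $O(1)$ time. Summing across all cycles yields total running time $O\bigl(\sum_C (|\mathrm{unique}(C)| + k_C)\bigr) = O(n)$, as claimed.
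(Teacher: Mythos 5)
Your proposal is correct, and its skeleton coincides with the paper's: a post-order traversal of $T_f$ that applies the \textsc{3-Extrovert Coloring Rule} at each node from the already-computed colors of its child-cycles. The difference lies in how linearity is obtained. The paper never walks cycle boundaries at all: it works at the granularity of whole contour paths (of which there are $O(n)$, each touched $O(1)$ times), stores one counter $\flex(P)$ per contour path, and decides whether a contour path $P$ of $C$ contains a green contour path of a child-cycle purely by leg-face incidence --- a child $C_i$ contributes to $P$ exactly when $C_i$ has a contour path incident to the leg face of $C$ bordered by $P$. You instead build an explicit condensed boundary decomposition and amortize over unique edges plus the edges of $T_f$. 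The two devices are really two views of the same structural fact: since the graph is cubic, any vertex at which $C$ and a child-cycle $C_i$ diverge must be a leg vertex of $C_i$, so your maximal shared segments are unions of whole contour paths of $C_i$; this is what makes your $O(1)$ per-segment queries legitimate and what the paper exploits implicitly through leg faces. Your route pays for this with heavier bookkeeping (segment endpoints and pointers handed up the tree), but it makes the amortization --- which the paper asserts rather than proves, via the $O(n)$ bound on the number of contour paths --- completely explicit. Both arguments are valid.
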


\begin{proof}
	We color the contour paths through a postorder visit of the 3-extrovert tree~$T_f$ of~$G_f$. Every contour path $P$ of each 3-extrovert cycle is equipped with a counter $\flex(P)$ that reports the number of flexible edges in $P$ (all values of $\flex(P)$ can be computed in $O(n)$ time through a bottom-up visit of $T_f$).
	
	
	The algorithm consists of two steps. In the first step we assign an orange or green color to some contour paths. The color of a contour path $P$ may remain undefined at the end of this step. In the second step we assign the colors to the undefined contour paths. We now describe the two steps and then discuss the time complexity.
	
	\begin{description}
		\item\textsf{First Step}: Let $C$ be a non-root node of $T_f$, $P$ be a contour path of $C$, and $f'$ be the leg face of $C$ incident to $P$. Let $C_1,...,C_k$ be the child-cycles of $C$ in $T_f$ having $f'$ as a leg face. We assume that the contour paths of the cycles $C_1,...,C_k$ are already colored. Denote by $P_i$ the contour path of cycle $C_i$, $1 \leq i \leq k$, incident to $f'$.
		\begin{itemize}
			\item If $\flex(P)>0$ (i.e., $P$ contains a flexible edge) $P$ is colored orange (see Case~2(a) of the \textsc{3-Extrovert Coloring Rule}).
			\item If $\flex(P)=0$ and if it exists a path $P_i$, $1 \le i \le k$, such that $P_i$ is green, we color $P$ of green (see Case~2(b) of the \textsc{3-Extrovert Coloring Rule}).
			\item Otherwise the color of $P$ remains undefined.
		\end{itemize}
		
		\item\textsf{Second Step}: of the 3-extrovert cycle $C$ are colored according to the first step.
		Let $C$ be a non-root node of $T_f$. If the color of all the paths of $C$ is undefined, we color all of them as green (see Case~1 of the \textsc{3-Extrovert Coloring Rule}). Otherwise, we color the undefined colored paths as red (see Case~2(c) of the \textsc{3-Extrovert Coloring Rule}).
		
	\end{description}
	
	The correctness of the algorithm above is easily established by observing that whenever a contour path is given the red, green, or orange color this is done in accordance with the \textsc{3-Extrovert Coloring Rule}. As for the time complexity, observe that the algorithm considers each contour path a constant number of times and that the number of contour paths is $O(n)$.
\end{proof}

\begin{restatable}{lemma}{leDemandingExtrovertLineartime}\label{le:demanding-3-extrovert-lineartime}
	The demanding 3-extrovert cycles of $G_f$ can be computed in $O(n)$ time.
\end{restatable}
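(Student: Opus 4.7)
The plan is to identify demanding 3-extrovert cycles with those cycles on which Case~1 of the \textsc{3-Extrovert Coloring Rule} is triggered, and then to extract them as a byproduct of the coloring algorithm from \cref{le:3-extrovert-coloring-linear}.

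First, I would build the 3-extrovert tree $T_f$ in $O(n)$ time by \cref{le:demanding-tree-lineartime} and compute the red-green-orange coloring of all contour paths of 3-extrovert cycles in $O(n)$ time by \cref{le:3-extrovert-coloring-linear}. Recall that a 3-extrovert cycle $C$ is demanding if and only if no contour path of $C$ is orange and no contour path of $C$ shares edges with a green contour path of a child-cycle of $C$ in $T_f$.

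Next, I would argue that these two conditions hold exactly when Case~1 of the coloring rule is applicable to $C$. By Case~2(a), a contour path is orange precisely when it contains a flexible edge, so the first condition amounts to saying that no contour path of $C$ contains a flexible edge. By Case~2(b), a contour path of $C$ with no flexible edge is colored green precisely when it contains a green contour path of a child-cycle of $C$; thus the second condition amounts to saying that no contour path of $C$ contains a green contour path of a child-cycle. Together, these two conditions state that no contour path of $C$ contains either a flexible edge or a green contour path of a child-cycle, which is exactly the hypothesis of Case~1. Conversely, if Case~1 applies to $C$ then by the rule itself neither contingency occurs, so $C$ is demanding.

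Finally, I would modify the coloring algorithm of \cref{le:3-extrovert-coloring-linear} so that when a cycle $C$ has all three contour paths still undefined at the end of the First Step (equivalently, Case~1 is about to be applied to $C$ in the Second Step), $C$ is flagged as demanding; otherwise it is not. This flag costs only $O(1)$ extra work per 3-extrovert cycle, and since $T_f$ has $O(n)$ nodes, the overall additional cost is $O(n)$. The main subtlety is the equivalence between ``$C$ demanding'' and ``Case~1 applies to $C$'': one must be careful because Case~2 may color individual contour paths of $C$ red, and in isolation red paths contain neither flexible edges nor green contours of children, so one might be tempted to count $C$ as demanding; however, the very fact that Case~2 is triggered means that \emph{some} other contour path of $C$ is green or orange, which by definition excludes $C$ from being demanding.
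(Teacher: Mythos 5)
Your proof is correct and follows essentially the same route as the paper: both build $T_f$, run the coloring of \cref{le:3-extrovert-coloring-linear}, and then read off the demanding cycles from the coloring. The only (valid) difference is in the extraction step: the paper keeps the cycles whose three contour paths are green and then re-checks, in $O(n_C)$ time per cycle, that no contour path contains a green contour path of a child, whereas you observe that this is exactly the condition under which Case~1 of the \textsc{3-Extrovert Coloring Rule} fires (equivalently, all three paths are still undefined after the First Step) and flag it during the coloring itself -- a slightly more direct $O(1)$-per-cycle bookkeeping that yields the same $O(n)$ bound.
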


\begin{proof}
	By \cref{le:demanding-tree-lineartime}, $T_f$ is computed in $O(n)$ time. By \cref{le:3-extrovert-coloring-linear} the red-green-orange coloring of the contour paths of the 3-extrovert cycles of $G_f$ is computed in $O(n)$ time. Let $C$ be a 3-extrovert cycle of $T_f$. If $C$ has a contour path that is not green it is not demanding. Suppose that $C$ has three green contour paths. Let $C^*$ be a child-cycle of $C$. A contour path $P^*$ of $C^*$ is contained in a contour path $P$ of $C$ if the leg face of $C^*$ incident to $P^*$ is equal to the leg face of $C$ incident to $P$. Let $n_C$ be the number of child-cycles of $C$ in $P$. Every 3-extrovert cycle has three contour paths, hence we can check if $C$ has a contour path containing a green contour path of one of its child-cycles in $O(n_C)$ time. By definition $C$ is a demanding if and only if it has no contour path containing a green contour path of one of its child-cycles. We can perform this operation for all the 3-extrovert cycles of $T_f$ in overall $O(n)$ time.
\end{proof}

The next lemma proves that in a reference embedding we can establish a one-to-one correspondence between the set of (non-trivial) 3-extrovert cycles and the set of (non-trivial) 3-introvert cycles: every 3-extrovert cycle can be associated with a 3-introvert cycle with the same set of legs, and vice-versa.

\begin{restatable}{lemma}{leThreeExtroThreeIntro}\label{le:3-extro-3-intro}
Let $G_f$ be a plane triconnected cubic graph whose embedding is a reference embedding and let $\mathcal{E}$ and $\mathcal{I}$ be the sets of 3-extrovert and 3-introvert cycles of $G_f$, respectively. There is a one-to-one correspondence $\phi: \mathcal{E} \rightarrow \mathcal{I}$ such that $C$ and $\phi(C)$ have the same legs, for every $C \in \mathcal{E}$. 	
\end{restatable}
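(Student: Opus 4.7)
I plan to prove the lemma by constructing $\phi$ explicitly and verifying bijectivity. Let $C\in\mathcal{E}$ be a non-trivial 3-extrovert cycle with legs $e_i=(v_i,w_i)$, where $v_i\in V(C)$ and $w_i$ lies outside $C$ ($i=1,2,3$). Because $G_f$ is cubic, each vertex of $C$ has at most one edge outside $C$, so $v_1,v_2,v_3$ are three distinct vertices and the three legs are the only edges of $G_f$ between $V(G_f(C))$ and its complement. Removing $e_1,e_2,e_3$ from $G_f$ therefore partitions $G_f$ into two vertex-induced subgraphs: $A$ on $V(G_f(C))$ and $B$ on the remaining vertices (which contain $w_1,w_2,w_3$).

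The first key step is to argue, using the 3-connectivity of $G_f$, that $B$ is $2$-connected. If $B$ had two connected components, at least one of them, say $B'$, would contain at most one $w_i$; removing the leg vertex $v_i$ associated with that unique $w_i$ (or any single $v_j$, if $B'$ contains no $w_i$) would separate $B'$ from the rest of $G_f$, contradicting triconnectivity. Similarly, if $B$ had a cut vertex $x$, then in some component $B'$ of $B\setminus x$ not all three $w_i$ appear: one can then select a leg vertex $v_j$ such that $\{x,v_j\}$ is a $2$-cut of $G_f$, again a contradiction. Hence $B$ is $2$-connected, and so every face of $B$ is bounded by a simple cycle.

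Next, I identify the relevant face of $B$: namely, the face $F$ formed by the merger, within $B$, of all faces of $G_f$ incident to at least one of $e_1,e_2,e_3$ (this merger occurs precisely because the $e_i$ were the only edges separating these faces from the $A$-side). The boundary $C'$ of $F$ is a simple cycle in $B$ passing through $w_1,w_2,w_3$. Reinstating $A$ and the legs in the plane embedding, $C'$ encloses the region containing $G_f(C)$ and the three legs, so $e_1,e_2,e_3$ become internal legs of $C'$ with leg vertices $w_1,w_2,w_3$ on $C'$ and far endpoints $v_1,v_2,v_3$ inside $C'$. A chord of $C'$ embedded on the inside of $C'$ would be an edge between two vertices of $B$ drawn in the region that contains $G_f(C)$, which is impossible by planarity. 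Consequently $C'$ is a (non-trivial, since $v_1,v_2,v_3$ are distinct) 3-introvert cycle with the same three legs as $C$, and I set $\phi(C)=C'$.

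Finally, the construction is completely symmetric in the roles of $A$ and $B$: starting from a $3$-introvert cycle $C'\in\mathcal{I}$ and removing its three internal legs, the analogous argument applied on the $A$-side yields a simple cycle through $v_1,v_2,v_3$ which, restored to the full graph, is a $3$-extrovert cycle with the same legs as $C'$. This recipe provides an inverse $\phi^{-1}:\mathcal{I}\to\mathcal{E}$ that satisfies $\phi^{-1}(\phi(C))=C$ and $\phi(\phi^{-1}(C'))=C'$, so $\phi$ is the desired leg-preserving bijection. The main technical hurdle is the $2$-connectivity of $B$: that step is where the triconnectivity of $G_f$ is essential and where some careful case analysis on the distribution of the $w_i$ among the components of $B$ (or of $B\setminus x$) must be performed.
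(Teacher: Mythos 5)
Your map is, after unwinding, the same one the paper constructs: the boundary of the face of $B$ obtained by merging the three leg faces (and the interior of $C$) is exactly the union of the three paths joining $w_1,w_2,w_3$ along the leg faces and avoiding the legs, which is the paper's $\phi(C)$; likewise your inverse is the paper's ``outer cycle of $G_f(C')$ minus $C'$ and its legs''. Where you differ is in how simplicity of $C'$ is certified: you prove $2$-connectivity of $B$ from triconnectivity and then invoke the fact that faces of a $2$-connected plane graph have simple boundaries, whereas the paper gets simplicity directly from the observation that any two leg faces of $C$ share only a leg. Your route is sound but heavier than needed.

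There is, however, one genuine gap: nowhere do you use the hypothesis that $G_f$ is a \emph{reference} embedding, and the step ``reinstating $A$ and the legs, $C'$ encloses the region containing $G_f(C)$, so $e_1,e_2,e_3$ become \emph{internal} legs of $C'$'' is exactly where that hypothesis is required. If one of the leg faces of $C$ is the external face of $G_f$, then the merged face $F$ is the \emph{unbounded} face of $B$, so $C'$ is the outer boundary of $B$ and, after reinstating $A$ and the legs, $G_f(C)$ and the three legs lie \emph{outside} $C'$; the legs are then external legs of $C'$, $C'$ is not a $3$-introvert cycle with those legs, and in general no $3$-introvert cycle with the same legs exists (the cycles $C_5$ and $C_6$ in the non-reference embedding of \cref{fi:reference-embedding-a} illustrate this). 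The repair is short — in a reference embedding no $3$-extrovert cycle has an edge on the external face, hence no leg face of $C$ is external, hence $F$ is a bounded face of $B$ — but as written your proof asserts the enclosure without justification and would prove a statement that is false for arbitrary embeddings.
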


\begin{proof}
	Consider a (non-trivial) 3-extrovert cycle $C \in \mathcal{E}$. For $i = 1,2,3$, let $e_i=(u_i,v_i)$ be the legs of $C$, where $u_i$ belongs to $C$.
	Finally, let $f_{ij}$ be the leg face of $C$ that contains the legs $e_i$ and $e_j$ $(1 \leq i<j \leq 3)$.
	Observe that, vertices $v_1$, $v_2$, $v_3$ are distinct, as otherwise $C$ would be trivial or $G$ would be not triconnected. Hence we can consider the cycle $\phi(C)$ formed by the union of the three paths from $v_i$ to $v_j$ along $f_{ij}$, not passing for any leg of $C$. \cref{fi:reference-extro-intro} shows a plane graph $G_f$ with a reference embedding, a 3-extrovert cycle $C$ and its correspondent 3-introvert cycle $\phi(C)$.
	Note that $\phi(C)$ is a simple cycle because any two leg faces of $C$ only share a leg.
	Since the embedding of $G_f$ is a reference embedding, any leg of $C$ does not belong to the external face $f$, and therefore every $f_{ij}$ is an internal face. It follows that $\phi(C)$ is a 3-introvert cycle with legs $e_i=(u_i,v_i)$ $(i = 1,2,3)$.
	
	\begin{figure}[htb]
		\centering
		\subfloat[]{\label{fi:reference-extro-intro}\includegraphics[width=0.48\columnwidth]{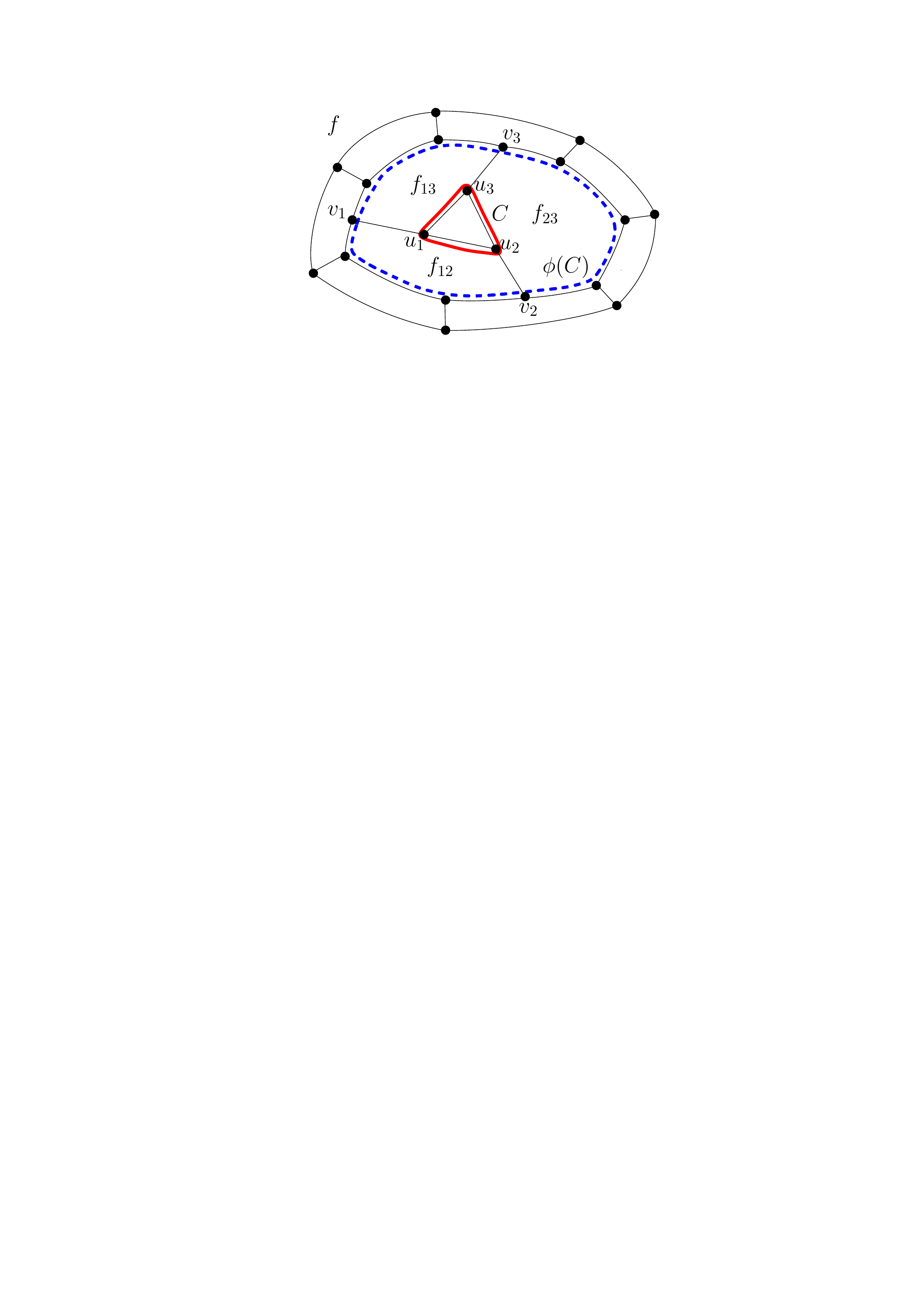}}
		\hfil
		\subfloat[]{\label{fi:3-introvert-cycles-intersecting}\includegraphics[width=0.48\columnwidth]{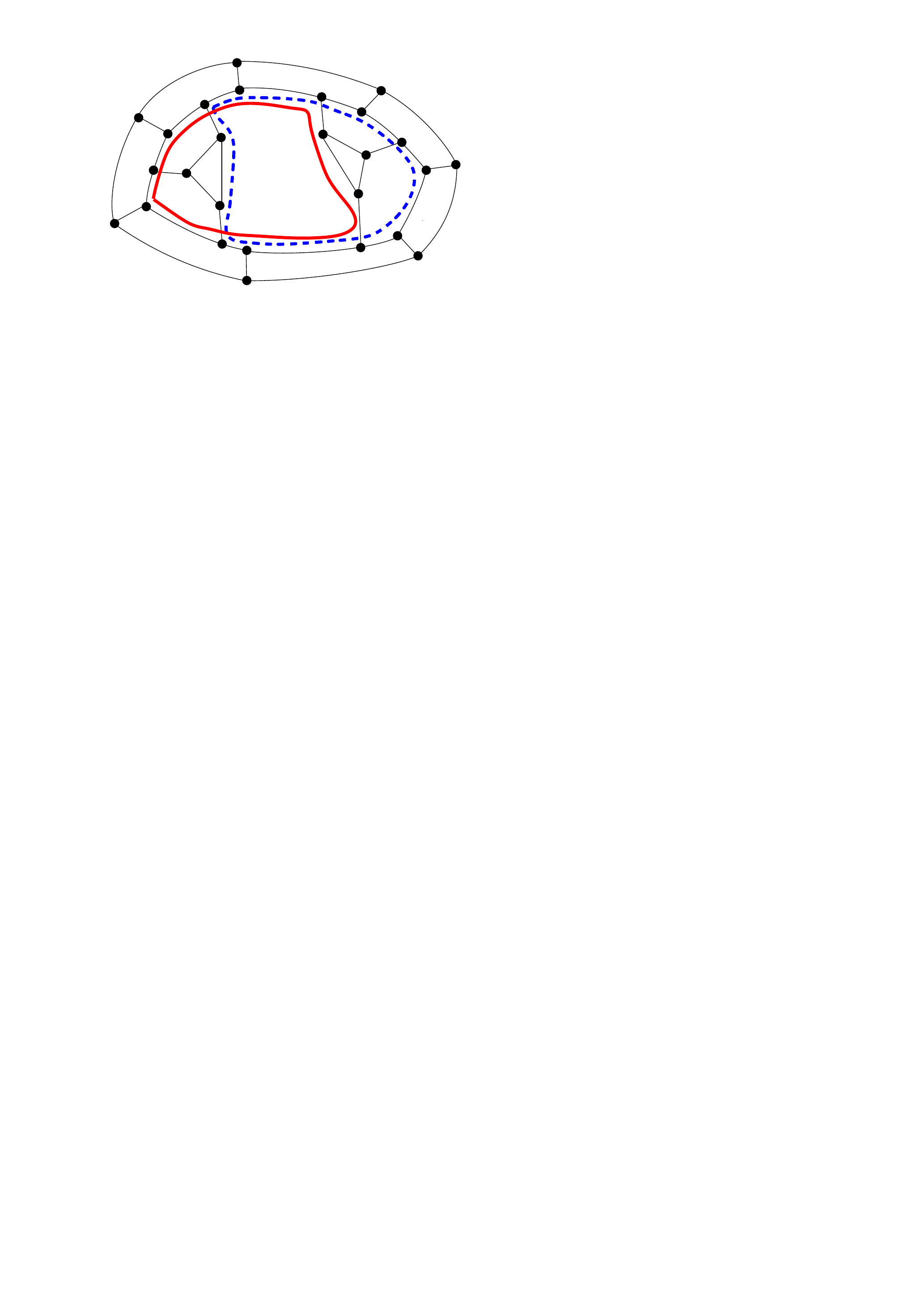}}
		\caption{(a) A plane graph $G_f$ with a reference embedding, a 3-extrovert cycle $C$ and its correspondent 3-introvert cycle $\phi(C)$. (a) Two 3-introvert extrovert cycles that are not edge-disjoint.}
	\end{figure}

	Vice versa, let $C'$ be a 3-introvert cycle of $G_f$. Consider the plane subgraph $G' \subseteq G_f(C')$ obtained by removing $C'$ and its legs. The cycle $C_o(G')$ is a simple cycle and we can set $\phi^{-1}(C')=C_o(G')$. Clearly, $\phi^{-1}(C')$ is a 3-extrovert cycle whose legs coincide with those of $C'$. 	
\end{proof}

Note that, if $C$ and $C'$ are two distinct 3-extrovert cycles of $G_f$, the corresponding 3-introvert cycles $\phi(C)$ and $\phi(C')$ are not necessarily edge-disjoint (see \cref{fi:3-introvert-cycles-intersecting}).

\subsubsection{Demanding 3-introvert cycles.}\label{sse:demanding-3-introvert}
Let $G_f$ be a plane 3-graph whose embedding is a reference embedding.
When we change the planar embedding of $G_f$, by choosing any external face $f' \neq f$, some 3-introvert cycles of $G_f$ may become demanding 3-extrovert cycles in $G_{f'}$. We call such 3-introvert cycles of $G_f$ the \emph{demanding 3-introvert cycles} of $G_f$.
To efficiently identify the set of demanding 3-introvert cycles of $G_f$, we give a characterization of this set (see \cref{le:demanding-3-introvert-charact}) that uses a suitable red-green-orange coloring of the contour paths of the 3-introvert cycles. Let $C$ be any 3-extrovert cycle of $G_f$, let $\phi(C)$ be the 3-introvert cycle associated with $C$ according to \cref{le:3-extro-3-intro}, and let $C'$ be the parent node of $C$ in $T_f$. We assume that the three contour paths of $C$ are colored red, green, or orange according to the \textsc{3-Extrovert Coloring Rule} of \cref{sse:fixed-embedding}. The coloring of the contour paths of $\phi(C)$ depends on the coloring of the siblings of $C$ in $T_f$; in the case that $C'$ is not the root $C_f$ of $T_f$, the coloring of $\phi(C)$ also depends of the coloring of $\phi(C')$; note that if $C'=C_f$, cycle $\phi(C')$ is not defined. See \cref{fi:introvert-colouration-1} for an illustration.

\medskip\noindent \textsc{3-Introvert Coloring Rule}: The three contour paths of $\phi(C)$ are colored according to the following two cases.
\begin{enumerate}
\item $\phi(C)$ has no contour path that contains either a flexible edge or a green contour path of a sibling of $C$ in $T_f$ or a green contour path of $\phi(C')$ (if $\phi(C')$ is defined); in this case all three contour paths of $\phi(C)$ are colored green.
\item Otherwise, let $P$ be a contour path of $\phi(C)$.
\begin{itemize}
	\item [(a)] If $P$ contains a flexible edge then $P$ is colored orange.
	\item [(b)] If $P$ does not contain a flexible edge and it contains either a green contour path of a sibling of $C$ in $T_f$ or a green contour path of $\phi(C')$ (if $\phi(C')$ is defined), then $P$ is colored green.
	\item [(c)] Otherwise $P$ is colored red.
\end{itemize}
\end{enumerate}
\medskip

\cref{fi:introvert-colouration-1} illustrates a 3-extrovert tree (\cref{fi:introvert-colouration-1-a}) and the different cases of the \textsc{3-Introvert Coloring Rule}. Refer to \cref{fi:introvert-colouration-1-b} and focus on the 3-extrovert cycle $C_1$ and the 3-introvert cycle $\phi(C_1)$, the leg faces of which are shaded: the contour paths of $\phi(C_1)$ do no contain a flexible edge or a green contour path of a sibling of $C_1$, which are $C_2$ and $C_3$; the parent node of $C_1$ is $C_f$, that is not associated to a 3-introvert cycle. Consequently, $\phi(C_1)$ has three three green contour paths according to Case~1 of the \textsc{3-Introvert Coloring Rule}. Also in the case of cycle $C_2$ in \cref{fi:introvert-colouration-1-c} the parent node is $C_f$; one of the leg faces of  $C_2$ is adjacent to a green path of a sibling of $C_2$ (namely a contour path of $C_3$) and $\phi(C_2)$ does not contain any flexible edge. Hence, two contour paths of $\phi(C_2)$ are red and one is green according to Case~2(b) and Case~2(c) of the \textsc{3-Introvert Coloring Rule}. The case of cycle $C_3$ \cref{fi:introvert-colouration-1-d} is similar. The parent node of cycle $C_4$ in \cref{fi:introvert-colouration-2-a} is $C_1$: $\phi(C_4)$ has a contour path containing a flexible edge; $C_4$ does not have any sibling; there is no contour path of $\phi(C_4)$ containing a green contour path of $\phi(C_1)$. Therefore, $\phi(C_4)$ has one orange contour path and two red contour paths according to Case~2(a) and Case~2(c) of the \textsc{3-Introvert Coloring Rule}. Finally, focus on $\phi(C_5)$ in \cref{fi:introvert-colouration-2-b}: the contour paths of $\phi(C_5)$ do not contain any flexible edge; $C_5$ does not have any sibling; a contour path of $\phi(C_5)$ contains a green contour path of the 3-introvert cycle $\phi(C_3)$, which is associated to the parent $C_3$ of $C_5$ in $T_f$. Consequently, two contour paths of $\phi(C_5)$ are red ad one is green according to  according to Case~2(b) and Case~2(c) of the \textsc{3-Introvert Coloring Rule}.

\begin{figure}[!ht]
	\centering
	\hfil
	\subfloat[]{\label{fi:introvert-colouration-1-a}\includegraphics[width=0.49\columnwidth]{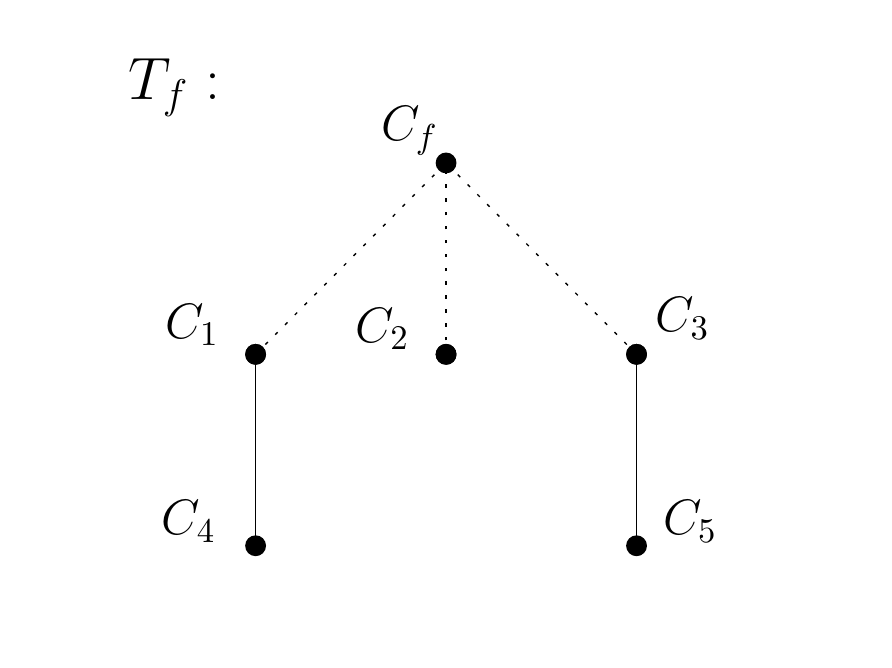}} 
	\hfil
	\subfloat[]{\label{fi:introvert-colouration-1-b}\includegraphics[width=0.47\columnwidth]{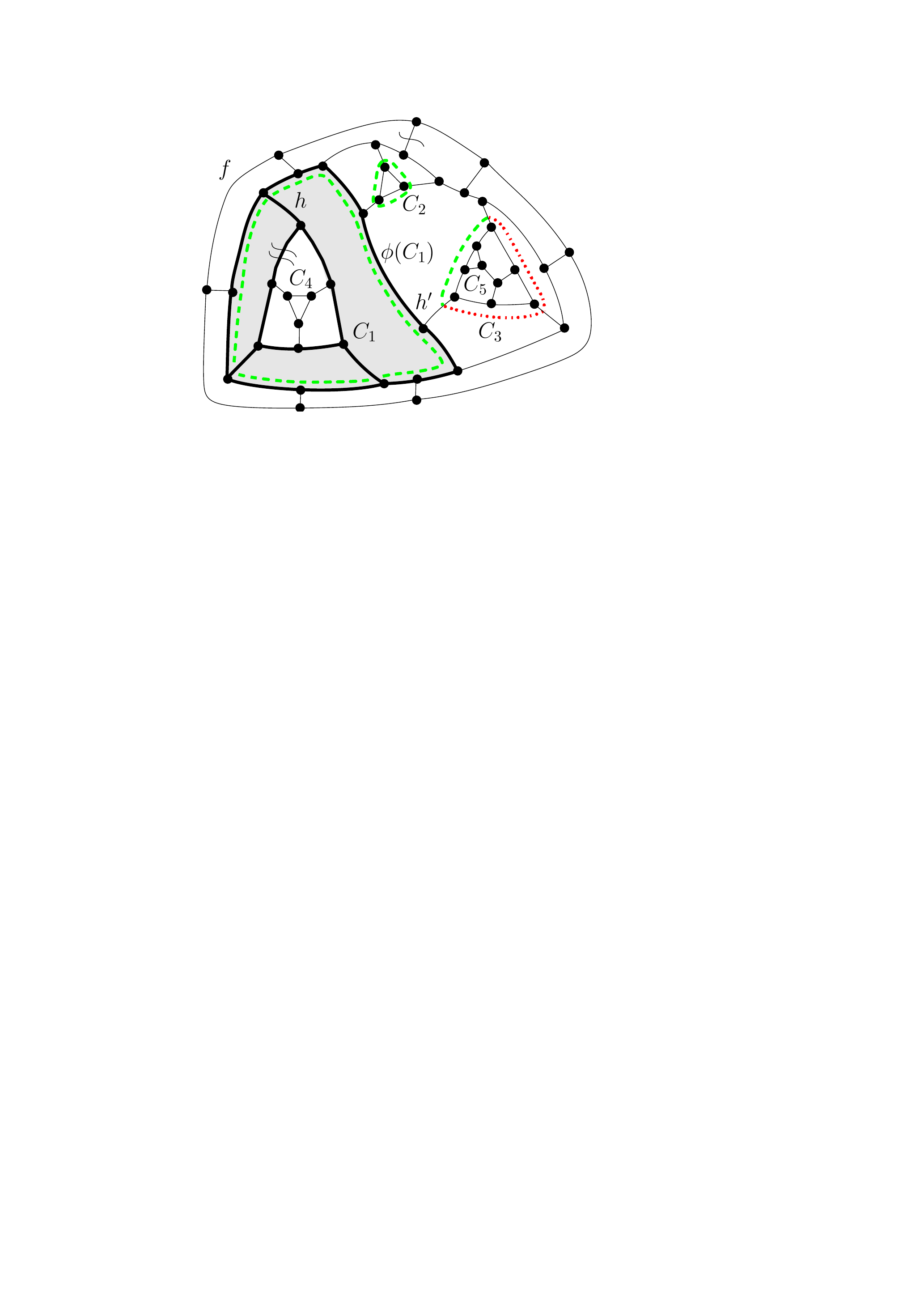}} 
	\hfil
	\\
	\hfil
	\subfloat[]{\label{fi:introvert-colouration-1-c}\includegraphics[width=0.47\columnwidth]{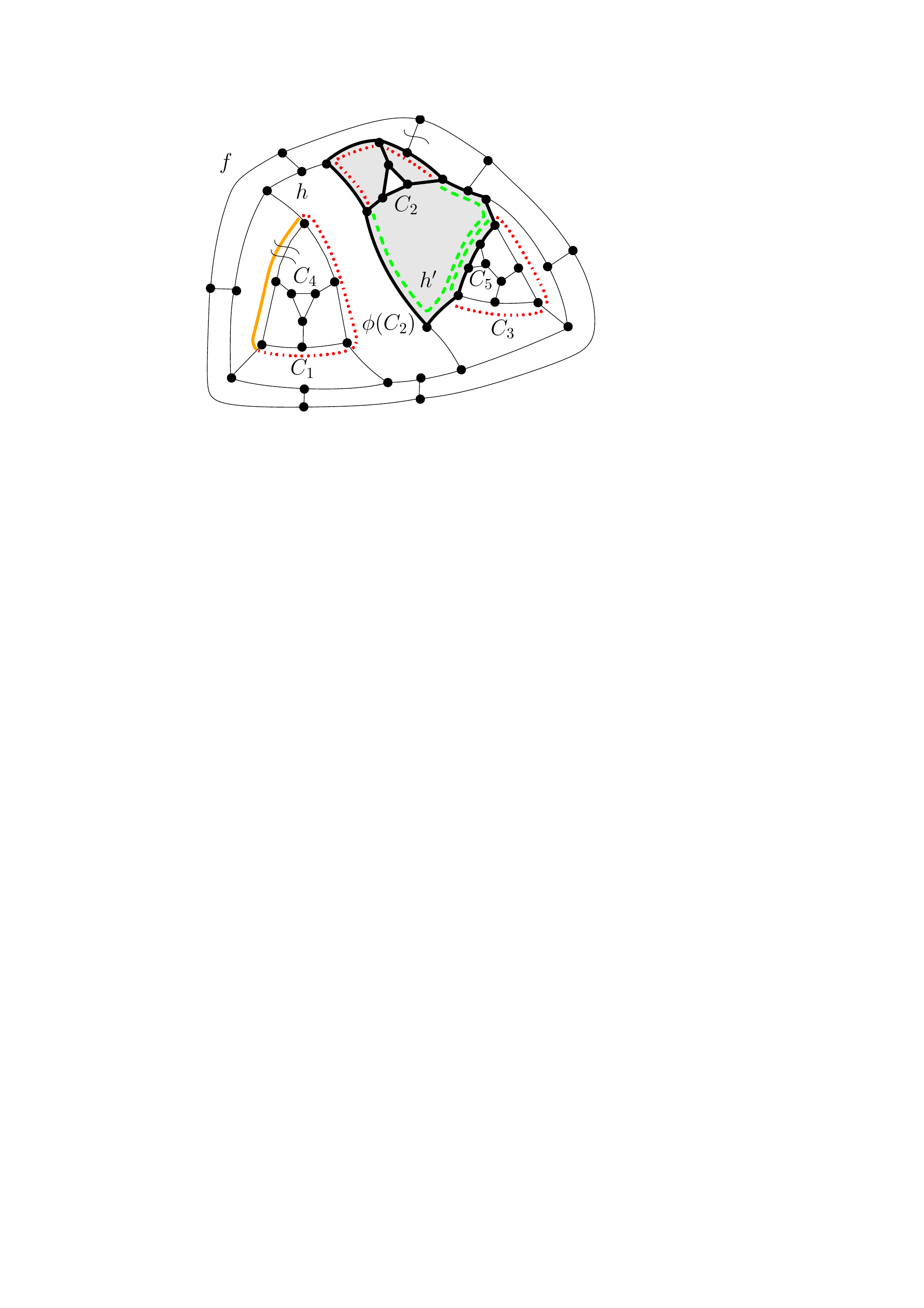}} 
	\hfil
	\subfloat[]{\label{fi:introvert-colouration-1-d}\includegraphics[width=0.47\columnwidth]{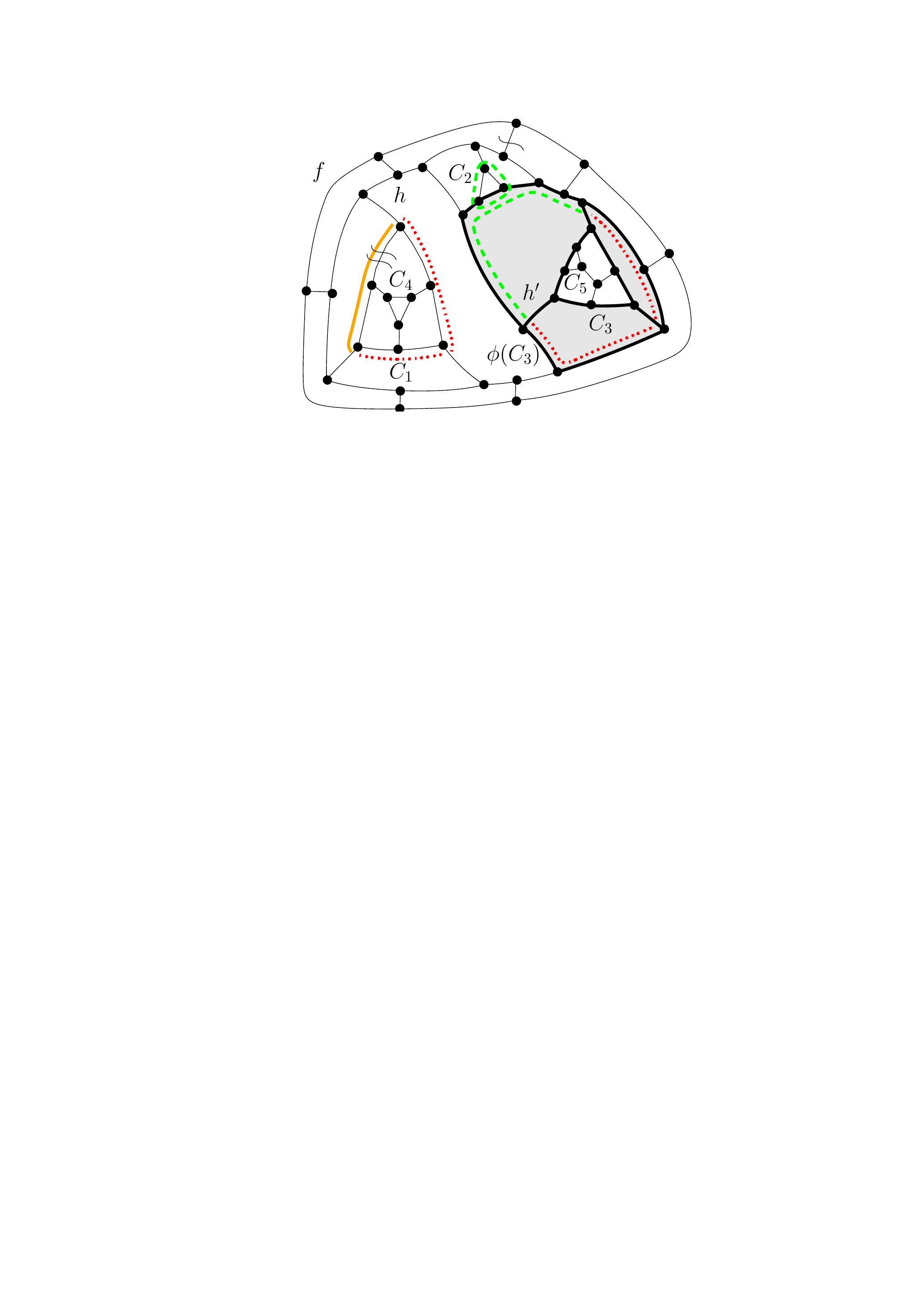}} 
	\hfil
	\\
	\hfil
	\subfloat[]{\label{fi:introvert-colouration-2-a}\includegraphics[width=0.47\columnwidth]{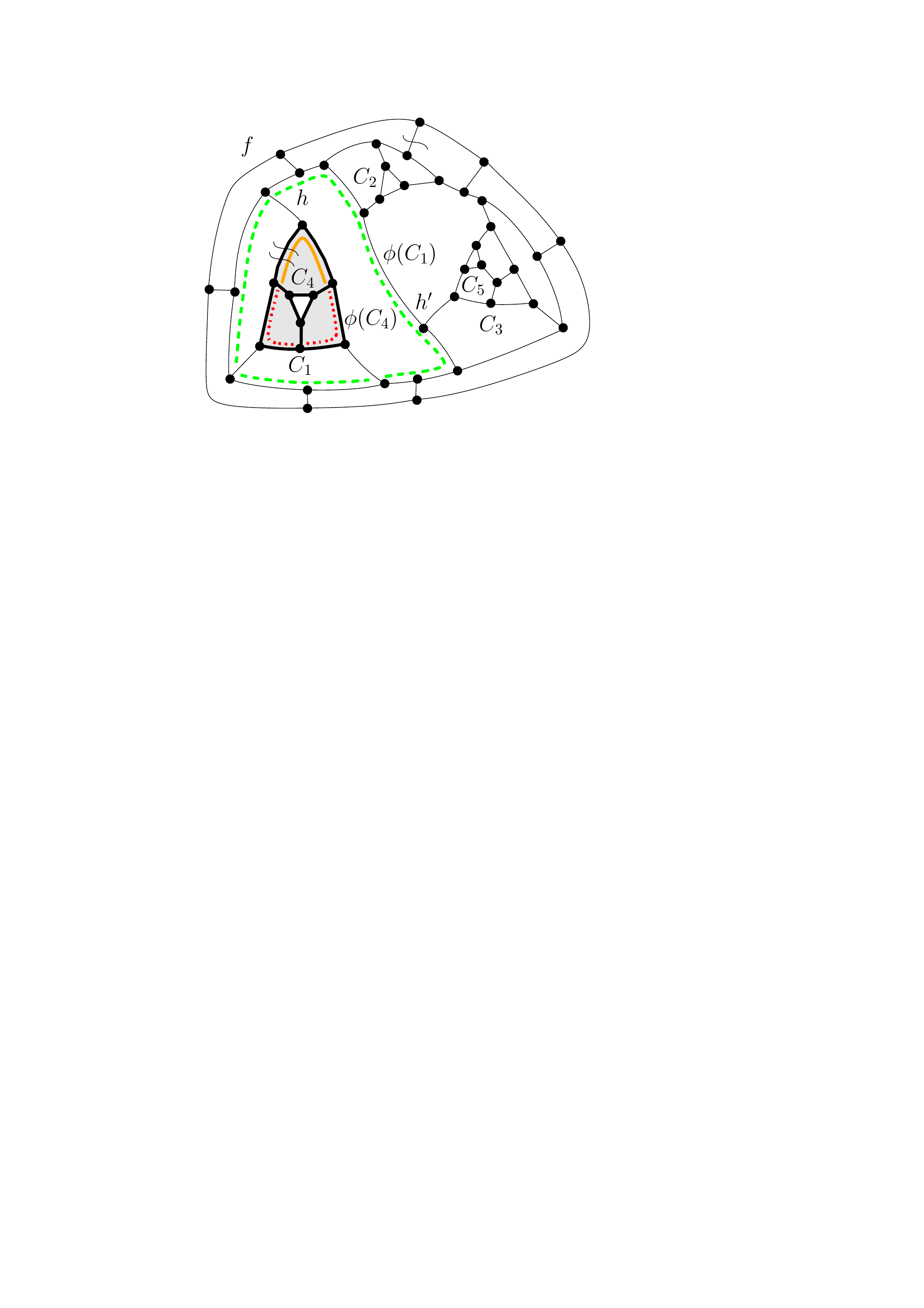}} 
	\hfil
	\subfloat[]{\label{fi:introvert-colouration-2-b}\includegraphics[width=0.47\columnwidth]{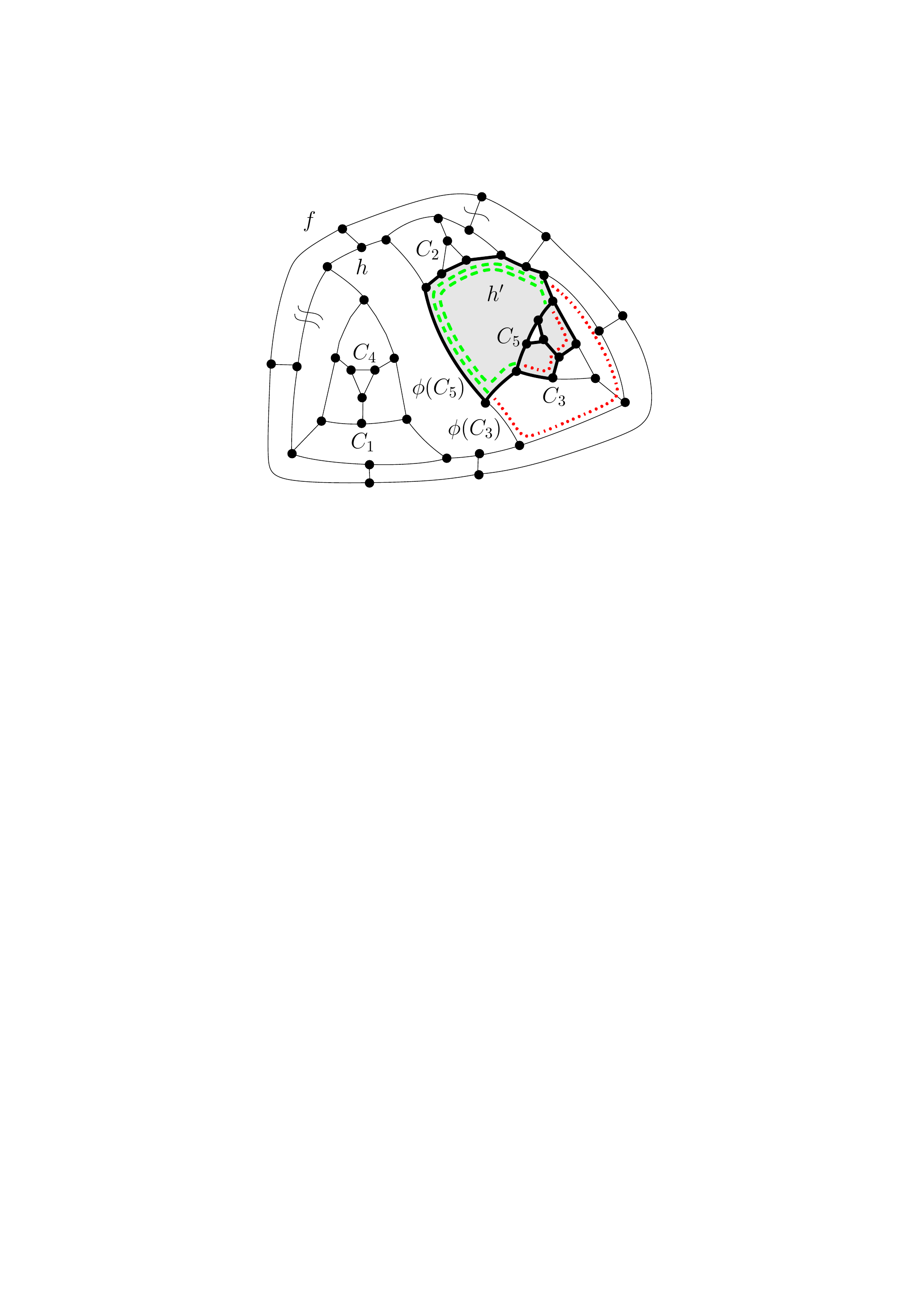}} 
	\hfil
	\caption{Red-green-orange coloring of the contour paths of 3-introvert cycles. (a) A 3-extrovert tree $T_f$ of a reference embedding $G_f$.
		(b--f) The red-green-orange coloring of the contour paths of $\phi(C_1)$, $\phi(C_2)$, $\phi(C_3)$, $\phi(C_4)$, $\phi(C_5)$, respectively. The contour paths of the 3-extrovert cycles $C_1$, $C_2$, and $C_3$ are also shown.
	}\label{fi:introvert-colouration-1}
\end{figure}

The following properties are used in the proof of several lemmas in this section.
\begin{property}\label{pr:intro-siblings}
	Let $\phi(C)$ be the 3-introvert cycle associated with a 3-extrovert cycle $C$ of $G_f$. Let $P$ be a contour path of $\phi(C)$ and let $f'$ be the leg face of $\phi(C)$ incident to $P$. $P$ contains a contour path of a sibling $C''$ of $C$ if and only if
	$C''$ has a contour path incident to $f'$.
\end{property}
\begin{property}\label{pr:spartizione-faccia}
	Let $\phi(C)$ be the 3-introvert cycle associated with a 3-extrovert cycle $C$ of $G_f$. Let $f'$ be a leg face of $\phi(C)$. The contour path $P$ of $\phi(C)$ incident to $f'$ contains all the edges of $f'$ not contained in $C$ and that are not legs of $C$ and $\phi(C)$.
\end{property}

\cref{le:introvert-coloring-algo,le:demanding-3-introvert-algo} show that we can efficiently color the 3-introvert cycles and compute the demanding 3-introvert cycles of $G_f$. The proof of \cref{le:demanding-3-introvert-algo} exploits the characterization of \cref{le:demanding-3-introvert-charact}, which in turn uses \cref{le:extrovert-extrovert-coloring} and \cref{le:extrovert-introvert-coloring}.

\begin{restatable}{lemma}{leIntrovertColoringAlgo}\label{le:introvert-coloring-algo}
	The red-green-orange coloring of the contour paths of the 3-introvert cycles of $G_f$ that satisfies the \textsc{3-Introvert Coloring Rule} can be computed in $O(n)$ time.
\end{restatable}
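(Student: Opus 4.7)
The plan is to perform a top-down visit of the 3-extrovert tree $T_f$, colouring the contour paths of $\phi(C)$ only after those of $\phi(C')$, where $C'$ is the parent of $C$ in $T_f$. The \textsc{3-Introvert Coloring Rule} asks, for each contour path $P$ of $\phi(C)$ incident to a leg face $f'$ of $\phi(C)$, only three questions: does $P$ contain a flexible edge? does $P$ contain a green contour path of a sibling of $C$ in $T_f$? does $P$ contain a green contour path of $\phi(C')$? By Property~\ref{pr:intro-siblings}, the second question reduces to checking whether some sibling of $C$ in $T_f$ has a (green) contour path incident to $f'$. By Property~\ref{pr:spartizione-faccia}, every contour path of a 3-introvert cycle is a subset of the edges of its incident leg face; hence the third question reduces to checking whether $f'$ is also a leg face of $\phi(C')$ and whether the corresponding contour path of $\phi(C')$ was coloured green.

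Before the top-down visit, I would carry out the following $O(n)$-time preprocessing. First, compute $T_f$ and colour all 3-extrovert cycles via Lemmas~\ref{le:demanding-tree-lineartime} and \ref{le:3-extrovert-coloring-linear}. Second, for each contour path of each 3-introvert cycle, compute a counter $\flex(\cdot)$ by a single scan of the edges of $G_f$, exploiting Property~\ref{pr:spartizione-faccia}. Third, for every internal node $C'$ of $T_f$ process its children $C_1,\ldots,C_k$ jointly: for every leg face appearing among the leg faces of the $C_i$'s, record the colours of the contour paths of the $C_i$'s incident to it; then, for every child $C_i$ and every one of its three leg faces $f'$, store the Boolean flag ``does some $C_j$ with $j\ne i$ have a green contour path on $f'$?''. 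Since every 3-extrovert cycle has exactly three leg faces, this step costs $O(k)$ per parent and $O(n)$ in total.

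During the top-down visit I would process each node $C$ of $T_f$ in $O(1)$ time per leg face of $\phi(C)$. Just before recursing into the subtree of a node $C'$, I stamp each leg face of $\phi(C')$ with the colour of the corresponding contour path of $\phi(C')$, removing the stamps when the recursion on that subtree is over. When I process a child $C$ of $C'$, for each leg face $f'$ of $\phi(C)$ I then have $O(1)$ access to: the value $\flex(P)$ (from preprocessing); the bit telling whether a sibling of $C$ has a green contour path on $f'$ (from preprocessing); and the stamp on $f'$, which gives the colour (if any) of the contour path of $\phi(C')$ incident to $f'$. The colour of $P$ is then determined by a direct application of the \textsc{3-Introvert Coloring Rule}.

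The main obstacle is ensuring constant-time sibling and parent-3-introvert lookups per leg face, without re-scanning edges of $f'$ or iterating over arbitrarily many cycles sharing a face. This is exactly what the per-child-per-leg-face flags prepared in the third preprocessing step and the per-face colour stamps maintained along the top-down traversal give: both reduce each query to an $O(1)$ operation. Since $T_f$ has $O(n)$ nodes and each is visited a constant number of times with $O(1)$ work per leg face, the overall running time is $O(n)$, proving the lemma.
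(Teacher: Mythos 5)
Your proposal is correct and follows essentially the same route as the paper's proof: a preorder (top-down) traversal of $T_f$ after coloring the 3-extrovert cycles, with \cref{pr:intro-siblings} reducing the sibling query to a per-leg-face green flag and \cref{pr:spartizione-faccia} reducing both the parent-$\phi(C')$ query and the flexibility query to $O(1)$ lookups (the paper maintains a dynamic counter $c_{green}(f')$ and an undecided/recolor pass where you use precomputed flags and face stamps, but these are implementation variants of the same idea). One small caution: the total length of all 3-introvert contour paths can be superlinear, so the counters $\flex(P)$ for these paths must be obtained by subtraction from per-face and per-3-extrovert-contour-path totals as \cref{pr:spartizione-faccia} permits (as the paper does via $\flex(f')-\flex(P)-q$), not by literally enumerating the edges of each such path.
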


\begin{proof}
	Color the contour paths of the $3$-extrovert cycles of $G_f$ according to the \textsc{3-Extrovert Coloring Rule} by applying the algorithm in the proof of \cref{le:3-extrovert-coloring-linear}. Each contour path of every 3-extrovert or 3-introvert cycle of $G_f$ is equipped with a pointer to the corresponding leg face (this set of pointers can be constructed in $O(n)$ time through a visit of the graph). Also, every contour path $P$ of each 3-extrovert cycle is equipped with a counter $flex(P)$ that reports the number of flexible edges in $P$ (all values of $flex(P)$ can be computed in $O(n)$ time through a bottom-up visit of $T_f$). Every face $f'$ of $G_f$ has two counters, $flex(f')$ and $c_{green}(f')$. Counter $flex(f')$ reports the number of flexible edges in $f'$ (this set of counters can be constructed in $O(n)$ time through a visit of the graph) and $c_{green}(f')$ is initialized to zero.
	
	We perform a preorder visit of the 3-extrovert tree $T_f$ of $G_f$. Let $C_1, C_2, \dots, C_k$  be the child-cycles of a 3-extrovert cycle $C$ of $T_f$. We color the contour paths of $\phi(C_1), \phi(C_2), \dots, \phi(C_k)$ by performing three algorithmic steps. Note that, since $T_f$ is visited in preorder, we assume that the contour paths of $\phi(C)$ are already colored, unless $C=C_f$, in which case there is no 3-introvert cycle associated with the root of~$T_f$.
	
	In the first step, the coloring algorithm assigns values to counter $c_{green}(f')$; in the second step, it assigns the orange color to every contour path that has some flexible edges and it assigns the green color to some other paths.  This coloring is done based on the values of $flex(f')$ and $c_{green}(f')$. At the end of the second step, the color for some contour paths of $\phi(C_1), \phi(C_2), \dots, \phi(C_k)$ may remain undefined, i.e. it is not yet decided whether they will be red or green.  In the third step, the undecided contour paths are turned into either green or red paths. We now give a detailed description of the three steps and then discuss the time complexity.
	
	\begin{description}
		
		\item\textsf{First Step}: Let $S=\{C_1, C_2, \dots, C_k, \phi(C)\}$ be the set of the 3-extrovert cycles $C_1,...,C_k$ and $\phi(C)$ (if $\phi(C)$ is defined). For each contour path $P$ of a cycle $C'\in S$ incident to a leg face $f'$, we if $P$ is green, $c_{green}(f')$ is increased by one unit.
		
		\item\textsf{Second Step}:	Let $P$ be the contour path of $\phi(C_i)$ ($1 \leq i \leq k$), let $f'$ be the leg face of $\phi(C_i)$ incident to $P$. In the second step, we use the values of $flex(f')$, $flex(P)$, and of $c_{green}(f')$ to decide whether $P$ is colored orange, green, or it remains undecided as follows. Let $q$ be the number of legs of $C_i$ that are flexible  and incident to $f'$ ($0\le q\le 2$).
		\begin{itemize}
			\item If $flex(f') - flex(P)- q >0$ there always exists a flexible edge $e$ along $f'$ such that $e$ is neither a leg of $C_i$ nor an edge of the contour path of $C_i$ incident to $f'$. By \cref{pr:spartizione-faccia} $P$ includes a flexible edge and hence it is colored orange (see Case~2(a) of the \textsc{3-Introvert Coloring Rule}).
			\item If $flex(f') - flex(P)- q =0$, by \cref{pr:spartizione-faccia} no edges of $P$ are flexible. In this case $P$ is not orange; in this second step it will be colored either green or it will remain undecided depending on the value of $c_{green}$ (see below).
		\end{itemize}
		
		If $P$  is not colored orange, we decide whether $P$ is colored green or it remains undecided as follows.
		
		\begin{itemize}
			\item If $c_{green}>1$ there always exists a cycle $C'\in S$ that has a green contour path incident to $f'$ (see \cref{pr:intro-siblings}). In this case $P$ is colored green ((see Case~2(b) of the \textsc{3-Introvert Coloring Rule}).
			\item If $c_{green}=1$ and $C_i$ does not have a green contour path incident to $f'$, by \cref{pr:intro-siblings} $P$ includes a green contour path of a cycle $C'\in S$. In this case $P$ is colored green (see Case~2(b) of the \textsc{3-Introvert Coloring Rule}).
			\item If $c_{green}=0$ or if $c_{green}=1$ and $C_i$ has a green contour path incident to $f'$, $P$ will be either green or red depending on the colors of the other two contour paths of $\phi(C_i)$ (see Case~1 or Case~2(c) of the \textsc{3-Introvert Coloring Rule}). In this case the color of $P$ remains undecided.
		\end{itemize}
		
		\item\textsf{Third Step}:	The third step is executed after all contour paths of $\phi(C_i)$ ($1 \leq i \leq k$) have been processed by Step~1 and Step~2. This step considers the 3-introvert cycles having at least an undecided contour path. Let $\phi(C_i)$ be one such cycle. Two cases are possible.
		
		\begin{itemize}
			\item Every contour paths of $\phi(C_i)$ is undecided. This means that $\phi(C_i)$ does not have a flexible edge and it does not share any green edges with any cycle $C'\in S$. In this case the three contour paths of $\phi(C_i)$ are recolored green (see Case~1 of the \textsc{3-Introvert Coloring Rule}.
			\item If $\phi(C_i)$ has a some contour paths that have been colored either orange or green in the previous steps of the algorithm, all undecided contour paths of $\phi(C_i)$ are colored red (see Case~2(c) of the \textsc{3-Introvert Coloring Rule}).
		\end{itemize}
		At the end of the third step counters $c_{green}(f')$ are reset to $0$ for every leg face $f'$ of a cycle in~$S$.
	\end{description}
	
	The correctness of the above described coloring algorithm is easily established by observing that whenever a contour path is given the red, green, or orange color, this is done in accordance with the \textsc{3-Introvert Coloring Rule}. As for the time complexity, observe that the algorithm has $O(1)$-time complexity per contour path and that the number of contour paths is $O(n)$.
\end{proof}

We are now ready to use the coloring of the 3-introvert cycles to identify those that are demanding. We start with the following simple observation (a similar observation can be found in~\cite{DBLP:journals/ieicet/RahmanEN05}).

\begin{property}\label{pr:extrovert-introvert}
	Let $C$ be a 3-extrovert (3-introvert) cycle of $G_f$ and let $f'$ be any face of $G_f$. If $f'$ is a face of $G_f(C)$, then $C$ becomes a 3-introvert (3-extrovert) cycle in $G_{f'}$. If $f'$ is not a face of $G_f(C)$, then $C$ remains a 3-extrovert (3-introvert) cycle in $G_{f'}$.
\end{property}
Consider for example faces $h$ and $h'$ of graph $G_f$ in \cref{fi:introvert-colouration-1-b} and focus on the 3-introvert cycle $\phi(C_1)$. By choosing $h$ as a new external face we obtain the plane graph $G_h$ shown in \cref{fi:introvert-colouration-3-a}. Note that $\phi(C_1)$ has become a 3-extrovert cycle of $G_h$ because $h$ is a face of $G_f(\phi(C_1))$. By choosing $h'$ as new external face we obtain the plane graph $G_{h'}$ shown in \cref{fi:introvert-colouration-3-b}. In this case $\phi(C_1)$ is a 3-introvert cycle, since $h'$ is not a face of $G_f(\phi(C_1))$ and it is a face of $G_h(\phi(C_1))$.

The next two lemmas describe useful properties of the coloring of the contour paths in the variable embedding setting.

\begin{lemma}\label{le:extrovert-extrovert-coloring}
	Let $C$ be a 3-extrovert cycle of $G_f$ and let $f'$ be any face of $G_f$ such that $C$ is a 3-extrovert cycle also in $G_{f'}$. The coloring of any contour path of $C$ obtained by applying the \textsc{3-Extrovert Coloring Rule} to $C$ is the same in $G_f$ and in $G_{f'}$. Also, if $C$ is a demanding 3-extrovert cycle of $G_f$, $C$ is also a demanding 3-extrovert cycle of $G_{f'}$.
\end{lemma}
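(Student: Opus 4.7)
The plan is to exploit the fact that when the external face changes from $f$ to $f'$ and $C$ remains 3-extrovert, the plane subgraph inside $C$ is completely unaffected. By \cref{pr:extrovert-introvert}, since $C$ is 3-extrovert in both $G_f$ and $G_{f'}$, the face $f'$ must lie outside $G_f(C)$. Hence $G_f(C)$ and $G_{f'}(C)$ coincide as plane graphs, with identical embeddings of all vertices, edges, and internal faces, and in particular the three legs of $C$ are the same in the two embeddings.

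Next, I would prove that the entire subtree of the 3-extrovert genealogical tree rooted at $C$ is the same in $G_f$ and $G_{f'}$. Every proper descendant $C^*$ of $C$, together with its three legs, lies inside $G_f(C)$; since this plane subgraph is invariant, $C^*$ has exactly the same three external legs in $G_{f'}$ and is therefore still 3-extrovert, and the containment relations among such descendants are preserved. Note that the genealogical tree of any 3-extrovert cycle is well-defined by \cref{le:independent-child-cycles}, independently of whether the global embedding is a reference one. A bottom-up induction on this subtree then shows that the \textsc{3-Extrovert Coloring Rule} produces identical colorings in $G_f$ and $G_{f'}$: at the leaves, the contour paths are colored solely according to the presence of flexible edges, a property of edges that does not depend on the embedding; the inductive step uses only flexibilities of edges of $C^*$ and the colors of the contour paths of the child-cycles of $C^*$, all of which coincide in the two embeddings by the inductive hypothesis. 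Applied to $C$ itself, this yields the first part of the statement.

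The second part follows at once. By definition, $C$ is demanding precisely when no contour path of $C$ is orange and no contour path of $C$ shares edges with a green contour path of a child-cycle of $C$. Both conditions are determined exclusively by data inside $G_f(C) = G_{f'}(C)$ and by the colorings of $C$ and of its child-cycles, which are identical in the two embeddings by the previous step. Hence $C$ is demanding in $G_f$ if and only if it is demanding in $G_{f'}$. The only real subtlety will be making explicit that the child-cycles of $C$ in the genealogical tree are the same in both embeddings; this is handled by the invariance of $G_f(C)$ together with the fact that child-cycle relations inside a fixed plane subgraph are a purely combinatorial consequence of the embedding of that subgraph.
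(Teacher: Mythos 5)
Your proposal is correct and follows essentially the same route as the paper's proof: both use \cref{pr:extrovert-introvert} to conclude that $f'$ lies outside $G_f(C)$, deduce that the genealogical tree rooted at $C$ (and hence every descendant 3-extrovert cycle and its contour paths) is unchanged, and conclude that the recursive coloring rule and the demanding property are preserved. Your write-up merely makes the bottom-up induction and the invariance of the child-cycle relations more explicit than the paper does.
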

\begin{proof}
	By \cref{pr:extrovert-introvert} and by the fact that $C$ is 3-extrovert both in $G_f$ and in $G_{f'}$, we have that $f'$ is not in $G_f(C)$. Consider any 3-extrovert cycle $C'$ contained into $G_f(C)$. Since $f'$ is not in $G_f(C)$, by \cref{pr:extrovert-introvert} we have that $C'$ is also a 3-extrovert cycle in $G_{f'}$. It follows that the genealogical tree $T_{C}$ in $G_{f'}$ is the same as 
	the genealogical tree $T_{C'}$ in $G_f$.
	Hence, the application of the \textsc{3-Extrovert Coloring Rule} to $C'$ gives the same result is in $G_{f'}$ as in $G_{f}$. Finally, since also the coloring of the contour paths of 3-extrovert cycles in $T_{C}$ is not changed, if $C$ is demanding in $G_f$ it is also demanding in $G_{f'}$.
\end{proof}

\begin{restatable}{lemma}{leExtrovertIntrovertColoring}\label{le:extrovert-introvert-coloring}
	Let $\phi(C)$ be a 3-introvert cycle of $G_f$ that becomes a 3-extrovert cycle in $G_{f'}$.
	The coloring of each contour path of $\phi(C)$ resulting by the application of the \textsc{3-Introvert Coloring Rule} to $\phi(C)$ in $G_f$ is the same as the one resulting by the application of the \textsc{3-Extrovert Coloring Rule} to $\phi(C)$ in $G_{f'}$.
\end{restatable}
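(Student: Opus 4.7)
The plan is to proceed by induction on the depth of $C$ in $T_f$, showing that for each contour path $P$ of $\phi(C)$ the color assigned by the \textsc{3-Introvert Coloring Rule} in $G_f$ matches the color assigned by the \textsc{3-Extrovert Coloring Rule} in $G_{f'}$. I would begin with the easy observation that the orange case is invariant: both rules color $P$ orange exactly when $P$ contains a flexible edge, and since the set of edges of $P$ together with the flexibility function is unchanged by the choice of external face, the orange condition coincides in the two rules.

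The main structural step is to identify the child-cycles of $\phi(C)$ in the 3-extrovert tree $T_{f'}$. My claim is that these are exactly: (i) every sibling $C''$ of $C$ in $T_f$ that shares a leg face with $\phi(C)$; and (ii) if the parent $C'$ of $C$ in $T_f$ is not the root $C_f$, the cycle $\phi(C')$. For (i), each such $C''$ is a 3-extrovert cycle of $G_f$ independent of $C$ by \cref{le:independent-child-cycles,le:independent-child-cycles-ref-embedding}, so it lies in the $G_f$-outside of $\phi(C)$; since $f' \in G_f(\phi(C))$ implies $f' \notin G_f(C'')$, \cref{pr:extrovert-introvert} ensures $C''$ remains 3-extrovert in $G_{f'}$ and lies inside $G_{f'}(\phi(C))$. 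For (ii), the inclusion $G_f(\phi(C)) \subseteq G_f(\phi(C'))$, which follows from $C \subset G_f(C')$ together with the fact that $\phi(X)$ encloses $X$ through its leg faces, gives $f' \in G_f(\phi(C'))$, so by \cref{pr:extrovert-introvert} $\phi(C')$ becomes 3-extrovert in $G_{f'}$; the fact that $C'$ is the parent of $C$ in $T_f$ translates, again via \cref{pr:extrovert-introvert}, into the absence of any 3-extrovert cycle of $G_{f'}$ strictly between $\phi(C)$ and $\phi(C')$, making $\phi(C')$ a direct child of $\phi(C)$ in $T_{f'}$. Maximality of the list is argued by showing that any other 3-extrovert cycle of $G_{f'}$ contained in $G_{f'}(\phi(C))$ must be a descendant of a cycle listed in (i) or (ii).

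Given this correspondence, the green/red cases follow. By \cref{pr:intro-siblings}, a contour path $P$ of $\phi(C)$ contains the green contour path of a sibling $C''$ of $C$ (in the sense of the \textsc{3-Introvert Coloring Rule}) if and only if $C''$ has a contour path incident to the leg face of $\phi(C)$ containing $P$; by \cref{le:extrovert-extrovert-coloring} the \textsc{3-Extrovert Coloring Rule} assigns the same color to that contour path of $C''$ in $G_f$ and in $G_{f'}$, so siblings contribute the same green paths to both rules. For $\phi(C')$, the inductive hypothesis applied to the strictly shallower node $C'$ ensures that its contour-path colors in $G_f$ (under the 3-Introvert Rule) agree with those in $G_{f'}$ (under the 3-Extrovert Rule), so $\phi(C')$ contributes the same green contour paths under either rule. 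The base case, when $C$ is a child of $C_f$, has no $\phi(C')$ term in the 3-Introvert Rule and is handled by (i) alone. Consequently Case~2(b) has equivalent triggering conditions under the two rules, and Case~1 and Case~2(c) then coincide by exclusion.

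The main obstacle I expect is the rigorous verification of the maximality half of the structural claim, namely that every 3-extrovert cycle of $G_{f'}$ strictly inside $\phi(C)$ is a descendant of a cycle listed in (i) or (ii). This will require a careful dichotomy based on whether such a cycle, viewed in $G_f$, was 3-extrovert or 3-introvert, combined with the independence property of siblings guaranteed by the reference embedding to rule out pathological overlaps.
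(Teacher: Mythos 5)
Your proposal is correct and follows essentially the same route as the paper's proof: induction on the depth of $C$ in $T_f$, separate treatment of the orange case via invariance of flexible edges, identification of the child-cycles of $\phi(C)$ in the new embedding as the siblings of $C$ plus $\phi(C')$, and the combination of \cref{pr:intro-siblings}, \cref{le:extrovert-extrovert-coloring}, and the inductive hypothesis applied to $\phi(C')$. The only difference is that you flag the maximality of the child-cycle list as a point needing careful verification, which the paper asserts without elaboration.
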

\begin{proof}
	Let $c(P_1)$, $c(P_2)$, and $c(P_3)$ be the red-green-orange coloring of the three contour paths $P_1$, $P_2$, and $P_3$ of $\phi(C)$ defined according to the \textsc{3-Introvert Coloring Rule} applied to $\phi(C)$ in $G_{f}$. We prove that the coloring of $P_1$, $P_2$, and $P_3$ defined according to the \textsc{3-Extrovert Coloring Rule} applied to $\phi(C)$ in $G_{f'}$ remains $c(P_1)$, $c(P_2)$, and $c(P_3)$, respectively.
	
	First, suppose that $c(P_i)$, for any $1 \leq i \leq 3$, is orange in $G_f$. In this case $P_i$ contains a flexible edge in $G_f$ according to Case~2(a) of the \textsc{3-Introvert Coloring Rule}. The contour path $P_i$ contains a flexible edge also in $G_{f'}$ and $c(P_i)$ is orange in $G_{f'}$ according to Case~2(a) of the \textsc{3-Extrovert Coloring Rule}.
	
	Second, suppose that $c(P_i)$, for any $1 \leq i \leq 3$, is red or green in $G_f$.
	The proof of the statement is by induction on the depth $h$ of $C$ in the 3-extrovert tree $T_f$ of $G_f$. The base case is when $h=1$, that is when $C$ is a child-cycle of $C_f$. Since $\phi(C)$ is a 3-extrovert cycle of $G_{f'}$, it has a genealogical tree $T_{\phi(C)}$ in $G_{f'}$. The child-cycles of $\phi(C)$ in $T_{\phi(C)}$ are exactly the siblings of $C$ in $T_f$. By \cref{pr:intro-siblings} a leg face of the 3-extrovert cycle $\phi(C)$ in $G_{f'}$ is incident to a contour path of one of its child-cycles in $T_{\phi(C)}$ if and only if a leg face of the 3-introvert cycle $\phi(C)$ in $G_f$ is incident to a contour path of one of the siblings of $C$ in $T_f$. By \cref{le:extrovert-extrovert-coloring} the coloring of the contour paths of the siblings of $C$ in $T_f$ is maintained when they become child-cycles of $\phi(C)$ in $T_{\phi(C)}$. Hence, the three contour paths $P_1$, $P_2$, and $P_3$ of the 3-extrovert cycle $\phi(C)$ of $G_{f'}$ have the coloring $c(P_1)$, $c(P_2)$, and $c(P_3)$, respectively.
	
	Suppose that the statement is true for all 3-extrovert cycles at depth $h < k$ of $T_f$. Let $C$ be a 3-extrovert cycle at depth $k$ in $T_f$ and let $C'$ be the parent of $C$ in $T_f$. Since each face contained into $G_f(\phi(C))$ is also contained into $G_f(\phi(C'))$,
	we have that $f'$ is also contained into $\phi(C')$.
	By \cref{pr:extrovert-introvert} $\phi(C)$ and $\phi(C')$ are 3-extrovert cycles of $G_{f'}$. Also, $\phi(C')$ and the siblings of $C$ in $T_f$ are exactly the child-cycles of $\phi(C)$ in $T_{\phi(C)}$. In particular, by induction we have that the coloring of the contour paths of the 3-introvert cycle $\phi(C')$ in $G_f$ is the same as the coloring of the 3-extrovert cycle $\phi(C')$ in $G_{f'}$. Analogously to the base case, by \cref{pr:intro-siblings} a leg face $f''$ of the 3-extrovert cycle $\phi(C)$ in $G_{f'}$ is incident to a contour path of one of its child-cycles in $T_{\phi(C)}$ if and only if the leg face $f''$ of the 3-introvert cycle $\phi(C)$ in $G_f$ is incident to a contour path of one of the siblings of $C$ in $T_f$ or to a contour path of $\phi(C')$. It follows that the three contour paths $P_1$, $P_2$, and $P_3$ of the 3-extrovert cycle $\phi(C)$ of $G_{f'}$ have the coloring $c(P_1)$, $c(P_2)$, and $c(P_3)$, respectively.
\end{proof}

The following lemma characterizes the demanding 3-introvert cycles.

\begin{restatable}{lemma}{leDemandingThreeIntrovertCharact}\label{le:demanding-3-introvert-charact}
	Let $C$ be a 3-extrovert cycle of $G_f$ and let $\phi(C)$ be the corresponding 3-introvert cycle. Cycle $\phi(C)$ is demanding if and only if the following two conditions hold: $(i)$ $\phi(C)$ has no contour path that contains either a flexible edge or a green contour path of a sibling of $C$ in $T_f$; $(ii)$ either the parent of $C$ is the root of $T_f$ or the parent of $C$ is a 3-extrovert cycle $C'$ and $\phi(C)$ has no contour path that contains a green contour path of $\phi(C')$.
\end{restatable}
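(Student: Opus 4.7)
The plan is to reduce the equivalence to an intermediate statement about the 3-Extrovert Coloring Rule: a 3-extrovert cycle is demanding if and only if the rule colors all three of its contour paths green. Indeed, the definition of demanding (no orange contour path and no edges shared with a green contour path of any child-cycle) is exactly the hypothesis of Case~1 of the 3-Extrovert Coloring Rule, which then colors all three paths green; conversely, ``all paths green'' forces Case~1 to have applied and therefore gives back the demanding condition. A completely parallel reading of the 3-Introvert Coloring Rule shows that conditions~(i) and~(ii) of the statement jointly coincide with Case~1 of that rule, that is, with the assertion that all three contour paths of $\phi(C)$ are colored green when the 3-Introvert Rule is applied in $G_f$.

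For the forward direction I would assume that $\phi(C)$ is a demanding 3-introvert cycle of $G_f$, so that by definition there exists a face $f'$ for which $\phi(C)$ is a demanding 3-extrovert cycle of $G_{f'}$. By the intermediate equivalence, all three contour paths of $\phi(C)$ are green in $G_{f'}$ under the 3-Extrovert Rule. \cref{le:extrovert-introvert-coloring} then tells me that the same paths are green in $G_f$ under the 3-Introvert Rule, so Case~1 of that rule holds and conditions~(i) and~(ii) follow.

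For the reverse direction I would assume (i) and (ii): Case~1 of the 3-Introvert Rule applies in $G_f$ and the three contour paths of $\phi(C)$ are green. I then pick any internal face $f'$ of $G_f(\phi(C))$, which exists because the three leg faces of $\phi(C)$ are contained in that region. By \cref{pr:extrovert-introvert}, $\phi(C)$ becomes a 3-extrovert cycle of $G_{f'}$, and \cref{le:extrovert-introvert-coloring} guarantees that its three contour paths remain green in $G_{f'}$ under the 3-Extrovert Rule; by the intermediate equivalence, $\phi(C)$ is then demanding in $G_{f'}$, and hence it is a demanding 3-introvert cycle of $G_f$. The main delicate point is verifying the initial equivalence ``demanding if and only if all paths green''; in a cubic triconnected graph this reduces to the observation that a contour path of a parent cycle that shares an edge with a green contour path of a child-cycle must in fact contain that entire green contour path, so the two phrasings of the demanding condition match.
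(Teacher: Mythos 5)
There is a genuine gap: your intermediate equivalence ``a 3-extrovert cycle is demanding if and only if all three of its contour paths are green'' fails in the direction you actually need. The implication ``demanding $\Rightarrow$ all green'' is correct (it is exactly Case~1 of the \textsc{3-Extrovert Coloring Rule}), but the converse is false: a cycle $C$ can have all three contour paths colored green via Case~2(b), namely when each contour path contains no flexible edge but does contain a green contour path of some child-cycle. For instance, take a 3-extrovert cycle $C$ with three independent leaf child-cycles, one sharing edges with each contour path of $C$; the leaves are all-green by Case~1, so all three contour paths of $C$ are colored green by Case~2(b), yet $C$ shares edges with green contour paths of its children and is therefore \emph{not} demanding. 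The same ambiguity afflicts the \textsc{3-Introvert Coloring Rule}: ``all three contour paths of $\phi(C)$ are green'' does not imply that Case~1 of that rule applied, hence does not imply conditions~(i) and~(ii). Both directions of your argument invoke precisely this false implication: in the forward direction you pass from ``all paths of $\phi(C)$ green in $G_f$'' to ``Case~1 holds, so (i) and (ii) follow''; in the reverse direction you pass from ``all paths of $\phi(C)$ green in $G_{f'}$'' to ``$\phi(C)$ is demanding in $G_{f'}$''. Your closing remark about edge-sharing versus containment addresses a different (and minor) point and does not repair this.

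The underlying issue is that \cref{le:extrovert-extrovert-coloring,le:extrovert-introvert-coloring} preserve only the \emph{colors}, whereas being demanding is a statement about which cycles are the child-cycles and whether edges are shared with their green contour paths --- and the set of child-cycles changes when the external face changes. This is why the paper's proof does not argue through colors alone: after fixing a witness face $f'$ (a leg face of $\phi(C)$), it identifies the child-cycles of $\phi(C)$ in the genealogical tree $T_{\phi(C)}$ of $G_{f'}$ as exactly the siblings of $C$ in $T_f$ together with $\phi(C')$ (when the parent $C'$ of $C$ is not the root of $T_f$), and then checks the edge-sharing conditions directly against that list, using the coloring-preservation lemmas only to determine which of those child contour paths are green. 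That identification of the new child-cycles is the missing step in your proposal; without it, the equivalence you rely on does not hold and neither direction of the proof goes through.
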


\begin{proof}
	Suppose that Conditions~$(i)$ and~$(ii)$ of the statement are verified for cycle $\phi(C)$.
	We want to show that $\phi(C)$ is a demanding 3-extrovert cycle for every choice of the external face $f'$ such that $\phi(C)$ is a 3-extrovert cycle of $G_{f'}$. By \cref{le:extrovert-extrovert-coloring} it suffices to prove the existence of a face $f'$ such that $\phi(C)$ is a demanding 3-extrovert cycle of $G_{f'}$.
	We choose~$f'$ to be any leg face of $\phi(C)$ and prove that $\phi(C)$ is a demanding 3-extrovert cycle of~$G_{f'}$.
	Observe that, if $C$ is a child-cycle of a 3-extrovert cycle $C'$ in $T_f$, the faces of $G_f(\phi(C))$ are also faces of $G_f(\phi(C'))$.
	Hence, face $f'$ is also internal to $\phi(C')$.
	By \cref{pr:extrovert-introvert} we have that: $\phi(C)$ is a 3-extrovert cycle in $G_{f'}$; the siblings of $C$ in $T_f$ are still 3-extrovert cycles in $G_{f'}$; and $\phi(C')$, provided $C'$ exists, is a 3-extrovert cycle of $G_{f'}$.
	Also, the siblings of $C$ in $T_f$ and $\phi(C')$, if it is defined, are exactly the child-cycles of $\phi(C)$ in the genealogical tree $T_{\phi(C)}$.
	Condition~$(i)$ and \cref{le:extrovert-extrovert-coloring} imply that $\phi(C)$ does not have a leg face incident to a green contour path of one of its child-cycles in $T_{\phi(C)}$ that was a sibling of $C$ in $T_f$.
	By \cref{le:extrovert-introvert-coloring} we have that the coloring of the contour paths of the 3-introvert cycle $\phi(C')$ in $T_f$, provided that $C'$ exists, is exactly the coloring of the corresponding contour paths of the 3-extrovert cycle $\phi(C')$ in $G_{f'}$. Therefore, Condition~$(ii)$ implies that $\phi(C)$ does not have a contour path containing a green contour path of the child-cycle $\phi(C')$ in $T_{\phi(C)}$ that was the 3-introvert cycle associated to the parent $C'$ of $C$ in $T_f$. Hence, $\phi(C)$ is a demanding 3-extrovert cycle of $G_{f'}$.
	
	Now, suppose that $\phi(C)$ is a demanding 3-extrovert cycle of $G_{f'}$ for some face $f'$ of $G_f$. We prove that Conditions~$(i)$ and~$(ii)$ are verified. Observe that, the hypothesis that $\phi(C)$ is a demanding 3-extrovert cycle implies, by \cref{le:extrovert-introvert-coloring} and by the definition of demanding 3-extrovert cycle, that $\phi(C)$ does not have flexible edges.
	Consider the genealogical tree $T_{\phi(C)}$ rooted at $\phi(C)$ in $G_{f'}$. The child-cycles of $\phi(C)$ in $T_{\phi(C)}$ are exactly the siblings of $C$ in $T_f$ and $\phi(C')$, provided that $C$ has a 3-extrovert parent cycle $C'$ in the 3-extrovert tree $T_f$ associated with~$G_f$. Since $\phi(C)$ is a demanding 3-extrovert cycle of $G_{f'}$, it does not share edges with any green contour path of any of its child-cycles.
	It follows that Condition~$(i)$ is verified for $C$ and its siblings in $T_f$. Moreover, by \cref{le:extrovert-introvert-coloring}, Condition $(ii)$ is verified for $C$ and $\phi(C')$ in $T_f$.
\end{proof}

Refer to the 3-introvert cycle $\phi(C_1)$ of $G_f$ depicted in \cref{fi:introvert-colouration-1-a}. As already observed, by \cref{pr:extrovert-introvert} $\phi(C_1)$ is a 3-extrovert cycle of $G_{h}$. See \cref{fi:introvert-colouration-3-a}. According to \cref{le:demanding-3-introvert-charact} $\phi(C_1)$ is a demanding 3-introvert cycle in $G_f$ and, according to the definition of 3-extrovert cycle, it is a demanding 3-extrovert cycle of $G_{h}$.
Now, refer to the 3-introvert cycle $\phi(C_3)$ of $G_f$ depicted in \cref{fi:introvert-colouration-1-d}. By \cref{pr:extrovert-introvert} $\phi(C_3)$ is a 3-extrovert cycle of $G_{h'}$, as shown in \cref{fi:introvert-colouration-3-b}. According to \cref{le:demanding-3-introvert-charact} $\phi(C_3)$ is not a demanding 3-introvert cycle of $G_f$. In fact, it violates Condition~$(ii)$ of the lemma, since one of its contour paths contains a green contour path of $C_2$, which is a sibling of $C_3$. Notice that $\phi(C_3)$  is not a demanding 3-extrovert cycle of $G_{h'}$.

\begin{figure}[htb]
	\centering
	\subfloat[]{\label{fi:introvert-colouration-3-a}\includegraphics[width=0.49\columnwidth]{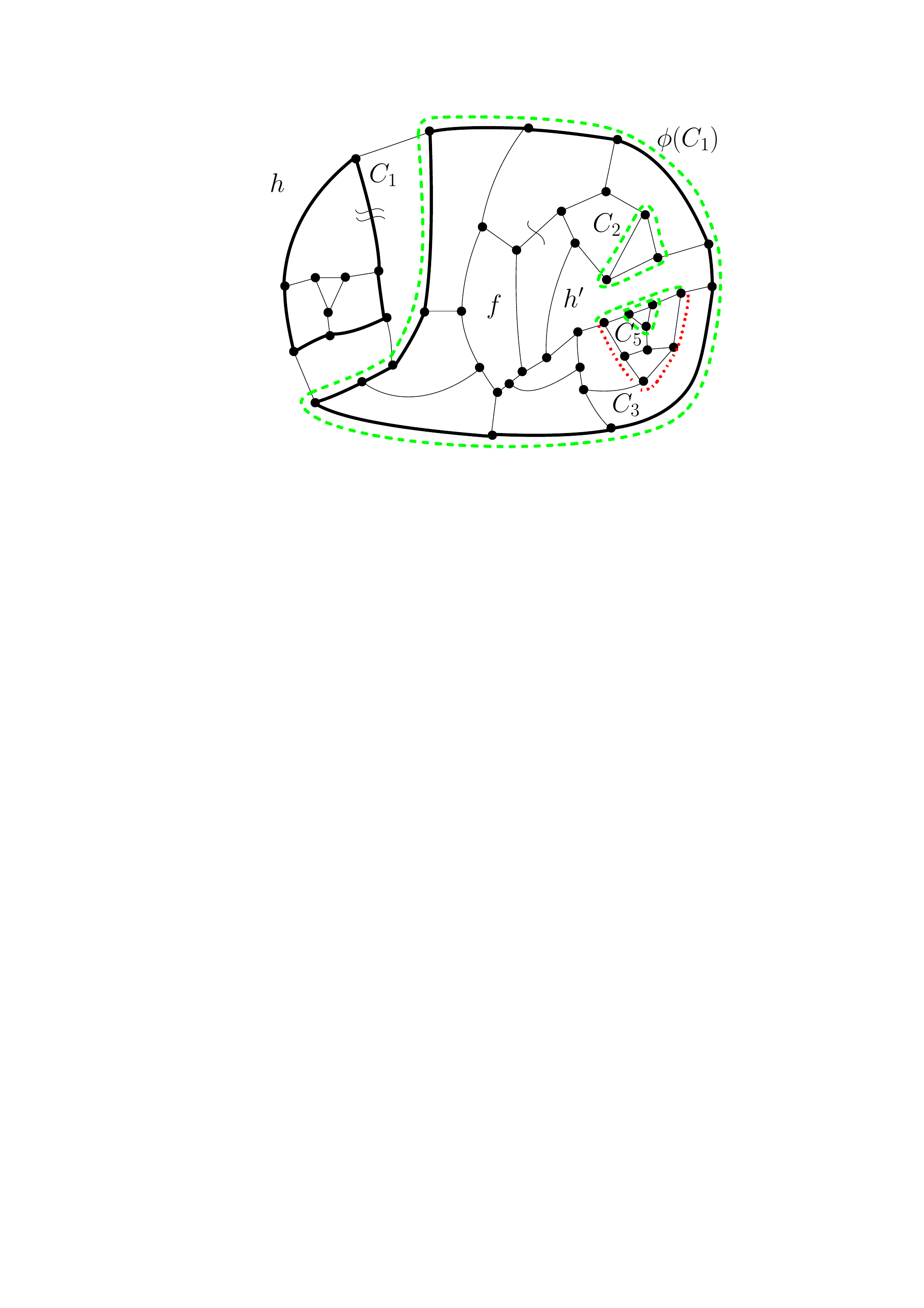}}
	\hfil
	\subfloat[]{\label{fi:introvert-colouration-3-b}\includegraphics[width=0.49\columnwidth]{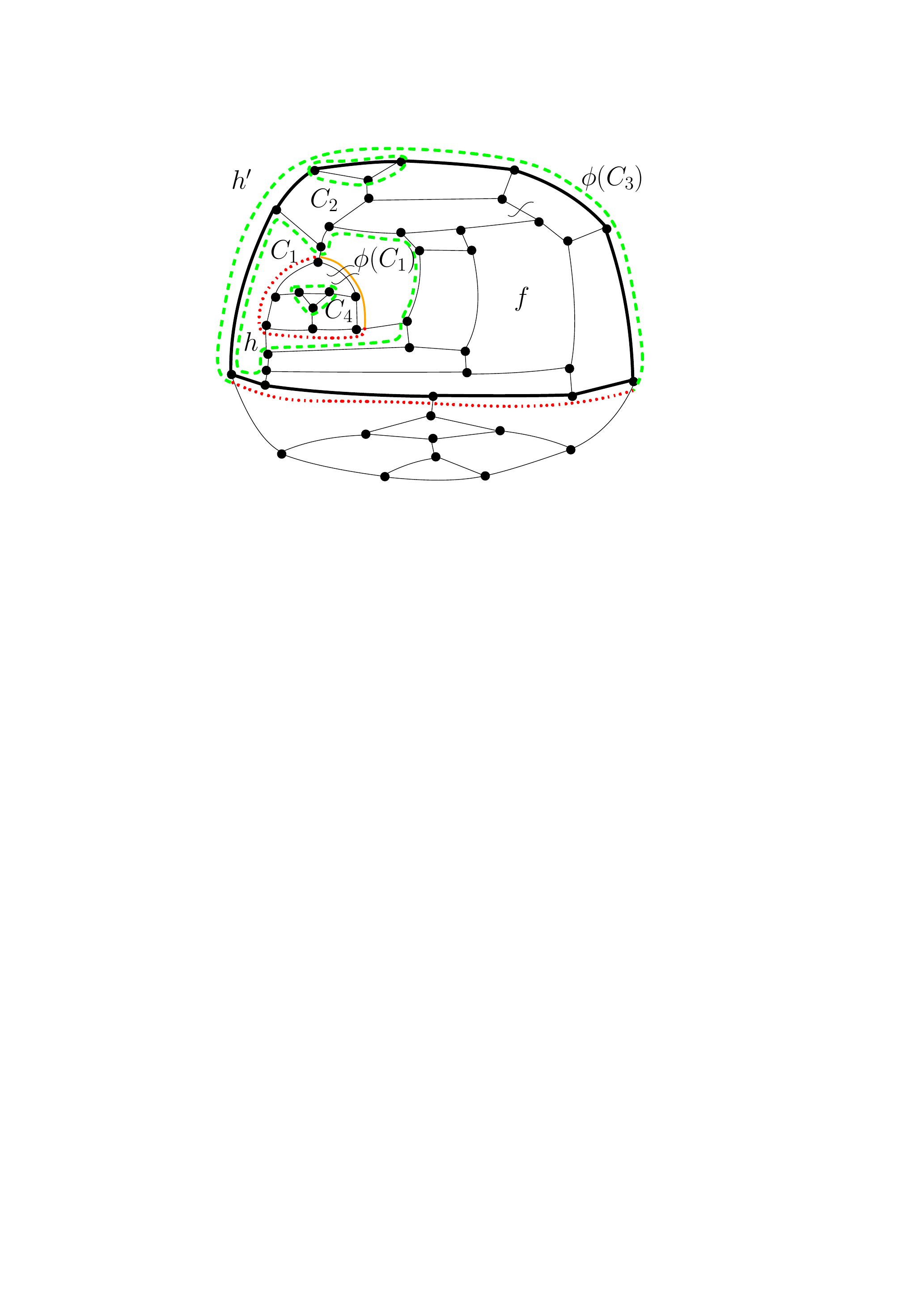}}
	\caption{(a)~The 3-introvert cycle $\phi(C_1)$ of the graph $G_f$ depicted in \cref{fi:introvert-colouration-1-a} is a demanding 3-introvert cycle of $G_f$ according to \cref{le:demanding-3-introvert-charact} and it is a demanding 3-extrovert cycle of $G_h$. (b)~The 3-introvert cycle $\phi(C_3)$ of the graph $G_f$ depicted in \cref{fi:introvert-colouration-2-a} is not a demanding 3-introvert cycle of $G_f$ according to \cref{le:demanding-3-introvert-charact} and it is not a demanding 3-extrovert cycle of $G_{h'}$.
	}\label{fi:introvert-colouration-3}
\end{figure}

\begin{restatable}{lemma}{leDemandingTreeIntrovertAlgo}\label{le:demanding-3-introvert-algo}
	Let $G_f$ be a reference embedding and $T_f$ be the corresponding 3-extrovert tree. The demanding 3-introvert cycles of $G_f$ can be computed in $O(n)$ time.
\end{restatable}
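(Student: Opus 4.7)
The plan is to invoke \cref{le:demanding-3-introvert-charact}, which characterizes demanding 3-introvert cycles in terms of the red-green-orange coloring of the contour paths of the 3-extrovert and 3-introvert cycles, and combine it with the coloring algorithms of \cref{le:3-extrovert-coloring-linear,le:introvert-coloring-algo}. The main task is to show that the test required by \cref{le:demanding-3-introvert-charact} can be performed in $O(1)$ amortized time per 3-introvert cycle after a linear-time preprocessing.

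First, I would build $T_f$ in $O(n)$ time using \cref{le:demanding-tree-lineartime}. Since the set of 3-extrovert cycles is in one-to-one correspondence with the set of 3-introvert cycles (\cref{le:3-extro-3-intro}), the number of 3-introvert cycles of $G_f$ is $O(n)$, and each of them can be accessed through its associated 3-extrovert cycle in $T_f$. Next, I would apply \cref{le:3-extrovert-coloring-linear} to color all contour paths of 3-extrovert cycles, and then \cref{le:introvert-coloring-algo} to color all contour paths of the 3-introvert cycles; both steps take $O(n)$ time. During these colorings, it is crucial to mark for each contour path $P$ of a 3-introvert cycle $\phi(C)$ the reason why $P$ has been assigned its color, so as to distinguish a green path produced by Case~1 of the \textsc{3-Introvert Coloring Rule} from one produced by Case~2(b). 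This additional bookkeeping adds only $O(1)$ cost per contour path.

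With this information in hand, I would run a final pass over the nodes of $T_f$. For each 3-extrovert cycle $C$, I look at $\phi(C)$ and declare it demanding if and only if all three of its contour paths have been colored green by Case~1 of the \textsc{3-Introvert Coloring Rule} (equivalently, $\phi(C)$ has no orange contour path and no green contour path inherited from a sibling of $C$ in $T_f$ or from $\phi(C')$, where $C'$ is the parent of $C$). By \cref{le:demanding-3-introvert-charact} this test is both necessary and sufficient; it costs $O(1)$ per 3-introvert cycle, so the whole pass is $O(n)$.

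The main obstacle I anticipate is the careful separation of the two ways in which a contour path of a 3-introvert cycle can become green: via Case~1 (no flexible edge, no green sibling/parent contour path) versus via Case~2(b) (some sibling or parent contributes a green contour path). Only the former corresponds to demanding cycles according to \cref{le:demanding-3-introvert-charact}. Since the algorithm underlying \cref{le:introvert-coloring-algo} already decides these cases explicitly in its Step~2 and Step~3 (based on the counters $c_{\mathrm{green}}(f')$ and $\flex(f')$ and on whether a path is ``undecided'' at the end of Step~2), it suffices to tag each contour path with the step/case that fixed its color. This keeps the overall running time at $O(n)$ and yields the list of all demanding 3-introvert cycles of $G_f$.
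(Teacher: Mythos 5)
Your proposal is correct and follows essentially the same route as the paper: both reduce the test to Conditions~$(i)$ and~$(ii)$ of \cref{le:demanding-3-introvert-charact} and reuse the counter-based preorder traversal of $T_f$ from \cref{le:introvert-coloring-algo}, obtaining an $O(1)$-time check per 3-introvert cycle. Your extra bookkeeping (tagging which case of the \textsc{3-Introvert Coloring Rule} fixed each contour path's color) is just a mild reformulation of the paper's direct inspection of the counters at the leg faces of $\phi(C)$.
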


\begin{proof}
	We use the same approach described in the proof of \cref{le:introvert-coloring-algo}. Namely, we associate every face $f'$ with a counter that reports the number of the green contour paths incident to $f'$ and we perform a pre-order visit of $T_f$. The counters are initialized, incremented, and reset exactly as in \cref{le:introvert-coloring-algo}. Let $C$ be any 3-extrovert cycle of $G_f$ and let $\phi(C)$ be the corresponding 3-introvert cycle.
	Analogously to the proof of \cref{le:introvert-coloring-algo} we verify whether Condition (i) and Condition (ii) of \cref{le:demanding-3-introvert-charact} hold by checking the counters associated with the leg faces of $\phi(C)$.
\end{proof}

We conclude this section with two lemmas, namely \cref{le:fagiolo-nero-extrovert,le:fagiolo-bianco},
which will be used in \cref{sse:bend-counter} to construct the {\tt Bend-Counter}.

\begin{restatable}{lemma}{leFagioloNeroExtrovert}\label{le:fagiolo-nero-extrovert}
	Let $\phi(C_0), \phi(C_1), \dots, \phi(C_{k-1})$ be a set of demanding 3-introvert cycles of $G_f$ that share a leg face $f'$ such that $k>1$. We have that: $(a)$ Face $f'$ is the leg face of at most one demanding 3-extrovert cycle of $G_f$; $(b)$ any edge of $f'$ is contained in all the demanding (3-introvert or 3-extrovert) cycles having $f'$ as leg face with the possible exception of one of them.
\end{restatable}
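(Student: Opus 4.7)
The plan is to analyze the structure imposed on the boundary cycle $B$ of $f'$ by each demanding cycle sharing $f'$ as a leg face, and to show that their ``complementary arcs'' on $B$ must be pairwise edge-disjoint. For each demanding 3-introvert $\phi(C_i)$, its two legs on $f'$ partition $B$ into a green contour path $P_i \subseteq \phi(C_i)$ (which is green by \cref{le:demanding-3-introvert-charact} combined with the \textsc{3-Introvert Coloring Rule}) together with a complementary arc $Q_i \subseteq C_i$. An analogous decomposition holds for any demanding 3-extrovert $C^*$ with $f'$ as leg face: two legs of $C^*$ on $B$ and a green contour path $P^* \subseteq C^*$.

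The first key step I would carry out is to show that the 3-extrovert cycles $C_0, \ldots, C_{k-1}$ (together with $C^*$, if present) form an antichain in the 3-extrovert tree $T_f$. The idea is that if $C_i$ were a proper ancestor of $C_j$ in $T_f$, then $G_f(\phi(C_j))$ would be contained in $G_f(C_i)$, since the leg faces of $C_j$ lie inside $G_f(C_i)$; consequently $f' \in G_f(\phi(C_j))$ would force $f' \in G_f(C_i)$, contradicting the fact that $\phi(C_i)$ having $f'$ as a leg face requires $f' \notin G_f(C_i)$. A careful case analysis will handle the subcase where a descendant shares vertices with its ancestor, invoking the cubic-graph degree constraint and the demanding coloring properties to rule out the ``boundary-leg'' configurations. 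Extending \cref{le:independent-child-cycles} inductively over $T_f$, an antichain of 3-extrovert cycles in $T_f$ is pairwise independent, so the $C_i$'s (and $C^*$) are pairwise edge-disjoint; consequently the arcs $Q_i$ are pairwise edge-disjoint on $B$, and no two distinct cycles in our collection share a leg on $f'$ (a shared leg would force a shared vertex, violating independence).

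Part (b) then follows immediately: each complement $B \setminus P_i$ equals $Q_i$ together with the two legs of $\phi(C_i)$ on $B$, and analogously for $B \setminus P^*$, and by the previous step these complements are pairwise edge-disjoint. Hence every edge of $f'$ lies in at most one such complement, and is therefore contained in all but at most one of the demanding cycles having $f'$ as a leg face. For part (a), I would suppose for contradiction that two distinct demanding 3-extrovert cycles $C^*$ and $C^{**}$ both have $f'$ as a leg face; the antichain argument shows they are independent, but combining their four legs and green contour paths on $B$ with the $2k \geq 4$ legs contributed by the $\phi(C_i)$'s, and exploiting the disjointness of the complements together with the green-coloring rules for demanding cycles, will force at least one of $C^*, C^{**}$ to contain along its contour path a green contour path of a sibling or child-cycle in $T_f$, contradicting its demanding hypothesis.

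The hardest part is likely to be the antichain step in its full generality: while strict containment in $T_f$ yields the contradiction immediately, shared-vertex configurations between a descendant and an ancestor in $T_f$ may produce ``border'' legs that lie on the ancestor cycle itself, and systematically ruling these out requires combining the cubic-graph structure with the specific green-coloring profile enforced by the demanding conditions. Once this is resolved, the disjointness-of-complements framework cleanly delivers part~(b), and a pigeonhole argument on the legs and green arcs along $B$ will yield part~(a).
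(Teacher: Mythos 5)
Your proposal breaks down at the point where you fold the demanding 3-extrovert cycle $C^*$ into the antichain. If $C^*$ is a demanding 3-extrovert cycle with leg face $f'$ and $k>1$ demanding 3-introvert cycles $\phi(C_0),\dots,\phi(C_{k-1})$ also have $f'$ as a leg face, then $C^*$ cannot be independent of the $C_i$'s: if even one $C_i$ failed to be a descendant of $C^*$ in $T_f$, the (all-green, by the demanding hypothesis) contour path of $\phi(C_i)$ along $f'$ would cover the entire contour path of $C^*$ on $f'$ and $C^*$ could not be demanding. Hence all the $C_i$'s are descendants of $C^*$, their edges on $f'$ lie \emph{on} $C^*$ (i.e., inside $P^*$), and any second candidate $C^{**}$ must likewise contain every $C_i$, forcing $C^*$ and $C^{**}$ to be nested and to share edges of $f'$ --- which is exactly how the paper gets part~(a): at most one cycle in that nested, edge-sharing family can be demanding. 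Your part~(a) argument instead assumes $C^*$ and $C^{**}$ are independent and then runs a pigeonhole count on legs and green arcs; the premise is false, so the argument does not get off the ground. The same false premise infects your part~(b): the complement $B\setminus P^*$ is \emph{not} edge-disjoint from the complements $B\setminus P_i$ ``by independence''; the needed disjointness comes from the opposite relation, namely containment ($C_i$'s edges on $f'$ sit inside $P^*$, as in \cref{pr:spartizione-faccia}).

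The independence claim restricted to the $C_i$'s themselves is the right statement, but your derivation of it is circular. The step ``if $C_i$ is a proper ancestor of $C_j$ then the leg faces of $C_j$ lie inside $G_f(C_i)$'' fails precisely in the configuration you are trying to exclude: when $C_j$ is a descendant of $C_i$ whose contour path on $f'$ lies along $C_i$, the common leg face $f'$ is outside $G_f(C_i)$, so no contradiction with $f'\notin G_f(C_i)$ arises. This configuration is geometrically realizable (see the cycles sharing $f'$ in the non-reference embedding of the paper's figures), so no amount of ``cubic-graph degree constraint'' will rule it out; what rules it out is the demanding hypothesis itself, via \cref{le:demanding-3-introvert-charact}: in the ancestor--descendant case the contour path of $\phi(C_j)$ on $f'$ contains the green contour path of $\phi(C_i)$ (or of a sibling), violating conditions~$(i)$--$(ii)$ of that characterization. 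That is the ``careful case analysis'' you deferred, and it is the actual content of the lemma; the paper makes it the one-line justification of independence and then finishes part~(b) exactly as you intend, by classifying each edge of $f'$ with \cref{pr:spartizione-faccia}.
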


\begin{proof}
	$(a)$ Let $C$ be a 3-extrovert cycle such that $f'$ is a leg face of $C$. Refer to \cref{fagiolonero2}. If there exists a $\phi(C_i)$ ($0 \leq i \leq k$))  such that $C_i$ is not a descendant of $C$ in $T_f$ (see, e.g., $\phi(C_0)$ in \cref{fi:fagiolonero-c}), we have that $\phi(C_i)$ and cycle $C$ share all edges of the
	contour path of $C$ incident to $f'$ and, by \cref{le:demanding-3-introvert-charact}, $C$ cannot be demanding. Otherwise, let $S$ denote the set of all 3-extrovert cycles of $T_f$ that have $f'$ as a leg face and such that $C_0, \ldots C_{k-1}$  are among their descendants (see, e.g., \cref{fi:fagiolonero-d}, where set $S$ consists of a cycle $C$ and $k=2$). The 3-extrovert cycles of $S$ all share edges of $f'$ and if one of them is demanding no other can be demanding by definition of demanding 3-extrovert cycle. It follows that there can be at most one demanding 3-extrovert cycle having $f'$ as a leg face in $G_f$.
	
	$(b)$ Note that the cycles $C_1,...,C_{k-1}$ are independent or else at least a cycle in $\phi(C_0), \phi(C_1), \dots, \phi(C_{k-1})$ would not satisfy the conditions of \cref{le:demanding-3-introvert-charact}. 
	Hence, by \cref{pr:spartizione-faccia}: If $e$ is an edge or a leg of a cycle $C_i$ then it is contained in all the demanding cycles having $f'$ as leg face except $\phi(C_i)$; if $e$ is an edge of $f'$ not contained in $C$ then it is contaided in all the cycles  $\phi(C_0),...,\phi(C_{k-1})$; else, it is contained in all the demanding cycles.
\end{proof}

\begin{figure}[htb]
	\centering
	\subfloat[]{\label{fi:fagiolonero-c}\includegraphics[width=0.48\columnwidth]{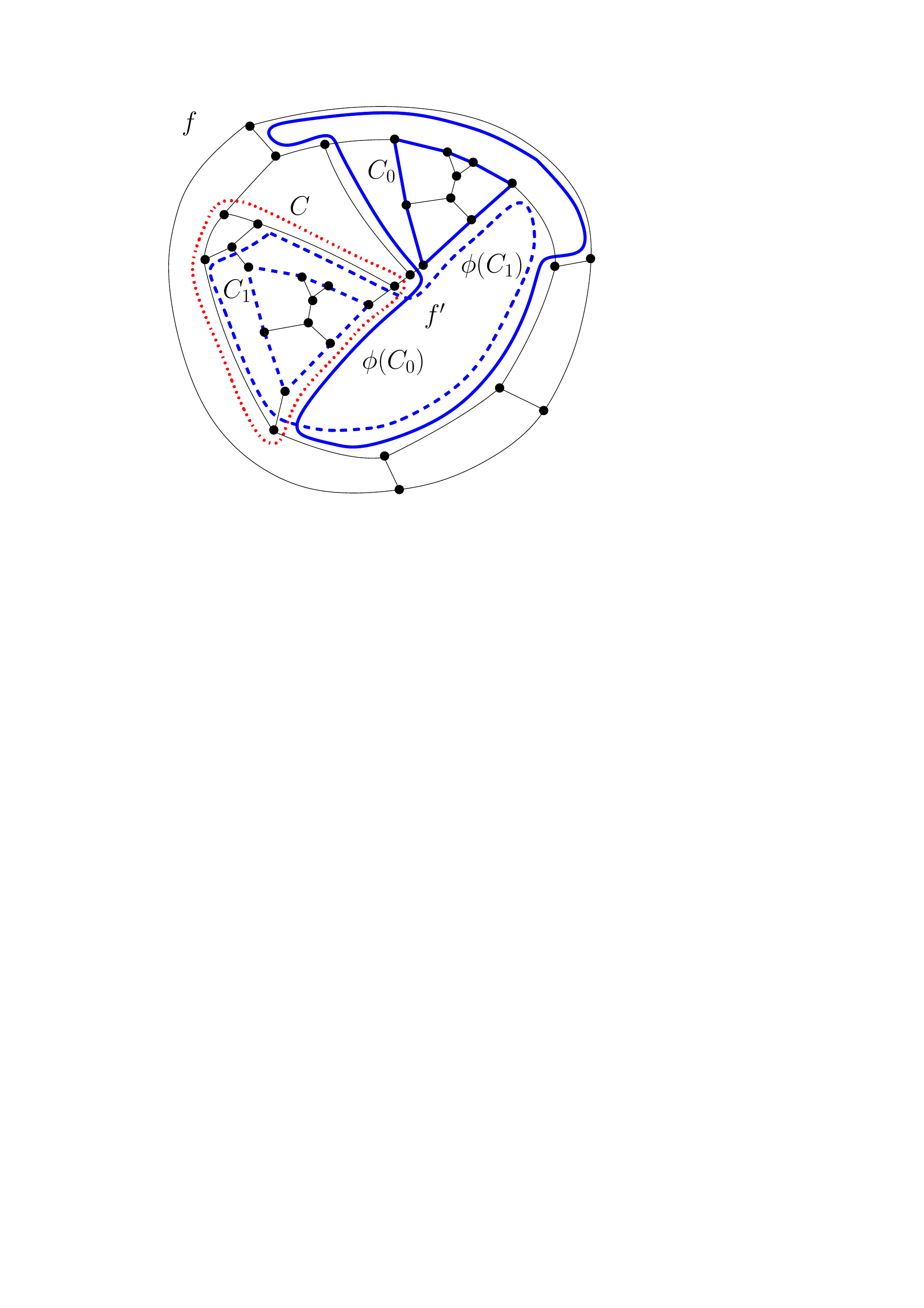}}
	\hfil
	\subfloat[]{\label{fi:fagiolonero-d}\includegraphics[width=0.48\columnwidth]{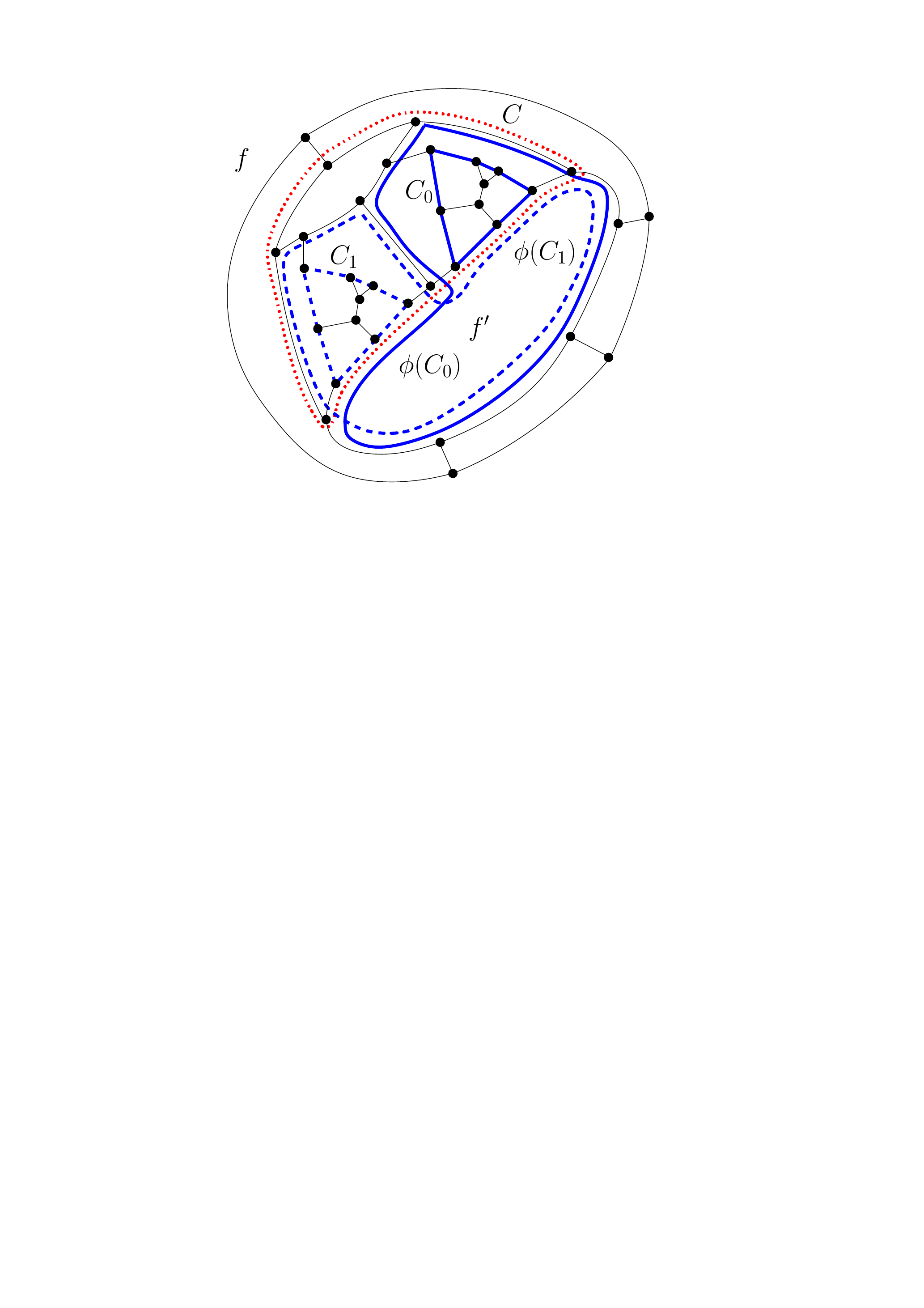}}
	\caption{Illustration for the proof of \cref{le:fagiolo-nero-extrovert}
		.}\label{fagiolonero2}
\end{figure}

\begin{restatable}{lemma}{leFagioloBianco}\label{le:fagiolo-bianco}
	Let $C_0, C_1, \dots C_{k-1}$ be a set of demanding 3-extrovert cycles of $G_f$ that share a leg face $f'$ such that $k>0$. We have that: $(a)$ Face $f'$ is the leg face of at most one demanding 3-introvert cycle of $G_f$; $(b)$~any two demanding cycles having $f'$ as leg face are edge disjoint.
\end{restatable}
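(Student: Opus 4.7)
The plan is to mirror the proof of \cref{le:fagiolo-nero-extrovert}, exchanging the roles of 3-extrovert and 3-introvert cycles. The first ingredient is a structural claim: the set $S_{f'}$ of 3-extrovert cycles of $T_f$ having $f'$ as a leg face forms an ancestor--descendant chain in $T_f$, and every 3-extrovert cycle lying between two members of $S_{f'}$ on the tree path is itself in $S_{f'}$. This follows from a planarity argument: if two cycles of $S_{f'}$ lay in two sibling subtrees of $T_f$, their legs incident to $\partial f'$ could not cross the sibling cycles, forcing $f'$ to lie strictly inside both sibling regions, which contradicts \cref{le:independent-child-cycles-ref-embedding}.

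For part~(a), I would assume two distinct demanding 3-introvert cycles $\phi(D_1)$ and $\phi(D_2)$ share $f'$ as a leg face. By the chain property, WLOG $D_2$ is a descendant of $D_1$ and the intermediate tree-path cycles all lie in $S_{f'}$. From $D_2\cap\partial f'\subseteq D_1\cap\partial f'$ and \cref{pr:spartizione-faccia}, the contour path of $\phi(D_2)$ on $f'$ contains that of $\phi(D_1)$ on $f'$, and iterating along the tree path gives a nested family of such contour paths. Starting from the fact that $\phi(D_1)$ is demanding (hence, by Case~1 of the 3-Introvert Coloring Rule, all three of its contour paths are green), I would propagate the green label upward along the tree path by repeated applications of Case~2(b), until concluding that the contour path on $f'$ of the direct parent-introvert of $D_2$ is green; this contradicts Condition~(ii) of \cref{le:demanding-3-introvert-charact} for $\phi(D_2)$. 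The (ext,~ext) subcase of part~(b) is handled by a dual argument: if $C_a\neq C_b$ are two demanding 3-extrovert cycles sharing $f'$ as a leg face, upward propagation of the green label from $C_b$ to $C_a$ via Case~2(b) of the 3-Extrovert Coloring Rule forces the direct child of $C_a$ on the chain to have a green contour path on $f'$ which is contained in $C_a$'s contour path on $f'$, contradicting $C_a$ being demanding. Hence no two distinct demanding 3-extrovert cycles share $f'$ as a leg face, and the (intro,~intro) subcase of~(b) is immediate from part~(a).

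The remaining subcase of part~(b) pairs a demanding 3-extrovert cycle $C_0$ with a demanding 3-introvert $\phi(D)$ both having $f'$ as a leg face. If $C_0=D$ then edge-disjointness follows from $C\cap\phi(C)=\emptyset$ for every 3-extrovert cycle~$C$, which holds by the construction of \cref{le:3-extro-3-intro}. Otherwise the chain structure orders $C_0$ and $D$; if $D$ were an ancestor of $C_0$ then $C_0\cap\partial f'\subseteq D\cap\partial f'$ together with \cref{pr:spartizione-faccia} would preclude $C_0$ from sharing an edge with $\phi(D)$ on $\partial f'$, and the analogous reasoning on each of the other leg faces of $D$ rules out shared edges elsewhere. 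Hence $C_0$ is a proper ancestor of $D$ in the chain, and I would combine the chain structure with a coloring argument in the style of part~(a): using $C_0$ demanding to force the contour paths on $f'$ of the intermediate 3-extrovert cycles to be red, and using $\phi(D)$ demanding to constrain the coloring of the 3-introvert cycles on the chain via the 3-Introvert Coloring Rule, a hypothetical shared edge on $\partial f'$ --- which by \cref{pr:spartizione-faccia} would have to lie past $D$'s legs on $\partial f'$ --- will violate one of these constraints. The main obstacle will be the combinatorial bookkeeping in this mixed case: the 3-Extrovert and 3-Introvert Coloring Rules look at different neighbors in $T_f$, so aligning the two chains of colorings along $S_{f'}$ requires some care, as does keeping track of the no-flex hypothesis that propagates from $\phi(D)$'s demanding contour path on $f'$ onto every nested intermediate contour path.
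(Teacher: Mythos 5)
There is a genuine gap, and it sits at the foundation of your argument: the claim that the 3-extrovert cycles of $T_f$ having $f'$ as a leg face form an ancestor--descendant chain is false. A leg face of a 3-extrovert cycle $C$ is incident to the \emph{external} legs of $C$ and therefore lies outside $G_f(C)$; two independent (e.g., sibling) 3-extrovert cycles can both have $f'$ as a leg face, each adjacent to $f'$ from its own side, without $f'$ being enclosed by either region. The paper itself exhibits this configuration: in \cref{fi:reference-embedding-b} the independent child-cycles $C_1$ and $C_3$ of $C_f$ share the leg face $f'$, and \cref{fi:fagiolobianco-a} and \cref{fi:fagiolobianco-b} depict exactly the branching situations your chain claim excludes. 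Your planarity argument (``forcing $f'$ to lie strictly inside both sibling regions'') implicitly treats a leg face as an internal face of the sibling subgraphs, which it is not.

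Everything downstream inherits this error. In part~(a) the reduction ``WLOG $D_2$ is a descendant of $D_1$'' skips the essential case; the paper's proof is a dichotomy on an arbitrary candidate $C$ with leg face $f'$: either some demanding $C_i$ is \emph{not} a descendant of $C$, in which case a contour path of $\phi(C)$ shares edges with a green contour path of $C_i$ and Condition~$(i)$ of \cref{le:demanding-3-introvert-charact} is violated, or all $C_i$ are descendants of $C$, in which case the candidates $\phi(C)$ for $C$ ranging over such ancestors pairwise share edges of $f'$ and Condition~$(ii)$ allows at most one to be demanding. Your argument only ever invokes Condition~$(ii)$. In part~(b), your conclusion that no two distinct demanding 3-extrovert cycles share $f'$ as a leg face is likewise false (see \cref{fi:fagiolobianco-b}): independent demanding cycles can share a leg face, and for them edge-disjointness follows from independence, not from their nonexistence. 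Finally, your mixed case rests on the nonexistent chain and is explicitly left as ``bookkeeping,'' whereas the paper settles it in one line: by the part-(a) analysis the unique demanding 3-introvert cycle must be $\phi(C)$ for a common ancestor $C$ of all the $C_i$, and $\phi(C)$ lies entirely outside $G_f(C)$, hence shares no vertex with any $C_i$.
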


\begin{proof}
	$(a)$ Let $C$ be any 3-extrovert cycle having $f'$ as its leg face ($C$ could be one of the demanding 3-extrovert cycles or some other, not demanding, 3-extrovert cycle). Refer to \cref{fagiolobianco}. Two cases are possible for $C$: Either the subtree of $T_f$ rooted at $C$ contains all cycles $C_0, C_2, \ldots C_{k-1}$ or there exists some $C_i$ ($0 \leq i \leq k-1$) such that $C_i$  is not a descendant of $C$ in $T_f$. In the latter case $\phi(C)$ has a contour path that shares edges with a green contour path of $C_i$; therefore $\phi(C)$ is not a demanding 3-introvert cycle because it violates Condition~(i) of \cref{le:demanding-3-introvert-charact}. See, e.g., \cref{fi:fagiolobianco-a}.
	In the former case, let $S$ denote the set of all 3-extrovert cycles of $T_f$ that have $f'$ as a leg face and such that $C_0, C_1, \ldots C_{k-1}$  are among their descendants. The 3-introvert cycles associated with the elements of $S$ all share edges of $f'$ and if one of them is demanding no others can be demanding since otherwise Condition~(ii) of \cref{le:demanding-3-introvert-charact} would be violated. It follows that there can be at most one demanding 3-introvert cycle having $f'$ as a leg face in $G_f$.
	
	$(b)$ If such a demanding 3-introvert exists, it is the 3-introvert cycle associated with an ancestor of $C_0, C_1, \ldots C_{k-1}$ and hence it cannot share vertices with anyone of them. See, e.g., \cref{fi:fagiolobianco-b}.
\end{proof}

\begin{figure}[htb]
	\centering
	\subfloat[]{\label{fi:fagiolobianco-a}\includegraphics[width=0.48\columnwidth]{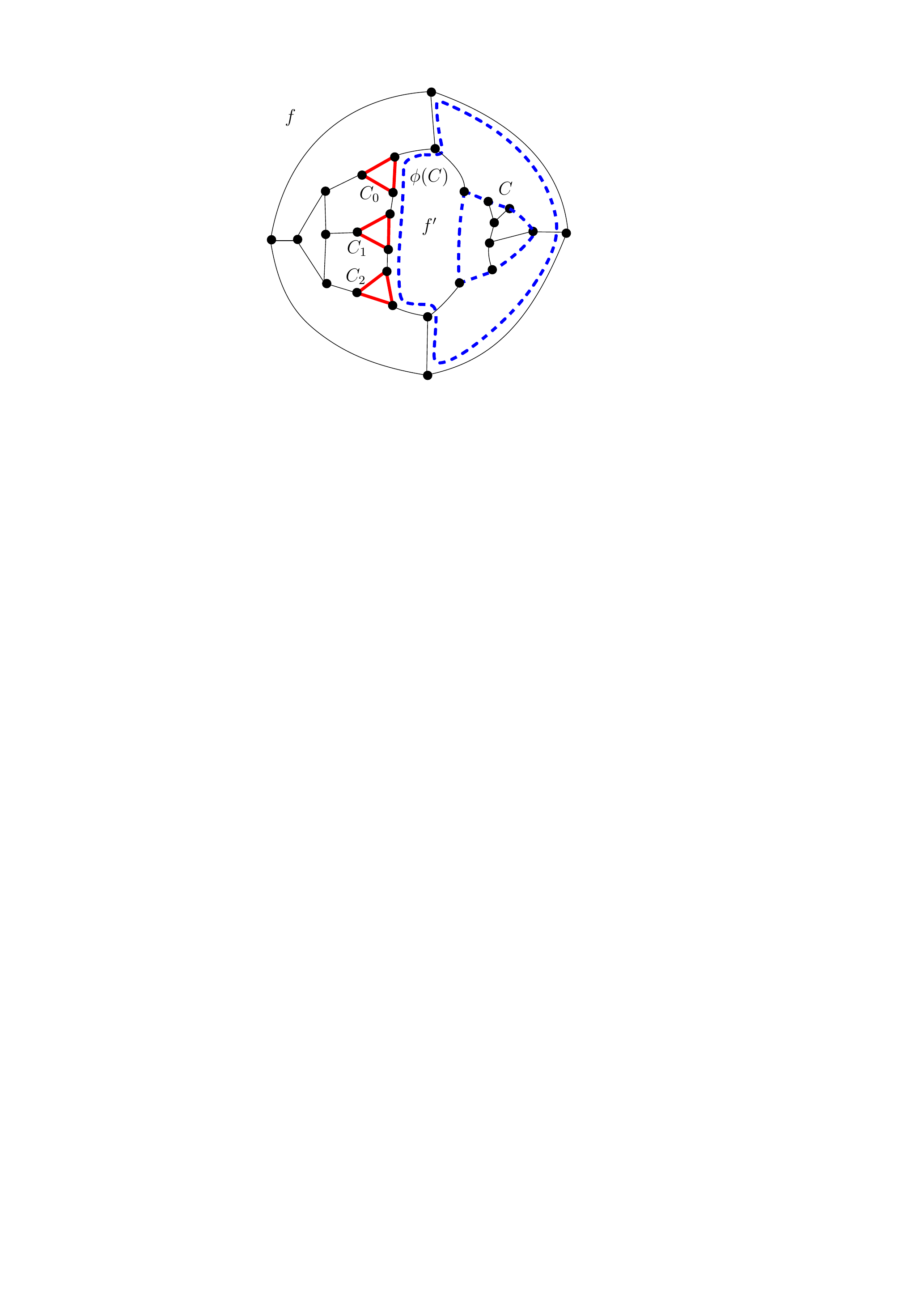}}
	\hfil
	\subfloat[]{\label{fi:fagiolobianco-b}\includegraphics[width=0.48\columnwidth]{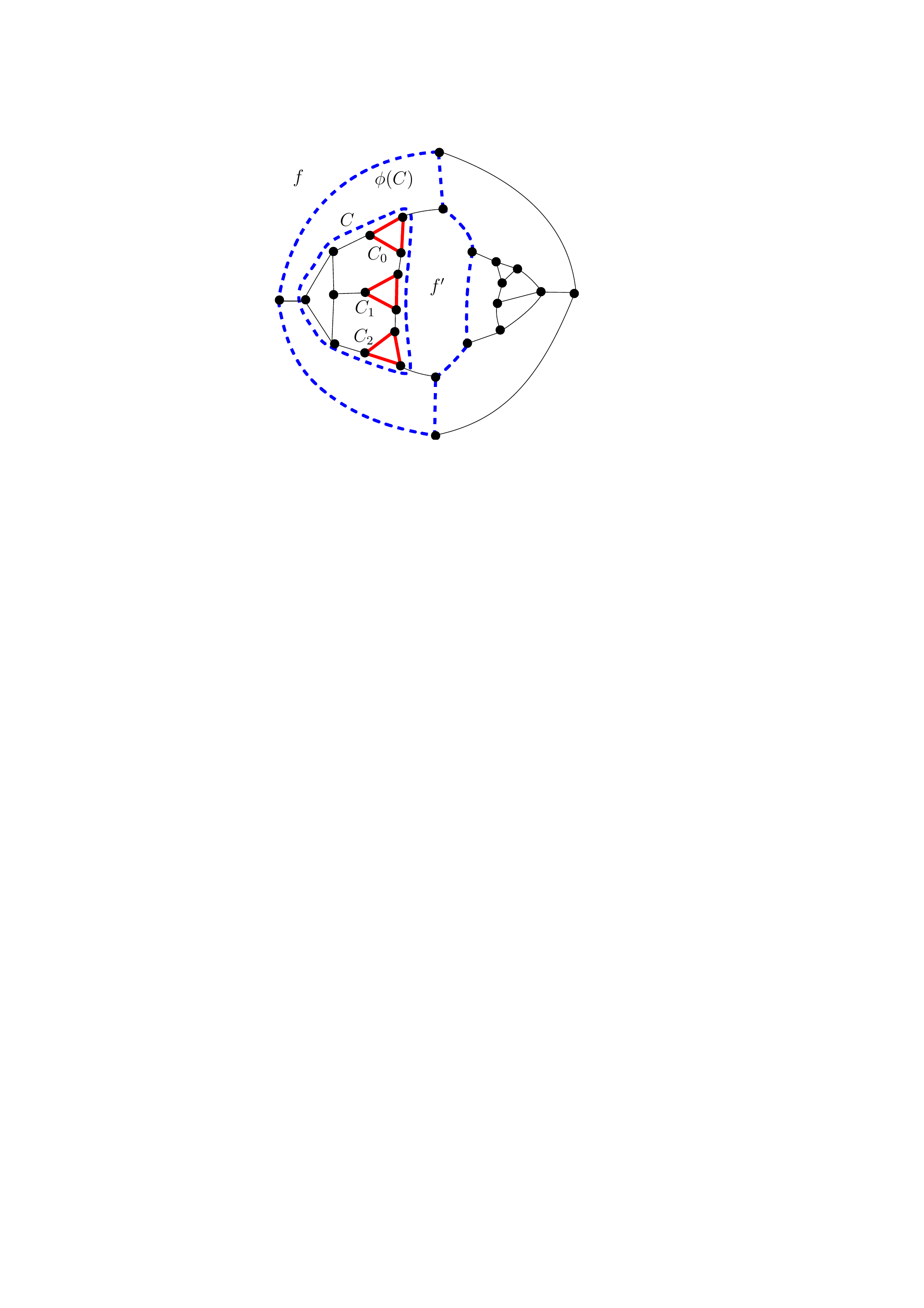}}
	\caption{Illustration for the proof of \cref{le:fagiolo-bianco}\label{fagiolobianco}}
\end{figure}

\subsubsection{Constructing the Bend-Counter}\label{sse:bend-counter}

As mentioned at the beginning of \cref{sse:variable-embedding}, a \texttt{Bend-Counter} of $G$ is constructed in $O(n)$ time starting from a reference embedding $G_f$ and consists of a 3-extrovert tree $T_f$, the number $|D(G_f)|$ of demanding 3-extrovert cycles of $T_f$, and of a face array~$A_f$. The elements of $T_f$ and of $A_f$ are enriched with information that allows us to compute in $O(1)$ time the cost
of a cost-minimum orthogonal representation of $G_{f'}$, for any external face $f'$, with at most one bend per edge.
We first describe the information that equips the nodes of $T_f$ and then describe the information that equips the elements of $A_f$.

Let $C_f$ be the root of the 3-extrovert tree $T_f$ of $G_f$.
Let $C$ be any non-root node of $T_f$ and let $\phi(C)$ be the 3-introvert cycle corresponding to $C$ (see \cref{le:3-extro-3-intro}). 
The \texttt{Bend-Counter} data structure equips node $C$ of $T_f$ with the following information:
\begin{itemize}
	\item A label $d(C) \in \{\texttt{true},\texttt{false}\}$ such that $d(C)=\texttt{true}$ (resp. $d(C)=\texttt{false}$) if $C$ is demanding (resp. not demanding) in $G_f$.
	\item A label $d(\phi(C)) \in \{\texttt{true},\texttt{false}\}$ such that $d(\phi(C))=\texttt{true}$ (resp. $d(\phi(C))=\texttt{false}$) if $\phi(C)$ is demanding (resp. not demanding) in $G_f$.
	\item The number $\extr(C)$ of demanding 3-extrovert cycles contained in the path connecting $C$ to the root $C_f$ in $T_f$.
	\item The number $\intr(C)$ of demanding 3-introvert cycles associated with 3-extrovert cycles contained in the path connecting $C$ to the root $C_f$ in $T_f$.
\end{itemize}

\begin{lemma}\label{le:addinformation-in-linear-time}
	The values $d(C)$, $d(\phi(C))$, $\extr(C)$, and $\intr(C)$ for all non-root nodes $C$ of $T_f$ can be computed in $O(n)$ time.
\end{lemma}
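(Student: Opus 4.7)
The plan is to compute the four quantities in two phases: first the Boolean labels $d(C)$ and $d(\phi(C))$, and then the path counters $\extr(C)$ and $\intr(C)$ via a top-down traversal of $T_f$.

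For the first phase, I would invoke \cref{le:demanding-3-extrovert-lineartime}, which produces the set of demanding 3-extrovert cycles of $G_f$ in $O(n)$ time; marking each node $C$ of $T_f$ that belongs to this set yields the labels $d(C)$ for every non-root node in overall $O(n)$ time. Analogously, I would invoke \cref{le:demanding-3-introvert-algo} to list the demanding 3-introvert cycles of $G_f$ in $O(n)$ time; since the correspondence $\phi$ of \cref{le:3-extro-3-intro} associates each 3-introvert cycle with exactly one non-root node $C$ of $T_f$, marking these nodes sets $d(\phi(C))$ for every $C$ in $O(n)$ time as well. Both routines already produce the cycle together with a pointer to the tree node it corresponds to, so no extra search is needed.

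For the second phase, I would compute $\extr$ and $\intr$ by a single preorder (top-down) visit of $T_f$, initializing $\extr(C_f) = \intr(C_f) = 0$ at the root and, for every non-root node $C$ with parent $C'$, setting
\[
\extr(C) = \extr(C') + [\,d(C)=\texttt{true}\,],
\qquad
\intr(C) = \intr(C') + [\,d(\phi(C))=\texttt{true}\,],
\]
where $[\cdot]$ denotes the Iverson bracket. The correctness of these recurrences is immediate from the definition of $\extr(C)$ and $\intr(C)$ as counters along the root-to-$C$ path in $T_f$, and each node is processed in $O(1)$ time after its parent has been processed. Since $T_f$ has $O(n)$ nodes (the distinct 3-extrovert cycles of a cubic triconnected planar graph on $n$ vertices are $O(n)$, as used in \cref{le:demanding-tree-lineartime}), the whole visit runs in $O(n)$ time.

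There is no real obstacle here: the heavy lifting is already packaged in \cref{le:demanding-3-extrovert-lineartime,le:demanding-3-introvert-algo,le:demanding-tree-lineartime}, and the only care needed is to ensure that the lists of demanding cycles come equipped with direct pointers to the corresponding tree nodes (so that labelling does not cost a search per cycle) and that the preorder traversal propagates the two counters together in a single pass. Summing the two phases gives the claimed $O(n)$-time bound.
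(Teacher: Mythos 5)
Your proposal is correct and matches the paper's own proof: both obtain $d(C)$ and $d(\phi(C))$ directly from \cref{le:demanding-3-extrovert-lineartime} and \cref{le:demanding-3-introvert-algo}, and both derive $\extr(C)$ and $\intr(C)$ by a preorder visit of $T_f$ accumulating the counters along root-to-node paths. Your version merely spells out the recurrences that the paper leaves implicit.
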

\begin{proof}
All 3-extrovert cycles of $G_f$ are represented by the nodes of $T_f$ and by using \cref{le:demanding-3-extrovert-lineartime} we can compute the demanding ones in $O(n)$ time. It follows that the set of labels $d(C)$ of the non-root nodes of $T_f$ can be computed in $O(n)$ time. Also, by \cref{le:demanding-3-introvert-algo}, the set of labels $d(\phi(C))$ can be computed in $O(n)$ time. From the labels $d(C)$ and $d(\phi(C)$, the indexes $\extr(C)$ and $\intr(C)$ can be easily computed in $O(n)$ time through a preorder
visit of $T_f$.      	
\end{proof}

For each node $C$ of $T_f$ (including the root), denote by $F_C$ the set of faces of $G_f$ that belong to $G_f(C)$ and that do not belong to $G_f(C')$ for any child-cycle $C'$ of $C$ in $T_f$. The sets $F_C$ (for all nodes $C$ of $T_f$) form a partition of the set of faces of $G_f$ distinct from $f$, i.e., each face $f' \neq f$ belongs to exactly one $F_C$.
We recall that the face array $A_f$ has one entry for every face of $G_f$.
Each entry $f'$ of $A_f$ is equipped with the following information.

\begin{itemize}
	\item A pointer $\tau(f')=C$ that maps each $f'$ to the node $C$ of $T_f$ such that $f' \in F_C$. For the external face $\tau(f)$ is null.

    \item The number $\delta_e(f')$ of demanding 3-extrovert cycles of $G_f$ having $f'$ as leg face.

    \item The number $\delta_i(f')$ of demanding 3-introvert cycles of $G_f$ having $f'$ as leg face.


    \item The number $m_{f'}$ of flexible edges incident to $f'$ and the sum $s_{f'}$ of their flexibilities. Also, if $m_{f'} = 1$ a pointer $p_1(f')$ to the unique flexible edge of $f'$. If $m_{f'} = 2$ the pointers $p_1(f')$ and $p_2(f')$ to the two flexible edges of $f'$.

\end{itemize}

%

\begin{restatable}{lemma}{leExtDemandingCounter}\label{le:ext-demanding-counter}
	 The values associated with each entry $f'$ of $A_f$ can be computed in overall $O(n)$ time.
\end{restatable}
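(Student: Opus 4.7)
The plan is to compute the four kinds of annotations independently, using the primitives already established earlier in the paper, and to argue that each computation runs in $O(n)$ time because it touches each edge, face, or demanding cycle only a constant number of times. The underlying reason linearity is achievable is that Euler's formula gives $|E|, |F| = O(n)$ for the triconnected planar cubic graph $G_f$, and by Lemma~\ref{le:demanding-3-extrovert-lineartime} and Lemma~\ref{le:demanding-3-introvert-algo} the sets of demanding $3$-extrovert and demanding $3$-introvert cycles have cardinality $O(n)$ and can be listed in $O(n)$ time.

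First I would initialize all counters and pointers to zero/$\mathrm{null}$ in time proportional to the number of faces, which is $O(n)$. Then for $m_{f'}$, $s_{f'}$, $p_1(f')$ and $p_2(f')$ I would scan the edge list of $G_f$ once: for every edge $e$ with $\flex(e)>0$, I visit the two faces $f', f''$ incident to $e$ and increment $m_{f'}, m_{f''}$ by one, add $\flex(e)$ to $s_{f'}, s_{f''}$, and, provided the current value of $m$ is at most two, store $e$ as $p_1$ or $p_2$ of that face. This is $O(|E|) = O(n)$. For $\delta_e(f')$, I would iterate over the precomputed list of demanding $3$-extrovert cycles and, for each such cycle $C$, retrieve its three leg faces and increment their $\delta_e$ values; since each demanding cycle contributes to exactly three faces, the total cost is $O(|D(G_f)|) = O(n)$. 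The value $\delta_i(f')$ is computed analogously from the list of demanding $3$-introvert cycles produced by Lemma~\ref{le:demanding-3-introvert-algo}.

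The subtle part is computing $\tau(f')$, because we need to identify, for every internal face $f'$, the \emph{innermost} $3$-extrovert cycle whose subgraph contains $f'$. My plan is to exploit the tree $T_f$: since the sets $F_C$ form a partition of the internal faces of $G_f$, assigning $\tau$ amounts to listing each $F_C$ in total linear time. I would initialize $\tau(f') := C_f$ for every internal face $f'$ (the external face $f$ gets $\tau(f)=\textrm{null}$) and then process the nodes $C$ of $T_f$ in \emph{postorder}, i.e., from the leaves upwards. When a node $C$ is processed, I perform a dual flood-fill of $G_f$ starting from any face incident to an edge of $C$ that lies inside $C$, crossing dual edges but refusing to cross any edge that belongs to a child-cycle $C'$ of $C$ in $T_f$ (which is detected in $O(1)$ by a boolean flag attached to the edges of each child-cycle, set when $C'$ was processed). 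Every face visited during this flood-fill receives $\tau(f')=C$ and is marked so that subsequent flood-fills from ancestors of $C$ skip it. In this way each face is visited $O(1)$ times, each edge is traversed $O(1)$ times, and the edge-marking bookkeeping per cycle is linear in the length of the cycle; summing over all nodes of $T_f$ gives $O(n)$ overall.

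The main obstacle I anticipate is making the flood-fill step of step~4 rigorous, in particular handling the case in which a leg of a child-cycle $C'$ is traversed by the fill (a leg is not an edge of $C'$ but still delimits $G_f(C')$), and ensuring that a face of $F_{C_f}$ that happens to be adjacent to no $3$-extrovert cycle at all is still reached by the fill from $C_f$. Both issues are resolved by treating the three legs of every node $C'$ as blocking edges in addition to the edges of $C'$ itself, and by starting the fill at $C_f$ not from a boundary face but from every face incident to the external face $f$; this guarantees that every internal face is enumerated exactly once across the whole procedure, preserving the $O(n)$ bound.
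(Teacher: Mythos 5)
Your computations of $m_{f'}$, $s_{f'}$, the pointers $p_1,p_2$, and of $\delta_e(f')$ and $\delta_i(f')$ match the paper's proof essentially verbatim: a single scan of the edge list for the flexibility data, and a traversal of the $O(n)$ demanding cycles charging each to its three leg faces (for $\delta_i$ one uses that $\phi(C)$ has the same legs, hence the same leg faces, as $C$, by \cref{le:3-extro-3-intro}). The only real divergence is in the computation of $\tau$. The paper peels $T_f$ from the leaves: it assigns $\tau(f')=C$ to the still-unassigned faces inside the current leaf $C$, assigns the three leg faces of $C$ to the parent, then contracts $G_f(C)$ to a single degree-three vertex and deletes the leaf, so that every face is touched a constant number of times. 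Your postorder dual flood-fill is a legitimate alternative with the same complexity.

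However, your proposed resolution of the anticipated ``obstacle'' --- treating the three legs of every child-cycle $C'$ as blocking edges --- rests on a misconception and can make the fill incomplete. A leg of $C'$ does not delimit $G_f(C')$: by definition $G_f(C')$ consists of $C'$ and what lies inside it, so both faces incident to a leg of $C'$ are leg faces of $C'$ and therefore belong to $F_C$, not to $G_f(C')$. Crossing a leg is harmless; blocking it is what causes trouble. In a cubic triconnected graph a face of $F_C$ can be bounded entirely by contour paths and legs of two sibling child-cycles $C_1,C_2$ of $C$ (a common leg face of $C_1$ and $C_2$ not incident to $C$); with legs blocked this face is unreachable from your seed, it is never reached by any ancestor's fill either (ancestors block all of $G_f(C)$), and it retains the wrong initial value $\tau=C_f$. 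The correct rule is to block only the edges of the child-cycles' subgraphs $G_f(C')$ (equivalently, the child-cycle boundaries, their interiors being already marked); then $F_C$ is exactly the set of faces tiling the connected region $\mathrm{int}(C)\setminus\bigcup_i G_f(C_i)$, which is dual-connected through the remaining edges (legs included), so a single seed per node suffices and the $O(n)$ bound stands.
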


\begin{proof}
	Concerning the computation of $\tau(f')$, we initialize $\tau(f')$ with a null value for all the faces $f'$ of $G_f$. Then, we recursively remove leaves from $T_f$. Let $C$ be the current non-root leaf of $T_f$ and let $C'$ the parent of $C$. We set $\tau(f')=C$ for all the faces that are inside $C$ for which $\tau(f')$ is null. We set $\tau(f'')=C'$ for the three leg faces of $C$. Then we collapse $C$ in $G_f$ into a degree-three vertex and we remove the leaf $C$ from $T_f$. When $C=C_f$, that is, $C$ is the only leaf of $T_f$ and it is also its root, then each face $f'''$ of $G_f$ for which $\tau(f''')$ is null is set $\tau(f''')=C$.
	
	Concerning the computation of the values $\delta_e(f')$ and $\delta_i(f')$, for each contour path of each 3-extrovert and 3-introvert cycle of $G_f$ we assume to have a pointer to the
	corresponding leg face (this set of pointers can be constructed with an $O(n)$ time preprocessing). We first apply the technique of \cref{le:addinformation-in-linear-time} to compute the values $d(C)$ and $d(\phi(C))$ for every node $C$ of $T_f$. We then initialize to zero the values $\delta_e(f')$ and $\delta_i(f')$ for every face $f'$. We visit $T_f$ and for each node $C$ and for each leg face $f'$ of $C$ we increment $\delta_e(f')$ by one if $d(C)=\texttt{true}$ and we increment $\delta_i(f')$ by one if $d(\phi(C))=\texttt{true}$. Since every 3-extrovert cycle represented in $T_f$ has three leg faces and since there are $O(n)$
	3-extrovert cycles in $G_f$, the values $\delta_e(f')$ and $\delta_i(f')$ can be computed in $O(n)$ time.
	
	
	Finally, we describe how to compute $m_{f'}$ and $s_{f'}$. For each face $f'$, we inizialize  the values of $m_{f'} = s_{f'} = 0$ and the pointes $fe_0(f')$ and $fe_1(f')$ to null.
	We list all edges of $G_f$ and if the current edge $e$ is flexible, we increment $m_{f''}$ and $m_{f'''}$ for the two faces $f''$ and $f'''$ incident to $e$. Also, we sum the flexibiliy $flex(e)$ to $s_{f''}$ and $s_{f'''}$.
	In addition, each time we set $m_{f'} = 1$ for some face $f'$, we also set $fe_0(f')$ to point to the current edge $e$; each time we set $m_{f'} = 2$, we also set $fe_1(f')=e$. If instead we set $m_{f'}$ to a value greater than $2$, we reset $fe_0(f')$ and $fe_1(f')$ to null.
	
	All the operations described above can be performed in $O(n)$ time.
\end{proof}

\begin{figure}[tb]
	\centering
	\subfloat[]{\label{fi:bend-counter-a}\includegraphics[width=0.48\columnwidth]{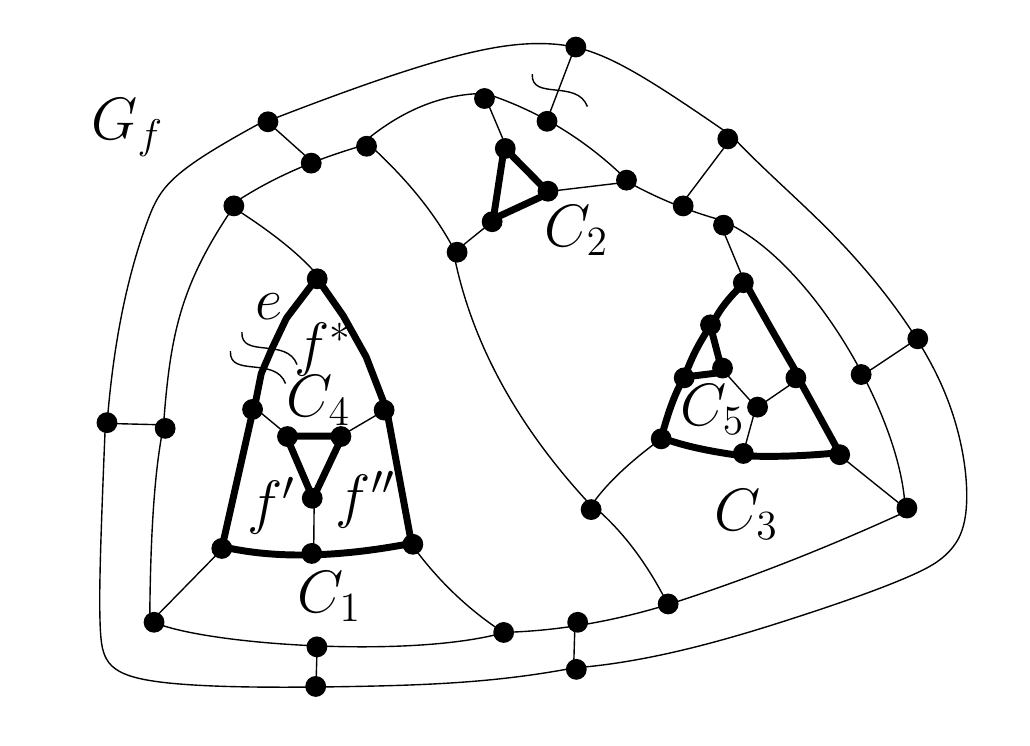}}
	\hfil
	\subfloat[]{\label{fi:bend-counter-c}\includegraphics[width=0.48\columnwidth]{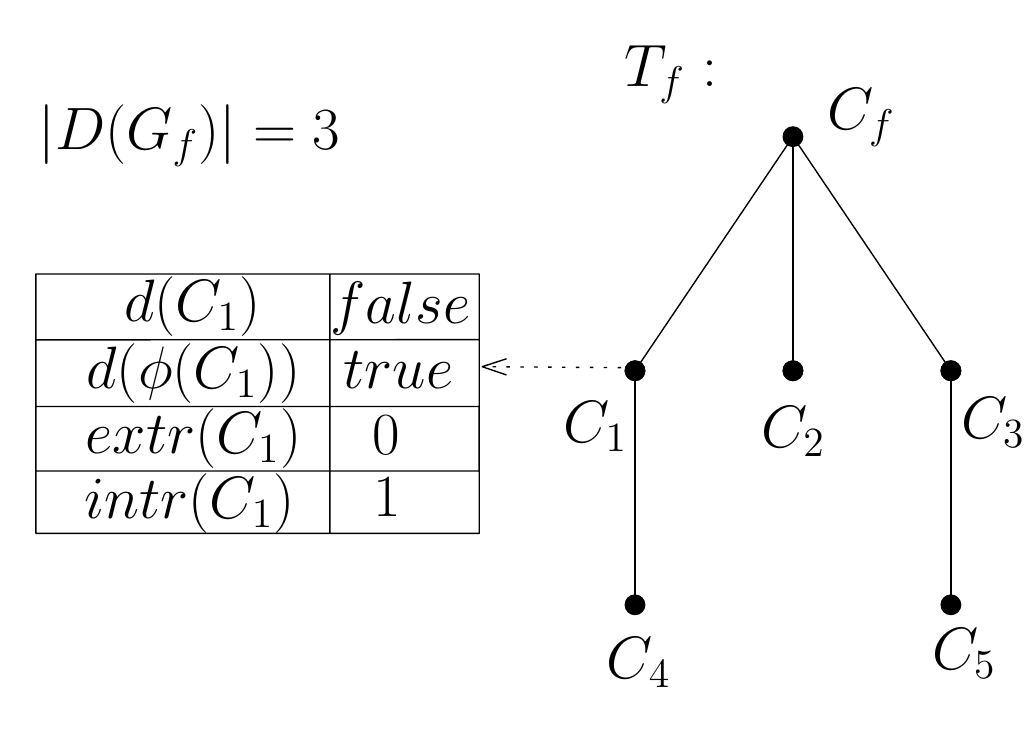}}
	\hfil
	\subfloat[]{\label{fi:bend-counter-d}\includegraphics[width=0.48\columnwidth]{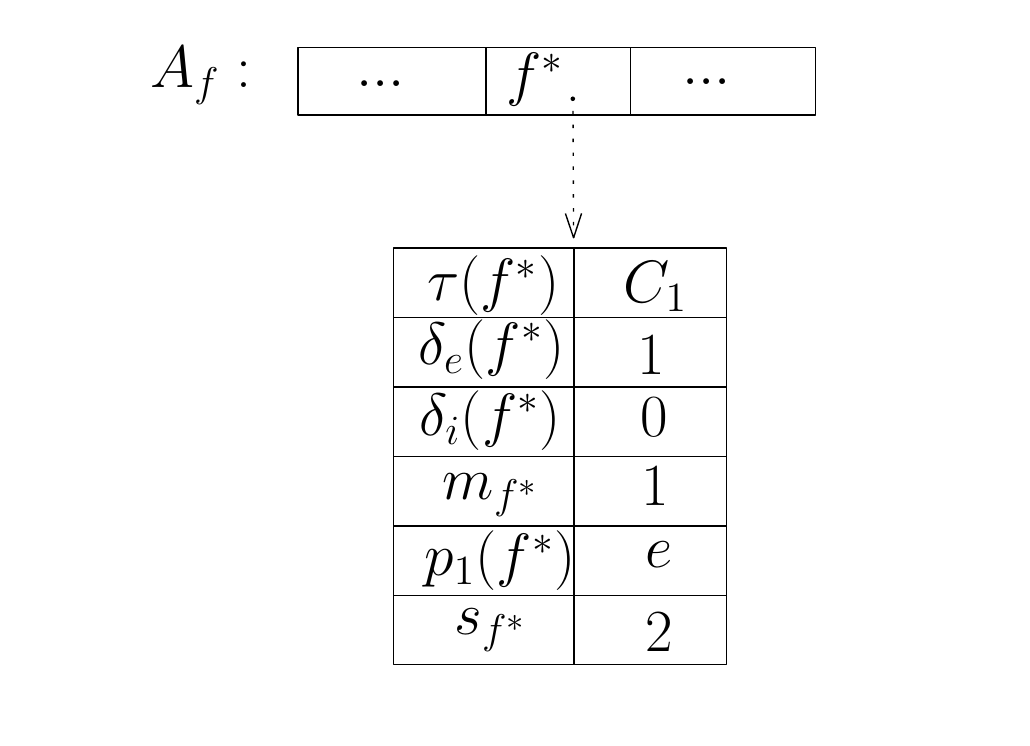}}
	\hfil
	\subfloat[]{\label{fi:bend-counter-b}\includegraphics[width=0.48\columnwidth]{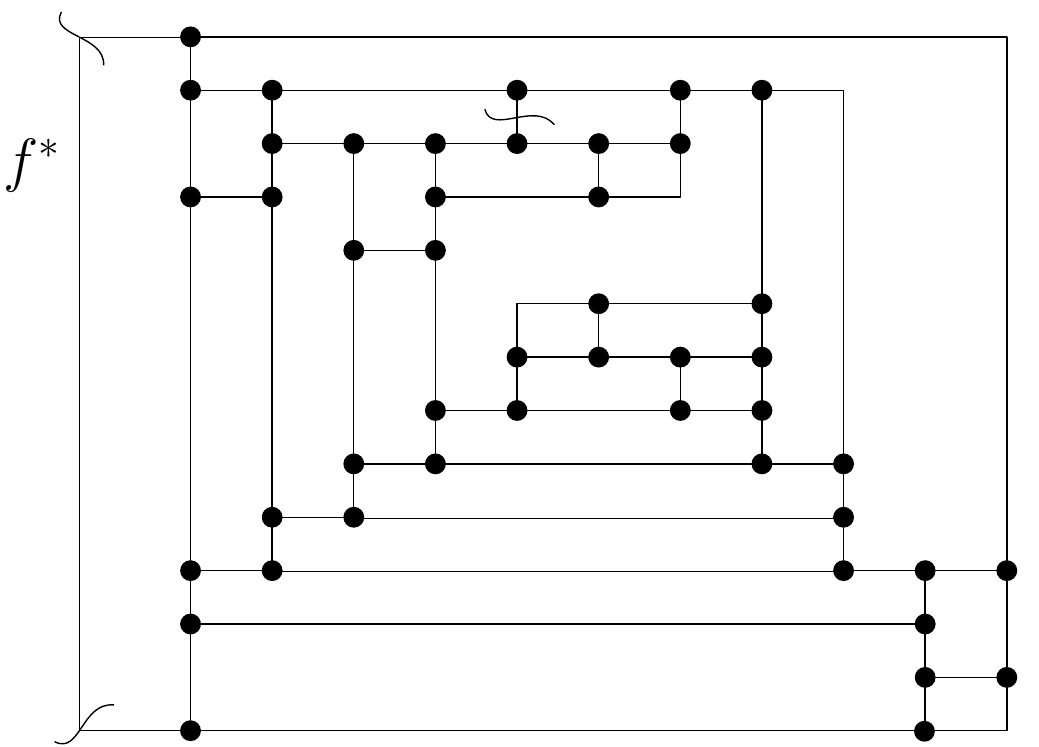}}
	\hfil
	\caption{(a) A plane graph $G_f$; (b) The 3-extrovert tree $T_f$ and the value of $|D(G_f)|$; (c) The face array $A_f$; (d) A minimum-bend orthogonal representation of $G_{f^*}$. Note that $\flex(f)=2$ and $|D_{f^*}(G_{f^*})|=1$. We have $|D(G_{f^*})|=|D(G_{f})|-\extr(C_1)+\intr(C_1)+|D_{f^*}(G_{f^*})|-\delta_e(f^*)=4$. Hence, $c(G_{f^*})=|D(G_{f^*})|+4-\min\{4,|D_{f^*}(G_{f^*})|+\flex(f^*)\}=5$.}\label{fi:bend-counter}
\end{figure}

\begin{lemma}\label{le:bend-counter_computation}
	The \texttt{Bend-Counter} of $G$ can be constructed in $O(n)$ time.
\end{lemma}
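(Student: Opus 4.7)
The plan is to assemble the data structure piece by piece, invoking each of the preceding lemmas to bound the construction time of every component, and then adding their costs.

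First I would compute a reference embedding of $G$ by invoking Lemma~\ref{le:ref-embedding}, which provides such an embedding $G_f$ in $O(n)$ time. Starting from $G_f$, I would build the 3-extrovert tree $T_f$ by means of Lemma~\ref{le:demanding-tree-lineartime}, which runs in $O(n)$ time, and simultaneously compute the set of demanding 3-extrovert cycles of $G_f$ via Lemma~\ref{le:demanding-3-extrovert-lineartime}, still in $O(n)$ time; a single counter over this set yields the value $|D(G_f)|$ stored by the data structure.

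Next, for every non-root node $C$ of $T_f$ I would populate the labels $d(C)$, $d(\phi(C))$ and the indices $\extr(C)$, $\intr(C)$. The labels $d(C)$ are a direct restatement of the output of Lemma~\ref{le:demanding-3-extrovert-lineartime}, while the labels $d(\phi(C))$ are obtained from Lemma~\ref{le:demanding-3-introvert-algo}. With these labels in place, the cumulative counters $\extr(C)$ and $\intr(C)$ are computed through a single top-down traversal of $T_f$ by adding the contribution of $C$ to that of its parent. The overall cost of this phase is $O(n)$ by Lemma~\ref{le:addinformation-in-linear-time}.

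Finally, I would build the face array $A_f$: for every face $f'$ of $G_f$ the pointer $\tau(f')$, the counters $\delta_e(f')$ and $\delta_i(f')$, the values $m_{f'}$ and $s_{f'}$, and the pointers $p_1(f')$ and $p_2(f')$ (when defined) are computed in $O(n)$ total time by Lemma~\ref{le:ext-demanding-counter}. Summing the five $O(n)$ contributions above yields a total construction time of $O(n)$, which establishes the lemma. There is no real technical obstacle here, since the work has been isolated in the previous lemmas; the only point to make explicit is that the five building blocks can indeed be executed sequentially without interference, because each one only reads information already produced by the preceding one (the reference embedding feeds into $T_f$, $T_f$ feeds into the cycle-labelling, and the labels feed into the face-array counters).
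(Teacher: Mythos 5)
Your proposal is correct and follows essentially the same route as the paper: it assembles the \texttt{Bend-Counter} by invoking \cref{le:ref-embedding}, \cref{le:demanding-tree-lineartime}, \cref{le:addinformation-in-linear-time}, and \cref{le:ext-demanding-counter} in sequence, and obtains $|D(G_f)|$ by counting the nodes $C$ of $T_f$ with $d(C)=\texttt{true}$, exactly as the paper does. The only difference is that you spell out the dependency chain among the components slightly more explicitly than the paper's own (rather terse) proof.
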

\begin{proof}
A reference embedding of $G_f$ of $G$ can be computed in $O(n)$ time by \cref{le:ref-embedding}. Also, by \cref{le:demanding-tree-lineartime} the 3-extrovert tree $T_f$ can be computed in $O(n)$ time.
For every non-root node of $T_f$, the statament is a consequence of \cref{le:demanding-tree-lineartime,le:addinformation-in-linear-time,le:ext-demanding-counter}.
%
Finally, we can compute $|D(G_f)|$ by performing a traversal of $T_f$ and by counting each node $C$ such that $d(C)=\texttt{true}$.
\end{proof}

\begin{lemma}\label{le:demanding-external-algo}
	Given the \texttt{Bend-Counter} of $G$, for any choice of the external face $f^*$ of $G$ the number $|D_{f^*}(G_{f^*})|$ of edge-disjoint demanding 3-extrovert cycles of $G_{f^*}$ incident to $f^*$ can be computed in $O(1)$ time.
\end{lemma}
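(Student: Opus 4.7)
The plan is to retrieve $\delta_e(f^*)$ and $\delta_i(f^*)$ from the face array $A_f$ in $O(1)$ time and then apply a constant-size case analysis driven by the structural \cref{le:fagiolo-bianco,le:fagiolo-nero-extrovert}. First I would establish that the demanding 3-extrovert cycles of $G_{f^*}$ incident to $f^*$ are exactly the demanding cycles of $G_f$ having $f^*$ as a leg face. By \cref{pr:extrovert-introvert}, any such incident cycle is either (i) a demanding 3-extrovert cycle of $G_f$ with $f^*$ as leg face, which remains demanding 3-extrovert in $G_{f^*}$ by \cref{le:extrovert-extrovert-coloring}, or (ii) a demanding 3-introvert cycle of $G_f$ with $f^*$ as leg face, which becomes demanding 3-extrovert in $G_{f^*}$ by \cref{le:extrovert-introvert-coloring}. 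Hence there are precisely $\delta_e(f^*)+\delta_i(f^*)$ such incident cycles.

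Next, I would argue that every demanding 3-extrovert cycle of $G_{f^*}$ that is \emph{not} incident to $f^*$ is edge-disjoint from all demanding 3-extrovert cycles of $G_{f^*}$. Such a cycle is either a demanding 3-extrovert cycle of $G_f$ (and all demanding 3-extrovert cycles of $G_f$ are pairwise edge-disjoint because in the reference embedding no 3-extrovert cycle has an edge on $f$) or it is the $\phi$-image of an ancestor of $\tau(f^*)$ in $T_f$, whose edges all lie in the ring of that ancestor and hence outside the region containing the incident cycles by the containment property implicit in \cref{le:3-extro-3-intro}. Consequently, computing $|D_{f^*}(G_{f^*})|$ reduces to counting how many of the $\delta_e(f^*)+\delta_i(f^*)$ incident cycles are pairwise edge-disjoint among themselves.

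The case analysis is: (a)~if $\delta_e(f^*)\geq 1$, \cref{le:fagiolo-bianco} gives $\delta_i(f^*)\leq 1$ and the pairwise edge-disjointness of all incident cycles, so $|D_{f^*}(G_{f^*})|=\delta_e(f^*)+\delta_i(f^*)$; (b)~if $\delta_e(f^*)=0$ and $\delta_i(f^*)\leq 1$, the incident set has at most one cycle and $|D_{f^*}(G_{f^*})|=\delta_i(f^*)$; (c)~if $\delta_e(f^*)=0$ and $\delta_i(f^*)\geq 2$, \cref{le:fagiolo-nero-extrovert}(b) together with \cref{pr:spartizione-faccia} forces pairwise edge-sharing among the incident $\phi$-cycles, so none is independent and $|D_{f^*}(G_{f^*})|=0$. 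Every face-array lookup and every arithmetic operation runs in constant time, yielding the claimed $O(1)$ bound. The main obstacle is the careful verification of case (c): here I would leverage the pairwise independence of the underlying 3-extrovert cycles $C_0,\ldots,C_{k-1}$ (shown inside the proof of \cref{le:fagiolo-nero-extrovert}(b)) and invoke \cref{pr:spartizione-faccia} to exhibit, for every pair $\phi(C_i),\phi(C_j)$, an edge of $f^*$ that belongs to neither $C_i$ nor $C_j$ nor any of their legs, and is therefore contained simultaneously in $\phi(C_i)$ and $\phi(C_j)$, witnessing their non-independence.
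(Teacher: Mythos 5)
Your proposal is correct and follows essentially the same route as the paper's proof: read $\delta_e(f^*)$ and $\delta_i(f^*)$ from the face array, identify the incident demanding cycles of $G_{f^*}$ as the demanding 3-extrovert plus demanding 3-introvert cycles of $G_f$ having $f^*$ as a leg face, and resolve edge-disjointness via \cref{le:fagiolo-bianco,le:fagiolo-nero-extrovert}; your three-way case split is equivalent to the paper's two-way split on $\delta_i(f^*)\le 1$ versus $\delta_i(f^*)>1$. The only difference is your explicit (and reasonable) check that non-incident demanding cycles cannot spoil edge-disjointness, a point the paper leaves implicit.
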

\begin{proof}
	If $f^* = f$, we have that $|D_{f^*}(G_{f^*})|=0$ by the fact that $G_f$ is a reference embedding. Assume that $f^* \neq f$. Let $C^*=\tau(f^*)$ be the lowest node of $T_f$ such that $f^*$ belongs to $G_f(C^*)$. A demanding 3-extrovert cycle of $G_{f^*}$ that has $f^*$ as a leg face may be either a demanding 3-extrovert cycle $C'$ or a demanding 3-introvert cycle $\phi(C'')$ of $G_f$. Observe that $C'$ ($\phi(C'')$, resp.) has $f^*$ as a leg face also in $G_f$ and, since $f^*$ belongs to $G_f(C^*)$, $C'$ ($\phi(C'')$, resp.) also belongs $G_f(C^*)$.
	
	Consider a demanding 3-extrovert cycle $C'$ of $G_f$ that has $f^*$ as a leg face. We have that cycle $C'$ is a demanding 3-extrovert cycle also of $G_{f^*}$, because of \cref{pr:extrovert-introvert} and of the fact that $f^*$ is not a face of $G_f(C')$.
	Conversely, consider a demanding 3-introvert cycle $\phi(C'')$ of $G_f$ that has $f^*$ as a leg face. We have that $\phi(C'')$ always is a demanding 3-extrovert cycle of $G_{f^*}$, because of \cref{pr:extrovert-introvert} and of the fact that $f^*$ is a face of $G_f(\phi(C''))$. The number $|D_{f^*}(G_{f^*})|$ is the sum of the above cycles, that is, the sum of $\delta_e(f^*)$ and $\delta_i(f^*)$, minus those that are not edge disjoint. We distinguish between two cases.
	
	\begin{itemize}
		
		\item [(a)] $\delta_i(f^*) \le 1$. In this case, whatever is the value of $\delta_e(f^*) \geq 0$, \cref{le:fagiolo-bianco} guarantees that all such demanding cycles are edge disjoint. Hence, the number of non-intersecting demanding 3-extrovert cycles in $G_{f^*}$ incident to $f^*$ is $|D_{f^*}(G_{f^*})| = \delta_i(f^*) + \delta_e(f^*)$.
		
		\item [(b)] $\delta_i(f^*) > 1$. In this case, by \cref{le:fagiolo-nero-extrovert} we have that $\delta_e(f^*) \leq 1$. Also, by the same lemma we have that any two demanding cycles having $f^*$ as a leg face share at least one edge. It follows that $|D_{f^*}(G_{f^*})| = 0$.
	\end{itemize}
	
	Since values $\tau(f^*)$, $\delta_i(f^*)$, and $\delta_e(f^*)$, can be accessed in $O(1)$ time, the computation of $|D_{f^*}(G_{f^*})|$ can be performed in $O(1)$ time.
\end{proof}

\begin{lemma}\label{le:demanding-algo}
	Given the \texttt{Bend-Counter} of $G$, for any choice of the external face $f^*$ of $G$ the number $|D(G_{f^*})|$ of the non-intersecting demanding 3-extrovert cycles of $G_{f^*}$ can be computed in $O(1)$ time.
\end{lemma}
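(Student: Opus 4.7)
The plan is to derive a closed-form formula for $|D(G_{f^*})|$ in terms of labels stored in the \texttt{Bend-Counter} and the value $|D_{f^*}(G_{f^*})|$ delivered by \cref{le:demanding-external-algo}, each accessible in $O(1)$ time. The case $f^*=f$ is immediate: $|D(G_f)|$ is stored, so I may assume $f^*\neq f$ and put $C^*=\tau(f^*)$.

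First I would enumerate the demanding 3-extrovert cycles of $G_{f^*}$ by tracking which demanding cycles of $G_f$ they come from. By \cref{pr:extrovert-introvert}, a 3-extrovert cycle $C$ of $G_f$ stays 3-extrovert in $G_{f^*}$ precisely when $f^*\notin G_f(C)$, i.e.\ when $C$ does not lie on the path $\pi$ from $C^*$ to the root $C_f$ in $T_f$; by \cref{le:extrovert-extrovert-coloring} such cycles preserve their demanding status, contributing $|D(G_f)|-\extr(C^*)$ in total. Dually, a 3-introvert cycle $\phi(C)$ becomes 3-extrovert in $G_{f^*}$ exactly when $f^*\in G_f(\phi(C))$, and by inspecting the regions bounded by $\phi(C)$ this happens iff either $f^*\in G_f(C)$ (so $C\in\pi$) or $f^*$ is a leg face of $C$; these two alternatives are mutually exclusive and exhaustive, contributing $\intr(C^*)$ and $\delta_i(f^*)$ respectively. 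By \cref{le:extrovert-introvert-coloring} combined with \cref{le:demanding-3-introvert-charact}, demanding 3-introvert cycles of $G_f$ in these two families become demanding 3-extrovert cycles of $G_{f^*}$. Hence the total number of demanding 3-extrovert cycles of $G_{f^*}$ is $|D(G_f)|-\extr(C^*)+\intr(C^*)+\delta_i(f^*)$.

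Next I would pass from this total to the edge-disjoint count $|D(G_{f^*})|$. The observation used inside the proof of \cref{th:fixed-embedding-min-bend}, namely that two demanding 3-extrovert cycles sharing an edge must both contain an edge of the external face, implies that demanding 3-extrovert cycles of $G_{f^*}$ not incident to $f^*$ are automatically edge-disjoint from every other demanding 3-extrovert cycle and hence all belong to $D(G_{f^*})$; the edge-disjoint subset of the $\delta_e(f^*)+\delta_i(f^*)$ cycles that are incident to $f^*$ is, by definition, $D_{f^*}(G_{f^*})$. Assembling the pieces gives
\[
|D(G_{f^*})|=|D(G_f)|-\extr(C^*)+\intr(C^*)-\delta_e(f^*)+|D_{f^*}(G_{f^*})|,
\]
which agrees with the formula previewed in the caption of \cref{fi:bend-counter}. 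Every right-hand side term is an $O(1)$ lookup in the \texttt{Bend-Counter} except for $|D_{f^*}(G_{f^*})|$, which is $O(1)$ by \cref{le:demanding-external-algo}.

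The main obstacle I anticipate is the geometric bookkeeping that underlies the dichotomy ``$f^*\in G_f(\phi(C))$ iff $f^*\in G_f(C)$ or $f^*$ is a leg face of $C$'': the leg faces of $C$ straddle the two cycles $C$ and $\phi(C)$, so some care is needed to argue that a whole leg face of $C$ lies inside $\phi(C)$. Once this identity is established, together with the edge-sharing observation inherited from \cref{th:fixed-embedding-min-bend}, the counting and the $O(1)$ evaluation close the proof.
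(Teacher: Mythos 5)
Your proposal is correct and follows essentially the same route as the paper: both compute $|D(G_{f^*})|$ by difference from the stored $|D(G_f)|$ of the reference embedding, using the path quantities $\extr(\tau(f^*))$ and $\intr(\tau(f^*))$, the face counters $\delta_e(f^*),\delta_i(f^*)$, and the value $|D_{f^*}(G_{f^*})|$ from the preceding lemma, all of which are $O(1)$ lookups. The only (cosmetic) difference is that you first count all demanding 3-extrovert cycles of $G_{f^*}$ and then restrict to the edge-disjoint ones, which yields the unconditional subtraction of $\delta_e(f^*)$ exactly as in the formula accompanying the paper's figure, whereas the paper's prose phrases that subtraction as conditional on $|D_{f^*}(G_{f^*})|>0$; your version handles the corner case $\delta_i(f^*)>1$, $\delta_e(f^*)=1$ cleanly.
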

\begin{proof}
	If $f^* = f$ the value of $|D(G_{f})|$ is provided by the \texttt{Bend-Counter} of $G$. Otherwise, let $f^*$ be a face $f^* \neq f$. Our general strategy is that of computing $|D(G_{f^*})|$ by difference with respect to $|D(G_f)|$. In fact, when moving the external face from $f$ to $f^*$, some demanding 3-extrovert cycles appear and some others disappear.
	
	Let $C^*=\tau(f^*)$ be the lowest node of $T_f$ such that $f^*$ belongs to $G_f(C^*)$.
	By \cref{pr:extrovert-introvert}, we have that for every node $C$ in the path $\Pi_{C^*}$ from the root $C_f$ of $T_f$ to $C^*$, $C$ becomes a 3-introvert cycle of $G_{f^*}$ while $\phi(C)$ becomes a 3-extrovert cycle of $G_{f^*}$. Also, by definition of demanding 3-introvert cycle, if $\phi(C)$ is demanding in $G_f$ it becomes a demanding 3-extrovert cycle in $G_{f^*}$.
	
	Further, some other cycles of $G(C^*)$ may become edge disjoint 3-extrovert cycles incident to $f^*$ in $G_{f^*}$. These are exactly the $|D_{f^*}(G_{f^*})|$ independent demanding 3-extrovert cycles addressed in \cref{le:demanding-external-algo}. Observe that if $|D_{f^*}(G_{f^*})| > 0$, then it contains also the term $\delta_e(f^*)$, that is already accounted for in $|D(G_f)|$.
	
	In conclusion, in order to compute $|D(G_{f^*})|$, we have to: $(i)$ subtract from $|D(G_f)|$ the number $extr(C)$ of demanding 3-extrovert cycles in the path $\Pi_C$ that become 3-introvert in $G_{f'}$; $(ii)$ add the number $intr(C)$ of demanding 3-introvert cycles in the path $\Pi_C$ that become 3-extrovert in $G_{f^*}$; $(iii)$ add the number $|D_{f^*}(G_{f^*})|$ of independent demanding 3-extrovert cycles incident to $f^*$ in $G_{f^*}$, and $(iv)$ if $|D_{f^*}(G_{f^*})|>0$ subtract $\delta_e(f^*)$.
	
	Since $\intr(f^*)$, $\extr(f^*)$, $\delta_e(f^*)$ can be accessed in $O(1)$ time and $|D_{f^*}(G_{f^*})|$ by \cref{le:demanding-external-algo} can be computed in $O(1)$ time, we have that also $|D(G_{f^*})|$ can be computed in $O(1)$ time.
\end{proof}

\begin{lemma}\label{le:flex-f-algo}
	Given the \texttt{Bend-Counter} of $G$, for any face $f^*$ of $G$ the value of $\flex(f^*)$ can be computed in $O(1)$ time.
\end{lemma}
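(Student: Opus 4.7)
The plan is to reduce the computation of $\flex(f^*)$ to the case analysis of \cref{de:flex-f}, and to show that every branch can be evaluated using only the constant-size information stored at $A_f[f^*]$ (and at one or two neighboring entries of $A_f$), plus a few $O(1)$ queries on $T_f$.

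First, I would read $m_{f^*}$ and $s_{f^*}$ from $A_f[f^*]$ in $O(1)$. If $m_{f^*}=0$ or $m_{f^*}\geq 3$ (Cases~1 and~4 of \cref{de:flex-f}), $\flex(f^*)$ depends only on these two numbers and is returned immediately. Otherwise $m_{f^*}\in\{1,2\}$, and I retrieve the flexible edges $e_0$ (and $e_1$, when present) of $f^*$ via the pointers $p_1(f^*)$ and $p_2(f^*)$, together with their flexibilities. If $e_0$ and $e_1$ share a vertex $v$ (sub-case 3.b), I also obtain in $O(1)$ the third edge $e_2$ at $v$ and its flexibility via the cubic adjacency of $G$. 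Once $x_{e_0}$ (and, when needed, $x_{e_1}$ and $\flex(e_2)$) are available, the appropriate formula of Case~2 or Case~3 returns $\flex(f^*)$ in $O(1)$.

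The real work is therefore to produce each $x_{e_0}$ in constant time. By definition, $x_{e_0}=\bigl(s_{f'}-\flex(e_0)\bigr)+N(f',f^*)$, where $f'$ is the face incident to $e_0$ different from $f^*$ (reachable in $O(1)$ from the embedding), and $N(f',f^*)$ is the number of demanding 3-extrovert cycles of $G_{f^*}$ that have $f'$ but not $f^*$ as a leg face. The quantity $s_{f'}-\flex(e_0)$ is read off directly from $A_f[f']$. To compute $N(f',f^*)$ I would adapt the bookkeeping of \cref{le:demanding-external-algo,le:demanding-algo} to a generic face $f'$: a cycle counted by $\delta_e(f')$ contributes to $N(f',f^*)$ iff $f^*\neq f'$ and $f^*\notin G_f(C)$; a cycle counted by $\delta_i(f')$ contributes iff $f^*\in G_f(\phi(C))$ and $f^*$ is not already a leg face of $\phi(C)$. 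Both membership tests reduce to ancestor queries in $T_f$, which can be answered in $O(1)$ using DFS in/out intervals added to $T_f$ during the $O(n)$ construction of the Bend-Counter.

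The main obstacle is precisely this counting step: a priori, $\delta_e(f')$ and $\delta_i(f')$ could be arbitrarily large, so we cannot iterate over the cycles they count. I would overcome this by invoking \cref{le:fagiolo-nero-extrovert,le:fagiolo-bianco}: they imply that, at any given leg face $f'$, the demanding 3-extrovert and 3-introvert cycles incident to $f'$ are forced into a very restricted configuration (either all descendants of a unique ``innermost'' demanding cycle along a chain of $T_f$, or a single cycle of one type coexisting with arbitrarily many edge-disjoint companions of the other type). In every such configuration the impact of the position of $f^*$ is determined by a constant number of ancestor tests involving $\tau(f^*)$, $\tau(f')$, and the specific node of $T_f$ pointed to by $A_f[f']$, so $N(f',f^*)$ collapses to an $O(1)$ arithmetic combination of $\delta_e(f')$, $\delta_i(f')$, $\delta_e(f^*)$, and $\delta_i(f^*)$. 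Chaining this with the earlier steps yields $\flex(f^*)$ in $O(1)$ total time, which is the claim.
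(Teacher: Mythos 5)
Your reduction to the case analysis of \cref{de:flex-f} and your first observation --- that the ``flexibility part'' of $x_{e_0}$ is just $s_{f'}-\flex(e_0)$, read off the face array --- coincide with the paper's proof. The gap is in the second half of $x_{e_0}$, namely the count of demanding 3-extrovert cycles of $G_{f^*}$ having $f'$ as a leg face but not $f^*$. You correctly single this out as the main obstacle, but your resolution is an assertion rather than an argument: you claim that \cref{le:fagiolo-nero-extrovert,le:fagiolo-bianco} force $N(f',f^*)$ to ``collapse to an $O(1)$ arithmetic combination'' of $\delta_e(f')$, $\delta_i(f')$, $\delta_e(f^*)$, $\delta_i(f^*)$ plus a constant number of ancestor tests, without exhibiting the formula or proving that the position of $f^*$ relative to the (possibly many) demanding cycles at $f'$ is decided by $O(1)$ tests. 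For instance, you would need that $f^*$ lies inside $G_f(C)$ for at most one of the demanding 3-extrovert cycles $C$ counted by $\delta_e(f')$; edge-disjointness, which is all \cref{le:fagiolo-bianco} explicitly gives, does not immediately yield this. Your contribution criteria for the $\delta_e(f')$ cycles also omit the requirement that $f^*$ not be a leg face of the cycle, and the DFS-interval ancestor oracle is not part of the \texttt{Bend-Counter} as constructed, so you are silently strengthening the data structure.

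The paper short-circuits all of this with one observation you are missing: the edge $e_0$ separating $f^*$ from $f'$ is \emph{flexible}, and a demanding cycle by definition has no orange contour path, hence no demanding 3-extrovert or 3-introvert cycle runs through $e_0$. Consequently the set of demanding cycles having $f'$ as a leg face is unchanged when the external face moves between $f'$ and $f^*$, and the count you need is exactly $|D_{f'}(G_{f'})|$, which \cref{le:demanding-external-algo} already computes in $O(1)$ time from $\delta_e(f')$ and $\delta_i(f')$. With that observation your first two paragraphs complete the proof and the entire machinery of your third paragraph becomes unnecessary.
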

\begin{proof}
	The face array $A_f$ of the \texttt{Bend-Counter} provides the values of $m_{f^*}$, $s_{f^*}$, and the pointers to the unique flexible edge (pointer $p_1(f^*)$)  or to the unique pair of flexible edges (pointers $p_1(f^*)$ and $p_2(f^*)$) of $f^*$. The only missing values to compute $flex(f^*)$ are the following: (i) if $m_{f^*}=1$ and $p_1(f^*)=e$, denote by $f' \neq f^*$ the other face incident to $e$; we need the sum $x_e$ of the flexibilities of the edges distinct from $e$ of the face $f'$ plus the number of demanding 3-extrovert cycles incident to $f'$; (ii) if $m_{f^*}=2$, $p_1(f^*)=e_0$, and $p_2(f^*)=e_1$, denote by $f'$ and $f''$ the faces different from $f^*$ incident to $e_0$ and $e_1$, respectively; we need the analogous sums $x_{e_0}$ and $x_{e_1}$.
	
	In order to compute such values, first observe that the sum of the flexibilities of the edges incident to a face $f'$ distinct from a specific edge $e$ is $s_{f'} - \flex(e)$. Second, observe that the number of demanding 3-extrovert cycles incident to the face $f'$ when $f'$ is the external face is also the number of demanding 3-extrovert cycles incident to the face $f'$ when the external face is a face $f^*$ adjacent to $f'$ through a flexible edge $e$. In fact, when we move the external face from $f'$ to $f^*$, since no demanding 3-extrovert (3-introvert) cycle contains $e$ (recall that $e$ is flexible), we have that the sets of demanding 3-extrovert (3-introvert) cycles containing the faces $f'$ and $f^*$ are not changed and, therefore, by \cref{pr:extrovert-introvert}, the number of demanding 3-extrovert cycles incident to $f'$ is the same.
	The $O(1)$ time complexity of the statement descend from the fact that all the values needed to compute $\flex(f^*)$ are either provided by the face array $A_f$ (this is the case of $m_{f^*}$, $s_{f^*}$, $p_1(f^*)$, $p_2(f^*)$, $s_{f'}$, and $s_{f''}$) or they can be computed in $O(1)$ time (this is the case of $|D_{f'}(G_{f'})|$ and $|D_{f''}(G_{f''})|$ that can be computed as described in the proof of \cref{le:demanding-external-algo}).
\end{proof}

\begin{lemma}\label{le:min-bend-constant-time}
	Given the \texttt{Bend-Counter} of $G$, for every choice of the external face $f^*$, the cost $c(G_{f^*})$ of a cost-minimum orthogonal representation of $G_{f^*}$ can be computed in $O(1)$ time.
\end{lemma}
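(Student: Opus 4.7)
The plan is to combine the formula provided by Theorem~\ref{th:fixed-embedding-min-bend} with the three constant-time computations already established in the preceding lemmas. Recall that Theorem~\ref{th:fixed-embedding-min-bend} gives, for any triconnected cubic plane graph with reference embedding and chosen external face,
\[
c(G_{f^*}) \;=\; |D(G_{f^*})| \;+\; 4 \;-\; \min\{4,\; |D_{f^*}(G_{f^*})| + \flex(f^*)\}.
\]
So the cost is entirely determined by the three quantities $|D(G_{f^*})|$, $|D_{f^*}(G_{f^*})|$, and $\flex(f^*)$.

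First, I would invoke \cref{le:demanding-external-algo} to compute $|D_{f^*}(G_{f^*})|$ in $O(1)$ time from the \texttt{Bend-Counter}. Then I would invoke \cref{le:demanding-algo} to compute $|D(G_{f^*})|$ in $O(1)$ time, using the differential argument based on the path in $T_f$ from $C_f$ to $\tau(f^*)$ and the counters $\extr(\cdot)$ and $\intr(\cdot)$ stored at the nodes of $T_f$. Finally, I would invoke \cref{le:flex-f-algo} to compute $\flex(f^*)$ in $O(1)$ time from the face-array entry of $f^*$ together with the flexibilities of the (at most two) flexible edges incident to $f^*$ and the $O(1)$-time accessible values at their opposite faces.

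Having obtained the three quantities, each in constant time, the final step is a single arithmetic evaluation of the closed-form expression above, which takes $O(1)$ time. The total time is therefore $O(1)$, as required.

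There is essentially no obstacle here: all the non-trivial content — the validity of the formula itself (Theorem~\ref{th:fixed-embedding-min-bend}), the constant-time retrieval of $|D_{f^*}(G_{f^*})|$, $|D(G_{f^*})|$, and $\flex(f^*)$ from the \texttt{Bend-Counter} — has been established in the earlier lemmas of the section. The only care needed is to verify that the reference embedding $G_f$ used to build the \texttt{Bend-Counter} indeed supports all three of these lookups simultaneously for the same chosen $f^*$, which is immediate since each lemma relies only on the data stored in $T_f$ and $A_f$.
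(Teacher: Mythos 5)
Your proposal is correct and matches the paper's own proof essentially verbatim: both apply the formula of Theorem~\ref{th:fixed-embedding-min-bend} and obtain the three quantities $|D(G_{f^*})|$, $|D_{f^*}(G_{f^*})|$, and $\flex(f^*)$ in $O(1)$ time via \cref{le:demanding-algo}, \cref{le:demanding-external-algo}, and \cref{le:flex-f-algo}, respectively. No further comment is needed.
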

\begin{proof}
By Theorem~\ref{th:fixed-embedding-min-bend} we have that $c(G_{f^*}) = |D(G_{f^*})| + 4 - \min\{4, |D_{f^*}(G_{f^*})| + \flex(f^*) \}$. The values of $|D(G_{f^*})|$, $|D_{f^*}(G_{f^*})|$, and $\flex(f^*)$ can be computed in $O(1)$ time by \cref{le:demanding-external-algo},  \cref{le:demanding-algo}, and \cref{le:flex-f-algo} respectively.
%
\end{proof}



Observe that \cref{le:bend-counter_computation,le:demanding-external-algo} imply Theorem~\ref{th:bend-counter}.
\cref{fi:bend-counter} schematically illustrates the \texttt{Bend-Counter} and its use. \cref{fi:bend-counter-a} shows the plane graph $G_f$.
The demanding 3-extrovert cycles of $G_f$ are $C_2$, $C_4$, and $C_5$ and the only demanding 3-introvert cycle of $G_f$ is $\phi(C_1)$ (see also \cref{fi:introvert-colouration-1}).

We conclude with a technical lemma that will be of use in the next section.

\begin{restatable}{lemma}{leBendCounterUpdate}\label{le:bend-counter-update}
	Let $e$ be a flexible edge of $G$. If $\flex(e)$ is changed to any value in $\{1,2,3,4\}$, the \texttt{Bend-Counter} of $G$ can be updated in $O(1)$ time.
\end{restatable}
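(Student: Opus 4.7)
The plan is to observe that since both the old and new values of $\flex(e)$ lie in $\{1,2,3,4\}$, edge $e$ remains flexible after the change, so the partition of $E(G)$ into flexible and inflexible edges is unchanged. Since every notion used by the \texttt{Bend-Counter} beyond the sums $s_{f'}$ is defined purely in terms of \emph{which} edges are flexible, and not in terms of the exact flexibility values, I expect most of the stored information to remain valid without modification.

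First I would go field by field through the \texttt{Bend-Counter} and verify this claim. The reference embedding $G_f$ and the 3-extrovert tree $T_f$ are purely topological and hence unaffected. The colorings produced by the \textsc{3-Extrovert Coloring Rule} and the \textsc{3-Introvert Coloring Rule} use flexibility only through the binary test \emph{$P$ contains a flexible edge}, i.e., $\flex(P)>0$; since membership in the set of flexible edges is preserved, all contour-path colors are unchanged. By the definition of a demanding 3-extrovert cycle and by \cref{le:demanding-3-introvert-charact}, the sets of demanding 3-extrovert and demanding 3-introvert cycles therefore remain the same. Consequently the stored labels $d(C)$ and $d(\phi(C))$, the aggregate $|D(G_f)|$, the path counts $\extr(C)$ and $\intr(C)$, the face pointers $\tau(f')$, and the per-face counters $\delta_e(f')$ and $\delta_i(f')$ are all still correct. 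Moreover, the integer $m_{f'}$ (number of flexible edges on $f'$) is unchanged because $e$ remains flexible, and the pointers $p_1(f'), p_2(f')$ are determined only by the identity of the (at most two) flexible edges on $f'$, not by their flexibility values.

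The only quantities that genuinely depend on the numerical value $\flex(e)$ are the sums $s_{f'}$ for the two faces $f'$ incident to $e$. For each of these two faces I would perform the assignment $s_{f'} \leftarrow s_{f'} + \Delta$, where $\Delta = \flex_{\text{new}}(e)-\flex_{\text{old}}(e)$, and then overwrite $\flex(e)$ itself. Since only two face entries and one edge value are modified, the overall update runs in $O(1)$ time. The main obstacle, such as it is, is not algorithmic but bookkeeping: one must systematically verify that no downstream query stored in the \texttt{Bend-Counter} depends implicitly on $\flex(e)$ beyond the two $s_{f'}$ fields; the queries of \cref{le:demanding-external-algo,le:demanding-algo,le:flex-f-algo} all recompute their outputs from these fields on demand, so once the two sums (and $\flex(e)$ itself) are refreshed, every subsequent $O(1)$ lookup remains correct, proving the lemma.
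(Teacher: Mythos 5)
Your proposal is correct and follows essentially the same route as the paper's proof: both observe that since $e$ remains flexible, the red-green-orange colorings and hence the demanding-cycle data are unaffected, so only the sums $s_{f'}$ and $s_{f''}$ for the two faces incident to $e$ need an $O(1)$ update. Your field-by-field verification is just a more explicit rendering of the paper's one-line justification.
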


\begin{proof}
	Since $e$ is still flexible, changing its flexibility does not affect the red-green-orange coloring of the contour paths of the 3-extrovert and 3-introvert cycles of $G$. Hence, the set of demanding 3-extrovert (3-introvert) cycles are not modified and the tree $T_f$ of the \texttt{Bend-Counter} is not changed. Let $f'$ and $f''$ be the two faces incident to $e$. The only values that are affected by the change of flexibility of $e$ are the sums $s_{f'}$ and $s_{f''}$ of the flexibilities of the edges incident to $f'$ and $f''$ respectively, that can be updated in $O(1)$ time.
\end{proof}


\section{The Labeling Algorithm: Proofs of Theorems~\ref{th:key-result-2} and~\ref{th:1-connected-labeling}}\label{se:labeling}


We give details about the second ingredient of the proof of Theorem~\ref{th:main}, i.e., the labeling algorithm. 
\cref{sse:labeling-biconnected} proves Theorem~\ref{th:key-result-2} for biconnected graphs; it uses the procedures of \cref{sse:shape-cost-qps} to efficiently compute the representative shapes for the different nodes of the SPQR-tree of the graph. \cref{sse:labeling-1-connected} proves Theorem~\ref{th:1-connected-labeling} for 1-connected graphs.     

\subsection{The Shape-Cost Sets.}\label{sse:shape-cost-qps}


We assume first that $G$ is a biconnected graph distinct from $K_4$. We arbitrarily choose an initial edge $e^*$ of $G$, root $T$ at the Q-node $\rho$ corresponding to $e^*$, and perform a pre-processing bottom-up visit of $T$. For each visited node $\mu \neq \rho$ of $T$ we compute and store at $\mu$ the {\em shape-cost set of $\mu$}, that is the set $\{b_{e^*}^{\sigma_1}(\mu), b_{e^*}^{\sigma_2}(\mu), \ldots, b_{e^*}^{\sigma_h}(\mu)\}$, where $\sigma_i$ ($1 \leq i \leq h$) is one of the representative shapes defined by Theorem~\ref{th:shapes} and $b_{e^*}^{\sigma_i}(\mu)$ is the number of bends of an orthogonal representation $H_{\mu}$ of $G_\mu$ such that: $(i)$ every edge of $H_{\mu}$ has at most one bend; $(ii)$ $H_{\mu}$ has shape $\sigma_i$; $(iii)$ $H_{\mu}$ is bend-minimum among all orthogonal representations of $G_{\mu}$ that satisfy~$(i)$--$(ii)$. 
The shape-cost set of each node $\mu$ is efficiently computed by accessing the shape-cost sets of its children. 
In some cases, if we know that a specific representative shape is never used in a bend-minimum representation of the graph, we conventionally set its cost to infinity.

After this pre-processing visit, the labeling algorithm considers all possible ways of re-rooting $T$ at an edge $e \neq e^*$. At each re-rooting we have that for some nodes of $T$ the set of its children remains unchanged, while for some other nodes the former parent node becomes a new child node and a former child node becomes the new parent node. The shape-cost sets of those nodes of $T$ whose children have changed must be updated. The next lemmas describe how to efficiently compute the shape-cost set of every node $\mu$ of $T$ in the pre-processing visit and how to efficiently update it when re-rooting $T$. 
For each possible edge $e$ of $G$, \cref{le:b(e)} describes how to compute $b(e)$ at the end of the visit of $T$ rooted at the Q-node corresponding to $e$.  In all the statements, $n_\mu$ denotes the number of children of~$\mu$. 

\begin{restatable}{lemma}{leShapeCostSetInnerQ}\label{le:shape-cost-set-inner-Q}
	If $\mu$ is an leaf Q-node, the shape-cost set of $\mu$ is computed in $O(1)$ time in the pre-processing visit and updated in $O(1)$ time when $T$ is re-rooted at any other edge.
\end{restatable}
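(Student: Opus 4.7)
The plan is straightforward and relies entirely on the fact that the pertinent graph of a leaf Q-node is a single edge. First I would recall that, by Property~\textsf{O1} of Theorem~\ref{th:shapes}, the only representative shapes admitted for a Q-component are \zeroB- and \oneB-shapes, hence the shape-cost set of $\mu$ has exactly two entries, $b_{e^*}^{\zerob}(\mu)$ and $b_{e^*}^{\oneb}(\mu)$. Since $G_\mu$ consists of a single edge, any orthogonal representation of $G_\mu$ of shape \zeroB{} is a straight-line segment with zero bends, and any representation of shape \oneB{} has exactly one bend; by definition of bend-minimum within each shape, this immediately yields $b_{e^*}^{\zerob}(\mu)=0$ and $b_{e^*}^{\oneb}(\mu)=1$.

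Next I would observe that both values are independent of the root $e^*$ of $T$: they are intrinsic properties of the single-edge pertinent graph $G_\mu$ and do not depend on the rest of $G$ nor on which Q-node of $T$ is chosen as the root. Hence the shape-cost set can be assigned in $O(1)$ time during the initial bottom-up pre-processing visit of $T$, and when $T$ is re-rooted at any other edge the shape-cost set of $\mu$ does not change and the update is vacuous, costing $O(1)$ time.

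There is no real obstacle in this proof: the only thing to be careful about is the convention that, if $\mu$ is the Q-node currently chosen as the root $\rho$, its shape-cost set is not defined by the statement (the lemma concerns leaf Q-nodes different from the current root, which coincide with all the Q-nodes of the rooted SPQR-tree). The whole argument therefore reduces to reading off the two constants $0$ and $1$ from the definition of the representative shapes for a single-edge component.
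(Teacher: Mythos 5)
Your proposal is correct and matches the paper's own argument: the paper likewise notes that a Q-node's shape-cost set has exactly the two entries $b_{e}^{\zerob}(\mu)=0$ and $b_{e}^{\oneb}(\mu)=1$, so computing and updating it is a constant-time (indeed vacuous) operation. Your additional remarks about independence from the choice of root and about the root Q-node being excluded are consistent with the paper's conventions and add nothing that changes the argument.
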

\begin{proof}
	The procedure to compute and update the shape-cost set of $\mu$ is the same. Let $e$ the edge associated with the root of $T$ (possibly $e=e^*$). 
	Since we look for a representation with at most one bend per edge, the shape-cost set of a Q-node has two entries:
	$b_{e}^{\zerob}(\mu)=0$ and $b_{e}^{\oneb}(\mu)=1$. 
\end{proof}

\begin{restatable}{lemma}{leShapeCostSetP}\label{le:shape-cost-set-P}
	If $\mu$ is a P-node, the shape-cost set of $\mu$ is computed in $O(1)$ time in the pre-processing visit and updated in $O(1)$ time when $T$ is re-rooted at any other edge.
\end{restatable}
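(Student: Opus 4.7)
The plan is to exploit the structural restriction given by \cref{le:spqr-tree-3-graph}: a P-node $\mu$ has exactly two children $\nu_1, \nu_2$ in $T$, each being either an S-node or a Q-node. By Property~\textsf{O3} of \cref{th:shapes}, an S-component has at most five admissible spiralities ($0,1,2,3,4$), and by Property~\textsf{O1} a Q-component has only two admissible shapes ($\zerob, \oneb$). Consequently, the shape-cost sets of $\nu_1$ and $\nu_2$ have constant size and any entry can be accessed in $O(1)$ time.

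The first step is to express the shape of the P-component $H_\mu$ in terms of (i)~the spiralities $k_1, k_2$ of its two children (where $k_i \in \{0,1\}$ if $\nu_i$ is a Q-node) and (ii)~the two angles $\alpha_u, \alpha_v \in \{90^\circ, 180^\circ, 270^\circ\}$ formed at the poles $u, v$ of $\mu$ between the two edges coming from $\nu_1$ and $\nu_2$. A direct computation shows that the turn numbers $t(p_l), t(p_r)$ of the two external paths of $H_\mu$ are obtained from $k_1, k_2$ by constant corrections depending on each pole angle. Comparing the resulting pair $(t(p_l), t(p_r))$ with the shape definitions given in \cref{se:proof-structure} determines in $O(1)$ time whether $H_\mu$ is $\D$-, $\X$-, $\L$-, or $\C$-shaped for every tuple $(k_1, k_2, \alpha_u, \alpha_v)$.

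For the pre-processing visit, I would enumerate, for each representative shape $\sigma$ of $\mu$ (the two shapes in $\{\D, \X\}$ if $\mu$ is inner, or in $\{\L, \C\}$ if $\mu$ is the root child, by Property~\textsf{O2} of \cref{th:shapes}), the constant number of tuples $(k_1, k_2, \alpha_u, \alpha_v)$ that realise $\sigma$. For each valid tuple the total bend cost is the sum of the entries $b_{e^*}^{k_1}(\nu_1)$ and $b_{e^*}^{k_2}(\nu_2)$ read off in $O(1)$ time from the shape-cost sets of the children; the choice of pole angles itself introduces no bend on any edge. Setting $b_{e^*}^\sigma(\mu)$ to the minimum of these sums (or to $\infty$ if no tuple realises $\sigma$) completes the computation in $O(1)$ time. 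By Property~\textsf{O4} of \cref{th:shapes}, this minimum is the number of bends of a bend-minimum representation of $H_\mu$ with shape $\sigma$ and at most one bend per edge.

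When $T$ is re-rooted at a different edge, the parent-child structure of $\mu$ in the new tree differs from the old one by at most one swap between $\mu$'s parent and one of its children; since the shape-cost sets of the new children of $\mu$ are already available, having been produced by earlier steps of the re-rooting procedure along the path between the old and the new roots, the same constant-time enumeration produces the updated shape-cost set of $\mu$ in $O(1)$ time. The main obstacle is producing a clean and exhaustive case analysis of the pole-angle contributions, ensuring that every admissible combination of children shapes is correctly mapped to a shape of $\mu$ in $\{\D, \X, \L, \C\}$ and that no extra bends are implicitly introduced by composing the children's representations.
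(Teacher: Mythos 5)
Your proposal is correct and follows essentially the same route as the paper: the paper's proof likewise treats a Q-child as a $0$- or $1$-spiral, combines the two children's spiralities to realise each representative shape, and takes the minimum over the constant number of combinations, both in the pre-processing visit and after re-rooting. The only difference is that the paper pins down the exact spirality pairs explicitly ($0{+}2$ for $\D$, $1{+}1$ for $\X$, $4{+}2$ for $\C$, $3{+}1$ for $\L$) rather than deferring this to a case analysis over pole angles as you do.
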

\begin{proof}
	As for the Q-nodes, the procedure to compute and update the shape-cost set of $\mu$ is the same. Let $e$ the edge associated with the current root of $T$ (possibly $e=e^*$).
	If $\mu$ is a P-node, $\mu$ has two children $\nu_1$ and $\nu_2$ in $T$, each being either a Q-node or an S-node (see of \cref{le:spqr-tree-3-graph}). For simplicity, we say here that the \zeroB-shaped (resp. \oneB-shaped) representation of the edge associated with a Q-node is 0-spiral (resp. 1-spiral).
	Suppose first that $\mu$ is an inner node. A bend-minimum \D-shape representation for $G_\mu$ is obtained by combining a 0-spiral representation stored at $\nu_1$ with a 2-spiral representation stored at $\nu_2$, or vice versa.
	Hence, $b_{e}^{\d}(\mu)=\min\{b_{e}^0(\nu_1) + b_{e}^2(\nu_2), b_{e}^2(\nu_1) + b_{e}^0(\nu_2) \}$. Similarly,
	a bend-minimum \X-shape representation for $G_\mu$ is obtained by combining a 1-spiral representation stored at $\nu_1$ with a 1-spiral representation stored at $\nu_2$. Hence $b_{e}^{\x}(\mu)= b_{e}^1(\nu_1) + b_{e}^1(\nu_2)$. It follows that both $b_{e}^{\d}(\mu)$ and $b_{e}^{\x}(\mu)$ are computed in $O(1)$ time.
	
	Suppose now that $\mu$ is the root child. In this case, due to Property~\textsf{O2} of Theorem~\ref{th:shapes}, we have to compute $b_{e}^{\c}(\mu)$ and $b_{e}^{\l}(\mu)$. The value $b_{e}^{\c}(\mu)$ is obtained by merging a 4-spiral representation of $G_{\nu_1}$ with a 2-spiral representation of $G_{\nu_2}$, or vice versa. The value $b_{e}^{\l}(\mu)$ is obtained by merging a 3-spiral representation of $G_{\nu_1}$ with a 1-spiral representation of $G_{\nu_2}$, or vice versa. Hence we have: $b_{e}^{\c}(\mu)=\min\{b_{e}^4(\nu_1) + b_{e}^2(\nu_2), b_{e}^2(\nu_1) + b_{e}^4(\nu_2) \}$ and $b_{e}^{\l}(\mu)=\min\{b_{e}^3(\nu_1) + b_{e}^1(\nu_2), b_{e}^1(\nu_1) + b_{e}^3(\nu_2) \}$. Therefore, both $b_{e}^{\c}(\mu)$ and $b_{e}^{\l}(\mu)$ are computed in $O(1)$ time.  
\end{proof}

\begin{restatable}{lemma}{leShapeCostSetS}\label{le:shape-cost-set-S}
	If $\mu$ is an S-node, the shape-cost set of $\mu$ is computed in $O(n_\mu)$ time in the pre-processing visit and updated in $O(1)$ time when $T$ is re-rooted at any other edge.
\end{restatable}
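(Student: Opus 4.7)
My plan proceeds in three main steps and then identifies the chief obstacle.

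First (spirality decomposition): For an inner S-node $\mu$, $\skel(\mu)$ is a cycle of $n_\mu + 1$ edges; removing the reference edge yields a path $\pi_\mu$ whose edges are in bijection with the children $\nu_1, \ldots, \nu_{n_\mu}$. By \cref{le:k-spiral} and Property~\textsf{O3} of \cref{th:shapes}, the spirality of $H_\mu$ is the absolute value of the signed turn number along any path between the alias vertices of $\mu$ and lies in $\{0,1,2,3,4\}$. This signed turn number decomposes as $\sum_{i} s(\nu_i) + \sum_{j} a(v_j)$, where $s(\nu_i)\in\{-1,0,+1\}$ is a signed spirality contribution of the shape chosen for $\nu_i$ and $a(v_j)\in\{-1,0,+1\}$ is a fixed angular contribution at each internal vertex $v_j$ of $\pi_\mu$, determined by the local degrees and angles of $G_\mu$ at $v_j$. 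Using the shape-cost sets provided by \cref{le:shape-cost-set-inner-Q,le:shape-cost-set-P} (and the analogous R-node lemma), the admissible (contribution, cost) pair for $\nu_i$ is $(0,0)$ or $(\pm 1, 1)$ when $\nu_i$ is a Q-node, and $(\pm 1, b_{e^*}^{\d}(\nu_i))$ or $(0, b_{e^*}^{\x}(\nu_i))$ when $\nu_i$ is an inner P- or R-node.

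Second (pre-processing): The shape-cost set of $\mu$ is then computed by a one-dimensional dynamic program that scans $\pi_\mu$ from left to right and maintains a min-cost table $D_i[t]$, for $t\in\{-4,\ldots,4\}$, recording the smallest cost of reaching signed partial sum $t$ after processing the first $i$ children. Each transition from $D_{i-1}$ to $D_i$ combines the $O(1)$-size shape-cost set of $\nu_i$ with the angle contribution of the intervening internal vertex in $O(1)$ time, so the bottom-up visit handles $\mu$ in $O(n_\mu)$ time overall. The shape-cost set entries are then read off as $b_{e^*}^{k}(\mu) = \min\{D_{n_\mu}[+k], D_{n_\mu}[-k]\}$ for $k\in\{0,1,2,3,4\}$.

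Third ($O(1)$ update under re-rooting): During the same pre-processing visit I would also store, for each S-node $\mu$, the prefix tables $P_i = D_i$ and the symmetric suffix tables $S_i$ obtained by running the same DP from the opposite end of $\pi_\mu$, indexed cyclically over the skeleton. When $T$ is re-rooted and the reference edge of $\mu$ shifts from its current position $i$ to an adjacent position $i'$ in the cyclic skeleton, the old reference edge becomes a new child whose shape-cost set is available from the re-rooting step, and the updated shape-cost set of $\mu$ is assembled as the min-plus convolution $P_{i'-1} \oplus \mathrm{contrib}(i) \oplus S_{i'+1}$ over the constant-size signed-spirality domain; since all operands have $O(1)$ size, this takes $O(1)$ time.

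The main obstacle is the cyclic nature of the S-node skeleton: re-rooting may place the reference edge at any of $n_\mu$ positions around the cycle, and a naive recomputation would cost $\Theta(n_\mu)$ per move. The prefix/suffix scheme circumvents this by implicitly encoding every cyclic cut through the associativity of the min-plus convolution on signed spiralities. A subtle point to verify is that the angular contributions $a(v_j)$ depend only on the local structure of $G_\mu$ at the internal skeleton vertex $v_j$ and are therefore invariant under the choice of reference edge, so that the prefix and suffix tables remain consistent across successive re-rooting steps.
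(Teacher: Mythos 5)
Your high-level architecture (an $O(n_\mu)$ pass per S-node in the pre-processing visit, plus an $O(1)$ incremental update under re-rooting) matches the paper's, but the core computation is organized very differently, and your version has a gap that makes the computed values wrong. The paper does not run a dynamic program over signed partial turn sums. It invokes a closed-form result from~\cite{DBLP:conf/gd/DidimoLP18}: a bend-minimum $k$-spiral representation of $G_\mu$ is obtained by taking the cheapest representation of every P-/R-child and adding $B_{e^*}(k)_{extra}=\max\{0,\,k-n^{\d}_{e^*}-n^{\zerob}_{e^*}-n^{a}_{e^*}+1\}$ extra bends, where $n^{\d}_{e^*}$, $n^{\zerob}_{e^*}$, and $n^{a}_{e^*}$ are, respectively, the number of children whose cheapest representation is D-shaped, the number of real edges of $\skel(\mu)$ other than the reference edge, and the number of alias vertices at the poles. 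Because only these three aggregate counts matter --- not the positions of the contributions around the cycle --- re-rooting amounts to swapping one child in, one child out, and adjusting the counts; no prefix/suffix tables or min-plus convolutions are needed, and the cyclic structure you identify as the main obstacle never arises.

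The genuine gap is your treatment of the angular contributions $a(v_j)$. You declare them ``fixed \dots determined by the local degrees and angles of $G_\mu$ at $v_j$,'' but the angles at the degree-2 internal vertices of the skeleton path are not inputs: they are free design choices of the orthogonal representation, and each such vertex can supply one unit of turn at zero cost. This is precisely what the terms $n^{\zerob}_{e^*}+n^{a}_{e^*}-1$ in the paper's formula account for. Concretely, an inner S-component consisting of five real edges admits a $4$-spiral representation with zero bends (a $90^\circ$ angle at each of the four internal vertices), whereas your cost model --- in which a Q-node contributes a turn only at the price of a bend and vertex contributions are frozen --- would report cost $4$. So your tables $D_i[t]$ do not compute $b^k_{e^*}(\mu)$, independently of the prefix/suffix machinery. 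Two secondary points: re-rooting can move the reference edge to an arbitrary (not merely adjacent) position of $\skel(\mu)$, and the signed contribution you assign to a D-shaped child (exactly $\pm1$, never $0$) does not match the role D-shaped children play in the paper's budget, where they may but need not absorb a unit of spirality. The prefix/suffix scheme could in principle be repaired, but only after the underlying cost decomposition is corrected to let degree-2 vertices and D-shaped children supply turns for free; at that point the decomposition collapses to the paper's three counters and the extra machinery is unnecessary. You would also need to handle the paper's $\infty$-guard, i.e., reporting $b^k_e(\mu)=\infty$ to the parent when the required extra bends exceed the number of real edges available to host them.
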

\begin{proof}
	Denote by $e'=(u,v)$ the reference edge of $\mu$. 
	If $\mu$ is an S-node, $\skel(\mu)$ is a simple cycle. More precisely, if $\mu$ is an inner node then $e'$ is a virtual edge, else $e'$ corresponds to the real edge associated with the root of $T$. 
	Each virtual edge of $\skel(\mu) \setminus e'$ corresponds to a child of $\mu$ in $T$, which is either a P-node or an R-node. 
	
	Consider first the pre-processing visit. Denote by $n^{\d}_{e^*}$ the number of virtual edges of $\skel(\mu) \setminus e'$ for which the cheapest representation of the corresponding component is $\D$-shaped, and denote by $n^{\zerob}_{e^*}$ the number of real egdes of $\skel(\mu) \setminus e'$ (i.e., the number of Q-children of $\mu$). Also, let $n^a_{e^*}$ denote the number of alias vertices associated with the poles of $\mu$, i.e., $n^a_{e^*}=0$ if both $u$ and $v$ have degree one in $G_\mu$ (which is always the case when $\mu$ is an inner node), $n^a_{e^*}=1$ if exactly one of $u$ and $v$ has degree two in $G_\mu$, and $n^a_{e^*}=2$ if both $u$ and $v$ have degree two in $G_\mu$.   
	In~\cite{DBLP:conf/gd/DidimoLP18} it is proved that for each possible value $k \in \{0, 1, 2, 3, 4 \}$ of spirality, a bend-minimum representation of $H_\mu$ with spirality $k$ can be obtained by selecting the cheapest representation for each P- and R-child of $\mu$ (in case of ties the $\D$-shaped representation is chosen) and by adding a number of extra bends equals to $B_{e^*}(k)_{extra} = \max\{0, k - n^{\d}_{e^*} - n^{\zerob}_{e^*} - n^a_{e^*} + 1\}$. Note that, for $k \in \{0,1\}$, $B_{e^*}(k)_{extra}=0$, because $n^{\zerob}_{e^*} \geq 1$. Therefore, $b_{e^*}^k(\mu)$ is set equal to the sum of the costs of the cheapest representations for the P- and R-children of $\mu$ plus $B_{e^*}(k)_{extra}$, which implies that $b_{e^*}^k$ can be computed in $O(n_\mu)$ time. Also, in the pre-processing visit, we store at $\mu$ the values $b_{e^*}^k(\mu)$, $n^{\d}_{e^*}$, $n^{\zerob}_{e^*}$, and $n^a_{e^*}$, which will be used at each re-rooting of $T$.
	
	Note however that if $n^{\zerob}_{e^*} < B_{e^*}(k)_{extra}$ it is not possible to arbitrarily insert each of the extra bends on a distinct real edge of $\skel(\mu) \setminus e'$ in $H_\mu$, so to guarantee at most one bend per edge; in this case, the value of $b_{e^*}^k(\mu)$ that is stored at $\mu$ is set as described above, but the value shown to the parent of $\mu$ during this specific visit is set to $\infty$, which rules out using spirality $k$ for an orthogonal representation of $G_\mu$.
	
	Suppose now that $T$ is re-rooted at an edge $e \neq e^*$. Denote by $\pi_{e^*}$ (resp. $\pi_{e}$) the parent of $\mu$ in $T$ rooted at $e^*$ (resp. at $e$). If $\pi_{e^*}=\pi_{e}$, the shape-cost set is unchanged, i.e., $b_{e}^{k}(\mu)= b_{e^*}^{k}(\mu)$ for $k=0,\dots,4$. Otherwise, we exploit the values $n^{\d}_{e^*}$, $n^{\zerob}_{e^*}$, and $n^a_{e^*}$ stored at $\mu$. Observe that $\pi_{e^*}$ is a child of $\mu$ in $T$ rooted at $e$, while $\pi_{e}$ is a child of $\mu$ in $T$ rooted at $e^*$. Since we know whether $\pi_{e^*}$ and $\pi_{e}$ are Q-, P-, or R-nodes, we can easily compute from $n^{\d}_{e^*}$ and $n^{\zerob}_{e^*}$ the number $n^{\d}_{e}$ (the children of $\mu$ in $T$ rooted at $e$ that are not Q-nodes and for which the cheapest representation of the corresponding component is $\D$-shaped) and the number $n^{\zerob}_{e}$ (the Q-node children of $\mu$ in $T$ rooted at $e$). Also, we can compute in constant time the value $n^a_{e}$. 
	
	For each $k \in \{0, 1, 2, 3, 4 \}$, let $B_{e}(k)_{extra} = \max\{0, k - n^{\d}_{e} - n^{\zerob}_{e} - n^a_{e} + 1\}$. If $n^{\zerob}_{e} < B_{e}(k)_{extra}$ then $b_{e}^{k}(\mu) = \infty$ because a $k$-spiral representation of $G_\mu$ would require two bends on some edge. Otherwise, $b_{e}^{k}(\mu)$ can be obtained by subtracting from $b_{e^*}^{k}(\mu)$ the contribution of $\pi_{e}$, summing up the contribution of $\pi_{e^*}$, subtracting the extra bends $B_{e^*}(k)_{extra}$, and summing up the extra bends $B_e(k)_{extra}$. Concerning the time complexity, the values stored at $\mu$ can be accessed in $O(1)$ time. Also, although $\mu$ has $n_\mu$ children, only the shape-cost set of the child $\pi_{e^*}$ of $\mu$ is accessed in the procedure described above. Therefore, the update of the shape-cost set of $\mu$ is performed in $O(1)$ time.
\end{proof}

We now consider the case of an R-node $\mu$. Observe that if $\mu$ is an inner R-node, its parent node is necessarily an S-node $\mu'$. A consequence of the proof of \cref{le:shape-cost-set-S} is that if the number of bends of a bend-minimum $\D$-shape representation of $G_\mu$ is not greater than the number of bends of a bend-minimum $\X$-shape representation of $G_\mu$, then the $\D$-shape representation can always be preferred to the $\X$-shape representation to construct a bend-minimum representation $H_{\mu'}$ of $G_{\mu'}$, for any desired value of spirality of $H_{\mu'}$. Hence, when we compute the shape-cost set of an inner R-node (for which either a $\D$-shape or an $\X$-shape has to be considered, due to Property~\textsf{O2} of Theorem~\ref{th:shapes}), we will set to infinity the cost of the $\X$-shape representation of $G_\mu$ if this cost is not strictly less than the cost of the $\D$-shape representation. We prove the following.
	
\begin{restatable}{lemma}{leShapeCostSetR}\label{le:shape-cost-set-inner-R}
	If $\mu$ is an inner R-node, the shape-cost set of $\mu$ is computed in $O(n_\mu)$ time in the pre-processing visit and updated in $O(1)$ time when $T$ is re-rooted at any other edge.
\end{restatable}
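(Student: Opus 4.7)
My plan is to reduce the computation of $b^{\d}_{e^*}(\mu)$ and $b^{\x}_{e^*}(\mu)$ to a small number of constant-time queries on the \texttt{Bend-Counter} (\cref{th:bend-counter}) of $\skel(\mu)$, which by Properties~\textsf{T2} and~\textsf{O2} is a triconnected cubic planar graph and whose only relevant shapes for $\mu$ are $\D$ and $\X$. Let $e_r$ denote the reference virtual edge of $\mu$ in $\skel(\mu)$. By \cref{le:shapes}(a)--(b), $b^{\d}_{e^*}(\mu)$ coincides with the minimum number of bends along the real edges of $\skel(\mu)$ over orthogonal representations in which $e_r$ lies on the external face and carries exactly $2$ bends, and $b^{\x}_{e^*}(\mu)$ is the analogous quantity with $3$ bends on $e_r$.

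To exploit the \texttt{Bend-Counter}, I first assign flexibilities on $\skel(\mu)$ as follows. For every virtual edge $e_\nu$ corresponding to an S-child $\nu$, set
\[
\flex(e_\nu) := \max\bigl\{k \in \{0,\dots,4\} : b^{k}_{e^*}(\nu) = b^{0}_{e^*}(\nu)\bigr\},
\]
and set $\flex = 0$ on every real edge. The formula $B(k)_{\mathrm{extra}} = \max\{0, k - n^{\d} - n^{\zerob} - n^{a}+1\}$ from the proof of \cref{le:shape-cost-set-S} guarantees that $b^{k}_{e^*}(\nu) = b^{0}_{e^*}(\nu) + \max(0, k - \flex(e_\nu))$ for all $k \le 4$, so that for any embedding-preserving orthogonal representation $H$ of $\skel(\mu)$ the number of bends induced on the real edges of $G_\mu$ equals $c(H) + \sum_{\nu} b^{0}_{e^*}(\nu)$, where $c(H)$ is the \texttt{Bend-Counter} cost of $H$. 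I then build the \texttt{Bend-Counter} of $\skel(\mu)$ in $O(n_\mu)$ time via \cref{th:bend-counter}. To obtain $b^{\d}_{e^*}(\mu)$, I temporarily set $\flex(e_r):=2$ through the $O(1)$ update of \cref{le:bend-counter-update}, query the \texttt{Bend-Counter} on each of the (at most two) faces incident to $e_r$, take the minimum, and add $\sum_{\nu} b^{0}_{e^*}(\nu)$. To obtain $b^{\x}_{e^*}(\mu)$, I repeat with $\flex(e_r):=3$: this returns $\min(b^{\d}_{e^*}(\mu), b^{\x}_{e^*}(\mu))$, and comparison with the already-computed $b^{\d}_{e^*}(\mu)$ recovers $b^{\x}_{e^*}(\mu)$ (when the two queries agree, recording $b^{\x}_{e^*}(\mu) := b^{\d}_{e^*}(\mu)$ suffices because the tie-breaking rule of \cref{le:shape-cost-set-S} prefers~$\D$).

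For the re-rooting update, when the new root of $T$ keeps $\mu$ inner, a different virtual edge $e_\beta$ of $\skel(\mu)$ (still associated with an S-child by~\textsf{T2}) becomes the reference edge, while the former $e_\alpha$ reverts to being an ordinary virtual S-edge. Two applications of \cref{le:bend-counter-update} suffice to reset $\flex(e_\alpha)$ to the natural value derived from the corresponding S-child's shape-cost set and to install the new reference flexibility $2$ or $3$ on $e_\beta$. Two constant-time \texttt{Bend-Counter} queries on the faces incident to $e_\beta$ then refresh the shape-cost set of $\mu$ in $O(1)$. The main subtlety to overcome is that the \texttt{Bend-Counter} only penalizes bends in excess of the flexibility, rather than forcing $e_r$ to carry a specific number of bends as each shape requires; \cref{le:shapes} precisely bridges this gap by guaranteeing that the optimum is always attained by a representation with exactly $2$ (respectively $3$) bends on $e_r$ for the $\D$ (respectively $\X$) shape.
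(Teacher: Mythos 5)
Your proof is correct and follows essentially the same route as the paper: assign each virtual S-edge the maximum spirality achievable at its spirality-zero cost as its flexibility, build the \texttt{Bend-Counter} of $\skel(\mu)$ once, query it with $\flex(e_r)=2$ and then $\flex(e_r)=3$ to extract the $\D$- and $\X$-costs via \cref{le:shapes}, and use \cref{le:bend-counter-update} for $O(1)$ re-rooting. The only (harmless) deviations are that the paper records $b^{\x}_{e}(\mu)=\infty$ rather than $b^{\d}_{e}(\mu)$ when the two queries tie, and that it spells out the $O(1)$ update of the additive term $\sum_\nu b^0(\nu)$ on re-rooting, which you leave implicit.
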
	
\begin{proof}
	Consider first the pre-processing visit (i.e., when the root $\rho$ of $T$ corresponds to edge $e^*$). Denote by $e'$ the reference edge of $\mu$. We distinguish between two cases: 
	In this case, we have to compute $b_{e^*}^{\d}(\mu)$ and $b_{e^*}^{\x}(\mu)$; as discussed above, if $b_{e^*}^{\d}(\mu) \leq b_{e^*}^{\x}(\mu)$, we can set $b_{e^*}^{\x}(\mu)=\infty$. 
	Let $f'$ and $f''$ be the two faces incident to $e'$ in $\skel(\mu)$. To compute $b_{e^*}^{\d}(\mu)$, we follow this strategy. $(i)$ For each face $f \in \{f', f''\}$, we first compute the cost $\beta^{\d}_f(\mu)$ of a cost-minimum $\D$-shape representation of $\skel(\mu) \setminus e'$ within a representation of $\skel(\mu)$ having $f$ as external face. 
	We set the real edges of $\skel(\mu)$ as inflexible and we set the flexibility of each virtual edge $e_\nu \neq e'$ of $\skel(\mu)$ to the maximum value of spirality that the S-component associated with $e_\nu$ can have with the same cost it has for spirality zero. We will compute $\beta^{\d}_f(\mu)$ without bending a flexible edge more than its flexibility. $(ii)$ For each virtual edge $e_\nu \neq e'$ of $\skel(\mu)$, corresponding to a child S-node $\nu$ of $\mu$, we sum to $\beta^{\d}_f(\mu)$ the cost of a bend-minimum representation of $G_\nu$ with spirality equal to the number of bends of $e_\nu$; note that, under the assumption that $e_\nu$ has no more than $\flex(e_\nu)$ bends, this cost equals $b_{e^*}^{0}(\nu)$, i.e., the cost of a bend-minimum representation of $G_\nu$ with spirality zero.     
	$(iii)$ Finally, we set $b_{e^*}^{\d}(\mu)=\min\{\beta^{\d}_{f'}(\mu) + \mathcal{S}_{e^*}(\mu), \beta^{\d}_{f''}(\mu) +\mathcal{S}_{e^*}(\mu)\}$, where $\mathcal{S}_{e^*}(\mu) = \sum_\nu b^0_{e^*}(\nu)$. 
	Clearly, the fact that each virtual edge $e_\nu$ with $k$ bends is replaced by a corresponding $k$-spiral S-component, guarantees that if the orthogonal representation of $\skel(\mu) \setminus e'$ is $\D$-shaped also the representation of $G_\mu$ is $\D$-shaped. The optimality of the $\D$-shaped representation of $G_\mu$ derives from the optimality of the representation of $\skel(\mu) \setminus e'$ and of the S-components that replace the virtual edges of $\skel(\mu)$.   

	We now show how to compute $\beta^{\d}_f(\mu)$. We exploit the \texttt{Bend-Counter} on $\skel(\mu)$, which is a triconnected cubic graph. Based on \cref{le:shape-cost-set-S}, the flexibility of each virtual edge $e_\nu \neq e'$ is set to $\flex(e_\nu) = \min\{4, n^{\d}_{e^*} + n^{\zerob}_{e^*} -1\}$,  where $n^{\d}_{e^*}$ and $n^{\zerob}_{e^*}$ are the values taken from the pair $\langle n^{\d}_{e^*}, n^{\zerob}_{e^*} \rangle$ of node $\nu$. By \cref{le:elbow}, $\flex(e_\nu) \geq 1$. The flexibility of the reference edge $e'$ of $\mu$ is set to $\flex(e')=2$.
	By Property~(a) of \cref{le:shapes}, the value of the formula of Theorem~\ref{th:fixed-embedding-min-bend}, computed on $\skel(\mu)$ with $f$ as external face, is equal to the cost of a cost-minimum $\D$-shape representation of $\skel(\mu) \setminus e'$. Also, Theorem~\ref{th:fixed-embedding-min-bend} and \cref{le:shapes} imply that in this representation each real (inflexible) edge is bent at most once and each inflexible edge is bent no more than its flexibility. By Theorem~\ref{th:bend-counter},  the \texttt{Bend-Counter} is constructed in $O(n_\mu)$ time and it computes in $O(1)$ time the value of the formula in Theorem~\ref{th:fixed-embedding-min-bend}. 
		
	To know whether $b_{e^*}^{\x}(\mu) < b_{e^*}^{\d}(\mu)$, we use the same approach as before, but this time we set $\flex(e')=3$ in the \texttt{Bend-Counter} of $\skel(\mu)$. This update can be done in $O(1)$ time by \cref{le:bend-counter-update}. Again by Theorem~\ref{th:bend-counter}, in $O(1)$ time we can compute the cost $c$ of a cost-minimum orthogonal representation of $\skel(\mu)$ by considering both $f'$ and $f''$ as possible external faces. If $c < \min\{\beta^{\d}_{f'}(\mu), \beta^{\d}_{f''}(\mu)\}$, then $e'$ has necessarily 3 bends in this representation (otherwise $\flex(e')=2$ would have been already sufficient to achieve the same cost), and therefore, due to Property~(b) of \cref{le:shapes}, we conclude that $b_{e^*}^{\x}(\mu) < b_{e^*}^{\d}(\mu)$ and set $b_{e^*}^{\x}(\mu)=c$. Else, we set $b_{e^*}^{\x}(\mu)=\infty$.
		
	At the end of the pre-processing phase described above, for each R-node $\mu$, both the \texttt{Bend-Counter} of $\skel(\mu)$ and the value $\mathcal{S}_{e^*}(\mu)$ are stored at $\mu$ in order to efficiently update the shape-cost set of $\mu$ when $T$ is re-rooted. 
	
	Suppose now to re-root $T$ at the Q-node of an edge $e \neq e^*$. As a consequence the reference edge of $\skel(\mu)$ changes from $e'$ to some other virtual edge $e''$ of $\skel(\mu)$. Denote by $\nu'$ the S-node of $T$ corresponding to $e'$ and by $\nu''$ the S-node of $T$ corresponding to $e''$. We compute $b^{\d}_e(\mu)$ in $O(1)$ time as follows: $(i)$ By \cref{le:bend-counter-update} we update in $O(1)$ time the flexibilities of $e'$ and $e''$, by setting $\flex(e'')=2$ and $\flex(e') = \min\{4, n^{\d}_{e} + n^{\zerob}_{e} -1\}$,  where $n^{\d}_{e}$ and $n^{\zerob}_{e}$ are the values taken from the pair $\langle n^{\d}_{e}, n^{\zerob}_{e} \rangle$ of the S-node $\nu'$. $(ii)$ By Theorem~\ref{th:bend-counter}, for each of the two possible faces $f$ incident to $e''$, we compute $\beta^{\d}_f(\mu)$ in $O(1)$ time. $(iii)$ We compute in $O(1)$ the value $\mathcal{S}_{e}(\mu) = \sum_\nu b^0_{e}(\nu)$ from $\mathcal{S}_{e^*}(\mu)$, by subtracting $b^0_{e^*}(\nu'')$ and by adding $b^0_{e^*}(\nu')$; notice that $b^0_{e}(\nu) = b^0_{e^*}(\nu)$ for any $\nu$ distinct from $\nu'$ and $\nu''$.    
	
	To know whether $b_{e}^{\x}(\mu) < b_{e}^{\d}(\mu)$ we follow the same approach as for the pre-processing visit, setting $\flex(e'')=3$. 
\end{proof}

We finally prove how to compute the label $b(e)$ depending on the type of the root child. 

\begin{restatable}{lemma}{leBE}\label{le:b(e)}
	Let $T$ be the SPQR-tree of $G$ rooted at the Q-node of an edge $e$ and let $\mu$ be the root child of $T$. The label $b(e)$ can be computed in $O(n_\mu)$ time in the pre-processing visit (i.e., when $e=e^*$) and in $O(1)$ time otherwise.
\end{restatable}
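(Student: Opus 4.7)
The plan is to read off $b(e)$ from the shape-cost set of the root child $\mu$ of $T$, once that shape-cost set has been made available by the preceding lemmas. The key observation is that the label $b(e)$ equals the cost of an optimal orthogonal representation of $G$ viewed as $\mu$ together with edge $e$ on the external face, so it suffices to enumerate which representative shapes of $\mu$ pair with which bend counts for $e$ to produce a valid representation satisfying Properties~\textsf{O1}--\textsf{O3}.

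First I argue that $\mu$ is not a Q-node. Since $G$ is a simple biconnected $3$-graph distinct from $K_4$, by Lemma~\ref{le:spqr-tree-3-graph} the root child of $T$ is a P-, R-, or S-node. Suppose $\mu$ is a P- or R-node. By Property~\textsf{O2} of Theorem~\ref{th:shapes} the admissible representative shapes are \C\ and \L. Unwinding the definitions: in a \C-shaped $H_\mu$ the external path between the poles has turn number~$4$, so closing it with $e$ drawn as a straight segment (zero bends) yields an external face of total turn~$4$, as required; in an \L-shaped $H_\mu$ the external path has turn number~$3$, so closing it with $e$ drawn with exactly one bend again yields turn~$4$. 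Any other pairing forces $e$ to absorb at least two bends, violating Property~\textsf{O1}. Hence
\[
b(e) \;=\; \min\{\, b^{\c}_e(\mu),\; b^{\l}_e(\mu) + 1 \,\}.
\]
If instead $\mu$ is an S-node, by Property~\textsf{O3} its spirality is at most~$4$; the external face of $G$ closes correctly only when the spirality of $\mu$ plus the number of bends of $e$ equals $4$, and Property~\textsf{O1} restricts this to spirality~$3$ or~$4$. Hence
\[
b(e) \;=\; \min\{\, b^{4}_e(\mu),\; b^{3}_e(\mu) + 1 \,\}.
\]

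Concerning the running time, in both cases the min-formula is evaluated in $O(1)$ time once the relevant entries of the shape-cost set of $\mu$ are at hand. In the pre-processing visit ($e = e^*$) we first compute the shape-cost set of $\mu$: for P- and S-root children this takes $O(n_\mu)$ time by Lemmas~\ref{le:shape-cost-set-P} and~\ref{le:shape-cost-set-S}, which already handle the root-child shapes \C,\L\ and spiralities $3,4$; for an R-root child the same strategy as in Lemma~\ref{le:shape-cost-set-inner-R} applies, using the Bend-Counter of $\skel(\mu)$ with $\flex(e') = 4$ to obtain $b^{\c}_{e^*}(\mu)$ and with $\flex(e') = 3$ to obtain $b^{\l}_{e^*}(\mu)$ (both values are within the cap $\flex \le 4$ and the update of the Bend-Counter is constant-time by Lemma~\ref{le:bend-counter-update}). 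When $T$ is re-rooted at a new edge $e \neq e^*$, the shape-cost set of the new root child is updated in $O(1)$ time by the same lemmas, and the min-formula is then evaluated in $O(1)$ time.

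The only non-routine ingredient is the case analysis establishing which pairings (shape of $\mu$, bends on $e$) give a valid optimal representation; this is a short verification based on the turn-number definitions and on Properties~\textsf{O1}--\textsf{O3}, and I expect it to be the main, though short, obstacle. The complexity bookkeeping is then immediate from the preceding lemmas.
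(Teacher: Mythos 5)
Your overall strategy (read $b(e)$ off the shape-cost set of the root child) and your P-node formula match the paper, but two of your three cases contain genuine errors.

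First, the S-node case. Your formula $b(e)=\min\{b^{4}_e(\mu),b^{3}_e(\mu)+1\}$ is correct only when both poles of $\mu$ have inner degree two in $G_\mu$. Spirality is measured between the \emph{alias} vertices, and when a pole has inner degree one the alias vertex coincides with the pole, so the turn at that pole is not charged to the spirality of $G_\mu$; the required spirality of the root child then drops accordingly. The paper distinguishes three cases, namely $\min\{b^{4}_e(\mu),b^{3}_e(\mu)+1\}$, $\min\{b^{3}_e(\mu),b^{2}_e(\mu)+1\}$, and $\min\{b^{2}_e(\mu),b^{1}_e(\mu)+1\}$, according to whether two, one, or zero poles have inner degree two. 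A plain cycle already breaks your formula: rooted at any edge, the root child is an S-node whose pertinent graph is a path with both poles of inner degree one, and the optimal closure is a rectangle with that path at spirality $2$ and $e$ straight, not a spirality-$4$ spiral.

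Second, the R-node case. Your recipe of setting $\flex(e')=4$ to obtain $b_{e^*}^{\c}(\mu)$ and $\flex(e')=3$ to obtain $b_{e^*}^{\l}(\mu)$ inverts the correspondence of \cref{le:shapes}: the \emph{fewer} bends the closing edge carries, the \emph{more} the complement must wind (zero bends force a \C-shaped complement, one bend an \L-shaped one, two bends a \D-shaped one, three bends an \X-shaped one). Making $e^*$ flexible with $\flex(e^*)=4$ lets the cost minimization dump up to four free bends on $e^*$, which neither produces a \C-shaped $G_\mu$ nor accounts for the fact that bends on the real edge $e^*$ must be counted in $b(e)$ and capped at one by Property~\textsf{O1}. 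The paper instead keeps $e^*$ \emph{inflexible}; a single Bend-Counter query over the two faces incident to $e^*$ then returns $\min\{\beta^{\c}_{e^*}(\mu),\beta^{\l}_{e^*}(\mu)+1\}$ directly (to which $\mathcal{S}_{e^*}(\mu)$ is added), without computing the two shape costs separately. You also omit the degenerate case in which the external face of $\skel(\mu)$ is a $3$-cycle of inflexible edges, where the formula of Theorem~\ref{th:fixed-embedding-min-bend} would force two bends on a real edge and the paper sets $b(e)=\infty$.
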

\begin{proof}
	By Theorem~\ref{th:shapes}, if $\mu$ is a P-node, $b(e) = \min\{b_{e}^{\c}(\mu), b_{e}^{\l}(\mu)+1\}$, where $b_{e}^{\c}(\mu)$ and $b_{e}^{\l}(\mu)$ are computed according to \cref{le:shape-cost-set-P} for the root child $\mu$. Indeed, $b_{e}^{\c}(\mu)$ and $b_{e}^{\l}(\mu)+1$ represent the cost of a bend-minimum representation of $G$ when $e$ is \zeroB-shaped and \oneB-shaped, respectively. Hence, computing $b(e)$ in the case of root child P-nodes takes $O(1)$ time.
	If $\mu$ is an S-node, we have the following cases based on the degree that the poles $u$ and $v$ of $\mu$ have in $G_\mu$:
	\begin{itemize}
	\item Both $u$ and $v$ have degree two in $G_\mu$. In this case $b(e) = \min\{b_{e}^{4}(\mu), b_{e}^{3}(\mu)+1\}$, where $b_{e}^{4}(\mu)$ and $b_{e}^{3}(\mu)$ are computed according to \cref{le:shape-cost-set-S} for the root child $\mu$. Again, $b_{e}^{4}(\mu)$ and $b_{e}^{3}(\mu)+1$ represent the cost of a bend-minimum representation of $G$ when $e$ is \zeroB-shaped and \oneB-shaped, respectively.
	
	\item Exactly one of $u$ and $v$ has degree two in $G_\mu$. In this case $b(e) = \min\{b_{e}^{3}(\mu), b_{e}^{2}(\mu)+1\}$.
	
	\item Both $u$ and $v$ have degree one in $G_\mu$. In this case $b(e) = \min\{b_{e}^{2}(\mu), b_{e}^{1}(\mu)+1\}$. 
	\end{itemize}
	
	\noindent In all cases the computation of $b(e)$ takes $O(1)$ time.
	
	\smallskip If $\mu$ is an R-node we distinguish the case of the pre-processing visit and of the successive re-rooting of $T$.
	In the pre-processing visit we build the \texttt{Bend-Counter} of $\skel(\mu)$, as for the inner R-nodes, but this time the reference edge $e'=e^*$ is a real (inflexible) edge. Also we compute $\mathcal{S}_{e^*}(\mu)$ summing up the values $b^{0}_{e^*}(\nu)$ of each S-node child $\nu$ of $\mu$.
	The value $b(e^*)$ is obtained by adding $\mathcal{S}_{e^*}(\mu)$ with the minimum value returned by the \texttt{Bend-Counter} for the two faces $f'$ and $f''$ incident to $e^*$. By Theorem~\ref{th:shapes}, this latter value coincides with the minimum between $\beta^{\c}_{e^*}(\mu)$ and $\beta^{\l}_{e^*}(\mu) + 1$, although these two values are not explicity computed. Since the \texttt{Bend-Counter} can be constructed in $O(n_\mu)$ time (\cref{le:bend-counter_computation}) and $\mathcal{S}_{e^*}(\mu)$ can also be computed in $O(n_\mu)$ time, the time needed to process node $\mu$ is $O(n_\mu)$.
	
	When $T$ is re-rooted at the Q-node of an edge $e \neq e^*$, we take advantage of the \texttt{Bend-Counter} and of $\mathcal{S}_{e^*}(\mu)$ stored at $\mu$. Namely, $b(e)$ is the sum of the minimum value returned by the \texttt{Bend-Counter} for the two faces incident to $e$ and of the value of $\mathcal{S}_{e}(\mu)$, obtained from $\mathcal{S}_{e^*}(\mu)$ as described for the inner R-nodes in the proof of \cref{le:shape-cost-set-inner-R}. This takes $O(1)$ time.
	
	We finally observe that if the external face of $\skel(\mu)$ is a 3-cycle with all inflexible edges, then the value of the formula in Theorem~\ref{th:fixed-embedding-min-bend} would correspond to a solution with two bends along a real edge. 
	In this case we set $b(e)=\infty$, which will avoid to select $e$ as the best reference edge for $T$. By Theorem~\ref{th:shapes} and by the fact that $G$ is not $K_4$, we know that there will be another bend-minimum representation with at most one bend per edge for a different embedding of the graph.
\end{proof}

\subsection{A Reusability Principle (Proof of Theorem~\ref{th:key-result-2})}\label{sse:labeling-biconnected}

Observe that, by naively applying the algorithms in the proofs of \cref{le:shape-cost-set-inner-Q,le:shape-cost-set-P,le:shape-cost-set-S,le:shape-cost-set-inner-R,le:b(e)} for all possible re-rootings of $T$, we would not have the $O(n)$-time algorithm that is needed to prove Theorem~\ref{th:key-result-2}. In fact, even if the computation on each node of $T$ took $O(1)$ time, we would perform such computation $O(n)$ times, that is, once for each possible edge to be labeled.
%
%
%
In order to achieve $O(n)$ time complexity for the labeling algorithm over all possible re-rootings, we exploit a reusability principle that is expressed in general terms because it can have applications also beyond the scope of this paper. The strategy of the proof of the next lemma is illustrated in \cref{fi:amortized}.


\begin{lemma}\label{le:amortized}
Let $G$ a biconnected planar graph with $n$ vertices, let $e$ be any edge of $G$, and let $T$ be the SPQR-tree of $G$ rooted at the Q-node corresponding to $e$. Let $\mathcal{A}$ be an algorithm that traverses bottom-up $T$ and, for each node $\mu$ of $T$, computes in $O(f(n))$ time some value $\mathcal{V}_e(\mu)$ based on the values $\mathcal{V}_e(\mu_i)$, $i = 1, \dots, k$ computed for the children $\mu_i$ of $\mu$. There exists an algorithm $\mathcal{A^+}$ that executes $\mathcal{A}$ for all possible re-rootings of $T$ at all its Q-nodes in $O(n \cdot f(n))$ time.
\end{lemma}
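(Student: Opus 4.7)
The plan is to use the classic \emph{all-roots (re-rooting) dynamic programming} technique on the tree $T$. For each \emph{directed} edge $(\mu, \nu)$ of $T$, let $\mathcal{V}_{\mu \to \nu}$ denote the value that $\mathcal{A}$ would assign to $\mu$ if $T$ were rooted so that $\nu$ plays the role of parent of $\mu$ (so the children of $\mu$ in that rooting are all the neighbors of $\mu$ in $T$ except $\nu$). There are $2\cdot|E(T)|=O(n)$ such directed values. Once all of them are known, for any Q-node $\rho$ regarded as the root, the value $\mathcal{V}_\rho(\mu)$ that $\mathcal{A}$ would produce at $\mu$ coincides with $\mathcal{V}_{\mu \to \pi_\rho(\mu)}$, where $\pi_\rho(\mu)$ is the parent of $\mu$ in the rooting at $\rho$; hence each such value is accessible in $O(1)$ time from the table of directed values. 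It therefore suffices to compute the whole collection $\{\mathcal{V}_{\mu\to\nu}\}$ in $O(n\cdot f(n))$ time.

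I would carry out the computation in two passes. \emph{Pass~1 (bottom-up).} Root $T$ at an arbitrary Q-node $\rho_0$ (say, the one corresponding to $e^*$) and execute $\mathcal{A}$ once. This produces $\mathcal{V}_{\mu \to \pi_0(\mu)}$ for every non-root $\mu$, where $\pi_0(\mu)$ is the parent of $\mu$ under the rooting at $\rho_0$. By hypothesis on $\mathcal{A}$, this pass takes $O(n\cdot f(n))$ time. \emph{Pass~2 (top-down).} Visit the nodes of $T$ in top-down order starting from $\rho_0$. When $\mu$ is visited, for every child $\nu$ of $\mu$ in the original rooting at $\rho_0$, compute $\mathcal{V}_{\mu \to \nu}$ by invoking the combining step of $\mathcal{A}$ at $\mu$ with the children-value multiset
\[
\{\mathcal{V}_{\pi_0(\mu) \to \mu}\} \cup \{\mathcal{V}_{\mu_j \to \mu} : \mu_j \text{ child of } \mu \text{ in } \rho_0,\ \mu_j\neq\nu\}.
\]
The ``up'' value $\mathcal{V}_{\pi_0(\mu)\to\mu}$ has already been produced earlier in Pass~2 (the base case $\mu=\rho_0$ has no such value), and all the sibling ``down'' values are available from Pass~1. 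This multiset is exactly the one that $\mathcal{A}$ would see at $\mu$ in a rooting of $T$ in which $\nu$ is the parent of $\mu$, so the resulting number coincides with $\mathcal{V}_{\mu\to\nu}$.

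The running time of Pass~2 is $O(n\cdot f(n))$: at each node $\mu$ the $O(f(n))$-time combining step of $\mathcal{A}$ is invoked exactly $\deg_T(\mu)$ times, and $\sum_{\mu\in V(T)}\deg_T(\mu)=2\cdot|E(T)|=O(n)$. Summing with Pass~1 gives the $O(n\cdot f(n))$ bound, and any query of the form ``give me $\mathcal{V}_e(\mu)$'' for an edge $e$ and a node $\mu$ is then answered in $O(1)$ time via the table of directed values.

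\emph{Main obstacle.} The delicate point is the correctness of Pass~2, namely ensuring that (i)~the ``up'' value $\mathcal{V}_{\pi_0(\mu)\to\mu}$ is available when $\mu$ is processed (guaranteed by the top-down visiting order), and (ii)~invoking $\mathcal{A}$ at $\mu$ with the assembled multiset really yields the value that $\mathcal{A}$ would compute in the re-rooted tree. The latter is a clean consequence of the locality of $\mathcal{A}$ expressed in the hypothesis: the output at $\mu$ depends only on the multiset of children-values at $\mu$, not on how those values were obtained. With this in place, the amortized counting $\sum_\mu \deg_T(\mu)=O(n)$ gives the claimed $O(n\cdot f(n))$ total cost.
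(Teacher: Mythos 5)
Your proof is correct, and it rests on the same core idea as the paper's: associate with each of the $2\cdot|E(T)|=O(n)$ darts (directed edges) of $T$ the value $\mathcal{V}_{\mu\to\nu}$ that $\mathcal{A}$ would compute at $\mu$ when $\nu$ is its parent, note that every value needed under any re-rooting is one of these dart values, and ensure each dart value is computed exactly once at cost $O(f(n))$. Where you differ is in the scheduling. You compute all dart values eagerly via the classic two-pass re-rooting scheme (one bottom-up pass from an arbitrary root, then one top-down pass that propagates the ``upward'' values), after which every re-rooting is answered by $O(1)$ table lookups per node; the paper instead computes the dart values lazily, re-launching a post-order traversal for each re-rooting but pruning every subtree whose dart is already filled, and charges the total work to the darts in an amortized fashion. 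Your organization makes the $\sum_{\mu}\deg_T(\mu)=O(n)$ accounting explicit and is arguably cleaner as a standalone lemma, while the paper's lazy variant mirrors more directly how the labeling algorithm actually interleaves re-rootings with computation. The only detail worth adding to your write-up is the value at the new root itself (a Q-node, hence a leaf of $T$), which is not a dart value but is recovered in $O(f(n))$ extra time from the single dart entering it, contributing $O(n\cdot f(n))$ over all roots; this does not affect the bound.
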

\begin{proof}
In the proof we will say that $T$ is rooted at $e$ to mean that the root of $T$ is the Q-node associated with $e$; also, we denote by $T_e$ the tree $T$ rooted at $e$.
We equip each edge $(\mu_i,\mu_j)$ of $T_e$ with two \emph{darts}: dart $\overrightarrow{\mu_i \mu_j}$ stores the value of $\mathcal{V}_e(\mu_i)$ when $\mu_j$ is the parent of $\mu_i$ in $T_e$; dart $\overrightarrow{\mu_j \mu_i}$ stores the value of $\mathcal{V}_e(\mu_j)$  when $\mu_i$ is the parent of $\mu_j$ in $T_e$.
We first perform a bottom-up (i.e., post-order) traversal of $T_{e_1}$ rooted at the Q-node corresponding to some edge $e_1$. During this traversal, for each pair of nodes $\mu_i$ and $\mu_j$ of $T_{e_1}$ such that $\mu_i$ is a child of $\mu_j$, the algorithm computes the value of $\mathcal{V}_{e_i}(\mu_i)$ and stores it in the dart $\overrightarrow{\mu_i \mu_j}$.
Then, we re-root the SPQR-tree at the Q-node corresponding to some edge $e_2 \neq e_1$ and we perform a post-order traversal of $T_{e_2}$. If the node $\mu_i$ is still a child of $\mu_j$ in $T_{e_2}$, we re-use, in the traversal of $T_{e_2}$, the value $\mathcal{V}_{e_1}(\mu_i)$ stored into dart $\overrightarrow{\mu_i \mu_j}$ without the need of visiting again the subtree rooted at $\mu_i$. Otherwise, if $\mu_i$ is the parent of $\mu_j$ in $T_{e_2}$, we launch a post-order traversal of the subtree rooted at $\mu_j$ and equip dart $\overrightarrow{\mu_j \mu_i}$ with the value $\mathcal{V}_{e_2}(\mu_j)$.
We proceed in this way until each edge $e$ of $G$ has been used as reference edge of $T$, calling each time a post-order traversal on $T_e$.

\begin{figure}[tb]
	\centering
	\subfloat[]{\label{fi:amortized-a}\includegraphics[page=1,width=0.46\columnwidth]{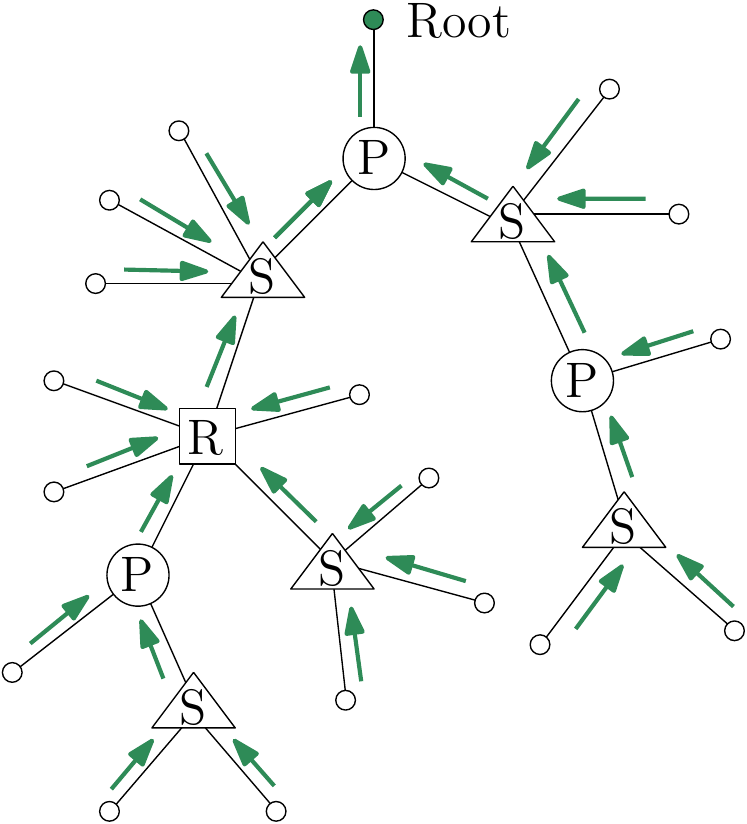}}
	\hfil
	\subfloat[]{\label{fi:amortized-b}\includegraphics[page=2,width=0.46\columnwidth]{amortized}}
	\hfil
	\subfloat[]{\label{fi:amortized-c}\includegraphics[page=3,width=0.46\columnwidth]{amortized}}
	\hfil
	\subfloat[]{\label{fi:amortized-d}\includegraphics[page=4,width=0.46\columnwidth]{amortized}}
	\hfil
	\caption{A figure showing the algorithm in the proof of \cref{le:amortized}. (a) The darts updated by the first bottom up traversal of the SPQR-tree. (b--c) The darts updated by two subsequent bottom up traversals of the rerooted SPQR-tree. (d) The darts updated at the end of all bottom up traversals.}\label{fi:amortized}
\end{figure}

We prove that the time needed to perform all the $O(n)$ post-order traversals is $n \cdot O(f(n))$. In fact, since at the end of each recursive call of the post-order traversal one dart is updated, and since a dart is never updated twice, the total number of recursive calls of the post-order traversal is equal to the number of the darts, which is~$O(n)$. Each recursive call on a node $\mu$, in addition to possibly calling the post-order traversal on some children of $\mu$, uses $O(f(n))$-time to compute $\mathcal{V}(\mu)$. Therefore, the overall time used by all the $O(n)$ post-order traversals is $O(f(n))\cdot O(n)=O(f(n) \cdot n)$.
\end{proof}

The proof of Theorem~\ref{th:key-result-2} is a consequence of \cref{le:shape-cost-set-inner-Q,le:shape-cost-set-P,le:shape-cost-set-S,le:shape-cost-set-inner-R,le:b(e)} and of \cref{le:amortized} where $\mathcal{V}_{e}(\mu) = \{b_e^{\sigma_1}(\mu), b_e^{\sigma_2}(\mu), \ldots, b_e^{\sigma_h}(\mu)\}$. In this case that $f(n)=O(1)$, and therefore $O(f(n) \cdot n) = O(n)$.

\subsection{Labeling 1-connected graphs (Proof of Theorem~\ref{th:1-connected-labeling})}\label{sse:labeling-1-connected}

Let $G$ be a 1-connected planar 3-graph and let $\cal T$ be the block-cut-vertex tree of $G$. We recall that $\cal T$ describes the decomposition of $G$ in terms of its blocks, i.e., of its biconnected components (see, e.g.,~\cite{dett-gd-99}). Each node of $\cal T$ is either a block or a cut-vertex, and cut-vertices are incident to the blocks they belong to. Call \emph{trivial blocks} those composed of a single edge and \emph{full blocks} the remaining. Since $\Delta(G) \leq 3$, full blocks are only adjacent to trivial blocks. Also, cut-vertices of degree three are adjacent to three trivial blocks. Suppose $\cal T$ is labeled at a block $B$. Let $B'$ be a leaf of $\cal T$ and let $c'$ its parent cut-vertex in $\cal T$. We associate a cost $b_B(B')$ with $B'$ defined as the label $b(c')$ that $c'$ has in $B'$. We recall that $b(c')$ is the minimum of the labels associated with the edges of $B'$ incident to $c'$. If $B'$ is an internal node of $\cal T$ distinct from the root, its cost $b_B(B')$ is the sum of the costs of its grandchildren plus the label $b(c')$ that its parent cut-vertex has in $B'$. For the root $B$, the label $b(B)$ is defined by the sum of the costs of its grandchildren plus the minimum value $b(e)$ over all edges $e$ of $B$.
Given the constant degree of the cut-vertex nodes of $\cal T$, for each possible root $B$ of $\cal T$, the cost of each block is computed in $O(1)$ time.

Similar to the biconnected case, a critical step with respect to the overall time complexity of the labeling algorithm is when we try each non-trivial block $B$ as a candidate root of $\mathcal{T}$. In fact, if for each block $B$ we summed up the contributions of all other blocks, we would use $O(n)$ time for each candidate root block $B$ and $O(n^2)$ time overall.
To achieve $O(n)$-time complexity, we use a strategy similar to the one described in the proof of \cref{le:amortized}. Namely, we compute such sums with a bottom-up traversal of $\mathcal{T}$ rooted at the current candidate root block $B$ and we store in the darts of $\mathcal{T}$ pointing towards the root $B$ the partial values of the sums of the subtrees. When we re-root $\mathcal{T}$ on some other candidate root $B'$, we launch another bottom-up traversal of $\mathcal{T}$, where the subtrees of $\mathcal{T}$ for which the partial value of the sum has been stored into the dart pointing towards the root $B'$ do not need to be explored again.
This concludes the proof of Theorem~\ref{th:1-connected-labeling}.

\section{Optimal~Orthogonal~Representations: Proofs of Theorems~\ref{th:gd2018-enhanced} and~\ref{th:gd2018-enhanced-v}}\label{se:thgd2018-enhanced}


Concerning the proof of Theorem~\ref{th:gd2018-enhanced}, we observe that the algorithm used to construct an $e$-constrained bend-minimum representation $H$ of $G$ that satisfies Properties~\textsf{O1--O4} of Theorem~\ref{th:shapes} performs a bottom-up visit of $T$ rooted at $e$, and computes the cost-shape set of each visited node $\mu$ 
as described in the proofs of \cref{le:shape-cost-set-inner-Q,le:shape-cost-set-P,le:shape-cost-set-S,le:shape-cost-set-inner-R,le:b(e)}, depending on whether $\mu$ is a Q-, P-, S-, or R-node. To prove Theorem~\ref{th:gd2018-enhanced} we use an algorithm that follows the same strategy as the one of Theorem~\ref{th:shapes}. In addition to the shape-cost set, this algorithm stores at $\mu$ an orthogonal representation $H_e^{\sigma}(\mu)$ of cost $b_e^{\sigma}(\mu)$ for each possible representative shape $\sigma$ of $\mu$. Such a representation is constructed in $O(n_\mu)$ time for S- and R-nodes, and in $O(1)$ time for P- and Q-nodes. 
Similarly to \cite{DBLP:conf/gd/DidimoLP18}, to achieve overall linear-time complexity, the representations for the representative shapes stored at $\mu$ are described incrementally, linking the desired representations in the set of the children of $\mu$ for each virtual edge of $\skel(\mu)$. Namely:
\begin{itemize}
	\item For a leaf Q-node $\mu$, the representations $H_e^{\zerob}(\mu)$ and $H_e^{\oneb}(\mu)$ are trivially constructed. 
	
	\item For a P-node $\mu$ the orthogonal representations stored at $\mu$ are constructed by simply merging the representations of the children of $\mu$ with the suitable values of spirality described in \cref{le:shape-cost-set-inner-Q}. 
	
	\item For an S-node $\mu$ and for each value $k$ of spirality for which $b_e^k(\mu) \neq \infty$, we suitably merge the cheapest representations of the children of $\mu$ associated with the virtual edges of $\skel(\mu)$ distinct from the reference edge, and we distribute the $B_{e}(k)_{extra}$ bends on the real edges of $\skel(\mu)$, so that each real edge takes at most one bend. According to the procedure described in the proof of \cref{le:shape-cost-set-S}, this is always possible if $b^k_e(\mu) \neq \infty$.   
	
	\item If $\mu$ is an R-node, we compute a cost-minimum orthogonal representation of $\skel(\mu)$ by means of Theorem~\ref{th:fixed-embedding-min-bend}, where the flexibilities of the edges are defined as in the proofs of \cref{le:shape-cost-set-inner-R,le:b(e)}. To obtain the final representation $H_e^\sigma(\mu)$, we replace each virtual edge $e_\nu$ (distinct from the reference edge) in the representation of $\skel(\mu)$  with the representation of the corresponding S-node $\nu$ whose spirality equals the number of bends of~$e_\nu$.
\end{itemize}

Therefore, the whole algorithm takes $O(n)$ time and the orthogonal representation of cost $b(e)$ associated with the root of $T$ is the desired bend-minimum orthogonal representation of $G$. This concludes the proof of Theorem~\ref{th:gd2018-enhanced}.


\medskip
About the proof of Theorem~\ref{th:gd2018-enhanced-v}, if $v$ is a degree-1 vertex, $G$ consists of a single edge, and the statement is obvious. If $v$ is a degree-2 vertex, \cref{le:1-bend} guarantees the existence of a $v$-constrained optimal orthogonal representation of $G$ with an angle larger than $90^\circ$ at $v$ on the external face. Clearly, if $v$ is on the external face, then both its two edges are on the external face. Therefore, for any edge $e$ incident to $v$ we can construct the SPQR-tree $T$ of $G$ rooted at $e$ and we can apply on $T$ the $O(n)$-time algorithm in the proof of Theorem~\ref{th:gd2018-enhanced}. Since in this case the child-root $\mu$ of $T$ is an $S$-node (because $v$ has degree 2), that algorithm guarantees that the angle at $v$ on the external face is an angle of either $180^\circ$ or $270^\circ$ when it merges a representation of $G_\mu$ with~$e$.    




\section{Conclusions and Open Problems}\label{se:conclusions}
We have solved a long standing open problem by proving that an orthogonal representation of a planar 3-graph with the minimum number of bends can be computed in $O(n)$ time in the variable embedding setting. Furthermore our construction has at most one bend per edge, which is also optimal. We mention some open problems that we find interesting to investigate.

\begin{enumerate}
\item A key ingredient of our linear time result is the evidence that a bend-minimum orthogonal representation of a planar 3-graph does not need to ``roll-up'' too much. This may be true also for some families planar 4-graphs. For example, can one apply some of the ideas in this paper to compute bend-minimum orthogonal representations of series-parallel 4-graphs? The most efficient algorithm for this problem has time complexity $O(n^3 \log n)$~\cite{DBLP:journals/siamcomp/BattistaLV98}.

\item The HV-planarity testing problem asks whether a given planar graph admits a rectilinear drawing with prescribed horizontal and vertical orientations of the edges. This problem is NP-complete also for planar 3-graphs. An $O(n^3 \log n)$-time solution is known for series parallel 3-biconnected~\cite{dlp-hvpac-19}. Can our techniques be extended to design an $o(n^3 \log n)$-time algorithm?

\item Given the strong similarities between the rectilinear and the upward planarity testing problems~\cite{DBLP:journals/siamcomp/GargT01}, we wonder whether our ideas can be extended to devise efficient upward planarity testing algorithms for special families of graphs such as, for example, triconnected DAGs.
\end{enumerate}

\bibliographystyle{siamplain}
\bibliography{bibliography-arxiv}

\end{document}